\documentclass[11pt]{article}
\usepackage[T1]{fontenc}
\usepackage[utf8]{inputenc}
\usepackage{lmodern,microtype,amsmath,amssymb,amsthm,mathtools}
\usepackage[margin=.94in]{geometry}
\usepackage{booktabs,array,enumitem,quantikz,graphicx}
\usepackage{algorithm,algpseudocode}
\usepackage{flafter}
\usepackage{authblk}

\usepackage{tocloft,titlesec,needspace,etoolbox}
\apptocmd{\thebibliography}{\raggedright}{}{}

\usepackage{hyperref}
\hypersetup{colorlinks=true,linkcolor=blue,citecolor=blue,urlcolor=blue,
 pdftitle={Quantum Algorithms for Multivariable Polynomial Transformations: From Efficient Synthesis to Quantum Channel Transformations},pdfauthor={Zheyu Shen, Yusen Wu, Xiao Yuan, Xiao-Ming Zhang, Yukun Zhang}}
\numberwithin{equation}{section}
\newtheorem{theorem}{Theorem}[section]
\newtheorem{lemma}[theorem]{Lemma}
\newtheorem{proposition}[theorem]{Proposition}
\newtheorem{corollary}[theorem]{Corollary}
\theoremstyle{definition}\newtheorem{definition}[theorem]{Definition}
\newtheorem{example}[theorem]{Example}
\theoremstyle{remark}\newtheorem{remark}[theorem]{Remark}
\newcommand{\C}{\mathbb C}
\newcommand{\R}{\mathbb R}

\renewcommand{\ket}[1]{\lvert#1\rangle}
\renewcommand{\bra}[1]{\langle#1\rvert}

\newcommand{\tr}{\operatorname{Tr}}
\newcommand{\rank}{\operatorname{rank}}
\newcommand{\diag}{\operatorname{diag}}
\newcommand{\poly}{\operatorname{poly}}
\newcommand{\SA}{\mathrm{SA}}
\newcommand{\dd}{\mathrm d}
\newcommand{\Aco}{\mathsf A}
\newcommand{\Bco}{\mathsf B}
\newcommand{\Cco}{\mathsf C}
\newcommand{\Dco}{\mathsf D}
\allowdisplaybreaks[2]
\title{Quantum Algorithms for Multivariable Polynomial Transformations: From Efficient Synthesis to Quantum Channel Transformations}
\newcommand{\pku}{Center on Frontiers of Computing Studies, School of Computer Science, Peking University, Beijing 100871, China}
\newcommand{\bnu}{School of Artificial Intelligence, Beijing Normal University, Beijing 100875, China}
\newcommand{\oxford}{Mathematical Institute, University of Oxford, Woodstock Road, Oxford OX2 6GG, United Kingdom}
\newcommand{\pkuphy}{School of Physics, Peking University, Beijing 100871, China}
\newcommand{\scnuphysics}{Key Laboratory of Atomic and Subatomic Structure and Quantum Control (Ministry of Education), Guangdong Basic Research Center of Excellence for Structure and Fundamental Interactions of Matter, School of Physics, South China Normal University, Guangzhou 510006, China}
\newcommand{\scnuquantum}{Guangdong Provincial Key Laboratory of Quantum Engineering and Quantum Materials, Guangdong-Hong Kong Joint Laboratory of Quantum Matter, South China Normal University, Guangzhou 510006, China}
\author[1,2]{Zheyu Shen\thanks{Corresponding author: \href{mailto:zheyushen@stu.pku.edu.cn}{\texttt{zheyushen@stu.pku.edu.cn}}.}}
\author[3]{Yusen Wu\thanks{Corresponding author: \href{mailto:yusen.wu@bnu.edu.cn}{\texttt{yusen.wu@bnu.edu.cn}}.}}
\author[1]{Xiao Yuan\thanks{Corresponding author: \href{mailto:xiaoyuan@pku.edu.cn}{\texttt{xiaoyuan@pku.edu.cn}}.}}
\author[1,5,6]{Xiao-Ming Zhang\thanks{Corresponding author: \href{mailto:phyxmz@gmail.com}{\texttt{phyxmz@gmail.com}}.}}
\author[1,4]{Yukun Zhang\thanks{Corresponding author: \href{mailto:yukun.zhang@maths.ox.ac.uk}{\texttt{yukun.zhang@maths.ox.ac.uk}}.}}
\affil[1]{\pku}
\affil[2]{\pkuphy}
\affil[3]{\bnu}
\affil[4]{\oxford}
\affil[5]{\scnuphysics}
\affil[6]{\scnuquantum}
\date{6 October 2026}
\begin{document}
\maketitle

\begin{abstract}
Polynomial transformations are basic primitives in quantum algorithms: quantum signal processing and singular value transformation compile univariate polynomials into circuits with query complexity essentially set by degree. Multivariable transformations of noncommuting matrices, however, lack a comparable synthesis theory. We develop a complete constructive theory under joint block access for all polynomials contractive on the prescribed matrix domain. Given a compact finite-state description of a degree-$D$ polynomial $P$ and $0<\tau<1$, we synthesize $P$ with $(1+\tau)$-optimal normalization using $O(D/\sqrt{\tau})$ queries, and exactly $D$ under row-block access, matching a degree lower bound. The complete circuit realization is classically computable in polynomial time with polylogarithmic dependence on accuracy. The key to our construction is a finite algorithmic Schur--Agler theorem: the coefficient recurrence defines a polynomial-dimensional continuation space supporting a complete semidefinite certificate for contractivity on the prescribed domain. Factoring this certificate yields the contractive realization underlying the query algorithm.

Beyond matrix transformations, our framework lifts multivariable polynomial synthesis to quantum channel transformations. Given coherent Kraus access $K=\{K_a\}_a$, we synthesize any finite jointly contractive family of noncommutative polynomial maps $K\mapsto\{F_b(K)\}_b$ as completely positive operations, allowing coherent interference among Kraus histories. Furthermore, channel-level transformations specified by causal Choi data can be synthesized explicitly as fixed-order quantum combs. Collectively, these results point to a broader program: multivariable approximation as a language for multi-operator quantum algorithms and higher-order quantum information processing.
\end{abstract}
\clearpage
{\hypersetup{linkcolor=black}\tableofcontents}
\setcounter{tocdepth}{2}
\hypersetup{bookmarksdepth=2}
\clearpage

\section{Introduction}
Polynomial transformations provide a way to design quantum operations from the matrices available to an algorithm. Quantum signal processing~\cite{lowchuang2017} and qubitization~\cite{lowchuang2019} use this approach for Hamiltonian simulation, while quantum singular value transformation~\cite{gilyen2019} supports a broad range of matrix arithmetic. Their usefulness rests on a constructive connection: a classical description of the desired polynomial determines the known gates between quantum queries. We develop such a connection for polynomials of several noncommuting matrices, including transformations that retain quantum input and output labels.

For several inputs, the order of multiplication is part of the task. A target is a finite sum of ordered matrix products, with known scalar coefficients, and even a short coefficient recurrence can describe exponentially many such products. Expanding the recurrence into separate words can therefore destroy its computational advantage. There is also a quantum cost to treating the products separately: a term-by-term linear combination~\cite{childswiebe2012} is normalized by the sum of their coefficient magnitudes, together with their encoding normalizations. This can be much larger than the norm of the complete polynomial. We seek a synthesis procedure that retains the compact coefficient description and exploits the norm bound on the joint transformation.

We receive the matrices as designated blocks of one unitary, together with a block that acts as identity on the signal. For instance, placing each matrix at a different diagonal label encodes several independent contractions. A row block $[A_1\ \cdots\ A_k]$ instead combines input labels and obeys $\sum_jA_jA_j^\dagger\preceq I$. The identity branch allows terms of different degrees to share a query sequence. We use the same forward unitary at every call, interleaved with known auxiliary gates; Section~\ref{sec:unitary} specifies the input and shows how coherent selection~\cite{gilyen2019} can supply it, with its access and normalization costs.

For a nonconstant polynomial of degree $D$, let $B$ be its norm over all allowed input tuples and all finite matrix sizes. Given a rational finite-state description of its coefficients and rational $0<\tau,\epsilon<1$, we construct a block encoding with error at most $\epsilon$ and normalization
\[
 B<\beta\le(1+\tau)B.
\]
The circuit uses $O(D/\sqrt\tau)$ queries for a general joint input, and exactly $D$ queries for a row input. Its classical synthesis cost and known-gate count are polynomial in the coefficient description, the oracle-register width, $\tau^{-1/2}$, and $\log(1/\epsilon)$; for a row input the dependence on the margin is polynomial in $\log(1/\tau)$. Thus the normalization approaches the exact-implementation threshold $B$, and the row construction attains the degree lower bound for exact synthesis in the stated forward-query model. Theorem~\ref{cor:scalar-synthesis} gives the scalar guarantee, and Theorem~\ref{thm:matrix-synthesis} extends it to coherent data labels.

Our construction addresses two questions. First, how can the norm of the complete polynomial determine a tractable set of gate parameters? Schur--Agler realization theory~\cite{ballmarxvinnikov2018} relates a uniform norm bound to a positive defect decomposition and a common contractive update. We prove that the residual polynomials left after fixing initial letters of the coefficient recurrence already support a complete finite certificate. Their number is controlled by the recurrence, even when it describes exponentially many words. This makes the norm constraint accessible to classical computation, and the certificate factors determine known auxiliary matrices with certified precision bounds.

Second, how can those matrices be used in a finite circuit whose selected block is the desired polynomial? We construct a query sequence that coherently combines their ordered coefficient products and control the change in normalization introduced by this combination. A row input permits a stronger factorization: the certificate completes the target to a polynomial column, from which each step removes one degree and can be reversed by one row query. The two constructions therefore give different query bounds from the same finite certificate theory. Section~\ref{sec:general} develops their intermediate matrices and full unitary circuits.

Retaining an output label gives a direct application to quantum operations. For a specified coherent Kraus implementation, our algorithm can produce several polynomials $F_b(K)=\sum_w c_{b,w}K_w$ on distinct labels. After a successful output test, tracing those labels realizes
\begin{equation}\label{eq:intro-polynomial-operation}
 \mathcal M_\beta(\rho)=\frac1{\beta^2}\sum_bF_b(K)\rho F_b(K)^\dagger.
\end{equation}
Here a single certificate bounds the entire output column, so its normalization makes the operation trace-nonincreasing. Keeping a label instead allows different polynomials to act coherently on different data values, with a query count governed by the largest degree. Section~\ref{sec:channels} develops these two applications. For comparison with channel-only access, Appendix~\ref{app:comb} gives a separate gate compiler for controllers specified by explicit causal Choi matrices.

\subsection{Relation to earlier work}
For a single signal, classical synthesis separates naturally into completing a polynomial to a unitary-valued function and recovering the gate parameters of that function. Haah's factorization~\cite{haah2019} and the methods of Chao et al.~\cite{chao2020} and Dong et al.~\cite{dong2021} give constructive phase sequences. For the completion step, Berntson and S\"underhauf~\cite{berntson2025} derive a canonical complementary polynomial through a scalar Riemann--Hilbert problem and give contour/FFT algorithms with error bounds. For phase recovery, Alexis, Mnatsakanyan, and Thiele~\cite{alexis2024nlfa} connect QSP with nonlinear Fourier analysis; the Riemann--Hilbert--Weiss algorithm of Alexis et al.~\cite{alexis2026rhw} extends the analytic representation to Szeg\H{o} functions and proves stability of phase recovery with a strict norm margin. Ni and Ying~\cite{niying2024} exploit structured linear algebra in the Half-Cholesky method, and Ni, Sarkar, Ying, and Lin~\cite{ni2025inverse} give an inverse nonlinear Fourier algorithm with nearly linear arithmetic complexity for completed Laurent data, with stability guarantees when the designated analytic factor is zero-free on the closed unit disk. These developments distinguish the cost of completion from that of phase inversion, and both from polynomial approximation; Lin's survey~\cite{lin2026survey} reviews their mathematical and numerical connections. Our corresponding classical problem is to compute a norm certificate on a specified noncommutative matrix domain from a compact coefficient recurrence.

Generalized QSP enlarges the admissible single-unitary polynomial class~\cite{motlagh2024}, and Laneve~\cite{laneve2025nlft} identifies its relation to the nonlinear Fourier transform. Increasing the output dimension is another extension: Lu, Liu, and Lin~\cite{lu2026} give matrix-valued polynomial synthesis on $U(N)$ and also a bivariate construction, while Alexis et al.~\cite{alexis2026matrix} develop an $\mathrm{SU}(2n)$-valued nonlinear Fourier transform. For several scalar signals, Rossi and Chuang's architecture~\cite{rossi2022} has factorization constraints investigated through counterexamples~\cite{mori2024}, constructive decision algorithms~\cite{ito2024}, and representability results~\cite{lanevewolf2025}. N\'emeth et al.~\cite{nemeth2023} also study homogeneous noncommuting variants. Thus adding variables and adding coherent output labels have several distinct precedents; in our setting, both the admissible targets and their normalization are determined by the supplied joint block encoding.

For general square matrices, the encoding also determines which function is being transformed. Low and Su's eigenvalue processing~\cite{lowsu2026} combines polynomial approximation with spectral geometry, whereas regular block encodings preserve matrix powers for QSP-based evaluation~\cite{gutierrez2026}. Interpolation-based synthesis can construct a diagonal encoding of sampled function values~\cite{ko2026}. Our function applications instead supply compact polynomial approximations that are uniform on the declared joint domain. Appendix~\ref{sec:applications} gives such an approximation, together with its loading assumptions and success bound.

Another approach uses positive matrix data to describe quantum protocols. Semidefinite query characterizations~\cite{barnum2003,barnum2007} and completely bounded polynomial formulations~\cite{arunachalam2019} connect feasibility to quantum implementation; the associated tensor norm has an SDP formulation~\cite{gribling2019}, and finite Christensen--Sinclair factors can be recovered through semidefinite duality~\cite{escudero2024}. In state conversion, adversary data guide the algorithm~\cite{leemittal2011}; Laneve~\cite{laneve2025} uses feasible adversary data to construct QSP protocols, and transducers provide an accounting of composition and time~\cite{belovsjefferyyolcu2024}. We begin instead with a rational recurrence for the target coefficients. Our residual-space theorem supplies a complete polynomial-size certificate for its joint norm, and quantitative strict feasibility makes this certificate and the resulting gates computable. The exact fixed-query characterization in Appendix~\ref{app:fixed-query} uses full histories, whose number can instead be exponential in the query count.

The mathematical form of our certificate comes from Agler's defect decompositions~\cite{agler1990} and noncommutative realization theory~\cite{ballgroenewaldmalakorn2005}. Bounded-real theory relates finite structured realizations to norm inequalities~\cite{ballgroenewaldmalakorn2006,ballgroenewaldterhorst2016}, and finite rational defect decompositions on the noncommutative polydisc are available~\cite{bhattacharyya2026}. We state the general free Schur--Agler representation principle~\cite{ballmarxvinnikov2018} in Section~\ref{sec:sa-foundation}. The target-dependent finite certificate is then proved independently, using hereditary positivity methods~\cite{augat2018,heltonklepnelson2015}; its Gram factors give the common contraction through Douglas factorization~\cite{douglas1966}. The additional requirements for synthesis are completeness in the chosen residual space, bounds in the original description length, and a circuit for the resulting numerical matrices.

For quantum operations, Choi matrices~\cite{choi1975,jamiolkowski1972}, Kraus operators~\cite{watrous2018}, and Stinespring dilations~\cite{stinespring1955} describe different aspects of the same completely positive map. Interference between Kraus histories uses the extra information retained by a specified coherent implementation~\cite{abbott2020,chiribellakristjansson2019}. Channel-only higher-order maps are described by supermaps~\cite{chiribella2008} and quantum strategies~\cite{gutoskiwatrous2007}, with sequential realization under a fixed causal order given by comb theory~\cite{chiribella2009,bisio2010}. Our joint polynomial construction acts on the specified coherent implementation. We treat compilation of a controller from explicit causal Choi data separately, using that existing sequential realization theory.

\section{The synthesis problem and its mathematical foundation}\label{sec:unitary}
To determine the gates of the transformation, we first specify the joint oracle and the classical description of the target, since they determine both the relevant norm and the cost of synthesis. The Schur--Agler representation principle then identifies a common contractive update, and a finite polynomial certificate will provide its numerical entries.

\subsection{Inputs, oracle access, and the synthesis guarantee}
The signal register $\mathsf S$ has $n_s$ qubits and dimension $n=2^{n_s}$. The matrices $A_1,\ldots,A_k$ act on this same register. For a word $w=j_1\cdots j_d$ over $\{1,\ldots,k\}$, we write $A_w=A_{j_1}\cdots A_{j_d}$ and $|w|=d$, with $A_\emptyset=I_n$. Our principal target is the scalar-coefficient polynomial
\begin{equation}\label{eq:target}
 p(A)=\sum_w c_wA_w,\qquad D=\max\{|w|:c_w\ne0\}.
\end{equation}
The sum is finite, and $D$ denotes the true degree of a nonzero target. We write $x_w$ for the corresponding formal word. The rightmost factor acts first on the signal, so implementing $A_1A_2$ requires the multiplication by $A_2$ before the multiplication by $A_1$. The coefficients are known classical data; the matrices $A_j$ are accessed through the unitary specified below.

For one matrix, a block encoding identifies an ancillary input and output of a unitary whose selected block is that matrix, possibly divided by a known scale~\cite{gilyen2019}. We use the same idea for several matrices by placing them at different ancillary labels. For example, with two separately bounded matrices the supplied unitary has the form
\begin{equation}\label{eq:basic-joint-oracle}
 U_A=\begin{pmatrix}
 I_n&0&0&0\\
 0&A_1&0&*\\
 0&0&A_2&*\\
 0&*&*&*
 \end{pmatrix},\qquad \|A_1\|,\|A_2\|\le1.
\end{equation}
The first three block rows and columns correspond to ancillary labels $0,1,2$; the last group contains all remaining labels. If the oracle register $\mathsf O$ has $q_o$ qubits and $d_o=2^{q_o}\ge5$ labels, these groups have lengths $n,n,n,(d_o-3)n$. Thus Eq.~\eqref{eq:basic-joint-oracle} denotes a complete square unitary. The stars are arbitrary compatible blocks that complete its columns and rows. On label zero the oracle acts as identity on the signal. This coherent identity branch will allow components requiring different numbers of matrix multiplications to undergo the same number of oracle calls.

One query is one forward application of the entire $U_A$. Let $q_{\rm anc}$ be the total number of non-signal qubits, including the oracle register. Between queries, we apply known unitaries $W_t$ of order $2^{q_{\rm anc}}$ to these qubits, leaving the signal unchanged. For $Q\ge1$, writing $\widehat U_A$ for the supplied unitary extended by identity on the added registers, the resulting circuit is
\begin{equation}\label{eq:target-query-sequence}
 \mathcal U_A=(W_Q\otimes I_n)\widehat U_A
 (W_{Q-1}\otimes I_n)\cdots\widehat U_A(W_0\otimes I_n).
\end{equation}
The input on $\mathsf S$ is arbitrary. The oracle register and the added auxiliary registers start at zero. For a positive normalization $\beta$, the ideal circuit is required to satisfy
\begin{equation}\label{eq:selected-output-task}
 (\bra{0^{q_{\rm anc}}}\otimes I_n)\mathcal U_A
 (\ket{0^{q_{\rm anc}}}\otimes I_n)=p(A)/\beta.
\end{equation}
We compute both $\beta$ and the known gates. Their finite-gate approximations give the prescribed block error $\epsilon$. The same gate description must work for every compatible choice of the starred blocks, so its parameters depend on the coefficients and the declared block positions. The guarantee also holds when the signal is entangled with an untouched reference.

\paragraph{Obtaining the input and choosing its layout.}
The diagonal arrangement has a familiar implementation through coherent selection. Suppose $U_j$ is a block encoding of a physical matrix $M_j/\alpha_j$, with known $\alpha_j>0$. A selector $U_{\rm sel}$ applies $U_j$ on index $j$ and identity on index zero. Keeping this index and the designated encoding coordinates gives Eq.~\eqref{eq:basic-joint-oracle}, or its $k$-variable version, with loaded matrices $A_j=M_j/\alpha_j$. A supplied selector therefore implements one joint query. If instead the individually controlled $U_j$ are supplied, applying each under the corresponding index condition constructs the selector using one controlled call per input. These are distinct input resources: after the joint unitary has been obtained, Eq.~\eqref{eq:target-query-sequence} calls it unconditionally, while constructing it from arbitrary unknown unitaries requires the additional control access~\cite{araujo2014}.

Other layouts are useful because they combine auxiliary labels differently. The active part of a diagonal encoding retains the variable label, a row combines input labels into one output, and a column distributes one input over several outputs. Their constraints are
\begin{equation}\label{eq:two-input-domains}
 \begin{aligned}
 \Delta_{\rm diag}(A)&=\operatorname{diag}(A_1,\ldots,A_k),
 &\|\Delta_{\rm diag}(A)\|\le1&\iff\max_j\|A_j\|\le1,\\
 \Delta_{\rm row}(A)&=[A_1\ \cdots\ A_k],
 &\|\Delta_{\rm row}(A)\|\le1&\iff\sum_jA_jA_j^\dagger\preceq I_n,\\
 \Delta_{\rm col}(A)&=\begin{pmatrix}A_1\\\vdots\\A_k\end{pmatrix},
 &\|\Delta_{\rm col}(A)\|\le1&\iff\sum_jA_j^\dagger A_j\preceq I_n.
 \end{aligned}
\end{equation}
For a row input, the designated block, including the identity branch, is
\[
 L_{\rm row}(A)=\begin{pmatrix}I_n&0&\cdots&0\\0&A_1&\cdots&A_k\end{pmatrix}.
\]
Its common active output allows the degree-reducing construction of Section~\ref{sec:row} to restore one degree with each query. A column is the natural input for a coherent Kraus implementation, where the output labels record the Kraus index. These applications motivate retaining rectangular blocks in the general theorem.

A known rotation converts a selector into a weighted row. We choose a unit column $g=(g_1,\ldots,g_k)^{\mathsf T}$ with nonzero, efficiently computable entries, and let $P_g$ fix index zero and send $\sum_jg_j\ket j$ to $\ket1$. Then
\begin{equation}\label{eq:row-access-construction}
 U_{\rm row}=(P_g\otimes I)U_{\rm sel}
\end{equation}
has active row $[\overline g_1M_1/\alpha_1\ \cdots\ \overline g_kM_k/\alpha_k]$. The identity acts on the common encoding register and signal. This construction uses one selector and a known rotation, as in linear-combination-of-unitaries preparation and selection~\cite{childswiebe2012}. Equal weights give a factor $1/\sqrt{k}$, which is needed because separate contraction bounds do not imply the row bound in Eq.~\eqref{eq:two-input-domains}.

The target must be expressed in the matrices actually loaded. With $Z_j=\overline g_jM_j/\alpha_j$, replacing each word coefficient by $c_w\prod_{t=1}^{|w|}(\alpha_{j_t}/\overline g_{j_t})$ gives a polynomial $\widetilde p$ with $\widetilde p(Z)=p(M)$. In particular, when $\alpha_1=\alpha_2=1$, an equally loaded two-variable row contributes a factor $1/2$ to every quadratic product. The $D$-query row result concerns this loaded domain and the appropriately rewritten target; its normalization incorporates the loading scales. Appendix~\ref{app:loaders} specifies all registers, gives rational choices of the weights, and counts the constituent calls. It also constructs known linear combinations of blocks using the block-encoding combination rule~\cite{gilyen2019}.

\paragraph{The general joint oracle.}
We now give one set of assumptions covering these layouts and known linear mixtures of their entries. The active block is
\begin{equation}\label{eq:pencil}
 \Delta(A)=\left[\sum_{j=1}^k f_{\mu\nu,j}A_j\right]_{{\mu=1,\ldots,a};\,{\nu=1,\ldots,b}},
 \qquad \|\Delta(A)\|\le1,
\end{equation}
where the known $a\times b$ matrices $F_j=[f_{\mu\nu,j}]$ are linearly independent and have rational real and imaginary parts. Hence $\Delta(A)$ has $an$ rows and $bn$ columns. Independence means that these blocks distinguish the variables: a left inverse of the matrix of coefficients expresses every $A_j$ as a known linear combination of the blocks. For example, $F_1=F_2$ would reveal only $A_1+A_2$, which cannot determine a target depending separately on $A_1$. The domain in Eq.~\eqref{eq:pencil} bounds each variable, but its individual bound need not be one; all norm statements below refer to the joint block.

The supplied oracle acts on $\mathsf O,\mathsf S$, with $d_o=2^{q_o}$ ancillary labels. In known computational-basis input and output orders, it has the complete matrix
\begin{equation}\label{eq:oracle-full}
 U_A=\begin{pmatrix}
 I_{in}&0&0\\
 0&\Delta(A)&B_A\\
 0&C_A&D_A
 \end{pmatrix},\qquad U_A^\dagger U_A=U_AU_A^\dagger=I_{d_on}.
\end{equation}
Here $i\ge1$, the block row lengths are $in,an,(d_o-i-a)n$, and the block column lengths are $in,bn,(d_o-i-b)n$. The designated input labels are $0,\ldots,i+b-1$ and the output labels are $0,\ldots,i+a-1$, selecting
\begin{equation}\label{eq:visible-block}
 L(A)=\begin{pmatrix}I_{in}&0_{in\times bn}\\0_{an\times in}&\Delta(A)\end{pmatrix}.
\end{equation}
We assume $d_o\ge i+a+b$, so every allowed active block admits a unitary completion; Appendix~\ref{app:unitary-details} proves this dimension condition. The matrices $B_A,C_A,D_A$ are otherwise unrestricted. The identity block and its phase relative to the active entries are part of the supplied resource. Its columns already have norm one, which forces the adjoining zeros and makes this branch carry the signal unchanged. Each query in Eq.~\eqref{eq:target-query-sequence} uses this same full unitary, including on components outside the designated block.

\paragraph{Classical coefficients and normalization.}
A polynomial such as $(x_1+x_2)^D$ contains $2^D$ ordered products, although a recurrence with $D+1$ states generates all its coefficients. We therefore measure synthesis cost against a finite linear representation of the supplied coefficients:
\begin{equation}\label{eq:finite-state}
 c_w=\ell T_wr_0,\qquad
 T_w=T_{j_1}\cdots T_{j_d},\qquad T_\emptyset=I_s.
\end{equation}
The row $\ell$, matrices $T_j$, and column $r_0$ have sizes $1\times s$, $s\times s$, and $s\times1$, with rational real and imaginary parts encoded in binary. Such representations are standard for weighted automata~\cite{carlylepaz1971} and noncommutative series~\cite{berstelreutenauer2011}; they allow different words to share a coefficient calculation. For a loaded row, multiplying $T_j$ by $\alpha_j/\overline g_j$ performs the letter-dependent rescaling above. The rational-input theorem requires these multipliers to be rational, and Appendix~\ref{app:loaders} gives suitable weights when the original scales are rational.

Sparse word lists and explicit branching programs also have polynomial-cost conversions to Eq.~\eqref{eq:finite-state}. The input checks use rational weighted-automaton reduction~\cite{kiefer2020}, followed by the polynomiality and degree tests in Appendix~\ref{app:coefficient-inputs}. For a nonzero minimally represented polynomial, the resulting degree satisfies $D\le s-1$. The synthesis algorithm reads these coefficient matrices and the oracle-layout data, and returns the known gates to be interleaved with the supplied quantum queries.

The uniform norm associated with the layout is the free Schur--Agler norm~\cite{ballmarxvinnikov2018},
\begin{equation}\label{eq:SA-norm}
 B=\|p\|_{\SA,\Delta}
   =\sup_{n\ge1}\ \sup_{\|\Delta(A)\|\le1}\|p(A)\|.
\end{equation}
It measures the target over the entire declared domain, rather than at the particular matrices in one run. We use all finite sizes because the coefficient-based circuit must work on every signal register; zero embedding shows that power-of-two sizes suffice, and radial continuity gives the same norm on the open domain. Contractivity of a unitary block forces $\beta\ge B$ for an exact implementation of $p/\beta$. With uniform selected-block error $\epsilon$, the same argument gives $\beta\ge B/(1+\epsilon)$. Our algorithm approaches the exact normalization threshold while allocating the known-gate error separately.

\begin{theorem}[Scalar polynomial transformation]\label{cor:scalar-synthesis}
Let $p$ have a rational finite-state representation of degree $D\ge1$, and let $0<\tau,\epsilon<1$ be rational. Under the joint-oracle assumptions above, a classical algorithm computes a rational $\beta$ and a quantum circuit $\mathcal U_A$ on the signal and $q_{\rm anc}$ non-signal qubits such that
\begin{equation}\label{eq:main-output}
 \left\|(\bra{0^{q_{\rm anc}}}\otimes I_n)\mathcal U_A
 (\ket{0^{q_{\rm anc}}}\otimes I_n)-p(A)/\beta\right\|\le\epsilon,
 \qquad B<\beta\le(1+\tau)B.
\end{equation}
The circuit uses $O(D/\sqrt\tau)$ queries, or $D$ for a row input. Classical computation and known-gate count are polynomial in the rational input length, $q_o$, $\tau^{-1/2}$, and $\log(1/\epsilon)$; the row case depends polynomially on $\log(1/\tau)$. The guarantees hold uniformly over signal sizes, admissible tuples, compatible oracle completions, and states entangled with an untouched reference. With exact known gates, the selected block equals $p(A)/\beta$.
\end{theorem}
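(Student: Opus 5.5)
The plan is to pass through a finite Schur--Agler certificate built on the residual (continuation) space of the finite-state representation. Write $c_w=\ell T_wr_0$. For each prefix, the residual polynomials $p_u(x)=\sum_w \ell T_uT_w r_0\,x_w$ all lie in the span of the $s$ polynomials $e_\sigma^{\mathsf T}\sum_wT_wr_0x_w$. The first step is a finite algorithmic Schur--Agler theorem. For any $\gamma>B$, I will show there are a positive semidefinite Gram matrix $G$ of size $O(s\cdot\max(a,b))$ on this residual space and a contraction $V$ that realize $p/\gamma$ in transfer-function form:
\[
 p(A)/\gamma=\mathsf D+\mathsf C\,(\Delta(A)\otimes I)\bigl(I-\mathsf A(\Delta(A)\otimes I)\bigr)^{-1}\mathsf B,
 \qquad V=\begin{pmatrix}\mathsf A&\mathsf B\\ \mathsf C&\mathsf D\end{pmatrix}.
\]
Since $p$ is a polynomial, this realization is truncated at depth $D$, and the colligation is nilpotent along the recurrence.

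Completeness in the residual space comes from hereditary positivity. The defect $\gamma^2-p(A)^\dagger p(A)$ is written as a sum of terms $H(A)^\dagger(I-\Delta(A)^\dagger\Delta(A))H(A)$ plus squares. The positivstellensatz for the free domain $\|\Delta\|\le1$ has degree bounds of order $D$, and its Gram factors can be compressed to the span of residuals, because only residual words appear in the columns $H$. The Douglas lemma then produces $V$ from the Gram factors.

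The second step is computational. The certificate is an SDP of polynomial size with rational data. For $\gamma=(1+\tau/2)B'$, where $B'$ is a bisection estimate of $B$ refined to relative accuracy $\tau/4$, I will establish strict feasibility with margin of order $\tau$. The ellipsoid or interior-point method then returns a rational $G$ in time polynomial in the input length and $\log(1/\tau)$. After rounding I choose $\beta$ rational with $B<\beta\le(1+\tau)B$. Factoring $G$ and rounding $V$ to a contraction costs a $\log(1/\epsilon)$ factor. The hard part is the margin estimate: I expect the main obstacle to be a quantitative lower bound on the smallest eigenvalue of some feasible certificate, in terms of $\tau$ and the bit length. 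I would first obtain it by mixing the certificate for $\gamma$ with the trivially strict certificate for a scaled-down target $p(rA)$, $r<1$, using a radial shrink.

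The third step builds the circuit. In the general joint case, the realization is a feedback loop $(I-\mathsf A\Delta)^{-1}$ that is nilpotent of depth $D$. I would unroll it by applying a unitary dilation of $V$ as the known gate $W_t$ between queries. Routing through the identity branch of $U_A$ lets components of different degrees share the $D$ steps. The unroll degrades the normalization, because the contraction must be dilated and the terminal branch combined coherently. I would control this degradation by amplitude-amplification-style or Chebyshev smoothing of the rescaling, giving $O(D/\sqrt{\tau})$ queries at normalization $(1+\tau)B$. In the row case I would instead complete $p/\beta$ to an isometric polynomial column using the certificate. Each factor of $V$ then lowers the degree by one and is inverted by one row query, since the common active output permits an exact Fejér--Riesz-type peeling. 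This gives exactly $D$ queries, and the only margin dependence is the $\log(1/\tau)$ from the SDP. Uniformity over signal sizes, completions $B_A,C_A,D_A$, and reference systems holds because all gates act only on ancillas and the selected block depends only on $L(A)$.
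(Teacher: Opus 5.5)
Your overall architecture (a residual-space Schur--Agler certificate, an ellipsoid search, a Douglas factor, dilated known gates, and for row inputs an inner completion peeled one degree per query) is the paper's. Your row-input plan essentially matches it, with two provisos. ``Isometric'' has to mean the coefficient-convolution (inner) identities, which the certificate yields because its query-defect term vanishes on truncated word matrices; evaluated columns are not isometric. And the peeling frames must be computed from the state recurrence, not from a word table.

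\textbf{The general-input case is a genuine gap.} This is exactly where $O(D/\sqrt\tau)$ must come from. Unroll with $Q\ge D$ forward queries, injecting the signal at stage $t$ with amplitude $v_t$ and collecting at stage $r$ with amplitude $v_r$ (with $\sum_tv_t^2=1$). This produces $\sum_dw_dq_d(A)$ with $w_d=\sum_tv_tv_{t+d}$, so the loss is a degree-dependent reweighting, not a scalar loss of normalization. Certifying $p/\gamma$ first, as your second step does, and then unrolling therefore realizes the wrong polynomial. The weights must be divided out of the classical target before the certificate is computed; the paper uses a degree-clock representation of $\sum_dp_d/(\beta w_d)$. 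You must then prove that the compensated target keeps a strict margin, $\|\sum_dp_d/w_d\|_{\SA,\Delta}\le B/(1-\rho_N)$ with $\rho_N=64D^2/(N+1)^2$. The paper gets this from three ingredients: a sine window, for which $1-w_1=O(N^{-2})$; averaging the whole polynomial $p(e^{i\theta}A)$ against the kernel $|\sum_tv_te^{it\theta}|^2$; and the Cauchy estimate $\|\partial_\theta^2p(e^{i\theta}A)\|\le e^2D^2B/2$. Bounding the homogeneous parts separately loses their interference and gives only $Q=O(D^{3/2}/\sqrt\tau)$. Your ``amplitude-amplification-style or Chebyshev smoothing'' supplies neither the mechanism nor the estimate. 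Amplitude amplification would also require $U_A^\dagger$, which the forward-only model does not provide.

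\textbf{Three further steps fail as written.}

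First, residual completeness is the heart of the theorem, and your justification does not establish it. In a generic degree-$D$ certificate the entries of $H$ are arbitrary polynomials; only their images $\mathsf CF_{j_1}\mathsf A\cdots\mathsf AF_{j_t}H$ are residuals of $p$. Projecting $H$ onto the residual span does not preserve the defect identity. The paper proves completeness directly in residual coordinates. A coercive functional at a scaled word tuple closes the image cone. A Gram factor of the separating matrix then gives a finite tuple $Y_j=Z_jR^+$, which is admissible because dual feasibility forces $\ker R\subseteq\ker Z_j$. Its evaluations reproduce the separator's Gram vectors by induction on degree, and this is exactly where closure of the residual span under each $\partial_j$ is used.

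Second, your radial-shrink mixing does not produce the margin. A certificate of $p(rx)$ certifies a different polynomial, and the substitution $x\mapsto rx$ only adds the square $(1-r^2)H(rx)^*H(rx)$ without making $T$ definite. The paper instead applies the general hereditary completeness to $1-\sigma_0\,g^*(I_h+\Phi(I_{br}))g$, which gives a positive definite certificate of the constant $1$. It adds $1-\nu^2$ times this to a certificate of $\nu^2-p^*p/\gamma^2$, and uses $\tr S+\tr T\le1/\chi$ for the outer ball. The capped search then doubles as the one-sided test (failure proves $B>b_0$) that your bisection for $B'$ needs.

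Third, uniformity over completions is not automatic. A Julia dilation fixes its block only for one use ($J(0)^2=I_2$). Amplitude that leaves the designated coordinates through $B_A,C_A,D_A$ or a dilation complement can therefore be routed back by later gates and queries, making the all-zero block depend on the completion. The paper prevents this by recording each first exit on a fresh label of a failure register.
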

The parameter $\tau$ controls the excess normalization, whereas $\epsilon$ controls approximation of the known gates. The ideal success probability on a unit input $\psi$ is $\|p(A)\psi\|^2/\beta^2$, so an application that conditions on success also needs a lower bound on this quantity. Section~\ref{sec:coherent-use} gives the coherent-use and error guarantees. Nonzero constants use a known scalar block encoding with zero queries, and the zero polynomial has a zero-query realization at any positive normalization.

\begin{example}[A noncommutative target used below]\label{ex:commutator-thread}
On the diagonal domain, we consider
\begin{equation}\label{eq:running-target}
 p(x)=\frac14+\frac{x_1x_2-x_2x_1}{12}.
\end{equation}
Scalar substitutions give $1/4$, whereas the norm-one matrices
\[
 A_1=\begin{pmatrix}0&1\\1&0\end{pmatrix},\qquad
 A_2=\begin{pmatrix}0&1\\-1&0\end{pmatrix}
\]
give commutator $\operatorname{diag}(-2,2)$. This pair attains the triangle-inequality bound $B=5/12$. We will follow the same target through its residual coordinates and the compensation of its quadratic coefficient, illustrating how the synthesis preserves the order of the two matrix products.
\end{example}

\subsection{Schur--Agler representations as query parameters}\label{sec:coefficient-mechanism}\label{sec:sa-foundation}
The coefficient recurrence specifies the ordered products we need, but its individual transition norms may give a poor normalization. We instead seek one contraction that accounts for the norm of the whole target. Agler's defect decompositions~\cite{agler1990} and the free Schur--Agler realization theorem~\cite{ballmarxvinnikov2018} provide the mathematical reason for making this choice.

We state the representation principle for a scalar free polynomial $q(x)=\sum_wq_wx_w$, which will later be the normalized target supplied to the synthesis routine. For an auxiliary Hilbert space $\mathcal H$, repeating the input block gives the $a\times b$ operator matrix
\[
 \Delta_{\mathcal H}(A)=
 \left[\sum_jf_{\mu\nu,j}(I_{\mathcal H}\otimes A_j)\right]_{\mu,\nu}.
\]
Here and below, a subscript $n$ on a fixed block denotes its tensor product with $I_n$; for a finite numerical matrix $C$, this is $C_n=[C_{rs}I_n]_{r,s}$.

\begin{lemma}[Free Schur--Agler realization]\label{lem:free-sa}
A scalar free polynomial $q$ satisfies $\|q\|_{\SA,\Delta}\le1$ if and only if there are a Hilbert space $\mathcal H$ and a contraction
\[
 V^\circ=\begin{pmatrix}\Aco^\circ&\Bco^\circ\\\Cco^\circ&\Dco^\circ\end{pmatrix}
\]
with block input $[z;c]$, where $z$ is a column of $a$ vectors in $\mathcal H$ and $c\in\C$, and block output consisting of $b$ such vectors and a scalar, such that
\begin{equation}\label{eq:analytic-sa-transfer}
 q(A)=\Dco_n^\circ+\Cco_n^\circ\Delta_{\mathcal H}(A)
       \bigl(I-\Aco_n^\circ\Delta_{\mathcal H}(A)\bigr)^{-1}\Bco_n^\circ
       \qquad(\|\Delta(A)\|<1).
\end{equation}
\end{lemma}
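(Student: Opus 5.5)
Both directions will be handled separately, with the easy implication first.

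\emph{Sufficiency.} Assume a contraction $V^\circ$ represents $q$ through Eq.~\eqref{eq:analytic-sa-transfer}. Fix $\|\Delta(A)\|<1$. Since $\|\Aco^\circ_n\Delta_{\mathcal H}(A)\|<1$, the inverse exists and is given by a Neumann series. For a unit vector $u$ of the signal, set
\[
 c=u,\qquad z=\bigl(I-\Aco^\circ_n\Delta_{\mathcal H}(A)\bigr)^{-1}\Bco^\circ_n u,\qquad y=\Delta_{\mathcal H}(A)z .
\]
Then $V^\circ_n$ maps $[y;u]$ to $[z;q(A)u]$. Contractivity of $V^\circ_n$ gives
\[
 \|z\|^2+\|q(A)u\|^2\le\|y\|^2+\|u\|^2\le\|z\|^2+\|u\|^2 ,
\]
so $\|q(A)\|\le1$ on the open domain. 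The remark after Eq.~\eqref{eq:SA-norm} says radial continuity gives the same norm on the closed domain, so $\|q\|_{\SA,\Delta}\le1$.

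\emph{Necessity.} This is the content of the free Schur--Agler theorem~\cite{ballmarxvinnikov2018}, and I would follow its standard route.

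\textbf{Step 1: Agler decomposition.} The goal is a positive kernel $\Gamma$ on the free domain, valued in operators on $\C^a\otimes\mathcal H$ (or on $\C^b$, with matching conventions), such that
\[
 1-q(X)q(Y)^* = H(X)\bigl[\Gamma(X,Y)\bigr](I_b\otimes I)H(Y)^*-\Delta\text{-weighted term}.
\]
Concretely, $1-q(X)q(Y)^*$ should equal $\Gamma(X,Y)$ evaluated against $I-\Delta(X)\Delta(Y)^*$ through a completely positive kernel. I would obtain this by a Hahn--Banach/cone separation argument on finite sets of points, using matrix-size amplification. The key ingredient is that the cone of such kernels is closed; this is where one exploits that free domains are closed under direct sums. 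Then one passes to the whole domain by Kurosh/compactness.

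\textbf{Step 2: lurking isometry.} Factor $\Gamma(X,Y)=H(X)H(Y)^*$ (Kolmogorov decomposition). Rewrite the decomposition as an equality of Gram matrices:
\[
 \bigl\langle [H(Y)^*\Delta(Y)^*;1],[H(X)^*\Delta(X)^*;1]\bigr\rangle
 =\bigl\langle [H(Y)^*;q(Y)^*],[H(X)^*;q(X)^*]\bigr\rangle .
\]
This defines an isometry between the spans of these vectors. Extend it by zero on the complements (enlarging $\mathcal H$ if needed) to obtain a contraction. Its adjoint is $V^\circ$.

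\textbf{Step 3: recover the transfer function.} Solving the resulting linear relations recovers Eq.~\eqref{eq:analytic-sa-transfer}, since $I-\Aco^\circ_n\Delta$ is invertible on the open domain.

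The main obstacle is Step 1: establishing the existence of the positive decomposition from the norm bound alone. It requires the closedness of the relevant cone and a noncommutative GNS/separation argument. Since the lemma is stated as the known representation principle, I would cite~\cite{ballmarxvinnikov2018} for this step and only verify the conventions. These are the row-versus-column orientation of $\Delta_{\mathcal H}$ and the tensor placement of $I_n$, which must match our $a\times b$ layout (input $a$ copies, output $b$ copies). This amounts to taking adjoints in their realization.
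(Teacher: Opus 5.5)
Your proposal is correct and follows essentially the paper's route: the paper also cites the free Schur--Agler theorem of Ball, Marx, and Vinnikov for necessity, and gets sufficiency from contractivity of the colligation on the open domain followed by continuity to the closed domain, which is the same energy inequality you write (compare Eq.~\eqref{eq:transfer-energy}). The convention check the paper records is that a representation of the finite matrix algebra is an identity amplification, which gives the form of $\Delta_{\mathcal H}(A)$. It also notes that the push-through identity $(I-XY)^{-1}X=X(I-YX)^{-1}$ turns the solved lurking-isometry relations, $\Dco^\circ_n+\Cco^\circ_n\bigl(I-\Delta_{\mathcal H}(A)\Aco^\circ_n\bigr)^{-1}\Delta_{\mathcal H}(A)\Bco^\circ_n$, into the displayed order, and this is what your final ``adjoint'' step, as you set it up in Step 2, actually needs.
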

The four blocks describe injection, an update between successive uses of the input block, collection, and a constant contribution. The lemma guarantees that one contraction can perform these tasks consistently with the target norm. Its auxiliary space may be infinite dimensional, so we still need finite data from which to compute gates. We will obtain those data as a polynomial defect certificate.\footnote{Lemma~\ref{lem:free-sa} is the scalar specialization of the free Schur--Agler realization theorem of Ball, Marx, and Vinnikov~\cite{ballmarxvinnikov2018}. A representation of the finite input matrix algebra is an identity amplification, and applying $(I-XY)^{-1}X=X(I-YX)^{-1}$ to the transfer formula gives the displayed order. Contractivity makes the inverse exist when $\|\Delta(A)\|<1$; polynomial continuity extends the norm bound to the boundary.}

For a positive integer $m$, we use $m$ internal coordinates and order the factors as block label, internal index, and signal. The resulting matrices are
\begin{equation}\label{eq:F-memory}
 F_j^{(m)}=[f_{\mu\nu,j}I_m]_{\mu,\nu},\qquad
 F_j^{(m)}:\ am\times bm,
\end{equation}
\begin{equation}\label{eq:delta-amplification}
 \Delta_m(A)=\sum_jF_j^{(m)}\otimes A_j,
 \qquad \Delta_m(A):\ amn\times bmn.
\end{equation}
It acts as $\Delta(A)$ on each internal value, so its norm is $\|\Delta(A)\|$. The update we will compute is the numerical matrix
\begin{equation}\label{eq:V}
 V=\begin{pmatrix}\Aco&\Bco\\\Cco&\Dco\end{pmatrix},\qquad
 V:\ (bm+1)\times(am+1),\qquad \|V\|\le1.
\end{equation}
The four block sizes are $bm\times am$, $bm\times1$, $1\times am$, and $1\times1$. The column $\Bco$ injects the signal into the active input coordinates, $\Aco$ returns a queried output to those coordinates, $\Cco$ collects an output, and $\Dco$ contributes the constant term. Their joint contraction bound says that, for an active column $z$ of length $amn$ and a signal column $\psi$,
\begin{equation}\label{eq:joint-update-norm}
 \|\Aco_nz+\Bco_n\psi\|^2+\|\Cco_nz+\Dco_n\psi\|^2
 \le\|z\|^2+\|\psi\|^2.
\end{equation}
This is the norm inequality we will preserve when completing the update to a unitary gate.

Each occurrence of $\Delta_m(A)$ contributes one matrix letter. If $q_d(A)=\sum_{|w|=d}q_wA_w$ is the degree-$d$ part, the first required identities are
\begin{equation}\label{eq:transfer-first-terms}
 \begin{aligned}
 q_0(A)&=\Dco_n,\\
 q_1(A)&=\Cco_n\Delta_m(A)\Bco_n,\\
 q_2(A)&=\Cco_n\Delta_m(A)\Aco_n\Delta_m(A)\Bco_n.
 \end{aligned}
\end{equation}
Expanding the products shows how the update generates each ordered coefficient:
\begin{equation}\label{eq:coefficient-factorization}
 q_\emptyset=\Dco,\qquad
 q_{j_1\cdots j_d}
 =\Cco F_{j_1}^{(m)}\Aco F_{j_2}^{(m)}\cdots
       \Aco F_{j_d}^{(m)}\Bco\quad(d\ge1).
\end{equation}
A degree-one word has no $\Aco$ factor. Summing this sequence gives the finite-dimensional transfer formula~\cite{ballgroenewaldmalakorn2005}.

\begin{definition}[Finite contractive transfer realization]\label{def:transfer-realization}
The contraction in Eq.~\eqref{eq:V} realizes $q$ if, for every tuple with $\|\Delta(A)\|<1$,
\begin{equation}\label{eq:sa-transfer}
 q(A)=\Dco_n+\Cco_n\Delta_m(A)
          (I_{bmn}-\Aco_n\Delta_m(A))^{-1}\Bco_n.
\end{equation}
We call $m$ the multiplicity of the realization.
\end{definition}
Its convergent geometric expansion on the open domain gives precisely Eq.~\eqref{eq:coefficient-factorization}. Thus the finite matrix must reproduce the entire coefficient sequence, including zero coefficients above the target degree. The circuit construction in Section~\ref{sec:general} will combine the required degrees in one output.

\paragraph{The finite data to be computed.}
For a nonconstant target of degree $D$, we seek polynomial columns $G,H$ with $t,bm$ entries and degrees at most $D,D-1$, respectively, satisfying
\begin{equation}\label{eq:polynomial-agler}
 1-q(x)^*q(x)=G(x)^*G(x)
       +H(x)^*\bigl(I_{bm}-\Delta_m(x)^*\Delta_m(x)\bigr)H(x).
\end{equation}
Here $x_1,\ldots,x_k$ are formal noncommuting letters, $\Delta_m(x)=\sum_jF_j^{(m)}x_j$, and the formal adjoint conjugates the coefficients and reverses the adjointed word. The evaluations of $G,H$ have sizes $tn\times n$ and $bmn\times n$. We call the identity a polynomial defect certificate. Applied to a signal vector $\psi$, its second term is the nonnegative difference $\|H(A)\psi\|^2-\|\Delta_m(A)H(A)\psi\|^2$, and the first term supplies the remaining loss. It therefore certifies the target's norm and, being an identity of polynomial coefficients, gives numerical data for a common update.

\begin{lemma}[From a finite certificate to query parameters]\label{lem:certificate-to-realization}
A polynomial defect certificate for $q$ determines a finite contraction $V$ of the form in Eq.~\eqref{eq:V} that realizes $q$. In a common independent polynomial basis, let $X,Y$ be the numerical coefficient matrices of $[\Delta_mH;1]$ and $[H;q]$. Writing $X^+$ for the Moore--Penrose inverse, the choice $V=YX^+$ satisfies $VX=Y$ and all identities in Eq.~\eqref{eq:coefficient-factorization}.
\end{lemma}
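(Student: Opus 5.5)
The proof is a lurking-isometry argument. I would rewrite the certificate as a Gram identity between two polynomial columns, read off a partial isometry from it, and then check that this map reproduces the coefficient recurrence.

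\emph{Step 1 (Gram identity).} Rearranging Eq.~\eqref{eq:polynomial-agler} gives
\[
 H^*\Delta_m^*\Delta_mH+1=H^*H+q^*q+G^*G .
\]
In words, the columns $[\Delta_mH;1]$ and $[H;q;G]$ have equal formal Gram polynomials. Since $H$ has degree at most $D-1$, $\Delta_mH$ has degree at most $D$. All columns can therefore be expanded in the common basis of words of length at most $D$, each tensored with the appropriate fixed blocks.

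Let $X$ and $\widetilde Y$ denote the numerical coefficient matrices of $[\Delta_mH;1]$ and $[H;q;G]$. Comparing coefficients of each pair of words $x_u^*x_v$ turns the identity into $X^\dagger X=\widetilde Y^\dagger\widetilde Y$. One point needs care here: the formal adjoint pairs $x_u^*$ with $x_v$ without cancellation. This holds because the free $*$-monomials $x_u^*x_v$ are linearly independent, so the coefficient comparison is legitimate.

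\emph{Step 2 (contraction).} From $X^\dagger X=\widetilde Y^\dagger\widetilde Y$ we get $\ker X=\ker\widetilde Y$ and $\|Xc\|=\|\widetilde Yc\|$ for every $c$. Hence $\widetilde V=\widetilde YX^+$ is a partial isometry: it maps $\operatorname{ran}X$ isometrically, sending $Xc\mapsto \widetilde Yc$, and vanishes on the orthogonal complement of $\operatorname{ran}X$. Deleting the rows of $\widetilde V$ that belong to $G$ gives $V=YX^+$. This is a row compression of a contraction, so $\|V\|\le1$. Since $X^+X$ is the projection onto $(\ker X)^\perp=(\ker\widetilde Y)^\perp$, we also get $VX=YX^+X=Y$.

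\emph{Step 3 (realization).} Write $V$ in the block form of Eq.~\eqref{eq:V}. Then $VX=Y$ says, as an identity of polynomial columns,
\[
 \Aco\,\Delta_m(x)H(x)+\Bco=H(x),\qquad \Cco\,\Delta_m(x)H(x)+\Dco=q(x).
\]
Matching the constant and word coefficients of these identities yields a recurrence. Let $H_u$ be the coefficient column of $x_u$ in $H$. Then $H_\emptyset=\Bco$, and for a nonempty word $j u$,
\[
 H_{ju}=\Aco F_j^{(m)}H_u,\qquad q_\emptyset=\Dco,\qquad q_{ju}=\Cco F_j^{(m)}H_u .
\]
Iterating from $H_\emptyset=\Bco$ gives
\[
 H_{j_2\cdots j_d}=\Aco F_{j_2}^{(m)}\cdots\Aco F_{j_d}^{(m)}\Bco ,
\]
and substituting this into the formula for $q_{ju}$ gives Eq.~\eqref{eq:coefficient-factorization}. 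The same recurrence applies beyond degree $D$: $H$ has no terms above degree $D-1$ and $q$ has none above degree $D$, so the matching forces the products for longer words to vanish. Those zero coefficients are therefore reproduced too. The geometric series then converges for $\|\Delta(A)\|<1$ and gives Eq.~\eqref{eq:sa-transfer}.

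The main obstacle is the bookkeeping in Steps 1 and 3. The coefficient matrices must use consistent orderings: the letter $x_j$ sits on the left, and the $F_j^{(m)}$ factor must occupy the block-label/internal index. Only with this ordering do "multiply by $\Delta_m$" and "prepend a letter" correspond, which is exactly what makes the recurrence above valid.
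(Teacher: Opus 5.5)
Your proposal is correct and follows essentially the same route as the paper. You rewrite the certificate as a Gram identity, take $V=YX^+$, and read the block rows of $VX=Y$ coefficientwise to get $H_\emptyset=\Bco$, $H_{ju}=\Aco F_j^{(m)}H_u$, $q_{ju}=\Cco F_j^{(m)}H_u$; stacking $G$ into $\widetilde Y$ and compressing the resulting partial isometry is equivalent to the paper's direct use of $X^\dagger X-Y^\dagger Y=Z^\dagger Z\succeq0$, and summing the geometric series is equivalent to the paper solving $H(A)=\Aco_n\Delta_m(A)H(A)+\Bco_n$ on the open domain.
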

\begin{proof}
Rearranging the certificate gives
\begin{equation}\label{eq:certificate-to-contraction}
 \begin{pmatrix}\Delta_m H\\1\end{pmatrix}^{\!*}
 \begin{pmatrix}\Delta_m H\\1\end{pmatrix}
 -\begin{pmatrix}H\\q\end{pmatrix}^{\!*}
 \begin{pmatrix}H\\q\end{pmatrix}=G^*G.
\end{equation}
We express the three polynomial columns $[\Delta_mH;1]$, $[H;q]$, and $G$ in a common independent scalar-polynomial basis. Their numerical coefficient matrices $X,Y,Z$ have $am+1,bm+1,t$ rows and the same number of columns. The unique split between starred and unstarred letters makes comparison of coefficients injective in this basis. Eq.~\eqref{eq:certificate-to-contraction} therefore gives $X^\dagger X-Y^\dagger Y=Z^\dagger Z$. It follows that the rule $Xc\mapsto Yc$ is well defined and contractive. Extending it by zero on the orthogonal complement gives the Douglas factor~\cite{douglas1966} $V=YX^+$ with $VX=Y$.

The block rows of this last equality give
\begin{equation}\label{eq:sa-feedback}
 H(A)=\Aco_n\Delta_m(A)H(A)+\Bco_n,\qquad
 q(A)=\Cco_n\Delta_m(A)H(A)+\Dco_n.
\end{equation}
Solving the first equation on the open domain gives Eq.~\eqref{eq:sa-transfer}. The same equality can be read coefficient by coefficient. Writing $H_w$ for the numerical column multiplying word $w$ in $H$, we obtain
\begin{equation}\label{eq:h-recurrence}
 \begin{aligned}
 H_\emptyset&=\Bco,&q_\emptyset&=\Dco,\\
 H_{jw}&=\Aco F_j^{(m)}H_w,&q_{jw}&=\Cco F_j^{(m)}H_w.
 \end{aligned}
\end{equation}
Iteration gives Eq.~\eqref{eq:coefficient-factorization} in the prescribed word order. Since $H$ is a polynomial of degree at most $D-1$, these identities also verify that all target coefficients above degree $D$ vanish. Thus the finite certificate determines both the numerical contraction and the complete coefficient sequence it generates.
\end{proof}

The converse norm calculation explains the two terms in the certificate. For any finite contractive realization and an interior tuple, we form its response and queried input column,
\[
 H_V(A)=(I_{bmn}-\Aco_n\Delta_m(A))^{-1}\Bco_n,\qquad
 X_V(A)=\begin{pmatrix}\Delta_m(A)H_V(A)\\I_n\end{pmatrix}.
\]
The transfer equation gives $V_nX_V(A)=[H_V(A);q(A)]$. Subtracting the corresponding Gram matrices proves
\begin{equation}\label{eq:transfer-energy}
 \begin{aligned}
 I_n-q(A)^\dagger q(A)
 &=H_V(A)^\dagger\bigl(I_{bmn}-\Delta_m(A)^\dagger\Delta_m(A)\bigr)H_V(A)\\
 &\quad+X_V(A)^\dagger\bigl((I_{am+1}-V^\dagger V)\otimes I_n\bigr)X_V(A).
 \end{aligned}
\end{equation}
Both terms are positive by the contraction bounds on $\Delta(A)$ and $V$. This proves the norm bound on the open domain, and radial continuity of the polynomial extends it to the closed domain.

We have identified the finite data sufficient for synthesis. The remaining classical question is whether the target's compact recurrence supplies a complete search space for $G,H$, and whether a certificate in that space can be found with controlled precision. Section~\ref{sec:classical} proves both assertions directly. This is the step that connects the representation principle to an algorithm on the supplied coefficient description.

\section{Classical synthesis of the query parameters}\label{sec:classical}
The Schur--Agler criterion identifies the form of the parameters; we now compute them from the target's coefficient recurrence. The unknowns are the coefficients of the auxiliary polynomial columns in the defect identity. Our main result in this section is that the target residuals and their one-letter extensions provide complete coordinates for the defect identity, with dimension bounded by the supplied state count. In these coordinates the certificate is an affine equation in two positive semidefinite matrices.

After constructing the coordinates, we prove that the semidefinite equation captures the norm bound exactly and show how a feasible pair determines the numerical contraction $V$. Finding that pair is the remaining computational task: a strict norm margin will control both the search and the precision of the matrix entries. Throughout this section, $p$ denotes the polynomial supplied to the classical routine; the general query algorithm will also call this routine on a degree-reweighted target.

\subsection{Residual coordinates for the certificate}\label{sec:residual}
For a polynomial $f(x)=\sum_w f_wx_w$, its left residual at letter $j$ is
\[
 \partial_jf(x)=\sum_w f_{jw}x_w,
 \qquad f(x)=f_\emptyset+\sum_jx_j\partial_jf(x).
\]
A residual fixes the first letter of a coefficient word and retains its remaining letters. In the finite-state representation, the row $\lambda$ describing coefficients $\lambda T_wr_0$ becomes $\lambda T_j$ after this deletion. All residuals therefore share the supplied state coordinates, as in the theory of noncommutative rational series~\cite{berstelreutenauer2011}. We take $f_1,\ldots,f_r$ to be a basis of the span of the constant $1$ and all nonempty residuals of $p$. This space is closed under every $\partial_j$, and its members have degree at most $D-1$. Closure will allow us to reconstruct a polynomial by successively restoring its initial letters, which is the step needed in the completeness proof below.

For the running polynomial in Eq.~\eqref{eq:running-target}, the first residuals are $\partial_1p=x_2/12$ and $\partial_2p=-x_1/12$. Deleting one further letter leaves a constant, so we may choose
\[
 f=(1,x_1,x_2)^{\mathsf T},\qquad
 p=\frac14+x_1\frac{x_2}{12}+x_2\frac{-x_1}{12}.
\]
We will seek the entries of the auxiliary polynomial column $H$ in this three-dimensional span. Applying the query to such an entry also produces $x_jf_\ell$, so the defect identity needs the common coordinate list
\[
 g=(1,x_1,x_1^2,x_1x_2,x_2,x_2x_1,x_2^2)^{\mathsf T}.
\]
The terms $x_1^2$ and $x_2^2$ arise from applying the entire input block to the residual space. Their coordinates belong to the certificate even though their target coefficients are zero. Rescaling its quadratic part preserves this residual span, so the later degree compensation uses the same coordinates.

The compression can retain linear relations among exponentially many words. For $p_D(x)=(x_1+x_2)^D$, with $D\ge2$, the representation in Eq.~\eqref{eq:finite-state} uses $T_1=T_2=\sum_{t=0}^{D-1}e_te_{t+1}^{\mathsf T}$, initial row $e_0^{\mathsf T}$, and final column $e_D$, where $e_0,\ldots,e_D$ are the coordinate columns of length $D+1$. It has $2^D$ nonzero coefficients, but we can choose
\[
 f_\ell(x)=(x_1+x_2)^{\ell-1}\quad(1\le\ell\le D),
 \qquad r=D.
\]
At either letter, taking a residual lowers each positive exponent by one and annihilates the constant; the different degrees make the displayed polynomials independent. A monomial basis containing the same residuals would have to include every word through degree $D-1$. The constant and the two sets of one-letter successors give $1+2D$ independent coordinates. The certificate below uses these whole residual polynomials; its completeness is what makes this compression useful for synthesis. This example concerns the certificate representation, while the power itself also admits a direct specialized construction.

In general, we record the constant coefficients, the residual maps, and the target residuals in rational arrays
\begin{equation}\label{eq:residual-arrays}
 a_0=(a_{0,1},\ldots,a_{0,r})^{\mathsf T},\qquad
 A_j^{\rm res}=[a_{j,\ell t}]_{\ell,t=1}^r,
 \qquad p_j=(p_{j,1},\ldots,p_{j,r}).
\end{equation}
Their meaning is fixed by
\begin{equation}\label{eq:residual-identities}
 f_\ell(x)=a_{0,\ell}+\sum_{j=1}^k\sum_{t=1}^r
                         a_{j,\ell t}x_jf_t(x),
 \qquad
 p(x)=c_\emptyset+\sum_{j=1}^k\sum_{t=1}^r p_{j,t}x_jf_t(x).
\end{equation}
We then use the constant and the one-letter successors as the common basis:
\begin{equation}\label{eq:g-list}
 g_0(x)=1,\qquad g_{(j,\ell)}(x)=x_jf_\ell(x)
       \quad(1\le j\le k,\ 1\le\ell\le r),\qquad h=1+kr.
\end{equation}
Reachability and reduction for weighted automata compute these coordinates without expanding the word list~\cite[Propositions~2.1 and~3.2, Theorem~3.6]{kiefer2020}. The following construction also bounds their rational encoding lengths, as required for the semidefinite calculation.

\begin{lemma}[Computing the residual coordinates]\label{lem:residual}
For a nonconstant rational $s$-state polynomial of degree $D$, the coordinates in Eqs.~\eqref{eq:residual-arrays}--\eqref{eq:g-list} can be computed in polynomial bit time, with $r\le s+1$. Each $f_\ell$ has degree at most $D-1$, every product of $D$ residual transition matrices vanishes, and the $h$ successors are linearly independent and span $p,f_1,\ldots,f_r$.
\end{lemma}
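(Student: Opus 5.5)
We work entirely in the state coordinates of Eq.~\eqref{eq:finite-state}, identifying a polynomial with a row vector. Each row $\lambda\in\C^{1\times s}$ determines the series $f_\lambda(x)=\sum_w\lambda T_wr_0\,x_w$. The left residual acts by $\partial_jf_\lambda=f_{\lambda T_j}$, and $f_\lambda$ has constant term $\lambda r_0$. The span of all nonempty residuals of $p$ is therefore the image under $\lambda\mapsto f_\lambda$ of the row space
\[
 \mathcal R=\operatorname{span}\{\ell T_w:\ |w|\ge1\}.
\]
This space is the forward-reachable subspace generated by the rows $\ell T_j$ under right multiplication by the $T_j$.

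The construction proceeds in four steps.

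\textbf{Step 1: exact reachable basis.} We compute a basis of $\mathcal R$ by the standard worklist procedure of Kiefer~\cite{kiefer2020}. Start with the rows $\ell T_j$. Repeatedly multiply retained rows by every $T_j$, and keep a new row only if exact rational Gaussian elimination shows it is independent of the rows kept so far. At most $s$ rows are ever kept, so there are $O(sk)$ products and the loop terminates after $O(s^2k)$ rank tests. Each kept row is $\ell T_w$ with $|w|\le s$. Its encoding length is therefore polynomial in the input length by the usual bound on products of $s$ rational matrices, and the elimination can be done with Bareiss-type bit bounds.

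\textbf{Step 2: quotient by the equivalence kernel.} Distinct rows of $\mathcal R$ may define the same polynomial. We therefore quotient by the co-reachable kernel
\[
 \mathcal K=\{\lambda:\ \lambda T_wr_0=0\ \text{for all }w\},
\]
computed by the dual backward procedure on the columns $T_wr_0$. The space $\mathcal K$ is $T_j$-invariant. The map $\lambda\mapsto f_\lambda$ is injective on $\mathcal R/(\mathcal R\cap\mathcal K)$, so a complement basis gives linearly independent residual polynomials. Adjoining the constant $1$ and eliminating once more gives $f_1,\ldots,f_r$ with $r\le s+1$. The constant may already lie in the span, and one could alternatively adjoin the state $r_0$-dual to realize it.

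\textbf{Step 3: reading off the arrays.} Closure under every $\partial_j$ holds because $\mathcal R T_j\subseteq\mathcal R$ and $\partial_j1=0$. The arrays $a_0$ and $A^{\rm res}_j$ are then obtained by solving $\lambda_\ell T_j=\sum_t a_{j,\ell t}\lambda_t$ modulo $\mathcal K$, together with $a_{0,\ell}=\lambda_\ell r_0$. The identity $f=f_\emptyset+\sum_jx_j\partial_jf$ then yields both equations of \eqref{eq:residual-identities}. The coefficients $p_{j,t}$ come from expressing $\ell T_j$ in the basis. All of these are rational linear solves, so Step 3 runs in polynomial bit time.

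\textbf{Step 4: degree and nilpotency claims.} The target $p$ has degree $D$, so every residual has degree at most $D-1$, and the constant has degree $0\le D-1$. Applying $\partial_{j_1}\cdots\partial_{j_D}$ to any polynomial of degree at most $D-1$ gives zero. In these coordinates this composite residual map is the product $A^{\rm res}_{j_D}\cdots A^{\rm res}_{j_1}$ (up to the order convention), and because the basis is independent, the matrix product itself vanishes.

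For the successors, \eqref{eq:residual-identities} shows directly that $p$ and every $f_\ell$ lie in the span of $g_0$ and the $g_{(j,\ell)}$. Independence is proved by contradiction. Suppose
\[
 c_0+\sum_{j,\ell}c_{j\ell}\,x_jf_\ell=0.
\]
Taking the constant term gives $c_0=0$. Applying $\partial_j$ then gives $\sum_\ell c_{j\ell}f_\ell=0$, so every $c_{j\ell}=0$ by independence of the $f_\ell$.

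\textbf{Main obstacle.} The step I expect to need the most care is the bit-length bound in Steps 1–2. The independence tests must use exact rational arithmetic without letting the entries of the iterated products $\ell T_w$ grow uncontrollably. I would handle this by orthogonalizing each candidate against the current basis with fraction-free elimination, and by citing \cite[Theorem~3.6]{kiefer2020} for the polynomial-time minimization. Everything after that is linear algebra on polynomially bounded data.
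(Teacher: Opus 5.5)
Your proposal is correct and follows essentially the paper's route: reachability closure of the residual rows $\ell T_w$, reduction by the co-reachable columns $T_wr_0$ so that equality of series becomes equality of coordinates, exact coordinate solving, nilpotence from degree decrease, independence of the successors from disjoint coefficient positions, and bit bounds from retaining products of length at most $s+1$ with fraction-free elimination. The only difference is cosmetic: the paper adjoins a constant state (transitions $\operatorname{diag}(T_j,0)$, final column $(r_0;1)$) before reducing, which makes your ``adjoin $1$ and eliminate once more'' step uniform. For nonconstant $p$ that step is vacuous anyway, since the residual of $p$ at a word $w$ of length $D$ with $c_w\ne0$ is the nonzero constant $c_w$.
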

\begin{proof}
We first adjoin one constant state to the input representation. Its transition matrices have the explicit form
$\begin{pmatrix}T_j&0\\0&0\end{pmatrix}$, its final column is $\begin{pmatrix}r_0\\1\end{pmatrix}$, and the row selecting the new state represents the constant polynomial. The columns obtained by applying transition words to this final column span the states distinguishable by an initial row. We compute a basis by repeatedly adjoining the images of the current columns under each transition and retaining independent columns. Each strict increase adds at least one dimension, so at most $s+1$ increases occur. Restricting to these columns makes equality of represented scalar series equivalent to equality of their coordinate rows.

In this reduced representation, we start with the constant row and the rows $[\ell\ 0]\begin{pmatrix}T_j&0\\0&0\end{pmatrix}$ for $1\le j\le k$. We repeatedly close their row span under right multiplication by the transition matrices. Again the dimension increases at most $s+1$ times. A retained rational basis gives $f_1,\ldots,f_r$, and coordinate solving gives Eq.~\eqref{eq:residual-arrays}. The first identity in Eq.~\eqref{eq:residual-identities} is the decomposition into the constant coefficient and the possible initial letters; the second is the same decomposition for $p$. Taking a residual reduces degree, so every sufficiently long residual of any $f_\ell$ vanishes, giving the joint nilpotence assertion. Constants and polynomials with different initial letters occupy disjoint coefficient positions. A relation among Eq.~\eqref{eq:g-list} would therefore force, for each letter, a relation among the independent $f_\ell$, as well as a zero constant coefficient. All its coefficients must be zero.

To bound the bit cost, we retain rows and columns that are products of the original rational matrices of length at most $s+1$. A common input denominator bounds the bit length of every such product polynomially in the input size. Coordinate solutions are ratios of minors of these retained rational arrays; fraction-free elimination controls their encoding length~\cite{bareiss1968}. Thus the basis algorithm cited before the lemma yields the required rational coordinates with a bit bound, in addition to its arithmetic-operation bound.
\end{proof}

Polynomiality, the zero case, and the true degree are checked on the supplied rational matrices before this construction; Appendix~\ref{app:coefficient-inputs} gives those tests and the conversion from an explicit branching program. With the order $0,(1,1),\ldots,(1,r),\ldots,(k,r)$ for the indices of $g$, we record the resulting polynomial identities in the numerical matrices
\begin{equation}\label{eq:EJP}
 E=[1\ 0\ \cdots\ 0],\qquad
 J=[a_0\ A_1^{\rm res}\ \cdots\ A_k^{\rm res}],\qquad
 P=[c_\emptyset\ p_1\ \cdots\ p_k].
\end{equation}
Their sizes are $1\times h$, $r\times h$, and $1\times h$. Thus $f_\ell=\sum_tJ_{\ell t}g_t$ and $p=\sum_tP_tg_t$. We denote by $L_j$ the $r\times h$ zero--one matrix with $(L_j)_{\ell,(j,t)}=\delta_{\ell t}$ and every other entry zero. It records the identity $x_jf_\ell=\sum_t(L_j)_{\ell t}g_t$. Thus $Eg=1$, $Jg=f$, $Pg=p$, and $L_jg=x_jf$. These four identities supply every coordinate map needed below.

\subsection{The complete semidefinite certificate}
The auxiliary column $H$ enters the defect identity through its coefficient Gram matrix. Assigning an element of $\operatorname{span}f$ to each input block label $\alpha$ and each memory label gives a coefficient matrix $K$ with $br$ columns, grouped by $\alpha$. Its Gram matrix $T=K^\dagger K$ is one of the unknown positive matrices. Known coordinate maps express both $H$ and $\Delta_mH$ in the same basis, so their difference of Gram matrices is linear in $T$.

For each input block label $1\le\alpha\le b$, we define $E_\alpha$ to have $br$ rows and $h$ columns, with $J$ in block row $\alpha$ and zeros elsewhere. For each output block label $1\le\mu\le a$, we define
\begin{equation}\label{eq:EM}
 E_\alpha=
 \begin{pmatrix}0\\\vdots\\J\\\vdots\\0\end{pmatrix},\qquad
 M_\mu=
 \begin{pmatrix}
 \sum_j f_{\mu1,j}L_j\\
 \sum_j f_{\mu2,j}L_j\\
 \vdots\\
 \sum_j f_{\mu b,j}L_j
 \end{pmatrix}.
\end{equation}
The columns $KE_\alpha g$ contain the internal polynomials at input label $\alpha$, and $KM_\mu g$ contains their image at output label $\mu$. Their difference of squared norms is therefore represented by the real-linear map
\begin{equation}\label{eq:Phi}
 \Phi(T)=\sum_{\alpha=1}^bE_\alpha^\dagger T E_\alpha
                   -\sum_{\mu=1}^aM_\mu^\dagger T M_\mu.
\end{equation}
This map takes a Hermitian matrix of order $br$ to one of order $h$, with rational coefficients in its real and imaginary coordinates. A second positive Gram matrix $S$ represents $G^*G$ in the basis $g$. The two terms separate polynomial sums of squares~\cite{helton2002} from the input-domain constraint, as in free convex Positivstellens\"atze~\cite{heltonklepmccullough2012}. The desired defect certificate will consequently be the affine equation $\beta^2E^\dagger E-P^\dagger P=S+\Phi(T)$. Its coordinate matrices $E,P$ and the map $\Phi$ are computed from the target and the input block; only the two positive semidefinite matrices $S,T$ remain to be found.

To display the certificate matrices explicitly, we use the smaller target $q(x)=(1+x_1x_2)/4$ on the diagonal domain. The commutator example continues to track word order and degree compensation; this example will follow the numerical calculation from the certificate to the four update blocks. Here $f=(1,x_2)^{\mathsf T}$ and
\[
 g=(1,x_1,x_1x_2,x_2,x_2^2)^{\mathsf T},\qquad
 E=(1,0,0,0,0),\qquad P_q=(1,0,1,0,0)/4.
\]
Thus the unknown matrices have orders five and four. The following pair certifies $1-q^*q$:
\begin{equation}\label{eq:example-certificate}
 T=\tfrac1{64}\begin{pmatrix}1&0&0&0\\0&8&0&0\\0&0&16&0\\0&0&0&1\end{pmatrix},
 \qquad S=\tfrac1{64}\begin{pmatrix}
 43&0&-4&0&0\\0&1&0&0&0\\-4&0&4&0&0\\
 0&0&0&7&0\\0&0&0&0&1
 \end{pmatrix}.
\end{equation}
Substitution into Eq.~\eqref{eq:Phi} gives $\Phi(T)=\operatorname{diag}(17,-1,-8,-7,-1)/64$ and hence $E^\dagger E-P_q^\dagger P_q=S+\Phi(T)$. Both matrices are positive definite: in $64S$, the only coupled $2\times2$ block has positive leading entry $43$ and determinant $156$, and the remaining diagonal entries are positive. Section~\ref{sec:extract} will use this pair to compute the four transfer blocks.

For the general certificate theorem, we prove completeness for every Hermitian coefficient matrix $Q_0$ of order $h$. This additional generality will also produce the interior margin for certificate discovery. We write
\begin{equation}\label{eq:hereditary}
 q_{Q_0}(x)=\sum_{t,u}(Q_0)_{tu}\,g_t(x)^*g_u(x).
\end{equation}
Such a polynomial is called \emph{hereditary}~\cite{mcculloughputinar2005}: its starred letters precede its unstarred letters. Each ordered word pair therefore has a fixed position in its coefficient Gram. The target defect is obtained by taking $Q_0=\beta^2E^\dagger E-P^\dagger P$.

We must prove that restricting the auxiliary polynomials to these residuals preserves every feasible norm bound. We use separation and finite-dimensional representation, following the method of hereditary positivity~\cite[Theorem~1.11 and Section~3]{augat2018}. Helton, Klep, and Nelson obtain finite positivity certificates on chip spaces spanned by monomials~\cite[Section~1.5 and Corollary~4.4]{heltonklepnelson2015}, while finite sum-of-squares formulas also underlie recent realization results~\cite{bhattacharyya2026}. Our basis can consist of whole residual polynomials, as in the preceding power example. The completeness argument must therefore preserve their linear relations, as well as the bound on the rectangular input block. A separating matrix specifies candidate Gram vectors for the polynomials in $g$. We will construct input matrices that act on these vectors by restoring their initial letters, prove that their joint block is contractive, and use residual closure to identify the vectors with actual polynomial evaluations. The separator then becomes a counterexample to the assumed norm bound.

\begin{theorem}[An exact certificate in residual coordinates]\label{thm:certificate}
The polynomial $q_{Q_0}$ is positive semidefinite on $\|\Delta(A)\|\le1$ at every matrix size if and only if $Q_0=S+\Phi(T)$ for positive semidefinite matrices $S,T$ of orders $h$ and $br$. Consequently, for $\beta\ge0$,
\begin{equation}\label{eq:certificate}
 \|p\|_{\SA,\Delta}\le\beta
 \quad\text{if and only if}\quad
 \beta^2E^\dagger E-P^\dagger P=S+\Phi(T),\qquad S,T\succeq0.
\end{equation}
\end{theorem}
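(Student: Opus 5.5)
The easy direction is to evaluate the identity. If $Q_0=S+\Phi(T)$ with $S,T\succeq0$, we factor $S=Z^\dagger Z$ and $T=K^\dagger K$ and substitute a tuple with $\|\Delta(A)\|\le1$. Writing $g(A)$ for the column of evaluations, $S$ contributes $(Zg(A))^\dagger(Zg(A))\succeq0$. The term $\Phi(T)$ contributes $\|H(A)\psi\|^2-\|\Delta_m(A)H(A)\psi\|^2\ge0$, where $H=KE_\alpha g$ stacked over $\alpha$. This uses the identities $Jg=f$ and $L_jg=x_jf$, which give $\Delta_mH$ as the column $KM_\mu g$ stacked over $\mu$. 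The consequence for $\|p\|_{\SA,\Delta}\le\beta$ then follows by taking $Q_0=\beta^2E^\dagger E-P^\dagger P$, since $Eg=1$ and $Pg=p$ give $q_{Q_0}=\beta^2-p^*p$.

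For the converse I would argue by separation. The cone $\mathcal C=\{S+\Phi(T):S,T\succeq0\}$ in Hermitian matrices of order $h$ is convex. I would first show it is closed. The route is to show that $\Phi(T)\preceq0$ with $T\succeq0$ forces $T=0$, by evaluating at the zero tuple and at small scalar tuples and using the independence of the $g_t$. If $Q_0\notin\mathcal C$, there is then a Hermitian $Y$ with $\tr(YQ_0)<0$, $Y\succeq0$, and $\Phi^*(Y)\succeq0$. We may perturb $Y$ to be positive definite while keeping $\tr(YQ_0)<0$, because $I_h$ lies in the interior direction of $\mathcal C$ up to such a perturbation. Factoring $Y=\Gamma^\dagger\Gamma$ gives vectors $v_t=\Gamma e_t$ in $\C^h$, which are candidate evaluations $g_t(A)\ket{\xi}$.

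Next I would build the input matrices. On $\mathcal K=\C^h$ I define $A_j$ on the subspace spanned by the vectors $w_\ell=\sum_tJ_{\ell t}v_t$, which represent $f_\ell$, by $A_jw_\ell=v_{(j,\ell)}$. This restores the initial letter $x_j$. The map must be well defined, so linear relations among the $w_\ell$ must be respected. I would secure this by positive definiteness of $Y$ and the independence of the $f_\ell$ from Lemma~\ref{lem:residual}. The condition $\Phi^*(Y)\succeq0$ says precisely that the partial block $\Delta(A)$, restricted to columns built from the $w$-vectors, is contractive; this is the Gram form of $\sum_\alpha\|\cdot\|^2-\sum_\mu\|\cdot\|^2\ge0$. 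By a Parrott/Douglas extension, this contractive partial map extends to a contraction on the full space, taking care to extend the $A_j$ themselves linearly so that $\Delta(A)$ stays of pencil form. Residual closure, namely $f_\ell=a_{0,\ell}+\sum x_jf_t$ together with nilpotence, then gives $g_t(A)v_0=v_t$ by induction on degree. Hence $\langle v_0,q_{Q_0}(A)v_0\rangle=\tr(YQ_0)<0$, contradicting positivity.

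The main obstacle is the extension step. The partial map is defined only on the span of the $w$-vectors inside each block label, while contractivity must hold for the whole pencil $\sum_jF_j\otimes A_j$ with the $A_j$ linear. I would first try enlarging the space, replacing $\mathcal K$ by $\mathcal K\oplus\mathcal K'$. On the complement I would use the left inverse of the coefficient matrix from the independence of the $F_j$ to define the $A_j$ through a contractive completion of the block $\Delta$ itself. I would then read off the $A_j$ and restrict to a finite size, padding to a power of two by zero embedding.
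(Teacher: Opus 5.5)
Your sufficiency argument and the specialization to $Q_0=\beta^2E^\dagger E-P^\dagger P$ follow the paper, as does the architecture of your converse: a closed cone, a separator, Gram vectors $v_t$, letter-restoring maps $A_jw_\ell=v_{(j,\ell)}$, and induction through the residual identities. The gap is the step you call the main obstacle: you leave it open, and the route you sketch for it would not work. A Parrott-type contractive completion of the $a\times b$ operator block is in general not of the form $\sum_jF_j\otimes A_j$. If you then recover $A_j$ through the left inverse of the coefficient matrix and re-form the pencil, you get a different operator whose norm is no longer controlled. Enlarging the space does not by itself address this.

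The missing observation is that no completion theorem is needed. In every input block label, the localizing condition involves only vectors in the single subspace $\mathcal W=\operatorname{span}\{w_\ell\}$. So the partial domain is $\C^b\otimes\mathcal W$, and you may simply set every $A_j=0$ on $\mathcal W^\perp$. Then $\Delta(A)=\Delta(A)(I_b\otimes P_{\mathcal W})$ is still a pencil, and $\|\Delta(A)x\|\le\|(I_b\otimes P_{\mathcal W})x\|\le\|x\|$. This is exactly the paper's $Y_j=Z_{j,H}R_H^+$: the pseudoinverse vanishes off $\operatorname{ran}R_H$. The inclusion $\ker R_H\subseteq\ker Z_{j,H}$, derived from $\Phi^*(H)\succeq0$ and the independence of the $F_j$, gives well-definedness even for a singular separator.

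Three further steps need repair.

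First, closedness. Your criterion ($S+\Phi(T)=0$ with $S,T\succeq0$ forces $S=T=0$) is a valid sufficient condition. However, commuting scalar substitutions only see the commutative collapses of the polynomials $h_{\alpha,\ell}$. A residual space can contain nonzero polynomials with zero collapse; for instance $x_1x_2-x_2x_1$ is a residual of $x_3(x_1x_2-x_2x_1)$. So independence of the $g_t$ cannot be invoked. You need a faithful noncommuting test tuple, such as the paper's $\rho S^{(D)}$ at the empty word, which yields $\ell_\rho(q_{S+\Phi(T)})\ge\chi(\tr S+\tr T)$.

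Second, the perturbation. Adding $\epsilon I_h$ keeps the separator positive definite but can destroy $\Phi^*(Y)\succeq0$, because $\Phi^*(I_h)$ need not be positive semidefinite; for one variable it equals $JJ^\dagger-I_r$. Add instead a small multiple of the moment matrix of that admissible tuple, which is positive definite and dual feasible. Alternatively, drop the perturbation and use the kernel argument above.

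Third, the transpose. With $Y=\Gamma^\dagger\Gamma$ and $v_t=\Gamma e_t$ you get $\langle v_0,q_{Q_0}(A)v_0\rangle=\sum_{t,u}(Q_0)_{tu}Y_{tu}=\tr(Y^{\mathsf T}Q_0)$. Contractivity of the partial pencil then corresponds to $\Phi^*(\overline Y)\succeq0$, not $\Phi^*(Y)\succeq0$. For complex data you must factor $Y^{\mathsf T}=\Gamma^\dagger\Gamma$, as the paper does with $H^{\mathsf T}=W^\dagger W$.
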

\begin{proof}
A feasible pair gives a positive expression on the input domain. To see this, a factor $K$ with $T=K^\dagger K$ determines scalar polynomials
\[
 h_{\alpha,\ell}(x)=\sum_{t=1}^rK_{\ell,(\alpha,t)}f_t(x),
 \qquad
 b_{\mu,\ell}(x)=\sum_{\alpha=1}^b\sum_{j=1}^k
                       f_{\mu\alpha,j}x_jh_{\alpha,\ell}(x).
\]
Here $\ell$ indexes the rows of the numerical factor, and each resulting polynomial acts on the original signal. The coordinate definitions imply that the polynomials associated with $K E_\alpha$ and $K M_\mu$ are exactly $h_{\alpha,\ell}$ and $b_{\mu,\ell}$. Therefore $\Phi(T)$ represents
\begin{equation}\label{eq:localizing-sum}
 \sum_{\ell,\alpha}h_{\alpha,\ell}(x)^*h_{\alpha,\ell}(x)
       -\sum_{\ell,\mu}b_{\mu,\ell}(x)^*b_{\mu,\ell}(x).
\end{equation}
At any admissible matrix tuple, contractivity of the explicit block matrix in Eq.~\eqref{eq:pencil} makes this expression positive semidefinite: for every signal vector, we apply $\Delta(A)$ to the $b$ components $h_{\alpha,\ell}(A)$ for each fixed $\ell$ and sum their norm inequalities. A factor of $S$ similarly produces a sum of known polynomial squares. Hence $Q_0=S+\Phi(T)$ implies positivity of Eq.~\eqref{eq:hereditary}, proving sufficiency.

For necessity, we first identify hereditary polynomials with their numerical coefficient matrices. The coefficients of the independent $g_t$ form a full-row-rank matrix $C_g$, with rows indexed by $t$ and columns by words. Since each $x_v^*x_w$ has a unique starred--unstarred division, $q_{Q_0}=0$ implies $C_g^\dagger Q_0C_g=0$. Multiplication by a right inverse of $C_g$ gives $Q_0=0$, proving that this coefficient identification is injective.

Before applying separation, we prove that the image cone is closed by bounding its positive semidefinite preimages, following the finite-functional method used for noncommutative sums of squares~\cite[Section~3, Proposition~3.4]{mcculloughputinar2005}. For this purpose, we use the finite matrices $S_j^{(D)}$ whose rows and columns are indexed by words of length at most $D$, with entry one at row $jw$, column $w$ when $|w|<D$, and zero elsewhere. Their distinguished coordinate is the empty word. A sufficiently small positive rational $\rho$ makes $\Delta(\rho S^{(D)})$ a contraction with norm at most $1/2$. Evaluation at this tuple on the empty-word coordinate defines a positive functional $\ell_\rho$ on the hereditary polynomials. The weighted coefficient Grams of the independent lists $f_\ell$ and $g_t$ are positive definite. Consequently there is $\chi>0$ such that, for $S,T\succeq0$,
\begin{equation}\label{eq:coercive}
 \ell_\rho\bigl(q_{S+\Phi(T)}\bigr)
                      \ge\chi(\tr S+\tr T).
\end{equation}
The independent coefficient lists and the strict defect bound give $\chi>0$, which suffices for closedness. Appendix~\ref{app:classical-margins} computes a rational choice with controlled bit length from the residual recurrence. Thus a convergent sequence $S_v+\Phi(T_v)$ has bounded positive semidefinite preimages $S_v,T_v$. A subsequence of these finitely many bounded matrix entries converges to positive semidefinite $S,T$, proving closedness of the image cone.

We now assume that a positive $q_{Q_0}$ has no such representation and derive a contradiction. Finite-dimensional separation of the closed convex cone gives a Hermitian numerical matrix $H$ of order $h$, acting on the coefficient basis $g$, such that
\begin{equation}\label{eq:separator}
 H\succeq0,\qquad
 \Phi^*(H):=\sum_\alpha E_\alpha H E_\alpha^\dagger
                   -\sum_\mu M_\mu H M_\mu^\dagger\succeq0,
 \qquad \tr(HQ_0)<0.
\end{equation}
We now realize the separator as polynomial evaluation on an allowed tuple. Its Gram vectors must support multiplication by each $x_j$, including when the separating matrix is singular. We adapt the finite Gram construction in hereditary positivity proofs~\cite{augat2018}: its columns represent polynomials, and multiplication by a letter must reproduce the corresponding successor column. The kernel argument below makes this construction well defined for a possibly singular separator in our residual space. For the functional $\ell(q_{Q_0})=\operatorname{Tr}(HQ_0)$, the coefficient pairing is $\ell(g_t^*g_u)=H_{ut}$; this fixes the ordinary transpose in the factorization below.

We factor $H^{\mathsf T}=W^\dagger W$, choosing $W$ with $n_H=\rank H$ independent rows. With ordinary transpose in this factorization, we form
\begin{equation}\label{eq:GNS-matrices}
 R_H=WJ^{\mathsf T},\qquad Z_{j,H}=WL_j^{\mathsf T},\qquad
 Y_j=Z_{j,H}R_H^+,\qquad \omega=WE^{\mathsf T}.
\end{equation}
The first two matrices have $n_H$ rows and $r$ columns, $Y_j$ is $n_H\times n_H$, and $\omega$ is a numerical column of length $n_H$. To justify the pseudoinverse construction, the ordinary transpose of $\Phi^*(H)$ is the $b\times b$ block matrix
\begin{equation}\label{eq:GNS-localizing}
 \left[\delta_{\alpha\beta}R_H^\dagger R_H
     -\sum_{\mu=1}^a B_{\mu\alpha}^\dagger B_{\mu\beta}\right]_{\alpha,\beta=1}^b
 \succeq0,
 \qquad B_{\mu\alpha}=\sum_j f_{\mu\alpha,j}Z_{j,H}.
\end{equation}
Each block has order $r$. If $R_Hv=0$, applying Eq.~\eqref{eq:GNS-localizing} with $v$ in input block $\alpha$ and zeros elsewhere gives $B_{\mu\alpha}v=0$ for all $\mu,\alpha$. Independence of the coefficient matrices $F_j$ forces $Z_{j,H}v=0$ for each $j$. We have therefore established $\ker R_H\subseteq\ker Z_{j,H}$, and hence
\begin{equation}\label{eq:GNS-reproduction-step}
 Y_jR_H=Z_{j,H}.
\end{equation}
Applying Eq.~\eqref{eq:GNS-localizing} to the columns $R_H^+x_\alpha$ now gives
\[
 \sum_\mu\left\|\sum_{\alpha,j}f_{\mu\alpha,j}Y_jx_\alpha\right\|^2
 \le\sum_\alpha\|R_HR_H^+x_\alpha\|^2
 \le\sum_\alpha\|x_\alpha\|^2.
\]
Thus the explicit block matrix $\Delta(Y)$ is contractive.

Contractivity makes $Y$ an admissible input. To obtain the contradiction, its polynomial evaluations must also reproduce the Gram vectors prescribed by the separator. Restoring one initial letter already acts correctly by Eq.~\eqref{eq:GNS-reproduction-step}; the following induction shows that this property extends to every polynomial in the residual space because its remaining-letter polynomials stay in that space.
For a row $\lambda=(\lambda_1,\ldots,\lambda_r)$, we write
$f_\lambda=\sum_\ell\lambda_\ell f_\ell$ and define its represented
column by $\zeta(f_\lambda)=R_H\lambda^{\mathsf T}$; the independence of the
$f_\ell$ ensures that this definition is unambiguous.  The residual arrays
give $\partial_jf_\lambda=f_{\lambda A_j^{\rm res}}$ and
$(f_\lambda)_\emptyset=\lambda a_0$.  At the level of the $h$ numerical
coordinates, the same identities read
\[
 J^{\mathsf T}\lambda^{\mathsf T}
 =E^{\mathsf T}(\lambda a_0)
   +\sum_jL_j^{\mathsf T}(A_j^{\rm res})^{\mathsf T}
                         \lambda^{\mathsf T}.
\]
Multiplication by $W$, followed by $Z_{j,H}=Y_jR_H$, therefore proves
\[
 \zeta(f_\lambda)
 =\omega(f_\lambda)_\emptyset
       +\sum_jY_j\zeta(\partial_jf_\lambda).
\]
If $f_\lambda$ is constant, this is already
$\zeta(f_\lambda)=f_\lambda(Y)\omega$.  Otherwise every polynomial in the
sum has smaller degree, so induction gives that equality as well.  In
particular, the column of $f_\ell$ is $R_He_\ell$, while the column of
$x_jf_\ell$ is $Y_jR_He_\ell=Z_{j,H}e_\ell=We_{(j,\ell)}$.
Together with the constant column $We_0=\omega$, these equalities show that every successor polynomial is reproduced:
\begin{equation}\label{eq:GNS-reproduction}
 g_t(Y)\omega=We_t\qquad(0\le t<h),
\end{equation}
where $e_t$ is the coordinate column of length $h$. The induction terminates because the residual degree decreases at each step. Using the reproduced columns to evaluate $q_{Q_0}$ now gives
\[
 \omega^\dagger q_{Q_0}(Y)\omega
 =\sum_{t,u}(Q_0)_{tu}(W^\dagger W)_{tu}
 =\sum_{t,u}(Q_0)_{tu}H_{ut}
 =\tr(HQ_0)<0.
\]
This contradicts positivity of $q_{Q_0}$ at all admissible finite tuples. It proves the general certificate. For $Q_0=\beta^2E^\dagger E-P^\dagger P$, the hereditary expression is $\beta^2-p^*p$, so the claimed norm equivalence follows.
\end{proof}

The separation argument uses exactly the closure that the coefficient recurrence supplies. It establishes completeness of the finite search space, including its singular boundary. The same construction gives norm witnesses of order at most $h$; Corollary~\ref{cor:finite-norm} in Appendix~\ref{app:arithmetic} records that consequence. Combining the certificate theorem with the finite-matrix construction of Section~\ref{sec:sa-foundation} now gives the following criterion.

\begin{corollary}[Finite polynomial Schur--Agler criterion]\label{thm:sa-realization}
For a nonconstant rational scalar polynomial $q$ of degree $D$, the following are equivalent: $\|q\|_{\SA,\Delta}\le1$; the existence of a polynomial defect certificate in Eq.~\eqref{eq:polynomial-agler}; and the existence of a finite contractive transfer realization. The certificate can use $m\le br$, where $r$ is the residual dimension of Lemma~\ref{lem:residual} for $q$.
\end{corollary}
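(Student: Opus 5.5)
We prove the cycle of implications norm bound $\Rightarrow$ certificate $\Rightarrow$ realization $\Rightarrow$ norm bound. The last two steps are already available, so the substantive work is to turn the output of Theorem~\ref{thm:certificate} into the polynomial columns $G,H$ of Eq.~\eqref{eq:polynomial-agler} with multiplicity $m\le br$.

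\emph{From the norm bound to a certificate.} Apply Theorem~\ref{thm:certificate} to $q$ with $\beta=1$. This gives $S,T\succeq0$ of orders $h$ and $br$ with $E^\dagger E-P^\dagger P=S+\Phi(T)$. Factor $S=N^\dagger N$ with $t=\rank S$ rows and set $G=Ng$, a column of $t$ polynomials of degree at most $D$ because each $g_t$ has degree at most $D$.

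To build $H$, factor $T=K^\dagger K$ and take $K$ of size $m\times br$ with $m=br$; padding with zero rows if necessary makes the row count exactly $br$. Then form the polynomials $h_{\alpha,\ell}$ exactly as in the sufficiency half of the proof of Theorem~\ref{thm:certificate}. Arrange them into a column $H$ of length $bm$, indexed by the block label $\alpha$ first and the row index $\ell$ second, matching the ordering in Eq.~\eqref{eq:F-memory}. Each $h_{\alpha,\ell}$ lies in $\operatorname{span}f$, so it has degree at most $D-1$ by Lemma~\ref{lem:residual}.

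The key check, which I expect to be the main bookkeeping obstacle, is that $\Delta_m(x)H(x)$ is the column of the $b_{\mu,\ell}$. From Eq.~\eqref{eq:F-memory}, the $(\mu,\ell)$ entry of $\Delta_mH$ is $\sum_{\alpha,j}f_{\mu\alpha,j}x_jh_{\alpha,\ell}$, which is exactly $b_{\mu,\ell}$. Consequently $H^*(I-\Delta_m^*\Delta_m)H$ is the hereditary polynomial in Eq.~\eqref{eq:localizing-sum}, namely $q_{\Phi(T)}$. Likewise $G^*G=q_S$ and $q_{E^\dagger E-P^\dagger P}=1-q^*q$. The matrix identity therefore becomes Eq.~\eqref{eq:polynomial-agler}, because hereditary polynomials are linear in their coefficient matrices. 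The only care needed is the transpose and conjugation convention in $q_{Q_0}=\sum(Q_0)_{tu}g_t^*g_u$: Gram entries must be matched with the rows of $K$ and $N$ consistently, which is the same computation already made in the sufficiency direction.

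\emph{Closing the cycle.} A certificate yields a finite contractive transfer realization of multiplicity $m$ by Lemma~\ref{lem:certificate-to-realization}. A finite contractive realization implies $\|q\|_{\SA,\Delta}\le1$ by the energy identity Eq.~\eqref{eq:transfer-energy} on the open domain, extended to the closed domain by radial continuity. Since the certificate constructed above has $m\le br$, with $r$ the residual dimension of $q$ from Lemma~\ref{lem:residual}, the multiplicity bound follows.
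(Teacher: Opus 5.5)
Your proposal is correct and follows the paper's proof: Theorem~\ref{thm:certificate} with $\beta=1$ supplies $(S,T)$, a square factor of $T$ gives $H$ with $m=br$, and Lemma~\ref{lem:certificate-to-realization} together with the energy identity Eq.~\eqref{eq:transfer-energy} closes the cycle. Your explicit identification of $\Delta_mH$ with the column of the $b_{\mu,\ell}$, and of $G^*G$ with $q_S$, is the bookkeeping the paper leaves implicit by appeal to the sufficiency half of that theorem.
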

\begin{proof}
Applying Theorem~\ref{thm:certificate} to $1-q^*q$ gives the certificate in residual coordinates. A square factor of its $br\times br$ Gram matrix gives $m=br$, with degrees at most $D-1$ for $H$ and $D$ for $G$. Lemma~\ref{lem:certificate-to-realization} gives the finite realization, and Eq.~\eqref{eq:transfer-energy} proves its norm bound. These steps establish the equivalence and the stated finite size.
\end{proof}

We now use a strictly feasible certificate to compute the numerical query parameters and control the precision of their evaluation.

\subsection{Extracting the numerical query parameters}\label{sec:extract}
Suppose we have a positive definite certificate for a normalized target $q$ of degree $D$. We first determine the query parameters from this certificate; the next subsection computes the certificate itself. A factor of $T$ supplies the coefficients of the auxiliary column $H$, and the common basis $g$ expresses both $[\Delta_mH;1]$ and $[H;q]$. We can therefore apply the Gram factorization of Section~\ref{sec:sa-foundation} directly to numerical matrices.

Let $S,T\succ0$ be the rational certificate matrices for $q$. In its residual coordinates, we set $m=br$, take $K=T^{1/2}$ to be the full $m\times br$ positive square root, and form the two ordinary numerical matrices
\begin{equation}\label{eq:XY}
 X=\begin{pmatrix}KM_1\\\vdots\\KM_a\\E\end{pmatrix},
 \qquad
 Y=\begin{pmatrix}KE_1\\\vdots\\KE_b\\P_q\end{pmatrix},
 \qquad
 X:\ (am+1)\times h,\quad
 Y:\ (bm+1)\times h.
\end{equation}
The row $P_q$ contains the coordinates of $q$ itself, including its normalization. The rows of $X$ describe the queried auxiliary column and scalar input, while those of $Y$ describe the auxiliary column and target output. All their entries are known from the certificate. Computing their Gram matrices gives
\begin{equation}\label{eq:XY-Gram}
 T_0=X^\dagger X=E^\dagger E+\sum_\mu M_\mu^\dagger T M_\mu,
 \qquad X^\dagger X-Y^\dagger Y=S\succ0.
\end{equation}
In particular, $T_0$ is rational and invertible. Independence of the $a\times b$ matrices $F_j$ implies $k\le ab$, so $am+1=abr+1\ge kr+1=h$; thus $X$ has enough rows for the full column rank forced by its positive Gram. The extraction below produces a numerical matrix with $bm+1$ rows and $am+1$ columns, matching the queried coordinates and scalar endpoints in Eq.~\eqref{eq:V}. We compute
\begin{equation}\label{eq:V-extraction}
 V=YT_0^{-1}X^\dagger
   =\begin{pmatrix}\Aco&\Bco\\\Cco&\Dco\end{pmatrix}
\end{equation}
with the block sizes in Eq.~\eqref{eq:V}. Every factor in this expression is computed from the rational certificate and its known positive square root.

For the certificate in Eq.~\eqref{eq:example-certificate}, $m=4$ and $h=5$, so $X,Y$ have nine rows and five columns. We obtain
\[
 K=\operatorname{diag}(1,\sqrt8,4,1)/8,\qquad
 T_0=\operatorname{diag}(1,1/64,1/8,1/4,1/64).
\]
Let $e_1,\ldots,e_8$ denote the coordinate columns of length eight, with indices starting at one. Substituting into Eq.~\eqref{eq:V-extraction} gives all four blocks:
\begin{equation}\label{eq:example-transfer-blocks}
 \Aco=\left(\frac{e_2}{\sqrt2}+\frac{e_8}{4}\right)e_7^{\mathsf T},\quad
 \Bco=\frac{e_1}{8}+\frac{e_7}{2},\quad
 \Cco=\frac{e_2^{\mathsf T}}{\sqrt2},\quad \Dco=\frac14.
\end{equation}
Here $F_1^{(4)}=\operatorname{diag}(I_4,0_{4\times4})$ and $F_2^{(4)}=\operatorname{diag}(0_{4\times4},I_4)$. The only nonzero quadratic path is
\[
 \Cco F_1^{(4)}\Aco F_2^{(4)}\Bco
 =\frac{e_2^{\mathsf T}}{\sqrt2}
       \frac{e_2}{2\sqrt2}=\frac14.
\]
The linear coefficients vanish because $\Cco$ selects coordinate two, while $\Bco$ occupies coordinates one and seven. After one nonzero transition $\Aco F_j^{(4)}$, the internal output occupies coordinates two and eight, which every further such transition kills. Thus these blocks generate exactly $(1+x_1x_2)/4$. Section~\ref{sec:unitary-implementation} completes this example by identifying the two-query output and its selected registers; Appendix~\ref{app:worked-scalar} gives the detailed coordinates. For an arbitrary strict certificate, we next specify how the known matrix entries are returned with controlled precision.

For the numerical gates, we describe an $h_2\times h_1$ matrix by its dimensions and an entry algorithm that, on input $(r,s,p)$, returns a complex rational within $2^{-p}$ of entry $(r,s)$. Evaluation is polynomial-time when its cost is polynomial in the description length, dimensions, and $p$. Entry error at most $\eta/\sqrt{h_1h_2}$ gives operator-norm error at most $\eta$. The spectral bounds of the certificate and the arithmetic in Appendix~\ref{app:arithmetic} give such evaluators for the roots, inverses, and products used below.\label{def:matrix-evaluator}

\begin{lemma}[Computable transfer realization]\label{lem:extract}
The matrix in Eq.~\eqref{eq:V-extraction} is a strictly contractive transfer realization of $q$. It satisfies Eq.~\eqref{eq:coefficient-factorization} at every word and has certified entry evaluators with polynomial bit cost in the rational certificate length and the requested precision bits.
\end{lemma}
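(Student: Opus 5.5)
The plan has three parts, proved in order: the realization identity $VX=Y$, strict contractivity, and computability.

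\textbf{Identity.} Since $X^\dagger X=T_0\succ0$, the matrix $X$ has full column rank $h$. Its Moore--Penrose inverse is therefore $X^+=T_0^{-1}X^\dagger$, so Eq.~\eqref{eq:V-extraction} is exactly the Douglas factor $V=YX^+$ of Lemma~\ref{lem:certificate-to-realization}. This gives $VX=YT_0^{-1}X^\dagger X=Y$. I would then check that the rows of $X$ and $Y$ are the coefficient matrices, in the basis $g$, of $[\Delta_mH;1]$ and $[H;q]$ with $H$ determined by $K$. By the coordinate identities $Jg=f$, $L_jg=x_jf$, $Eg=1$ and $P_qg=q$, the block $KE_\alpha$ gives the internal polynomials $h_{\alpha,\ell}$. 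The block $KM_\mu$ gives $b_{\mu,\ell}=\sum_{\alpha,j}f_{\mu\alpha,j}x_jh_{\alpha,\ell}$, which is the $\mu$ block of $\Delta_m(x)H(x)$ in the ordering of Eq.~\eqref{eq:F-memory}. Lemma~\ref{lem:certificate-to-realization}, via Eq.~\eqref{eq:h-recurrence}, then yields the word identities of Eq.~\eqref{eq:coefficient-factorization} and the transfer formula of Eq.~\eqref{eq:sa-transfer}. The vanishing of coefficients above degree $D$ follows because $H$ has degree at most $D-1$.

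\textbf{Strict contraction.} Write $\Pi=XT_0^{-1}X^\dagger$, the orthogonal projection onto $\operatorname{ran}X$. Then
\[
V^\dagger V=XT_0^{-1}Y^\dagger YT_0^{-1}X^\dagger=XT_0^{-1}(T_0-S)T_0^{-1}X^\dagger=\Pi-XT_0^{-1}ST_0^{-1}X^\dagger.
\]
For a unit vector $u=u_\parallel+u_\perp$ with $u_\parallel\in\operatorname{ran}X$, set $v=T_0^{-1}X^\dagger u$. This gives
\[
\|Vu\|^2=\|u_\parallel\|^2-v^\dagger Sv\le\|u_\parallel\|^2\bigl(1-\lambda_{\min}(S)/\lambda_{\max}(T_0)\bigr),
\]
using $\|u_\parallel\|^2=v^\dagger T_0v$. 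Hence
\[
\|V\|^2\le1-\lambda_{\min}(S)/\lambda_{\max}(T_0)<1 .
\]
This quantitative margin is what the precision analysis will consume.

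\textbf{Computability.} This is the main obstacle. Every ingredient except $K=T^{1/2}$ is rational: $E$, $P_q$, $J$, $L_j$ and the $F_j$. The matrix $T_0=E^\dagger E+\sum_\mu M_\mu^\dagger TM_\mu$ is also rational and depends only on $T$, not on $K$, so $T_0^{-1}$ is exact rational with polynomial bit length by Bareiss-type elimination. Expanding the product gives
\[
V=\begin{pmatrix}K\mathcal E\\ P_q\end{pmatrix}T_0^{-1}\begin{pmatrix}\mathcal M^\dagger K & E^\dagger\end{pmatrix},
\]
where $\mathcal E$ stacks the $E_\alpha$ and $\mathcal M$ stacks the $M_\mu$. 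Every entry is thus a rational bilinear expression in the entries of $K$, or linear in them, or rational outright.

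It remains to evaluate $K$ to $p$ bits. Since $T\succ0$ is rational, its spectrum lies between explicit bounds $2^{-\mathrm{poly}}$ and $2^{\mathrm{poly}}$, obtained from the determinant and trace. The square root can then be computed by Newton iteration, or by the Denman--Beavers iteration, in rational arithmetic with rounding. Its perturbation stability is governed by $\|\delta K\|\le\|\delta T\|/(2\lambda_{\min}(T)^{1/2})$. I would propagate the error through the products using the operator-norm bounds on $T_0^{-1}$, $\mathcal E$, $\mathcal M$ and $K$. The working precision needed is $p+\mathrm{poly}$ bits, and the entry tolerance is converted to operator norm by the $\sqrt{h_1h_2}$ rule. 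For these arithmetic bounds I would cite Appendix~\ref{app:arithmetic}, as the statement anticipates.
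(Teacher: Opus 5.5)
Your proposal is correct and follows essentially the same route as the paper. Full column rank of $X$ gives $X^+=T_0^{-1}X^\dagger$, so $V=YX^+$ and $VX=Y$. The identity $Y^\dagger Y=T_0-S$, with $S\succeq\lambda_{\min}(S)I_h$ and $T_0\preceq\lambda_{\max}(T_0)I_h$, gives $\|V\|^2\le1-\lambda_{\min}(S)/\lambda_{\max}(T_0)$. The coordinate identities identify $X,Y$ as the coefficient matrices needed to invoke Lemma~\ref{lem:certificate-to-realization}. Computability then reduces to an exact rational $T_0^{-1}$ and a certified evaluation of $T^{1/2}$, using eigenvalue bounds read off the rational certificate.

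The only difference is cosmetic. You use the gap-dependent square-root perturbation bound, which is valid here since $T\succ0$. Appendix~\ref{app:arithmetic} instead uses Newton--Schulz with the gap-free estimate $\|A^{1/2}-B^{1/2}\|\le\|A-B\|^{1/2}$. Either suffices for this lemma.
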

\begin{proof}
The equality $VX=Y$ follows from the definition of $T_0$. If $S\succeq\mu I_h$ and $T_0\preceq t_0I_h$, then
\[
 Y^\dagger Y\preceq(1-\mu/t_0)X^\dagger X.
\]
On a column $Xz$, the inequality gives $\|VXz\|\le\sqrt{1-\mu/t_0}\|Xz\|$, while the factor $X^\dagger$ makes $V$ vanish on the orthogonal complement of these columns. Hence $\|V\|^2\le1-\mu/t_0<1$, with the rational bounds $\mu>0$ and $t_0$ computed from the strictly feasible certificate. This quantitative form of Douglas factorization~\cite{douglas1966} supplies the contraction margin used to evaluate the known gates.

To apply Lemma~\ref{lem:certificate-to-realization}, we form the polynomial column $H=[KE_1g;\ldots;KE_bg]$. The identities $L_jg=x_jf$ and Eq.~\eqref{eq:EM} show that the matrices in Eq.~\eqref{eq:XY} are exactly the coefficient matrices of $[\Delta_mH;1]$ and $[H;q]$ in the independent basis $g$. Since $X$ has full column rank, its Moore--Penrose inverse is $T_0^{-1}X^\dagger$, so the present formula is the finite factor from that lemma. In particular, Eq.~\eqref{eq:h-recurrence} gives every word coefficient. The entries of $H$ have degree at most $D-1$, which makes the higher coefficients zero and supplies a polynomial solution of Eq.~\eqref{eq:sa-feedback}. Here the strict bound on $V$ also makes the transfer inverse well defined on the closed domain.

Exact elimination and minor bounds give rational entries of polynomial encoding length for $T_0$ and its inverse. Positive definiteness of the rational matrices $S,T$ gives lower eigenvalue bounds with polynomial logarithmic size by the same minor estimates. Together with $T_0\succeq S$, these bounds control the inverse and square-root evaluations in Eq.~\eqref{eq:V-extraction}, so the finite matrix products have certified entry approximations in polynomial time. Appendix~\ref{app:arithmetic} supplies the corresponding guard-bit analysis, including the singular roots that occur later in Julia gates. Using the full square root of $T$ sets $m=br$, so both the multiplicity and the precision cost of the contraction are controlled by the computed certificate.
\end{proof}

\subsection{Finding a certificate near the optimal normalization}\label{sec:conditioning}
We now find the certificate at a normalization strictly larger than the target norm. To bound the number of classical steps, we need both a bounded search region and a positive distance from its feasible certificate to the boundary. A strict norm promise $\|p/\gamma\|_{\SA,\Delta}\le\nu<1$ supplies these bounds through the unused squared norm $1-\nu^2$. The general hereditary statement of Theorem~\ref{thm:certificate} first gives a certificate of the constant $1$ with both Gram matrices positive definite: subtracting a sufficiently small rational multiple of the uniformly bounded polynomial represented by $I_h+\Phi(I_{br})$ preserves positivity, and its certificate can then be rearranged. Adding $1-\nu^2$ times this constant certificate to one for $\nu^2-p^*p/\gamma^2$ supplies a strict certificate for the normalized target. 

Lemma~\ref{lem:margin} bounds the smallest eigenvalues of both certificate matrices, while Eq.~\eqref{eq:trace-cap} bounds their total trace. Appendix~\ref{app:classical-margins} computes these rational bounds and proves that their logarithms are polynomial in the input length and $\log(1/(1-\nu))$.

\Needspace{8\baselineskip}
\begin{proposition}[Polynomial-bit strict certificate discovery]\label{prop:strict}
For rational $\gamma>0$ and $0\le\nu<1$ with $\|p/\gamma\|_{\SA,\Delta}\le\nu$, a deterministic algorithm returns rational positive definite $S,T$ satisfying Eq.~\eqref{eq:certificate} for $p/\gamma$ exactly. Its bit complexity is polynomial in the rational finite-state and joint-encoding input length, the bit lengths of $\gamma$ and $\nu$, and $\log(1/(1-\nu))$.
\end{proposition}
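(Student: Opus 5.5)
Our plan is to reduce certificate discovery to a well-conditioned semidefinite feasibility problem. We solve it approximately with a polynomial-bit method, then round to an exact rational solution of the affine equation while staying positive definite. We work throughout in the residual coordinates of Lemma~\ref{lem:residual}, which are computed in polynomial bit time with $h=1+kr$ and $r\le s+1$. We set $Q_\gamma=E^\dagger E-P^\dagger P/\gamma^2$ and seek $S,T\succ0$ with $Q_\gamma=S+\Phi(T)$. This is an affine subspace of pairs intersected with the positive semidefinite cone. Its defining data $E,P,J,L_j,F_j$ are rational, with polynomial encoding length.

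The first step produces a well-centered feasible point. As described before the proposition, Theorem~\ref{thm:certificate} applied to the constant $1$ minus a small rational multiple $\theta(I_h+\Phi(I_{br}))$ yields a certificate $1=S_1+\Phi(T_1)$ with $S_1\succeq\theta I_h$ and $T_1\succeq\theta I_{br}$. We take $\theta$ to be the rational bound of Appendix~\ref{app:classical-margins}. Theorem~\ref{thm:certificate} also gives some certificate $(S_2,T_2)$ for $\nu^2-p^*p/\gamma^2$. Then $(S_2+(1-\nu^2)S_1,\;T_2+(1-\nu^2)T_1)$ certifies $Q_\gamma$, and both of its blocks are bounded below by $(1-\nu^2)\theta\,I$. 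This gives the eigenvalue margin of Lemma~\ref{lem:margin}. For the outer bound, we use the coercivity inequality Eq.~\eqref{eq:coercive}: any feasible pair satisfies $\tr S+\tr T\le\ell_\rho(q_{Q_\gamma})/\chi$, which is the trace cap Eq.~\eqref{eq:trace-cap}. Both radii have logarithms polynomial in the input length and $\log(1/(1-\nu))$. We then restrict the search to the spectrahedron defined by the affine equation, $S\succeq\tfrac12(1-\nu^2)\theta I$, $T\succeq\tfrac12(1-\nu^2)\theta I$, and the trace bound. This set is nonempty and contains a ball of explicit radius within the affine subspace. That radius is obtained by perturbing along the subspace, since a perturbation of size below half the margin keeps both blocks positive.

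Next we run the ellipsoid method, or a polynomial-bit interior-point method, on this spectrahedron. It is parametrized by an explicit rational basis of the affine solution space, computed by fraction-free elimination. The outer and inner radii make the number of iterations and the working precision polynomial. This returns a rational point $(\tilde S,\tilde T)$ that lies within the inner ball's center region up to a small error, and whose minimal eigenvalues are at least $\tfrac14(1-\nu^2)\theta$.

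The main obstacle is obtaining exact equality rather than approximate feasibility. We plan to handle it by projection. We keep $\tilde T$ rational and define $\tilde S:=Q_\gamma-\Phi(\tilde T)$ exactly, so the affine equation holds identically with rational entries. The remaining task is to show that this redefined $\tilde S$ stays positive definite. Its deviation from the approximate $S$ is bounded by the operator norm of $\Phi$, which has polynomial bit size, times the solver's residual error. Choosing the solver precision with polynomially many guard bits makes this deviation below $\tfrac18(1-\nu^2)\theta$. Positive definiteness of the rational pair can then be certified exactly, by Cholesky factorization or leading minors in polynomial bit time. Note that $1-\nu^2\ge1-\nu$, so the dependence is $\log(1/(1-\nu))$ as claimed.
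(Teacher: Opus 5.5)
Your proposal is correct and follows essentially the paper's argument: the inner ball comes from adding $(1-\nu^2)$ times the strict constant certificate as in Lemma~\ref{lem:margin}, the outer bound comes from the trace cap Eq.~\eqref{eq:trace-cap}, and exactness comes from setting $S=E^\dagger E-P^\dagger P/\gamma^2-\Phi(T)$ for a rational $T$ found by an ellipsoid search with exact positivity checks. The paper eliminates $S$ from the start, so the affine equation holds automatically and no projection step is needed; it also uses the ellipsoid method because the center of the inner ball is never computed, which is why your interior-point alternative would need an explicit strictly feasible starting point.
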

\begin{proof}
Eliminating $S$ leaves the rational matrix inequalities in Eq.~\eqref{eq:search-region} for the real coordinates of $T$. The strict margins give a contained ball and a bounded search region with polynomial logarithmic radius ratio. Exact Hermitian elimination tests each matrix inequality and supplies a rational separating vector when it fails. The rounded ellipsoid method~\cite{groetschell1981,groetschell1988} therefore finds a rational feasible point within a predetermined polynomial iteration cap. The recovered matrix $S=E^\dagger E-P^\dagger P/\gamma^2-\Phi(T)$ satisfies the coefficient identity exactly. Appendix~\ref{app:classical-margins} proves the radius and bit bounds, and Appendix~\ref{app:arithmetic} specifies the finite-bit separation and rounding operations.

Every returned pair is checked against the exact rational inequalities, so its certificate remains valid even if the promised norm bound was false. Conversely, if the procedure reaches its cap without a verified pair, the iteration guarantee rules out the interior ball implied by that promise and hence disproves the promise itself; this provides the lower-bound test used in norm estimation below.
\end{proof}

For a nonzero target, we can compute positive rational lower and upper bounds on $B=\|p\|_{\SA,\Delta}$ from the same coefficient recurrence. We then use the strict search to refine this bracket. The search has two useful outcomes at a proposed threshold $b_0$: a verified certificate proves a nearby upper bound, while reaching the iteration cap proves $B>b_0$.

\begin{theorem}[Relative norm estimation]\label{thm:norm-search}
For a nonzero rational finite-state polynomial and rational $0<\eta<1$, one can compute a rational number $\widehat B$ satisfying $B\le\widehat B\le(1+\eta)B$ in time polynomial in the input length and $\log(1/\eta)$.
\end{theorem}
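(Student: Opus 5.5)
Our approach is a multiplicative bisection on $\log B$, with Proposition~\ref{prop:strict} as the decision oracle. There are three steps: compute an initial bracket, use the strict search as a one-sided test, and bound the number of bisection steps.

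\emph{Initial bracket.} For the upper bound $B_+$ we use the triangle inequality together with the recurrence. Each $A_j$ is a known linear combination of the blocks of $\Delta(A)$, via a rational left inverse of the coefficient matrix of the $F_j$. This gives a rational bound $\|A_j\|\le\kappa$. The degree-$d$ part then has norm at most $\|\ell\|\,\|r_0\|\,(\kappa\sum_j\|T_j\|)^d$, with every factor replaced by a rational upper bound. Summing over $d\le D\le s$ gives $B_+$, and its logarithm is polynomial in the input length.

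For the lower bound $B_->0$ we need one admissible tuple on which $p$ is nonzero, with a controlled size. We take the nilpotent tuple $\rho S^{(D)}$ from the proof of Theorem~\ref{thm:certificate}, with $\rho$ small enough that $\|\Delta(\rho S^{(D)})\|\le1/2$. On the empty-word coordinate, $p(\rho S^{(D)})$ returns the coefficient vector $(c_w\rho^{|w|})_w$. Its norm is at least $\rho^{D}$ times the norm of some nonzero coefficient. To find such a coefficient, we use the residual construction of Lemma~\ref{lem:residual}: it identifies a nonzero word of length $\le D$ by walking through states, so the coefficient is a rational number of polynomial bit length. This yields $B_-=\rho^D|c_w|$, again with polynomially bounded logarithm. (Alternatively, Corollary~\ref{cor:finite-norm} supplies norm witnesses of order at most $h$.)

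\emph{Decision step.} Fix the ratio $\theta=1+\eta/3$. For a threshold $b$ we run the procedure of Proposition~\ref{prop:strict} with $\gamma=\theta b$ and $\nu=1/\theta$. If $B\le b$, the promise $\|p/\gamma\|\le\nu$ holds, so the procedure returns a certificate, and that certificate exactly verifies $B\le\theta b$. If instead the procedure reaches its iteration cap, the proposition shows that the promise was false, hence $B>b$. If a certificate is returned, it is valid even when $B>b$ holds. The step is therefore a sound test with a gap: it either certifies $B\le\theta b$ or proves $B>b$. Its cost is polynomial in the input length, in the bit lengths of $\gamma$ and $\nu$, and in $\log(1/(1-\nu))=O(\log(1/\eta))$.

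\emph{Bisection and main obstacle.} We maintain the invariant $L< B\le U$, where $L$ has been proved a lower bound and $U$ has been certified an upper bound. To initialize, we take $L=B_-/2$ and run one test to turn $B_+$ into a certified upper bound. At each step we choose a rational $b$ near the geometric midpoint $\sqrt{LU}$. A certificate gives $U\leftarrow\theta b$, and a cap gives $L\leftarrow b$. While $U/L>\theta^2$, the ratio $\log(U/L)$ contracts by roughly a factor of two, up to an additive $\log\theta$. So after $O(\log\log(B_+/B_-)+\log(1/\eta))$ steps we reach $U/L\le\theta^3\le1+\eta$ (for $\eta<1$, after adjusting constants). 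We output $\widehat B=U$. Then $B\le \widehat B\le(1+\eta)L<(1+\eta)B$.

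The main obstacle is bit-length control. The thresholds must stay rational with polynomially many bits, so we round $\sqrt{LU}$ to a rational within a factor of $\theta^{1/4}$, keeping $O(\log(1/\eta)+\log\log(B_+/B_-))$ significant bits. The lower-bound construction is where we expect the most care: it must give $\log(1/B_-)$ polynomial in the input, and that relies on the nonzero coefficient having polynomial bit length, which is ensured by the reduced representation and fraction-free elimination as in Lemma~\ref{lem:residual}.
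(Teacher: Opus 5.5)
Your proposal is correct and is essentially the paper's proof. You build a bracket from the scaled word tuple $\rho S^{(D)}$ and a triangle-inequality upper bound, then refine it by bisection, using Proposition~\ref{prop:strict} with $\gamma=\theta b$ and $\nu=\theta^{-1}$ as a gapped one-sided test: a verified certificate gives $B\le\theta b$, and reaching the cap gives $B>b$. The paper bisects at the arithmetic midpoint with $\theta=1+\eta/8$, which also fixes your constant, since $(1+\eta/3)^3>1+\eta$. It also handles constant targets separately by rational bisection on $|c_\emptyset|^2$, because the residual and margin machinery behind Proposition~\ref{prop:strict} is set up for $D\ge1$; you should add that case explicitly.
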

\begin{proof}
For a rational threshold $b_0>0$, we run the procedure of Proposition~\ref{prop:strict} with $\gamma=(1+\eta/8)b_0$ and $\nu=(1+\eta/8)^{-1}$. If $B\le b_0$, its promise holds and the search returns a certificate. Every returned certificate proves $B\le(1+\eta/8)b_0$ by exact verification. If the iteration cap is reached, the promise is false, so $B>b_0$.

We use the midpoint of the current bracket $[\ell,U]$ as $b_0$. A failed search raises $\ell$ to $b_0$; a successful search lowers $U$ to $\min\{U,(1+\eta/8)b_0\}$. These updates preserve $\ell\le B\le U$. Appendix~\ref{app:classical-norm-search} supplies the initial bracket and proves that its width decreases by a fixed factor whenever $U>(1+\eta)\ell$. After polynomially many iterations in the input bit length and $\log(1/\eta)$, the rational upper endpoint is the required $\widehat B$. Constant targets are handled there by rational bisection of $|c_\emptyset|$.
\end{proof}

The two routines provide exactly the classical inputs needed by the quantum construction. We first estimate the norm of the requested polynomial $p$ and choose its output normalization. For a row input we then certify $p/\beta$ directly. For a general input, Section~\ref{sec:general} first adjusts each homogeneous degree by a known scalar and proves a strict norm bound for the resulting polynomial $q$. Applying Proposition~\ref{prop:strict} to that $q$, followed by Lemma~\ref{lem:extract}, computes its numerical realization. Thus each invocation takes an explicit coefficient recurrence and a proved margin, and returns the matrices used by the circuit.

\section{Quantum circuit synthesis}\label{sec:general}
The contraction from the preceding section provides the factors of every ordered coefficient. We now arrange these factors so that one finite circuit has selected block $p(A)/\beta$. The construction has two parts: a sequence of known contractions combines contributions of different degrees, and their unitary completions implement that sequence using the supplied oracle. Computing the polynomial produced by the first part tells us how to adjust the classical target before finding its realization. For a row input we will instead factor a complementary polynomial column, obtaining one query per degree.

\subsection{A finite sequence and its normalization}\label{sec:weighted-sequence}
We first fix a contraction $V$ in Eq.~\eqref{eq:V}, with multiplicity $m$, and write $q_d$ for the degree-$d$ polynomial generated by its coefficient products. To combine them using $Q\ge1$ queries, we allow an amplitude to enter the active calculation at any of the known updates $0,\ldots,Q$ and to leave at a later update. The number of intervening active queries then determines its degree.

We store an active component $h$, a pending input $s$, and a collected output $y$ on disjoint ancillary coordinates. Before an update, these are amplitude columns of lengths $amn,n,n$; afterwards the active column has length $bmn$. The next query acts by $\Delta_m(A)$ on that active column and by identity on the other two. For a signal input $\psi$, we start with $(h,s,y)=(0,\psi,0)$ and select $y$ after the last update. Each selected contribution is thereby specified by the time at which it enters and the time at which it is collected.

We choose positive weights $v_0,\ldots,v_Q$ with $\sum_tv_t^2=1$ so that injection at $t$ and collection at $r$ contribute amplitudes $v_t$ and $v_r$. The tail norm measures the input still available at each stage, while the prefix norm determines how a newly collected output is combined with the previous outputs. These norms give the splitting and collection ratios
\begin{equation}\label{eq:weights}
 \begin{aligned}
 u_t&=\left(\sum_{s=t}^{Q}v_s^2\right)^{1/2},&
 z_t&=\left(\sum_{s=0}^{t}v_s^2\right)^{1/2},\\
 \alpha_t&=v_t/u_t,&\zeta_t&=u_{t+1}/u_t,\\
 \xi_t&=v_t/z_t,&\gamma_t&=z_{t-1}/z_t,
 \end{aligned}
 \qquad u_{Q+1}=z_{-1}=0.
\end{equation}
All denominators used in these definitions are positive. In particular, $u_0=z_Q=1$, and
$\alpha_t^2+\zeta_t^2=\xi_t^2+\gamma_t^2=1$.
Writing the corresponding signal amplitudes as $h,s,y$, we use the update
\begin{equation}\label{eq:stage-components}
 \begin{aligned}
 h'&=\Aco_n h+\alpha_t\Bco_n s,\qquad &s'&=\zeta_t s,\\
 y'&=\xi_t(\Cco_n h+\alpha_t\Dco_n s)+\gamma_t y.
 \end{aligned}
\end{equation}
Thus $\alpha_ts$ enters the transfer update, $\zeta_ts$ remains pending, and the emitted amplitude joins the collected output. The next query sends $h'$ to $\Delta_m(A)h'$ while preserving $s',y'$. Suppressing the common signal identity, this update is the numerical matrix
\begin{equation}\label{eq:Mt}
 M_t=
 \begin{pmatrix}
 \Aco&\alpha_t\Bco&0_{bm\times1}\\
 0_{1\times am}&\zeta_t&0\\
 \xi_t\Cco&\xi_t\alpha_t\Dco&\gamma_t
 \end{pmatrix}.
\end{equation}
The block order is active, pending, collected, and each numerical entry multiplies $I_n$. Figure~\ref{fig:three-components} follows these components of the same quantum state through the update and the next query.

\begin{figure}[t]
\centering
\begin{tikzpicture}[x=1cm,y=1cm,>=stealth,every node/.style={font=\small}]
 \node (s) at (0,1.8) {$s$};
 \node (h) at (0,0) {$h$};
 \node (y) at (0,-1.8) {$y$};
 \node (sp) at (4.8,1.8) {$s'$};
 \node (hp) at (4.8,0) {$h'$};
 \node (yp) at (4.8,-1.8) {$y'$};
 \node (sq) at (8,1.8) {$s'$};
 \node (hq) at (8,0) {$\Delta_m(A)h'$};
 \node (yq) at (8,-1.8) {$y'$};
 \draw[->] (s) -- node[above] {$\zeta_t$} (sp);
 \draw[->] (h) -- node[above,pos=.66] {$\Aco_n$} (hp);
 \draw[->] (y) -- node[below] {$\gamma_t$} (yp);
 \draw[->] (s) -- node[above,sloped,pos=.38] {$\alpha_t\Bco_n$} (hp);
 \draw[->] (h) -- node[above,sloped,pos=.6] {$\xi_t\Cco_n$} (yp);
 \draw[->] (s) to[bend right=46] node[left,pos=.52] {$\xi_t\alpha_t\Dco_n$} (yp);
 \draw[->] (sp) -- node[above] {$I_n$} (sq);
 \draw[->] (hp) -- node[above] {$\Delta_m(A)$} (hq);
 \draw[->] (yp) -- node[above] {$I_n$} (yq);
 \node at (2.4,2.5) {known update $M_t\otimes I_n$};
 \node at (6.4,2.5) {one query};
\end{tikzpicture}
\caption{The three components in Eq.~\eqref{eq:stage-components}. The input can remain pending, enter the active sector, or contribute its constant term directly to the output. Active amplitude can be updated or collected; collected amplitude remains there. The three right-hand arrows are sectors of one supplied query, whose full unitary is called unconditionally. The final update $M_Q$ has no following query.}
\label{fig:three-components}
\end{figure}

Each update is a contraction. To see this, we split the pending input with amplitudes $\alpha_t,\zeta_t$, apply $V$ to the active and injected components, and collect the emitted and previous outputs with amplitudes $\xi_t,\gamma_t$. The first map is an isometry and the last a coisometry, because both pairs have squared norm one. Appendix~\ref{app:circuit-packing} writes these maps explicitly as $M_t=T_tV_{\rm ext}S_t$ in Eq.~\eqref{eq:Mt-factors}.

In these three groups of coordinates, the selected query is
\begin{equation}\label{eq:logical-query}
 \mathcal B_m(A)=\begin{pmatrix}
 \Delta_m(A)&0&0\\0&I_n&0\\0&0&I_n
 \end{pmatrix}.
\end{equation}
Let $e_{am}$ be the column of length $am+2$ selecting the pending input, and let $e_{bm+1}^{\mathsf T}$ be the row of length $bm+2$ selecting the collected output, with indices starting at zero. The matrix whose selected action we require is
\begin{equation}\label{eq:logical-query-product}
 (e_{bm+1}^{\mathsf T}\otimes I_n)(M_Q\otimes I_n)
 \mathcal B_m(A)(M_{Q-1}\otimes I_n)\cdots
 \mathcal B_m(A)(M_0\otimes I_n)(e_{am}\otimes I_n).
\end{equation}
The right endpoint supplies the input, each $M_t$ updates the three components, and the left endpoint selects the collected output. We first evaluate this product; its embedding in the physical oracle will follow from the same coordinate assignments.

For these stage weights, we define the degree multipliers
\begin{equation}\label{eq:window-autocorrelation}
 w_d=\sum_{t=0}^{Q-d}v_tv_{t+d}\qquad(0\le d\le Q).
\end{equation}
Here $q_d(A)=\sum_{|w|=d}q_wA_w$, with $q_w$ from Eq.~\eqref{eq:coefficient-factorization}. The following matrix identity applies to any contractive transfer matrix; the polynomial realization computed in Section~\ref{sec:extract} is the case used by our algorithm.

\begin{theorem}[The polynomial implemented by the finite sequence]\label{thm:window-circuit}
For $Q\ge1$, a positive real unit weight vector, and the contraction $V$ in Eq.~\eqref{eq:V}, the matrix product in Eq.~\eqref{eq:logical-query-product} equals
\begin{equation}\label{eq:window-output}
                         \sum_{d=0}^{Q}w_dq_d(A).
\end{equation}
If the coefficients define a degree-$D$ polynomial with $D\le Q$, the sum ends at $D$.
\end{theorem}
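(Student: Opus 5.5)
The plan is to expand the product in Eq.~\eqref{eq:logical-query-product} as a sum over paths through the three-component state space, using the block-triangular structure of each $M_t$. The pending component $s$ only feeds forward: it is multiplied by $\zeta_t$ and never receives amplitude from $h$ or $y$. Likewise $y$ only receives amplitude, being multiplied by $\gamma_t$ thereafter. Hence every nonzero contribution has a unique shape. The input stays pending through updates $0,\ldots,t-1$, picking up $\prod_{s<t}\zeta_s=u_t$ by telescoping since $u_0=1$. It is either injected at update $t$ or, when it goes straight to the output, routed through $\xi_t\alpha_t\Dco$. After injection it stays active for some number of updates and is collected at update $r\ge t$. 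It then remains collected, picking up $\prod_{s=r+1}^{Q}\gamma_s=z_r/z_Q=z_r$.

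I would treat the constant path and the active paths separately.

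\emph{Constant path.} The direct path at time $t$ contributes
\[
 u_t\,\alpha_t\,\xi_t\,z_t\,\Dco_n=v_t^2\,\Dco_n ,
\]
using $u_t\alpha_t=v_t$ and $\xi_tz_t=v_t$. Summing over $t$ gives $\sum_tv_t^2\,q_0=w_0q_0(A)$, using Eq.~\eqref{eq:transfer-first-terms}.

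\emph{Active paths.} Consider a path injected at update $t$ through $\alpha_t\Bco_n$ and collected at update $r=t+d$ with $d\ge1$. Between these updates there are exactly $d$ queries $\Delta_m(A)$ and $d-1$ intermediate active updates $\Aco_n$; the active-to-active entry of $M_s$ is $\Aco$ with no stage weight. So its operator factor is
\[
 \Cco_n\Delta_m(A)\Aco_n\Delta_m(A)\cdots\Aco_n\Delta_m(A)\Bco_n ,
\]
which equals $q_d(A)$ by Eq.~\eqref{eq:coefficient-factorization}, summing the letters in each $\Delta_m(A)=\sum_jF^{(m)}_j\otimes A_j$ with the rightmost factor acting first. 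Its scalar factor is
\[
 u_t\,\alpha_t\cdot\xi_r\,z_r=v_tv_{t+d}.
\]
Summing over admissible $t$ with $t+d\le Q$ gives $w_dq_d(A)$.

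\emph{Bookkeeping.} I must check that the queries act as identity on $s$ and $y$, as in Eq.~\eqref{eq:logical-query}. I must also check that active amplitude remaining after $M_Q$ is discarded by the endpoint row $e_{bm+1}^{\mathsf T}$, and that collection occurs only through $\xi_r\Cco$ at an update, never directly after a query. This yields Eq.~\eqref{eq:window-output}. The final claim follows because the realization reproduces the coefficient sequence, including the zeros above degree $D$ (Lemma~\ref{lem:certificate-to-realization}), so $q_d=0$ for $d>D$.

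The main obstacle is organizing the path expansion cleanly, with the correct word order and the telescoping weights. I would formalize it by induction on the stage index. After updates $0,\ldots,\tau$, the invariant is $s=u_{\tau+1}\psi$ and
\[
 h=\sum_{t\le\tau}v_t\,\Aco_n\Delta_m(A)\cdots\Bco_n\,\psi/u_t\cdot u_t ,
\]
that is, the active column is $\sum_{t\le\tau}v_t H^{(\tau-t)}$, where
\[
 H^{(0)}=\Bco_n\psi,\qquad H^{(e)}=\Aco_n\Delta_m(A)H^{(e-1)}.
\]
The collected column is
\[
 y=z_\tau^{-1}\sum_{t\le r\le\tau}v_tv_r\,\Cco_n\Delta_m(A)H^{(r-t-1)}\psi ,
\]
with the $\Dco$ term included when $r=t$. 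The inductive step is then a direct check against Eq.~\eqref{eq:stage-components}.
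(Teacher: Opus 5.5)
Your proof is correct and follows essentially the same path expansion as the paper: the pending amplitude telescopes to $u_t$, injection contributes $\alpha_tu_t=v_t$, collection contributes $\xi_rz_r=v_r$, and each run of $d$ queries yields $q_d(A)$ with weight $v_tv_{t+d}$, while the direct paths give $\sum_tv_t^2\Dco_n=\Dco_n$. Your stage-by-stage invariant for $(h,s,y)$ is a tidy formalization of the same expansion; one small fix is that $H^{(e)}$ already contains $\psi$, so the trailing $\psi$ in your formula for $y$ should be dropped.
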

\begin{proof}
We expand the selected product in Eq.~\eqref{eq:logical-query-product}. Its zero blocks allow either a direct constant contribution through $\Dco$, or one consecutive run of active queries between injection and collection. Before update $t$, the pending amplitude is $u_t$ times the original input, since each preceding update multiplies it by $\zeta_s=u_{s+1}/u_s$. Its injection weight is therefore $\alpha_tu_t=v_t$. An output emitted at update $r$ receives $\xi_r=v_r/z_r$ and is retained with total later factor
\[
 \prod_{s=r+1}^{Q}\gamma_s=z_r/z_Q=z_r.
\]
Its final collection weight is $v_r$, including $r=Q$. Thus an input injected at $t$ and collected at $t+d$ contributes $v_tv_{t+d}q_d(A)$. For a word $j_1\cdots j_d$, the active path supplies precisely
\[
 \Cco F_{j_1}^{(m)}\Aco F_{j_2}^{(m)}\cdots
       \Aco F_{j_d}^{(m)}\Bco=q_{j_1\cdots j_d},
\]
with the rightmost letter acting first. Summing the possible start times gives coefficient $w_dq_{j_1\cdots j_d}$ for this word; summing the words of length $d$ then gives $w_dq_d(A)$. At $d=0$ the same calculation gives $w_0\Dco_n=\Dco_n$, because $\sum_tv_t^2=1$.

If the represented function is a polynomial of degree $D$, its coefficient identity makes every $q_d$ with $d>D$ zero. Lemma~\ref{lem:extract} establishes this identity for the realization computed by our algorithm, proving the final assertion.
\end{proof}

The multipliers $w_d$ depend only on the degree. We can therefore cancel them in the classical target before finding its realization:
\begin{equation}\label{eq:mechanism-compensation}
 \widetilde p=\sum_{d=0}^{D}\frac{p_d}{w_d},\qquad
 q=\widetilde p/\beta,\qquad D\le Q.
\end{equation}
The resulting product equals $p(A)/\beta$. To keep $\beta\le(1+\tau)B$, we must choose weights for which $\|\widetilde p\|_{\SA,\Delta}$ remains close to $B$, leaving a strict margin for the certificate search. After the following fixed-length example, we prove that smooth weights achieve this with $Q=O(D/\sqrt\tau)$.

For the polynomial of Example~\ref{ex:commutator-thread}, two queries and three equal stage weights give $w_0=1$ and $w_2=1/3$. At normalization one, we thus realize
\[
 q(x)=\frac14+\frac{x_1x_2-x_2x_1}{4},
 \qquad \|q\|_{\SA,\Delta}=\frac34<1.
\]
The quadratic contribution can enter only at stage zero and leave at stage two, so its factor $1/3$ recovers Eq.~\eqref{eq:running-target}. The constant contribution enters and leaves at the same stage, and its weights sum to one. This illustrates compensation at a fixed query count; the smooth-weight construction will instead bring the normalization close to the target norm $5/12$.

\paragraph{Controlling the compensation.}
The compensation in Eq.~\eqref{eq:mechanism-compensation} is useful only if it preserves the target norm up to a small factor. Bounding its homogeneous terms separately could lose their interference. We instead use Fourier averaging of the whole polynomial~\cite{katznelson2004} along $A_j\mapsto e^{i\theta}A_j$. The identity $\Delta(e^{i\theta}A)=e^{i\theta}\Delta(A)$ ensures that this orbit stays in the same domain at every matrix size. With $N=Q+1$ and $D<N$, we denote the resulting degree-weighting map by $\mathcal W_vp=\sum_{d=0}^Dw_dp_d$. Smooth stage weights make this averaging map close to the identity in the same uniform norm.

\begin{lemma}[A rational choice of weights]\label{lem:weights}
For every integer $N\ge2$, a positive rational vector
$r=(r_0,\ldots,r_{N-1})$ can be computed in polynomial time in $N$
such that $v=r/\|r\|_2$ has positive rational autocorrelations $w_d$
for $0\le d<N$. For every integer $0\le D<N$, these weights satisfy,
on degree-at-most-$D$ scalar polynomials,
\begin{equation}\label{eq:window-bound}
 \|p-\mathcal W_vp\|_{\SA,\Delta}
 \le\rho_N\|p\|_{\SA,\Delta},
 \qquad \rho_N=\frac{64D^2}{(N+1)^2}.
\end{equation}
The entries of $r$ and the $w_d$ have bit length polynomial in $\log N$. If $\rho_N<1$, then
\begin{equation}\label{eq:inverse-window}
 \mathcal W_v^{-1}p=\sum_{d=0}^D p_d/w_d,
 \qquad
 \|\mathcal W_v^{-1}p\|_{\SA,\Delta}
                \le\frac{\|p\|_{\SA,\Delta}}{1-\rho_N}.
\end{equation}
\end{lemma}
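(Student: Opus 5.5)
The plan is to express $\mathcal W_v$ as a Fourier average over the phase orbit $A\mapsto e^{i\theta}A$. By the identity $\Delta(e^{i\theta}A)=e^{i\theta}\Delta(A)$, this orbit preserves the domain at every matrix size. Write $p_\theta(A)=p(e^{i\theta}A)=\sum_d e^{id\theta}p_d(A)$. Take the kernel $K(\theta)=\bigl|\sum_{t=0}^{N-1}v_te^{it\theta}\bigr|^2=\sum_{|d|<N}w_{|d|}e^{id\theta}$. It is nonnegative, and since $\sum_tv_t^2=1$ we have $\frac1{2\pi}\int K=w_0=1$. Then $\mathcal W_vp(A)=\frac1{2\pi}\int K(\theta)\,p_{-\theta}(A)\,d\theta$, using that $w_d$ is real and symmetric in $d$.

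Hence
\[
 p-\mathcal W_vp=\frac1{2\pi}\int K(\theta)\,\bigl(p-p_{-\theta}\bigr)\,d\theta .
\]
To bound $\|p-p_{-\theta}\|_{\SA,\Delta}$ I would use a noncommutative Bernstein inequality. The function $\theta\mapsto p_\theta(A)$ is an operator-valued trigonometric polynomial of degree $\le D$, and its sup-norm over $\theta$ is bounded by $\|p\|_{\SA,\Delta}$, uniformly in $A$. The Bernstein inequality for vector-valued trigonometric polynomials (proved via the Fejér/de la Vallée-Poussin representation, which is valid in any Banach space) gives $\|\partial_\theta p_\theta\|\le D\|p\|$. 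Consequently $\|p-p_{-\theta}\|\le \min(2,D|\theta|)\,\|p\|_{\SA,\Delta}$, and therefore
\[
 \|p-\mathcal W_vp\|_{\SA,\Delta}\le D\,\|p\|_{\SA,\Delta}\,\frac1{2\pi}\int_{-\pi}^{\pi}|\theta|K(\theta)\,d\theta .
\]

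It remains to choose $v$ so that the first absolute moment of $K$ is $O(1/N^2)$ times a constant, fitting $64D^2/(N+1)^2$. The first moment of a Fejér-type kernel of squared form is only $O(\log N/N)$ at best. So I would instead use a second-order Taylor bound. Since $K$ is even, the first-order term cancels:
\[
 \int K(\theta)\bigl(p-p_{-\theta}\bigr)=\int K(\theta)\Bigl(p-\tfrac12(p_\theta+p_{-\theta})\Bigr).
\]
Bernstein applied twice gives $\|p-\tfrac12(p_\theta+p_{-\theta})\|\le \tfrac12 D^2\theta^2\|p\|$, so it suffices to have $\frac1{2\pi}\int\theta^2K\le 128/(N+1)^2$. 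Take $r_t$ to be rational approximations of the sine window $\sin\bigl(\pi(t+1)/(N+1)\bigr)$. For the exact sine window, $\sum_t v_t^2=1$ after normalization, and the kernel's second moment is governed by the discrete Dirichlet energy. Using $\int\theta^2K\le \tfrac{\pi^2}{4}\int 4\sin^2(\theta/2)K$ and $\frac1{2\pi}\int 4\sin^2(\theta/2)K=2(1-w_1)=\sum_t(v_{t+1}-v_t)^2$ (with zero boundary terms), the sine window gives $2(1-\cos(\pi/(N+1)))\le\pi^2/(N+1)^2$. This yields $\frac1{2\pi}\int\theta^2K\le \pi^4/(4(N+1)^2)<32/(N+1)^2$, well within the constant.

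Rationality is the main obstacle. I would choose $r_t$ as rational sines (e.g., via rational points on the circle, or truncations with $O(\log N)$ bits). The $w_d$ must be rational although $\|r\|_2$ is irrational; this holds because $w_d=\sum r_tr_{t+d}/\|r\|_2^2$ is rational. Positivity of $w_d$ holds since all $r_t>0$. The perturbation of $1-w_1$ is controlled by the slack in the constant. Finally, \eqref{eq:inverse-window} follows by a Neumann series: $\mathcal W_v$ acts diagonally on homogeneous parts, and $\|I-\mathcal W_v\|\le\rho_N<1$ in the SA norm on degree-$\le D$ polynomials. Hence $\mathcal W_v^{-1}=\sum_k(I-\mathcal W_v)^k$, with norm at most $1/(1-\rho_N)$, and it equals division by $w_d$ on each $p_d$.
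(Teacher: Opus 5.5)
Your proposal is correct. Its overall structure matches the paper's proof:
\begin{itemize}
 \item the same Fourier-averaging representation with $K_v=|\sum_tv_te^{it\theta}|^2$;
 \item evenness of the kernel to remove the first-order term;
 \item the sine window, with $1-w_1\approx1-\cos(\pi/(N+1))$;
 \item the inequality $\theta^2\le\pi^2(1-\cos\theta)/2$;
 \item the Neumann series for $\mathcal W_v^{-1}$.
\end{itemize}

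The genuine difference is how you bound $\|g''\|$ for $g(\theta)=p(e^{i\theta}A)$. The paper continues $g$ to complex $\theta$. It uses the maximum principle on the reversed polynomial $z^Dp(A/z)$ to get $\|p(tA)\|\le|t|^D\|p\|_{\SA,\Delta}$ for $|t|\ge1$, then applies Cauchy's estimate on discs of radius $2/D$, obtaining $\|g''\|\le e^2D^2\|p\|_{\SA,\Delta}/2$. You instead apply Bernstein's inequality twice to the operator-valued trigonometric polynomial $g$, which gives the sharp $\|g''\|\le D^2\|p\|_{\SA,\Delta}$. Your route is shorter and gains a factor $e^2/2\approx3.7$: the exact sine window then gives about $(\pi^4/8)D^2/(N+1)^2\approx12.2\,D^2/(N+1)^2$. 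The cost is importing a classical inequality, which the paper's self-contained complex-analytic argument avoids.

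Two small points.
\begin{itemize}
 \item The constant $D$ in Bernstein's inequality for operator-valued polynomials follows from the scalar complex inequality applied to matrix elements $\langle x,g(\theta)y\rangle$, or from M.~Riesz's interpolation formula. It does not follow from a de la Vall\'ee-Poussin convolution, which gives only $O(D)$. Your constant budget tolerates a Bernstein constant of at most about $2.3D$, so cite the sharp version.
 \item You leave the rounding step as ``absorbed by the slack''. It needs the estimate $|w_1(v)-w_1(s)|\le2\|v-s\|_2$, together with entrywise precision $O(N^{-2})$ for your unnormalized sines. That precision is below $\min_t\sin(\pi(t+1)/(N+1))\ge2/(N+1)$, so every $r_t$ stays positive with $O(\log N)$ bits. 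The paper makes exactly this explicit, with precision $[1000N(N+1)^2]^{-1}$ and the bound $1-w_1<5/(N+1)^2$.
\end{itemize}
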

\begin{proof}
We use the real unit vector
\[
 s_t=\sqrt{\frac2{N+1}}\sin\frac{(t+1)\pi}{N+1},\qquad 0\le t<N.
\]
With $s_{-1}=s_N=0$, the recurrence $s_{t-1}+s_{t+1}=2\cos(\pi/(N+1))s_t$ gives $\sum_ts_ts_{t+1}=\cos(\pi/(N+1))$. We compute dyadic approximations $r_t$ with error at most $[1000N(N+1)^2]^{-1}$, using convergent sine and square-root series. The bound $\sin x\ge2x/\pi$ on $[0,\pi/2]$ puts every $s_t$ above a fixed positive multiple of $(N+1)^{-3/2}$, so all $r_t$ are positive and require $O(\log N)$ bits.

The normalized rational approximation obeys
$\|v-s\|_2\le[100(N+1)^2]^{-1}$. The shift matrix with ones on its first superdiagonal has norm one, and therefore
\begin{equation}\label{eq:w1}
 1-w_1\le1-\cos\frac\pi{N+1}+2\|v-s\|_2
            <\frac5{(N+1)^2}.
\end{equation}
The normalization of $r$ cancels from its autocorrelations, leaving the rational ratios
$w_d=(\sum_t r_tr_{t+d})/(\sum_t r_t^2)$. Positivity of $r$ gives $w_d>0$ for $d<N$. A common dyadic denominator and sums of $N$ terms give the claimed logarithmic bit bounds.

For $D=0$, both claims are immediate from $w_0=1$. We therefore assume $D\ge1$ and prove a norm estimate independent of the number of words. For an admissible tuple $A$, let $g(\theta)=p(e^{i\theta}A)$. On the real line its norm is at most $B=\|p\|_{\SA,\Delta}$. More generally, the reversed polynomial $z^Dp(A/z)$ is analytic at zero and has norm at most $B$ on $|z|=1$. Applying the scalar maximum principle after left and right multiplication by arbitrary unit vectors gives $\|p(tA)\|\le|t|^DB$ when $|t|\ge1$; when $|t|\le1$, the tuple $tA$ is admissible. On the complex $\theta$-circle of radius $2/D$ around a real point, it follows that $\|g(\theta)\|\le e^2B$. For $D\ge1$, Cauchy's estimate consequently gives
\[
 \|g''(\theta)\|\le\frac{e^2D^2}{2}B\qquad(\theta\in\R).
\]

The nonnegative even kernel
$K_v(\theta)=|\sum_t v_te^{it\theta}|^2$ has normalized integral one and Fourier coefficients $w_d$. Thus its average of $g$ is exactly $\mathcal W_vp(A)$. Because the kernel is even, we may integrate the central second difference of $g$, whose norm is at most $\theta^2\sup\|g''\|/2$. This yields
\begin{align}
 \|p(A)-\mathcal W_vp(A)\|
 &\le \frac{e^2D^2B}{4}
       \int_{-\pi}^{\pi}\theta^2K_v(\theta)\frac{\dd\theta}{2\pi}
       \notag\\
 &\le\frac{e^2\pi^2D^2B}{8}(1-w_1)
       <\frac{64D^2B}{(N+1)^2}.\label{eq:window-integral}
\end{align}
The second inequality uses $\theta^2\le\pi^2(1-\cos\theta)/2$ on this interval, and the last follows from Eq.~\eqref{eq:w1}. Taking the supremum over the same input tuples and all finite sizes proves Eq.~\eqref{eq:window-bound}. The finite-dimensional normed polynomial space has a linear map $\mathcal W_v$ within $\rho_N<1$ of the identity, so the geometric-series bound for its inverse is $1/(1-\rho_N)$. Its explicit inverse is division of each degree by the nonzero rational $w_d$, proving Eq.~\eqref{eq:inverse-window}.
\end{proof}

The bound on $\mathcal W_v^{-1}$ controls the cost of compensation for the entire polynomial. Section~\ref{sec:coherent-use} chooses $Q$ and $\beta$ so that $q=\mathcal W_v^{-1}p/\beta$ has a certified strict margin. We next implement the weighted product with the original oracle; its value will remain $p(A)/\beta$.

\subsection{Unitary implementation with the supplied oracle}\label{sec:unitary-implementation}
We place the product in Eq.~\eqref{eq:logical-query-product} on the physical oracle and memory registers. We retain $\kappa=\min\{i,2\}$ identity labels and allocate
\begin{equation}\label{eq:R-general}
 R=\max\left\{m,\left\lceil\frac2\kappa\right\rceil\right\}
\end{equation}
memory labels. The inequality $R\ge m$ provides the active memory, while $\kappa R\ge2$ provides distinct pending and collected coordinates even when $i=1$. 

These choices give $d_-=\kappa+b$ retained input labels and $d_+=\kappa+a$ output labels, each with $R$ memory values. The zero--one matrices $J_-$ and $J_+$ have sizes $d_-R\times(bm+2)$ and $d_+R\times(am+2)$. They place the active block index and its multiplicity on the oracle and memory labels, and place the last two coordinates on flat idle indices zero and one. Appendix~\ref{app:circuit-packing} gives their entries and verifies the query identity in Eq.~\eqref{eq:packed-query}. The known matrices between queries are then
\begin{equation}\label{eq:Ct}
 \begin{aligned}
 C_0&=J_-M_0e_{am},\\
 C_t&=J_-M_tJ_+^\dagger\quad(1\le t<Q),\\
 C_Q&=e_{bm+1}^{\mathsf T}M_QJ_+^\dagger.
 \end{aligned}
\end{equation}
Their sizes are $d_-R\times1$, $d_-R\times d_+R$, and $1\times d_+R$, respectively. Since each packing matrix has orthonormal columns, the $C_t$ are contractions. Their selected product is Eq.~\eqref{eq:logical-query-product}, whose value is given by Theorem~\ref{thm:window-circuit}.

Each known update is a contraction, and each query contributes only a designated block of its unitary. The complementary amplitudes remain in the quantum state. We record their first exit in an extra register, so that the amplitude still carrying label zero follows exactly the product just calculated. This is the block-encoding product principle~\cite{gilyen2019}, implemented with one fresh label for each possible first exit.

For a general contraction product, let $1\le\kappa\le i$ be the number of retained identity labels. The retained query block is
\begin{equation}\label{eq:retained-block}
 L_\kappa(A)=
 \begin{pmatrix} I_{\kappa n}&0_{\kappa n\times bn}\\
             0_{an\times\kappa n}&\Delta(A)\end{pmatrix},
 \qquad d_-=\kappa+b,\quad d_+=\kappa+a.
\end{equation}
With $R$ memory labels, we write $\Lambda_R(A)$ for this block acting identically on each memory value. Its entries and the original physical label offsets are specified in Eqs.~\eqref{eq:query-entries} and~\eqref{eq:retained-labels}. Let $C_0$ be a contraction column of length $d_-R$, let $C_t$ have size $d_-R\times d_+R$ for $1\le t<Q$, and let $C_Q$ be a contraction row of length $d_+R$. The product to be implemented is
\begin{equation}\label{eq:contraction-product}
 (C_Q\otimes I_n)\Lambda_R(A)(C_{Q-1}\otimes I_n)\cdots
                    \Lambda_R(A)(C_0\otimes I_n).
\end{equation}
For the window construction, the packing in Eq.~\eqref{eq:packed-query} identifies this product with Eq.~\eqref{eq:logical-query-product}. The same implementation will also serve the row factors below.

We now construct a block encoding of each known auxiliary contraction $C_t$, so that its selected action becomes a gate in the quantum circuit. Because the entries of $C_t$ are computed classically, we can also compute its defect matrices and complete all columns of the gate. For a contraction $C$ with $h_2$ rows and $h_1$ columns, we use the explicit unitary
\begin{equation}\label{eq:Julia}
 J(C)=\begin{pmatrix}
 C&(I_{h_2}-CC^\dagger)^{1/2}\\
 (I_{h_1}-C^\dagger C)^{1/2}&-C^\dagger
 \end{pmatrix}.
\end{equation}
The row groups have lengths $h_2,h_1$, while the column groups have lengths $h_1,h_2$. The first $h_2$ output coordinates therefore contain $C$ applied to the selected $h_1$ input coordinates. This is a rectangular Julia completion, which in the square case is the Halmos dilation~\cite[Theorems~0--2]{robinson2018}. The defect-operator block encoding used by Low and Su~\cite[Section~2.4]{lowsu2026} differs only by a known sign on the complementary input columns. We use Eq.~\eqref{eq:Julia} because it supplies every column explicitly, accommodates rectangular updates, and is computable with the precision bounds established for our certificate factors.

It is useful to distinguish this single-step completion from a power dilation. A Sz.-Nagy unitary dilation of a square contraction preserves every nonnegative power after compression, and finite $N$-dilations preserve the first $N$ powers~\cite[Theorems~1.1 and~1.3]{levyshalit2014}. A Julia completion guarantees the chosen block for one use: for the scalar contraction $c=0$, $J(0)=\begin{psmallmatrix}0&1\\1&0\end{psmallmatrix}$, so $J(0)^2=I_2$ has selected entry one rather than $c^2=0$. The complementary amplitude has returned. Our circuit combines different, possibly rectangular contractions and intervening oracle blocks; recording each first exit will give their required product. After the row groups are embedded on a common padded register, the selected input columns form a Stinespring isometry~\cite{stinespring1955,watrous2018}. Its selected outcome implements $\rho\mapsto C\rho C^\dagger$, and the complementary outcome completes a trace-preserving operation on that register. We retain the outcome information until the final selection to obtain the coherent block product.

We store the processing label in $\mathsf M$, use one qubit $\mathsf d$ to distinguish this output from the Julia complement, and use a register $\mathsf E$ to record the first failed step. Their widths are
\[
 q_M=\lceil\log_2R\rceil,\qquad q_{\mathsf d}=1,\qquad
 q_E=\lceil\log_2(2Q+1)\rceil.
\]
All three registers and the supplied oracle register $\mathsf O$ start at zero; a zero-width memory is omitted. The input on $\mathsf S$ is arbitrary and may be entangled with an untouched reference. The event value zero denotes the continuing component, and values $1,\ldots,2Q$ record the possible first exits from it.

We embed each $J(C_t)$ on its specified input and output coordinate lists, obtaining a known full unitary $J_{C_t}$. The construction of these lists in Appendix~\ref{app:unitary-details} includes all reverse columns. The known gate $G_t$ applies $J_{C_t}$ on $\mathsf E=0$ and identity on the other event values. For $0\le t<Q$, $D_t$ exchanges event zero with $2t+1$ conditional on $\mathsf d=1$. For $1\le t\le Q$, $F_t$ exchanges zero with $2t$ conditional on $\mathsf d=0$ and an oracle output outside the retained list. Each marker fixes all other basis strings. We combine these known gates into
\[
 W_0=D_0G_0,\qquad W_t=D_tG_tF_t\ (1\le t<Q),\qquad W_Q=G_QF_Q.
\]
Using $\widehat U_A$ for $U_A$ on $\mathsf O,\mathsf S$ and identity on $\mathsf M,\mathsf d,\mathsf E$, the full circuit is
\begin{equation}\label{eq:full-circuit}
 \mathcal U_A=(W_Q\otimes I_n)\widehat U_A(W_{Q-1}\otimes I_n)
                \cdots (W_1\otimes I_n)\widehat U_A(W_0\otimes I_n).
\end{equation}
Every query acts on every event value, whereas the known Julia gates act trivially on values that already record a failure. The fresh labels keep those components separate from the continuing component, whose contribution is selected by requiring $\mathsf O,\mathsf M,\mathsf d,\mathsf E$ to be zero at the end. This also rejects the last Julia complement. The phase qubit and clean workspace for elementary-gate synthesis are counted in Section~\ref{sec:coherent-use}.

\begin{theorem}[Unitary implementation of a contraction product]\label{thm:unitary}
For the contractions and coordinate lists above, the all-zero ancillary block of Eq.~\eqref{eq:full-circuit} equals Eq.~\eqref{eq:contraction-product} for every allowed oracle completion. The circuit uses $Q$ original forward queries and $3Q+1$ known unitary primitives on $q_o+q_M+1+q_E$ non-signal qubits.
\end{theorem}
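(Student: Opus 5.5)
We will prove Theorem~\ref{thm:unitary} by tracking the component of the state that carries event value zero, and by showing that after each query and each known gate this component is exactly the partial product of Eq.~\eqref{eq:contraction-product}. Everything else is moved irreversibly to a fresh event label. We fix an input $\ket{0}_{\mathsf O,\mathsf M,\mathsf d,\mathsf E}\otimes\psi$, with $\psi$ possibly entangled with a reference. The invariant, for $0\le t\le Q$, is that after $W_t$ (and before the next query) the $\mathsf E=0$ component of the state has two properties. It is supported on $\mathsf d=0$ and on the retained input coordinate list of $J_{C_t}$'s selected output. Its amplitude column on those $d_-R$ labels equals
\[
 (C_t\otimes I_n)\Lambda_R(A)(C_{t-1}\otimes I_n)\cdots\Lambda_R(A)(C_0\otimes I_n)\psi .
\]
For $t=Q$ the amplitude is the scalar row output on the all-zero labels. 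The base case is the definition of $J(C_0)$: its first column is $C_0$ on the selected rows and $(1-C_0^\dagger C_0)^{1/2}$ on the complement row, which the embedding assigns to $\mathsf d=1$. $D_0$ then moves that complement to $\mathsf E=1$.

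For the inductive step, the query $\widehat U_A$ acts on the continuing component. Since that component lives on the retained input labels $0,\ldots,\kappa+b-1$ (offset as in Eq.~\eqref{eq:retained-labels}), Eq.~\eqref{eq:oracle-full} gives $\Lambda_R(A)$ on the retained output labels plus $C_A$-type leakage on output labels outside the retained list. The leakage is arbitrary, but its label set is fixed. $F_{t+1}$ sends exactly that leakage to $\mathsf E=2t+2$, conditioned on $\mathsf d=0$. Next $G_{t+1}$ applies $J(C_{t+1})$ on $\mathsf E=0$. By Eq.~\eqref{eq:Julia} and the coordinate lists of Appendix~\ref{app:unitary-details}, the selected outputs receive $C_{t+1}$ times the retained amplitude, and the defect part goes to $\mathsf d=1$. $D_{t+1}$ then sends that defect to $\mathsf E=2t+3$. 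The markers act only on $\mathsf E=0$ and on one fresh value, and later $G$'s act trivially on $\mathsf E\neq0$. The queries act on every event value, but they never change $\mathsf E$. So no amplitude ever returns to $\mathsf E=0$. This is the point illustrated by $J(0)^2=I_2$, and it is what makes the single-use Julia property suffice.

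At the end we project onto $\mathsf O,\mathsf M,\mathsf d,\mathsf E=0$. By the invariant at $t=Q$, the surviving amplitude is $C_Q$ applied to the retained column. Since $C_Q$ is a row, this is the selected output, and the last Julia complement, sitting on $\mathsf d=1$, is rejected without any marker. This gives Eq.~\eqref{eq:contraction-product}. The argument never used $B_A,C_A,D_A$ beyond their label supports, so the result holds for every completion. It also holds for an entangled reference, because all operators act as $X\otimes I$. The resource count is read off directly. There are $Q$ queries. The known primitives are $Q+1$ gates $G_t$, $Q$ gates $D_t$ and $Q$ gates $F_t$, which is $3Q+1$ in total. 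The widths are $q_o+q_M+1+q_E$, and $q_E$ suffices because $2Q+1$ event values are used.

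The main obstacle is bookkeeping: the embedded coordinate lists must make the three pieces line up exactly. These are the output labels of $J_{C_t}$ on $\mathsf d=0$, the retained input labels of the oracle, and the retained output labels fed back to $J_{C_{t+1}}$. In particular, the complement rows of $J(C_t)$ must land on $\mathsf d=1$ and not overlap the continuing support. I would first check this against the packing identity in Eq.~\eqref{eq:packed-query} and the label offsets. Then I would verify that $F_t$'s condition "oracle output outside the retained list" captures all leakage, including leakage from the identity branch, which is none because its columns are already unit.
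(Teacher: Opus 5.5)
Your proposal is correct and follows the paper's proof. You track the $\mathsf E=0$ component through $C_0$, the retained query block, $C_1,\ldots$, and use that each marker swaps with a fresh, hence unoccupied, label, so its reverse action falls on zero amplitude and nothing re-enters $\mathsf E=0$; you then let the final $\mathsf d=0$ selection reject the last Julia complement and count $(Q+1)+Q+Q=3Q+1$ primitives. One small slip: your invariant's ``supported on $\mathsf d=0$'' clause holds only for $t<Q$, since $W_Q=G_QF_Q$ has no $D_Q$, but your final paragraph already treats that last complement correctly.
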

\begin{proof}
We place the known Julia gates on the designated coordinates and apply them only when the failure register is zero. Their action is identity on the other failure labels. After each of the first $Q$ Julia gates, a reversible marker swaps label zero with a new odd label on its complementary output. After each query, a second marker swaps zero with a new even label on outputs outside the retained oracle rows.

A marker is a permutation, so its reverse action must also be considered. Immediately before that marker, its new label is unoccupied: earlier markers used different labels, the known gates preserve all occupied nonzero labels, and the raw oracle acts as identity on the failure register. The reverse swap therefore acts on zero amplitude. Once a component is marked, subsequent markers use fresh labels and preserve it. Induction shows that the component with failure label zero follows precisely $C_0$, the retained query block, $C_1$, and so on. The final selection also checks the last dilation output. Thus its matrix is Eq.~\eqref{eq:contraction-product} for every oracle completion.

Counting the layers gives $Q+1$ known Julia gates, $Q$ markers for their earlier complements, and $Q$ query markers, for a total of $3Q+1$ known primitives. The circuit contains exactly $Q$ unconditional calls to the original unitary. The coordinate construction in Appendix~\ref{app:unitary-details} places all of them on the declared registers.
\end{proof}

For $Q=2$, Figure~\ref{fig:whole-circuit} places these gates on their physical registers. We denote the selected contraction product in Eq.~\eqref{eq:contraction-product} by $\mathcal K_2(A)$ in the diagram. The symbol $X_{0j}$ exchanges the two event values $0$ and $j$; an open control tests the entire zero string of $\mathsf E$. The gates $F_1,F_2$ perform the retained-output test defined above.

\begin{figure}[t]
\centering
\begin{quantikz}[row sep=.24cm,column sep=.24cm]
 \lstick{$\mathsf S:\ket\psi$} &\qw&\qw&\gate[2]{U_A}&\qw&\qw&\qw&\gate[2]{U_A}&\qw&\qw&\rstick{$\mathcal K_2(A)\ket\psi$}\qw\\
 \lstick{$\mathsf O:\ket0$} &\gate[3]{J_{C_0}}&\qw&\qw&\gate[4]{F_1}&\gate[3]{J_{C_1}}&\qw&\qw&\gate[4]{F_2}&\gate[3]{J_{C_2}}&\rstick{$\bra0$}\qw\\
 \lstick{$\mathsf M:\ket0$} &\qw&\qw&\qw&\qw&\qw&\qw&\qw&\qw&\qw&\rstick{$\bra0$}\qw\\
 \lstick{$\mathsf d:\ket0$} &\qw&\ctrl{1}&\qw&\qw&\qw&\ctrl{1}&\qw&\qw&\qw&\rstick{$\bra0$}\qw\\
 \lstick{$\mathsf E:\ket0$} &\octrl{-3}&\gate{X_{01}}&\qw&\qw&\octrl{-3}&\gate{X_{03}}&\qw&\qw&\octrl{-3}&\rstick{$\bra0$}\qw
\end{quantikz}
\caption{The complete two-query matrix-gate circuit. The four ancillary bras specify a joint final zero selection; the signal expression is its unnormalized output. The $\mathsf d$-controlled transpositions are $D_0,D_1$, and the open controls give $G_0,G_1,G_2$. Both oracle calls act on every failure component. When the $C_t$ are the synthesized target parameters, $\mathcal K_2(A)$ is the corresponding polynomial block. Elementary-gate synthesis adds the phase and clean work qubits counted in Theorem~\ref{thm:gates}.}
\label{fig:whole-circuit}
\end{figure}

The numerical example in Eq.~\eqref{eq:example-transfer-blocks} now gives a complete instance of this construction. Its diagonal input has $i=1$, $a=b=2$ and a three-qubit oracle register. With three equal stage weights and $Q=2$, the realization of $q=(1+x_1x_2)/4$ produces $p_0(A)=I_n/4+A_1A_2/12$ at normalization one. Its multiplicity is $m=R=4$, so the known factors have sizes $12\times1$, $12\times12$ and $1\times12$. Selecting zero on the three oracle qubits, two memory qubits, dilation qubit and three event qubits gives exactly $p_0(A)$ in the ideal circuit. Appendix~\ref{app:worked-scalar} supplies the full coordinates and elementary-gate resource bound. This example fixes a convenient normalization to display the computation; the algorithm below also searches for a normalization close to the target norm.

Applying Theorem~\ref{thm:unitary} to the matrices in Eq.~\eqref{eq:Ct} places the output of Theorem~\ref{thm:window-circuit} in the all-zero ancillary block of a $Q$-query circuit. Its processing multiplicity is $R$ from Eq.~\eqref{eq:R-general}. For a fixed realization, this uses one pending and one collected coordinate independently of $Q$; the failure register and the list of known gates retain their stated dependence on the sequence length.

\subsection{The synthesis algorithm, resources, and error}\label{sec:coherent-use}
Algorithm~\ref{alg:synthesis} combines the classical routines with the query construction. Its only additional coefficient operation is degree reweighting. After obtaining $B\le\widehat B\le(1+\tau/8)B$, we use $\beta=(1+\tau/2)\widehat B$ and choose the least $N>D$ with $(N+1)^2\tau\ge1024D^2$. The weights of Lemma~\ref{lem:weights} then define $Q=N-1$ and an auxiliary target whose strict norm bound is proved below.

We represent the reweighted polynomial by adding a degree coordinate to the supplied recurrence. In block row and column indices $0,\ldots,D$, each of length $s$, its transitions and endpoints are
\begin{equation}\label{eq:degree-clock}
 \widetilde T_j=
 \begin{pmatrix}
 0&T_j&0&\cdots&0\\
 0&0&T_j&\ddots&\vdots\\
 \vdots&&\ddots&\ddots&0\\
 0&\cdots&0&0&T_j\\
 0&\cdots&\cdots&0&0
 \end{pmatrix},\quad
 \widetilde\ell=[\ell\ 0\ \cdots\ 0],\quad
 \widetilde r=
 \begin{pmatrix}r_0/(\beta w_0)\\r_0/(\beta w_1)\\\vdots\\r_0/(\beta w_D)\end{pmatrix}.
\end{equation}
The transition size is $(D+1)s\times (D+1)s$. Multiplication by the initial row advances along the superdiagonal, so
$\widetilde\ell\widetilde T_w\widetilde r=c_w/(\beta w_{|w|})$ for $|w|\le D$, and all longer coefficients vanish. We call the resulting polynomial $q$. Its degree index belongs to the classical coefficient representation, whose matrices we use to compute the circuit parameters.

Algorithm~\ref{alg:synthesis} applies the certificate and contraction routines to this compensated recurrence, then forms the known updates and their full unitary implementation. Only after these exact objects have been specified do we compile the known gates to the requested error.

\begin{algorithm}[htbp]
\caption{Classical synthesis for a scalar target under a general joint input}
\label{alg:synthesis}
\begin{algorithmic}[1]
\Require Rational finite-state coefficients $(\ell,T_1,\ldots,T_k,r_0)$, joint-oracle metadata, and rational $0<\tau,\epsilon<1$.
\Ensure A positive rational normalization $\beta$ and known layers of a forward-query circuit with selected-block error at most $\epsilon$.
\State We reduce the coefficient representation and test polynomiality as in Appendix~\ref{app:coefficient-inputs}.
\If{the representation is not polynomial}
  \State \Return rejection of the classical input promise
\ElsIf{the target is zero or constant}
  \State \Return the corresponding zero-query construction
\EndIf
\State We determine its true degree $D$ by Eq.~\eqref{eq:degree-test}.
\State We compute $B\le\widehat B\le(1+\tau/8)B$ by Theorem~\ref{thm:norm-search}; we set $\beta=(1+\tau/2)\widehat B$.
\State We choose the least $N>D$ with $(N+1)^2\tau\ge1024D^2$; we set $Q=N-1$.
\State We compute the rational vector $r$ and degree weights $w_d$ of Lemma~\ref{lem:weights}, then the representation of $q$ in Eq.~\eqref{eq:degree-clock}.
\State We compute residual coordinates for $q$ and its strict certificate by Proposition~\ref{prop:strict}, with $\gamma=1$ and $\nu=1-\tau/4$.
\State We compute $X,Y,T_0,V$ by Lemma~\ref{lem:extract}, then $M_t$ and $C_t$ by Eqs.~\eqref{eq:Mt} and~\eqref{eq:Ct}.
\State We form the complete known layers by Theorem~\ref{thm:unitary} and compile them with total operator-norm error at most $\epsilon$ by Theorem~\ref{thm:gates}.
\State \Return $\beta$ and the known-layer descriptions, interleaved with $Q$ calls to the supplied oracle
\end{algorithmic}
\end{algorithm}

\begin{proof}[Proof of the general-input assertion of Theorem~\ref{cor:scalar-synthesis}]
For $D\ge1$, Theorem~\ref{thm:norm-search} gives the stated rational norm estimate in polynomial bit time. Its normalization satisfies
\[
 B<\beta\le(1+\tau/2)(1+\tau/8)B\le(1+\tau)B.
\]
The chosen $N$ is $O(D/\sqrt\tau)$ and gives $\rho_N\le\tau/16$. Lemma~\ref{lem:weights} therefore proves the strict promise
\begin{equation}\label{eq:strict-q}
 \|q\|_{\SA,\Delta}
 \le\frac1{(1+\tau/2)(1-\tau/16)}\le1-\tau/4.
\end{equation}
For $0<\tau\le1$, the last inequality follows after multiplication by its positive denominator from $\tau(24-18\tau+\tau^2)\ge0$. This establishes the rational margin required by the strict feasibility subroutine.

The reweighted representation has at most $(D+1)s$ states, so its residual construction has $r\le(D+1)s+1$. Lemma~\ref{lem:extract} permits $m=br$ and provides polynomial-time certified entry evaluations of $V$. In particular, it proves Eq.~\eqref{eq:coefficient-factorization} for every word, including every coefficient above degree $D$. Theorem~\ref{thm:window-circuit} consequently implements
\[
 \sum_{d=0}^D w_d\sum_{|w|=d}\frac{c_w}{\beta w_d}A_w
             =p(A)/\beta
\]
exactly before the known gates are approximated. The selected output therefore acts on the original signal register.

Each classical matrix in Eq.~\eqref{eq:degree-clock} has polynomial dimension and rational bit length. The norm algorithm, strict search, and matrix extraction have polynomial bit complexity in those data and $\log(1/\tau)$. Computing the $Q+1$ updates and compiling them is polynomial also in $Q$. Theorem~\ref{thm:unitary} gives the full unitary circuit; Theorem~\ref{thm:gates} in Appendix~\ref{sec:finite-gates} supplies its known-gate count and uniform error. The ideal registers use $q_o+\lceil\log_2R\rceil+1+\lceil\log_2(2Q+1)\rceil$ qubits, and elementary-gate compilation adds at most a constant multiple of this width. All known gate parameters are computed from the target description and the joint-encoding data. If the oracle is implemented inexactly, Corollary~\ref{cor:oracle-error} adds its designated-block error separately.

For a scalar constant $c\ne0$, rational bisection of $|c|$ gives a rational $\beta$ in the required interval. The known $2\times2$ Julia matrix of the scalar $c/\beta$, acting on one ancillary qubit, has the desired selected block and requires no oracle call. For the zero polynomial, a Pauli $X$ on a selected ancillary qubit gives a zero selected block; any fixed positive normalization may be reported. These cases are separated before the residual construction, completing the general assertion.
\end{proof}

The register and supported-gate constructions give the ancillary bound
\begin{equation}\label{eq:main-memory}
 q_{\rm anc}=O\bigl(q_o+\log((Q+1)R)\bigr),\qquad
 R=\max\left\{m,\left\lceil2/i\right\rceil\right\},\qquad
 m\le b\bigl((D+1)s+1\bigr).
\end{equation}
Here $q_{\rm anc}$ includes the supplied oracle register and the clean workspace for elementary gates; $R$ counts processing labels. Addressing a physical oracle label costs polynomially in its $q_o$-bit width, rather than in $2^{q_o}$. Theorem~\ref{thm:gates} gives the support-dependent gate count. Its full-unitary error telescope also permits errors in the designated query block, as follows.

\begin{corollary}[Error in the encoded block]\label{cor:oracle-error}
If the $t$th implemented unitary query has a designated submatrix differing from Eq.~\eqref{eq:visible-block} by at most $\eta_t$, then the circuit with exact known gates differs from Eq.~\eqref{eq:contraction-product} by at most $\sum_{t=1}^Q\eta_t$. Known-gate synthesis adds $\epsilon_g$, so a uniform bound $\eta_t\le\eta_o$ gives total error at most $Q\eta_o+\epsilon_g$.
\end{corollary}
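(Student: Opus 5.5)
The plan is a hybrid (telescoping) argument on the full circuit of Eq.~\eqref{eq:full-circuit}, followed by a reduction from full unitaries to designated blocks. Write the ideal circuit as $\mathcal U=W_Q\widehat U^{(Q)}W_{Q-1}\cdots\widehat U^{(1)}W_0$ and the implemented one with the $t$th query replaced by $\widetilde U^{(t)}$. The selected block of the ideal circuit is Eq.~\eqref{eq:contraction-product} by Theorem~\ref{thm:unitary}. That theorem holds for \emph{every} compatible completion. So the cleanest route is not to bound $\|\widetilde U^{(t)}-U^{(t)}\|$ globally, which may be large off the designated block. Instead, I will propagate only the continuing (event label zero) component, using the first-exit bookkeeping from the proof of Theorem~\ref{thm:unitary}.

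First, I will rewrite the selected output as an explicit product of blocks, not of unitaries. By the marker argument, once a component acquires a nonzero event label, it never returns to label zero. Markers act on fresh labels, and the reverse swap acts on zero amplitude. This argument used only that the query leaves $\mathsf E$ untouched and that $F_t$ inspects the oracle output label; it did not use the form of the designated block. Hence, for the implemented queries as well, the all-zero selected block equals
\[
 (C_Q\otimes I_n)\widetilde\Lambda^{(Q)}(C_{Q-1}\otimes I_n)\cdots\widetilde\Lambda^{(1)}(C_0\otimes I_n),
\]
where $\widetilde\Lambda^{(t)}$ is the retained submatrix of the $t$th implemented query, amplified over the memory labels. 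Here I must check that the retained block $L_\kappa$ is a submatrix of the designated block in Eq.~\eqref{eq:visible-block}, so that $\|\widetilde\Lambda^{(t)}-\Lambda_R(A)\|\le\eta_t$. Amplification by $I_R$ does not change the operator norm. This step is the main point needing care. One must confirm that the retained rows and columns lie inside the designated label ranges, and that the failure markers catch every amplitude leaving them. I expect both to follow directly from the coordinate lists of Appendix~\ref{app:unitary-details}.

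Second, I will telescope: $\prod_t X_t-\prod_t Y_t=\sum_t X_Q\cdots X_{t+1}(X_t-Y_t)Y_{t-1}\cdots Y_1$. All $C_t$ are contractions. Every $\widetilde\Lambda^{(t)}$ and $\Lambda_R(A)$ is a submatrix of a unitary, hence contractive. Each summand is therefore at most $\eta_t$, and the total is at most $\sum_t\eta_t$.

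Third, for finite known gates I will use the standard unitary telescope. Replacing each exact known layer $W_t$ by its compiled version changes the full unitary by at most the sum of the layer errors. Theorem~\ref{thm:gates} bounds this sum by $\epsilon_g$, and the selected block is a compression, so it changes by at most $\epsilon_g$. To combine the two error sources, I will compare three circuits: implemented oracle with compiled gates, implemented oracle with exact gates, and ideal oracle with exact gates. The triangle inequality then gives $\sum_t\eta_t+\epsilon_g$, and under the uniform bound $\eta_t\le\eta_o$ this is $Q\eta_o+\epsilon_g$. The first comparison uses the full-unitary telescope, which is valid for any unitaries. The second uses the block telescope above.
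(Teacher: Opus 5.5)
Your proposal is correct and follows essentially the same route as the paper. In both arguments the failure markers reduce the selected block to the product of the actual compressed queries, each a contraction whose retained corner is a compression of the designated corner. A product telescope then gives $\sum_t\eta_t$, and the full-unitary telescope of Theorem~\ref{thm:gates}, run with the actual queries, accounts for the additional $\epsilon_g$.
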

\begin{proof}
The markers in Theorem~\ref{thm:unitary} test membership in the designated coordinates. They therefore implement the product of the actual compressed query matrices, each of which is a contraction. A product telescope bounds the change caused by each designated submatrix by $\eta_t$. The same full-unitary telescope used in Theorem~\ref{thm:gates}, now with the actual queries, pays for the known-gate error. The retained corner is a further input and output compression of the
full designated corner in Eq.~\eqref{eq:visible-block}, so its error
is at most $\eta_t$. The estimate permits arbitrary compatible completions and includes the same stated error in the designated identity block. The separate constant and zero
constructions use no query and hence incur only their known-gate error.
\end{proof}

For a scalar circuit with total ancillary width $q_{\rm anc}$, the embedding $J_0=\ket{0^{q_{\rm anc}}}\otimes I_n$ specifies the selected input and output subspace. An exact block $M=J_0^\dagger\mathcal U_AJ_0=p(A)/\beta$ succeeds on a unit input $\psi$ with probability
\begin{equation}\label{eq:normalization-success}
 p_{\rm succ}(\psi)=\|p(A)\psi\|^2/\beta^2.
\end{equation}
At $\beta=1$, success is determined entirely by the norm of the target output on the chosen input. For example, the norm-one row polynomial in Example~\ref{ex:row} vanishes when both input matrices vanish, and its success probability is then zero. A unitary target behaves differently: because it preserves every input norm, accuracy of its selected block also controls the complete output state. The next lemma derives this relation directly from the columns of the circuit unitary.

\Needspace{9\baselineskip}
\begin{lemma}[Coherent use of a unitary target]\label{lem:coherent-output}
Let $\mathcal U$ be a unitary of order $2^{q_{\rm anc}}n$, with selected block $M=J_0^\dagger\mathcal UJ_0$, and let $V_{\rm tar}$ be an $n\times n$ unitary. If $\|M-V_{\rm tar}\|\le\delta\le1$, then
\begin{equation}\label{eq:coherent-output-bound}
 \|\mathcal UJ_0-J_0V_{\rm tar}\|\le\sqrt{2\delta},\qquad
 p_{\rm succ}(\psi)\ge(1-\delta)^2
\end{equation}
for every unit input, including any entangled reference. At $\delta=0$, the ancillary registers return to zero and no postselection is required.
\end{lemma}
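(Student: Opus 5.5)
We work with the selected input columns of $\mathcal U$. Since $\mathcal U$ is unitary and $J_0$ is an isometry, $\mathcal UJ_0$ is an isometry of $\C^n$ into the full space. We split its output along $J_0$ and its orthogonal complement:
\[
 \mathcal UJ_0=J_0M+(I-J_0J_0^\dagger)\mathcal UJ_0 .
\]
Writing $N=(I-J_0J_0^\dagger)\mathcal UJ_0$, the isometry property gives the Gram identity
\[
 M^\dagger M+N^\dagger N=I_n .
\]
The two pieces are orthogonal, so for every vector $\psi$ (including one on signal $\otimes$ reference, after tensoring all operators with the reference identity)
\[
 \|(\mathcal UJ_0-J_0V_{\rm tar})\psi\|^2=\|(M-V_{\rm tar})\psi\|^2+\|N\psi\|^2 .
\]
Since $\|M-V_{\rm tar}\|$ and $\|N\|$ are unchanged under tensoring with an identity, it suffices to bound the two terms uniformly.

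\textbf{Bounding the leakage term.} For a unit $\psi$,
\[
 \|N\psi\|^2=1-\|M\psi\|^2 .
\]
The triangle inequality and unitarity of $V_{\rm tar}$ give $\|M\psi\|\ge\|V_{\rm tar}\psi\|-\delta=1-\delta\ge0$, because $\delta\le1$. Hence
\[
 p_{\rm succ}(\psi)=\|M\psi\|^2\ge(1-\delta)^2 ,
\]
which is the second claim, and
\[
 \|N\psi\|^2\le1-(1-\delta)^2=2\delta-\delta^2 .
\]

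\textbf{Combining.} The total squared error is at most
\[
 \delta^2+2\delta-\delta^2=2\delta ,
\]
so taking the supremum over unit $\psi$ gives $\|\mathcal UJ_0-J_0V_{\rm tar}\|\le\sqrt{2\delta}$. At $\delta=0$ this reads $\mathcal UJ_0=J_0V_{\rm tar}$: every input with ancillas at zero is mapped to ancillas at zero with $V_{\rm tar}$ applied to the signal, so no postselection is needed.

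The main point requiring care is this exact Pythagorean cancellation. The naive bound $\delta+\sqrt{2\delta-\delta^2}$ is weaker, and the stated constant $\sqrt{2\delta}$ relies on the $-\delta^2$ cancelling against $\|M-V_{\rm tar}\|^2\le\delta^2$. A secondary check is that the hypothesis $\delta\le1$ is what permits squaring $\|M\psi\|\ge1-\delta$ in the leakage bound.
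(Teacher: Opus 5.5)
Your proof is correct, and it reaches the same constant $\sqrt{2\delta}$ by a slightly different bounding step.

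The paper works directly with the column difference $E_{\rm col}=\mathcal UJ_0-J_0V_{\rm tar}$. It uses only that both columns are isometries to write $E_{\rm col}^\dagger E_{\rm col}=2I_n-M^\dagger V_{\rm tar}-V_{\rm tar}^\dagger M=-(M-V_{\rm tar})^\dagger V_{\rm tar}-V_{\rm tar}^\dagger(M-V_{\rm tar})$. The norm of this is at most $2\delta$ in one line, and the success bound is a separate triangle-inequality remark.

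You instead split $\mathcal UJ_0$ orthogonally into the selected part $J_0M$ and the leakage $N$. This gives $\|E_{\rm col}\psi\|^2=\|(M-V_{\rm tar})\psi\|^2+1-\|M\psi\|^2$, and you bound the leakage through the success-probability estimate. Expanding your expression gives exactly $2-2\operatorname{Re}\langle V_{\rm tar}\psi,M\psi\rangle$, the paper's quadratic form. So the two arguments estimate the same quantity and differ only in how the cross term is controlled.

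Each route has an advantage:
\begin{itemize}
\item The paper's identity yields the column bound without using $\delta\le1$. That hypothesis is needed only to square $\|M\psi\|\ge1-\delta$. Your route uses it for both claims.
\item Your decomposition shows that the coherence error is the block error plus a leakage term whose square is precisely the failure probability $1-p_{\rm succ}(\psi)$. This ties the two assertions of the lemma into a single computation.
\end{itemize}
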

\begin{proof}
Both $\mathcal UJ_0$ and $J_0V_{\rm tar}$ have orthonormal columns. Their difference $E_{\rm col}$ therefore satisfies
\[
 E_{\rm col}^\dagger E_{\rm col}
   =2I_n-M^\dagger V_{\rm tar}-V_{\rm tar}^\dagger M
   =-(M-V_{\rm tar})^\dagger V_{\rm tar}
       -V_{\rm tar}^\dagger(M-V_{\rm tar}),
\]
whose norm is at most $2\delta$. Also $\|M\psi\|\ge1-\delta$, which proves the success bound. The same identities hold after tensoring with the reference identity. At zero error the first norm vanishes, giving the asserted full output state.
\end{proof}

For Hamiltonian simulation, Martyn et al. formulate full coherence as accurate evolution with success approaching one on a single copy of the input~\cite[Definition~2, Section~II.3]{martyn2023coherent}. The same operational interpretation applies here through Eq.~\eqref{eq:coherent-output-bound}. If $B=1$ and $p(A)$ is unitary on the application inputs, the near-optimal normalization gives block distance at most $1-1/\beta+\epsilon$ from $p(A)$, which tends to zero with $\tau,\epsilon$. For the general input model, attaining this limit carries the stated $\tau^{-1/2}$ query dependence. A supplied exact boundary circuit attains the zero-error case directly. Appendix~\ref{sec:applications} treats a nonunitary target, where success is controlled by a separate lower singular-value estimate.

The polynomial circuit can also be applied to an effectively computed approximation of a matrix function. Appendix~\ref{sec:applications} gives a compact noncommutative example, including its truncation and success bounds. We turn next to the stronger factorization available for a row input.

\subsection{Synthesis with one row query per degree}\label{sec:exact}\label{sec:row-complement-discovery}\label{sec:row}
A row input $\Delta(A)=[A_1\ \cdots\ A_k]$ gives us more control over the combination of terms: one query applies the different letters on different input labels and adds their contributions in a common output label. We exploit this structure to replace the weighted sequence by a factorization with exactly $D$ queries, independently of the normalization margin. The factorization must preserve the joint norm as each degree is removed.

The certificate supplies the needed invariant by completing $p/\beta$ to a column of polynomials. Its coefficient correlations describe an isometry on word coefficients, as in the row-isometric models of Popescu~\cite{popescu1989,popescu1991}. Those correlations make the highest-degree output range orthogonal to the constant output. We can therefore separate these ranges, remove one initial letter from the former, and reconstruct the original column with one row query. We first prove this step, then compute its ranges from the finite-state description.

A complementary list for $p/\beta$ consists of scalar polynomials $\theta_1,\ldots,\theta_q$, with $\theta_1=p/\beta$. For each word $w$, their scalar coefficients form the numerical column
\begin{equation}\label{eq:theta-coeff}
 \theta_w=\begin{pmatrix}\theta_{1,w}\\\vdots\\\theta_{q,w}\end{pmatrix}.
\end{equation}
The required norm-preserving condition is the following set of finite coefficient correlations:
\begin{equation}\label{eq:inner-correlations}
 \Gamma_\emptyset=\sum_w\theta_w^\dagger\theta_w=1,
 \qquad
 \Gamma_u=\sum_w\theta_w^\dagger\theta_{wu}=0
                    \quad(u\ne\emptyset).
\end{equation}
We call such a finite list \emph{inner}, following the isometric inner-multiplier condition on free words~\cite{popescu1995}. The first identity fixes the total coefficient norm, and the others cancel overlaps between coefficients at words related by an appended suffix. The convolution matrix below makes this norm-preserving condition precise.

The correlation identities have a concrete matrix interpretation that will also explain how the SA certificate supplies them. If the degree is at most $d$, we give its convolution matrix columns indexed by words of length at most $d$ and rows indexed by a component $\ell$ and a word of length at most $2d$. At row $(\ell,v)$ and column $w$, its entry is $\theta_{\ell,u}$ when $v=uw$, and is zero if no such $u$ exists. The Gram of this matrix is identity exactly when Eq.~\eqref{eq:inner-correlations} holds: two suffix-incomparable column words have disjoint row supports, while for suffix-comparable words their inner product is $\Gamma_u$ or its conjugate. Every correlation is tested against the empty column word. The isometry here is a property of coefficient convolution; evaluation on an allowed tuple gives a contractive polynomial column. We will verify the correlations through state Grams, keeping the calculation in the compact representation.

For the row block, the target-derived certificate supplies a complementary list directly. We use its normalized form
\begin{equation}\label{eq:normalized-row-certificate}
 E^\dagger E-P^\dagger P/\beta^2=S+\Phi(T),
 \qquad S,T\succeq0,
\end{equation}
where $P$ still denotes the coordinates of the unnormalized polynomial $p$. We compute $S,T$ for the normalized target $p/\beta$ in this subsection. A full square factor $G=S^{1/2}$ gives the scalar polynomials
\[
 \theta_1(x)=p(x)/\beta,\qquad
 \theta_{1+\ell}(x)=\sum_t G_{\ell t}g_t(x).
\]
We show that these coefficients satisfy Eq.~\eqref{eq:inner-correlations} using a finite matrix argument. We use the explicit word matrices $S_j^{(L)}$ from the proof of Theorem~\ref{thm:certificate}, now with cutoff $L=2D$ and scale one. Their combined row is a contraction, and
$(S_j^{(L)})^\dagger S_\ell^{(L)}=\delta_{j\ell}P_{<L}$, where $P_{<L}$ is the diagonal matrix selecting words shorter than $L$. Each auxiliary $h_{\alpha,\ell}$ in Eq.~\eqref{eq:localizing-sum} has degree at most $D-1$. On an input word of length at most $D$, its output therefore has length at most $2D-1$, and the query defect in Eq.~\eqref{eq:localizing-sum} vanishes on that output. The certificate consequently says that the finite convolution matrix of the complementary list, restricted to these input columns, has Gram identity. The equivalence following Eq.~\eqref{eq:inner-correlations} proves the desired correlations. The compact Gram calculation in Appendix~\ref{app:row-compact} computes the same coefficient identities from the state recurrence.

We now factor this list so that each factor can be implemented by one row query. The highest-degree coefficient columns identify the part of the output that must contain at least one more input matrix. Projecting onto that part and removing an initial letter lowers the degree; the complementary part remains on the identity coordinate. The following update carries out this separation while preserving the coefficient correlations.

For a current complementary list of degree $d\ge1$, we choose a numerical $q\times e$ matrix $Z$ with orthonormal columns spanning its degree-$d$ coefficient columns, and put $\Pi=ZZ^\dagger$. The proposed reduced coefficients are
\begin{equation}\label{eq:peel-coeff}
 \psi_w=
 \begin{pmatrix}
 (I_q-\Pi)\theta_w\\
 Z^\dagger\theta_{1w}\\
 \vdots\\
 Z^\dagger\theta_{kw}
 \end{pmatrix},
 \qquad \psi_w\text{ has }q+ke\text{ entries}.
\end{equation}
To express their reconstruction, we use the $q\times(q+ke)$ numerical matrices
\[
 B_0=[I_q-\Pi\ \ 0\ \cdots\ 0],\qquad
 B_j=[0\ \cdots\ Z\ \cdots\ 0],
\]
where $Z$ occupies the $j$th $e$-column block after the first $q$ columns. This is a finite-degree factorization of an inner free multiplier, in the setting described by Popescu~\cite{popescu1995}. We prove the coefficient update below and then identify the corresponding input and output matrices for a single row query.

\begin{lemma}[One degree-reducing coefficient step]\label{lem:peel}
If the degree-$d$ list satisfies Eq.~\eqref{eq:inner-correlations}, the coefficients in Eq.~\eqref{eq:peel-coeff} satisfy the same correlations, have degree at most $d-1$, and obey
\begin{equation}\label{eq:peel-recover}
 \theta_\emptyset=B_0\psi_\emptyset,\qquad
 \theta_{jw}=B_0\psi_{jw}+B_j\psi_w.
\end{equation}
\end{lemma}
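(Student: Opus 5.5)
The argument splits into three checks, taken in order: the reconstruction identities \eqref{eq:peel-recover}, the degree bound, and the inner correlations. The third is the only one that uses the hypothesis \eqref{eq:inner-correlations}, and it is the main obstacle.

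\textbf{Reconstruction.} This is direct algebra from the block forms. By definition, $B_0\psi_w=(I_q-\Pi)\theta_w$ and $B_j\psi_w=ZZ^\dagger\theta_{jw}=\Pi\theta_{jw}$. The identity for $\theta_{jw}$ then reduces to the splitting $\theta_{jw}=(I-\Pi)\theta_{jw}+\Pi\theta_{jw}$. For the empty word we need $\Pi\theta_\emptyset=0$, i.e., that the constant column is orthogonal to every top-degree column $\theta_v$ with $|v|=d\ge1$. This follows from the correlations: for $|v|=d$, every word $w\ne\emptyset$ gives $|wv|>d$, so $\theta_{wv}=0$. Hence
\[
 \Gamma_v=\sum_w\theta_w^\dagger\theta_{wv}=\theta_\emptyset^\dagger\theta_v=0.
\]

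\textbf{Degree bound.} The top block $(I-\Pi)\theta_w$ vanishes for $|w|=d$ because $\Pi$ projects onto the span of exactly those columns. The lower blocks $Z^\dagger\theta_{jw}$ vanish whenever $|w|\ge d$, since then $|jw|>d$. So $\psi$ has degree at most $d-1$.

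\textbf{Correlations.} Write $\Gamma^\psi_u=\sum_w\psi_w^\dagger\psi_{wu}$ and expand it using $Z^\dagger$-inner products:
\[
 \Gamma^\psi_u=\sum_w\theta_w^\dagger(I-\Pi)\theta_{wu}+\sum_j\sum_w\theta_{jw}^\dagger\Pi\theta_{jwu}.
\]
Since every nonempty word has a unique form $jw$, the second sum runs over all nonempty words $v$ of $\theta_v^\dagger\Pi\theta_{vu}$. Adding the $v=\emptyset$ term $\theta_\emptyset^\dagger\Pi\theta_u$, which is zero because $\Pi\theta_\emptyset=0$, we get
\[
 \Gamma^\psi_u=\Gamma_u-\sum_w\theta_w^\dagger\Pi\theta_{wu}+\sum_v\theta_v^\dagger\Pi\theta_{vu}=\Gamma_u.
\]
The two $\Pi$-sums cancel identically. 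Therefore $\Gamma^\psi_\emptyset=1$ and $\Gamma^\psi_u=0$ for $u\ne\emptyset$.

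The delicate point is the bookkeeping of the reindexing $v=jw$ together with the vanishing of the boundary term $\Pi\theta_\emptyset$; this is the only place where the inner hypothesis enters. I would write the sums over the finitely many words with nonzero coefficients to make the cancellation transparent.
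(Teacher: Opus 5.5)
Your proposal is correct and follows the paper's proof essentially step for step. You derive $\Pi\theta_\emptyset=0$ from the correlations at top-degree words, read off the degree bound and the reconstruction identities from the block forms, and show the correlations are preserved by reindexing the residual sum over nonempty words $v=jw$ and cancelling the boundary term $\theta_\emptyset^\dagger\Pi\theta_u$. One small inconsistency: your opening remark says the hypothesis enters only in the third check, but, as you note yourself, the empty-word reconstruction also uses it through $\Pi\theta_\emptyset=0$.
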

\begin{proof}
For a word $u$ of length $d$, the only nonzero term in $\Gamma_u$ can be $\theta_\emptyset^\dagger\theta_u$, so $\Pi\theta_\emptyset=0$. The first block of Eq.~\eqref{eq:peel-coeff} removes the entire degree-$d$ coefficient range, and each later block takes a left residual. Hence the new list has degree at most $d-1$. For the empty coefficient, $B_0\psi_\emptyset=(I-\Pi)\theta_\emptyset=\theta_\emptyset$. For $jw$, its two contributions are $(I-\Pi)\theta_{jw}$ and $\Pi\theta_{jw}$, proving Eq.~\eqref{eq:peel-recover}.

The inner correlations are preserved by a direct coefficient calculation. For every word $u$, including the empty word, we have
\begin{align*}
 \sum_w\psi_w^\dagger\psi_{wu}
 &=\sum_w\theta_w^\dagger(I-\Pi)\theta_{wu}
      +\sum_{j,w}\theta_{jw}^\dagger \Pi\theta_{jwu}\\
 &=\sum_w\theta_w^\dagger\theta_{wu}
                       -\theta_\emptyset^\dagger \Pi\theta_u
 =\Gamma_u.
\end{align*}
The second sum in the first line covers each nonempty prefix word exactly once, and the last equality uses $\Pi\theta_\emptyset=0$. The updated coefficients therefore satisfy the same correlation identities, allowing us to repeat the reduction until a constant numerical column of norm one remains.
\end{proof}

A mixed-degree example illustrates the separation. For two variables we take
\begin{equation}\label{eq:row-mixed-complement}
 \theta(x)=\frac12\begin{pmatrix}1+x_1x_2\\1-x_1x_2\end{pmatrix},\qquad
 Z=\frac1{\sqrt2}\begin{pmatrix}1\\-1\end{pmatrix},\qquad
 \Pi=\frac12\begin{pmatrix}1&-1\\-1&1\end{pmatrix}.
\end{equation}
Its constant and quadratic coefficient columns are orthogonal and each has squared norm $1/2$, which verifies the inner correlations. The degree-reducing step gives
\[
 \psi(x)=\begin{pmatrix}1/2\\1/2\\x_2/\sqrt2\\0\end{pmatrix}.
\]
The first two entries carry the constant contribution through the identity label. The third acquires $x_1$ in the active row query and is sent by $Z$ to the opposite signs of the two outputs. Thus the same factor reconstructs both the sum and difference in Eq.~\eqref{eq:row-mixed-complement}. A second reduction removes $x_2$, giving a two-query realization. In the circuit, the last factor found is applied first: the active contributions acquire $A_2$ and then $A_1$, yielding the product $A_1A_2$. At $A_1=A_2=0$ the evaluated column has norm $1/\sqrt2$, illustrating why the invariant concerns coefficient convolution rather than the norm of each matrix evaluation.

The coefficient recovery now identifies the matrix that one query must implement. Writing $\theta(A)=\sum_w\theta_w\otimes A_w$ and $\psi(A)=\sum_w\psi_w\otimes A_w$ for the two polynomial columns, we have
\begin{equation}\label{eq:row-one-query-recovery}
 \theta(A)=\left(B_0\otimes I_n+\sum_{j=1}^kB_j\otimes A_j\right)\psi(A).
\end{equation}
We route the first $q$ coordinates through the identity label and the $j$th residual block through active input $j$, after applying its frame $Z$. The active contributions reach one common output label, where they lie in the range of $\Pi$; the identity contribution lies in its orthogonal complement.

We use the first physical identity label and all $k$ active input labels, so $\kappa=1$, $d_-=k+1$, and $d_+=2$ in Section~\ref{sec:unitary-implementation}. The active labels remain at their original offsets $i,\ldots,i+k-1$. For any memory size $R_*\ge q$, the preparation and collection matrices described in Appendix~\ref{app:row-packing} have sizes
\[
 E_Z:\ (k+1)R_*\times(q+ke),\qquad R_Z:\ q\times2R_*.
\]
Their column and row Grams are $E_Z^\dagger E_Z=\operatorname{diag}(I_q-\Pi,I_{ke})$ and $R_ZR_Z^\dagger=I_q$. The block entries in Eq.~\eqref{eq:row-packing} give
\begin{equation}\label{eq:row-one-query-identity}
 B_0\otimes I_n+\sum_{j=1}^kB_j\otimes A_j
       =(R_Z\otimes I_n)\Lambda_{R_*}(A)(E_Z\otimes I_n).
\end{equation}
Both endpoint matrices are contractions, so Theorem~\ref{thm:unitary} realizes the recovery step with one call to the original row oracle. Repeated reduction leaves a constant unit column after at most $D$ steps. Starting with that column, the circuit applies the factors in reverse reduction order and selects the original first polynomial at the end; Eq.~\eqref{eq:row-query-matrices} records the chronological matrices. It remains to compute the frames $Z$ efficiently and bound a common processing size $R_*$.

We next compute the highest-degree coefficient range from the compact description. This range is all that the coefficient step needs from the degree-$d$ terms; listing those terms individually is unnecessary. We keep the transition coordinates fixed and identify the output directions removed at each step through a Gram matrix of their future responses. We assume that the current input list is specified by rational matrices $N_j$ of order $s_0$, rational columns $b_j$ of length $s_0$, a $q_0\times s_0$ output matrix $O$, and a column $d_0$ of length $q_0$:
\begin{equation}\label{eq:row-state-coeff}
 \theta_\emptyset=d_0,\qquad
 \theta_{j_1\cdots j_t}=ON_{j_1}\cdots N_{j_{t-1}}b_{j_t}
 \quad(t\ge1).
\end{equation}
For a word $w=j_1\cdots j_t$, we write $N_w=N_{j_1}\cdots N_{j_t}$ and $N_\emptyset=I_{s_0}$. We require entry evaluators for $O,d_0$ and their exact rational joint Gram $[O\ d_0]^\dagger[O\ d_0]$. This permits exact support decisions even when the output entries contain square roots. Rational reachability and observability reduction, detailed in Appendix~\ref{app:arithmetic}, makes the columns $N_wb_j$ span the state space and removes the common kernel of all $ON_w$. For a reduced degree-$d$ polynomial list, the transitions are then jointly nilpotent of index $d$.

For a state-coordinate column $x$, the total squared output response is $\sum_w\|ON_wx\|^2$. We collect this quantity in the observability Gram
\begin{equation}\label{eq:row-Gram}
 H=O^\dagger O+\sum_jN_j^\dagger HN_j
   =\sum_{|w|<d}N_w^\dagger O^\dagger ON_w\succ0.
\end{equation}
Iteration from zero computes this rational $s_0\times s_0$ matrix in at most $d$ updates. We also compute rational bases $U_\ell$ for
$\operatorname{span}\{N_wx:|w|=\ell-1\}$, with $U_1=I_{s_0}$ and $U_{d+1}$ empty. The ranges are nested because an additional transition reduces the set of reachable directions, and nilpotence makes the last range zero. At stage $t$, the directions in $U_{d-t+1}$ have already been removed from the output; the next coefficient range comes from $U_{d-t}$. We choose $R_t$ to complete the former basis to the latter and write $e_t$ for its number of columns. To evaluate the response of a remaining direction independently of representatives that differ by an eliminated one, we minimize the original quadratic form:
\[
 x^\dagger H_tx=\min_z(x-U_{d-t+1}z)^\dagger H(x-U_{d-t+1}z).
\]
Completing the square gives the rational shorted matrix~\cite{andersontrapp1975}
\begin{equation}\label{eq:short-Gram}
 H_t=H-HU_{d-t+1}
       (U_{d-t+1}^\dagger HU_{d-t+1})^{-1}U_{d-t+1}^\dagger H,
\end{equation}
where the subtracted term is zero if $U_{d-t+1}$ has no columns. In particular, $H_t\succeq0$ and its kernel is $\operatorname{ran}U_{d-t+1}$. Starting with $O_0=O$ and the supplied $d_0$, we translate Eq.~\eqref{eq:peel-coeff} into the following updates for $0\le t<d$:
\begin{equation}\label{eq:row-updates}
 \begin{aligned}
 K_t&=R_t^\dagger H_tR_t,\qquad
 Z_t=O_tR_tK_t^{-1/2},\qquad \Pi_t=Z_tZ_t^\dagger,\\
 O_{t+1}&=\begin{pmatrix}
 (I-\Pi_t)O_t\\ Z_t^\dagger O_tN_1\\\vdots\\Z_t^\dagger O_tN_k
 \end{pmatrix},\qquad
 d_{t+1}=\begin{pmatrix}
 (I-\Pi_t)d_t\\ Z_t^\dagger O_tb_1\\\vdots\\Z_t^\dagger O_tb_k
 \end{pmatrix}.
 \end{aligned}
\end{equation}
The transition matrices $N_j$ and final-letter columns $b_j$ stay fixed. The first block retains the part of the current output orthogonal to the range of $\Pi_t$, while each later block removes an initial letter by multiplying $O_t$ by its transition or final-letter column. Thus the decrease in polynomial degree is carried by the changing output pair. If $O_t$ has $q_t$ rows, then $Z_t$ has $q_t$ rows and $e_t$ columns, and both updated outputs have $q_{t+1}=q_t+ke_t$ rows. For $e_t=0$, the inverse-root and appended variable blocks are absent. Otherwise the next proof establishes $K_t\succ0$ and identifies $H_t$ with the current response Gram; these facts justify both the frame and the use of the original rational data to compute it.

\begin{theorem}[Efficient synthesis from a complementary list]\label{thm:row-compiler}
A degree-$d$ inner list, with $d\ge1$, satisfying the rational-Gram input assumptions of Eq.~\eqref{eq:row-state-coeff} can be synthesized with $d$ forward row queries and processing multiplicity $R_*=q_0+ks_0$. Its selected block is the first polynomial of the list. The preparation and collection matrices determined by Eqs.~\eqref{eq:short-Gram} and~\eqref{eq:row-updates} have polynomial-time entry evaluators; finite-gate synthesis for rational $0<\epsilon<1$ is polynomial in the supplied description, oracle width, and $\log(1/\epsilon)$.
\end{theorem}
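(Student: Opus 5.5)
Our plan is an induction on the stage $t$. At each stage we check that the state-space updates in Eq.~\eqref{eq:row-updates} carry out exactly the coefficient step of Lemma~\ref{lem:peel}. Then we stack the resulting one-query recovery identities, Eq.~\eqref{eq:row-one-query-recovery}, into a single contraction product, realize it with Theorem~\ref{thm:unitary}, and count resources.

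\textbf{Invariant.} The inductive hypothesis at stage $t$ has three parts:
\begin{itemize}
\item the list $\theta^{(t)}$ given by $(O_t,d_t)$ with the fixed $N_j,b_j$ through Eq.~\eqref{eq:row-state-coeff} is inner of degree $d-t$;
\item the columns of $U_{d-t+1}$ lie in the kernel of the observable response, so that $O_tN_wx=0$ for all $w$ and $x\in\operatorname{ran}U_{d-t+1}$;
\item the response Gram $\sum_w N_w^\dagger O_t^\dagger O_tN_w$, restricted to the remaining directions, equals the shorted matrix $H_t$ of Eq.~\eqref{eq:short-Gram}.
\end{itemize}
For $t=0$ the first part is the hypothesis, and the others are reachability and observability reduction together with the definition of $H$.

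\textbf{Inductive step.} The degree-$(d-t)$ coefficient columns are $O_tN_wb_j$ with $|w|=d-t-1$. Modulo the already-eliminated range, their span is $O_t\operatorname{ran}R_t$.

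The key identity is that $R_t^\dagger H_tR_t=R_t^\dagger O_t^\dagger O_tR_t$ on top-degree directions. Such a direction has no further transitions, so its response is only $O_tx$. The coefficient correlation $\Gamma_u$ in Eq.~\eqref{eq:inner-correlations} also forces the top range to be orthogonal to everything else in the convolution Gram. Together these give the identification of $K_t$ with the Gram of $O_tR_t$.

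Positivity $K_t\succ0$ follows because $H_t$ has kernel exactly $\operatorname{ran}U_{d-t+1}$ and $R_t$ completes that basis. Hence $Z_t=O_tR_tK_t^{-1/2}$ has orthonormal columns spanning the top coefficient range, which is precisely the $Z$ of Lemma~\ref{lem:peel}. Comparing Eq.~\eqref{eq:row-updates} with Eq.~\eqref{eq:peel-coeff} word by word shows that $(O_{t+1},d_{t+1})$ represents $\psi$. Lemma~\ref{lem:peel} then gives innerness and degree $\le d-t-1$.

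The Gram invariant transfers to stage $t+1$ by the same computation. The new response Gram differs from the old one by deleting the $\Pi_t$ part, which kills exactly $\operatorname{ran}R_t$. This matches the Schur complement over $U_{d-t}$.

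\textbf{Main obstacle.} The hardest step is this Gram bookkeeping: showing that the rational shorted matrices computed from the original $H$ agree with the Grams of the irrational updated outputs, including the cross terms between $\Pi_t$ and the residual blocks. Our first approach would be to prove by induction the closed formula
\[
\textstyle\sum_w N_w^\dagger O_{t}^\dagger O_{t}N_w = H_t,
\]
and to use the correlation identities of Eq.~\eqref{eq:inner-correlations} to cancel the cross terms.

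\textbf{Circuit and resources.} After $d$ steps the list is a constant unit column. Eq.~\eqref{eq:row-one-query-identity} gives $d$ contraction factors $(R_Z\otimes I)\Lambda_{R_*}(E_Z\otimes I)$. We chain them in reverse order, insert the unit column at the start, and select the first row at the end. Adjacent $E_{Z}$ and $R_{Z}$ merge into known contractions $C_t$, and Theorem~\ref{thm:unitary} yields a circuit with $d$ queries whose selected block is $\theta_1$.

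For $R_*$, note that $q_{t+1}=q_t+ke_t$ and $\sum_te_t\le s_0$, because the ranges $U_\ell$ are nested in $\C^{s_0}$. Hence $q_t\le q_0+ks_0$ throughout.

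\textbf{Computability.} The matrices $H$, $U_\ell$, $H_t$, and $K_t$ are rational of polynomial size by exact elimination, as in Appendix~\ref{app:arithmetic}. The only irrational operation is $K_t^{-1/2}$, for which Lemma~\ref{lem:extract}-style eigenvalue bounds from minors give certified evaluators. Theorem~\ref{thm:gates} then gives finite-gate synthesis polynomial in $\log(1/\epsilon)$.
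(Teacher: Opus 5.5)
Your outline has the same architecture as the paper's proof. It uses the response-Gram invariant $H_t=\sum_wN_w^\dagger O_t^\dagger O_tN_w$, feeds the frame $Z_t=O_tR_tK_t^{-1/2}$ into Lemma~\ref{lem:peel}, telescopes the $e_t$, and assembles in reverse order through Theorem~\ref{thm:unitary}. The gap is the one step that actually needs proof: propagating the invariant from stage $t$ to $t+1$. You leave it as a plan, and the plan points at the wrong mechanism, since there are no cross terms to cancel and innerness is not used. Because $O_{t+1}$ stacks $(I-\Pi_t)O_t$ over the blocks $Z_t^\dagger O_tN_j$, its Gram is a plain sum. With $B_t=O_t^\dagger\Pi_tO_t$ and $(I-\Pi_t)^2=I-\Pi_t$,
\[
 O_{t+1}^\dagger O_{t+1}=O_t^\dagger O_t-B_t+\sum_jN_j^\dagger B_tN_j .
\]
Now sum the congruences by all $N_w$, which is a finite sum by nilpotence. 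The appended terms contribute exactly $\sum_{v\ne\emptyset}N_v^\dagger B_tN_v$, which cancels the nonempty-word part of the response of $B_t$, so the new response Gram is $H_t-B_t$. Your phrase ``deleting the $\Pi_t$ part'' hides this: the new rows put energy back, and the net loss is only $B_t$ after the word sum. You still have to show $H_t-B_t=H_{t+1}$. The identity $H_tR_t=O_t^\dagger O_tR_t$ gives $B_t=H_tR_tK_t^{-1}R_t^\dagger H_t$, so $H_t-B_t$ is the shorting of $H_t$ over $\operatorname{ran}R_t$. Shorting $H$ first over $\operatorname{ran}U_{d-t+1}$ and then over $\operatorname{ran}R_t$ equals minimizing over $\operatorname{ran}U_{d-t}$, which is Eq.~\eqref{eq:short-Gram} at the next stage. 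Innerness enters only through Lemma~\ref{lem:peel} ($\Pi_t\theta_\emptyset=0$) and the final $d_d^\dagger d_d=1$.

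Two smaller points. First, the identification $K_t=(O_tR_t)^\dagger(O_tR_t)$ does not come from the correlations $\Gamma_u$. It follows from splitting off the empty word in the response Gram, $H_t=O_t^\dagger O_t+\sum_jN_j^\dagger H_tN_j$, together with $N_jR_t\subset\operatorname{ran}U_{d-t+1}=\ker H_t$; your remark that a top direction ``has no further transitions'' is this argument, and the $\Gamma_u$ sentence can be dropped. Second, $K_t^{-1/2}$ is not the only non-rational operation. The input $O$ is given only by an entry evaluator (for instance the $S^{1/2}$ block of $O_\Theta$), so $O_t$, $d_t$ and $Z_t$ are all irrational. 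The guard-bit analysis therefore needs norm bounds along the recurrence. The paper obtains them from $O_t^\dagger O_t\preceq H_t\preceq H$ and $\|d_t\|\le1$ in Eq.~\eqref{eq:joint-row-Grams}, which again rest on the invariant and the preserved correlations.
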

\begin{proof}
We prove that the remaining metric is the Gram of the current output responses:
\begin{equation}\label{eq:metric-invariant}
 H_t=\sum_wN_w^\dagger O_t^\dagger O_tN_w.
\end{equation}
At $t=0$ this is Eq.~\eqref{eq:row-Gram}. Assuming it at stage $t$, every $N_jR_t$ lies in $\operatorname{ran}U_{d-t+1}=\ker H_t$. The positive terms of the sum then give
\[
 H_tR_t=O_t^\dagger O_tR_t,
 \qquad K_t=(O_tR_t)^\dagger(O_tR_t)\succ0.
\]
The last inequality follows because $R_t$ is independent modulo the kernel, so $Z_t$ has orthonormal columns.

Its range is precisely the highest-degree coefficient range. Indeed, each state column is a linear combination of $N_vb_j$ by reachability. In $O_tN_wx$ with $|w|=d-t-1$, every term with nonempty $v$ lies in the already removed range and vanishes. The remaining columns $O_tN_wb_j$ are exactly the coefficients of degree $d-t$. Thus $O_tU_{d-t}$ spans those coefficients, and $O_tR_t$ spans the same space because $O_t$ kills $U_{d-t+1}$. Lemma~\ref{lem:peel} consequently applies to the computed frame.

To propagate the invariant, we express the removed output energy as $B_t=O_t^\dagger \Pi_tO_t$; the coefficient update then gives
\[
 O_{t+1}^\dagger O_{t+1}
 =O_t^\dagger O_t-B_t+\sum_jN_j^\dagger B_tN_j.
\]
Summing its congruences by all $N_w$ cancels every nonempty-word contribution of $B_t$, leaving $H_t-B_t$. The identity $H_tR_t=O_t^\dagger O_tR_t$ also gives
\[
 B_t=H_tR_t(R_t^\dagger H_tR_t)^{-1}R_t^\dagger H_t.
\]
Subtracting this matrix minimizes the current quadratic form over the additional directions $R_t$. Successive minimization over the nested spaces therefore equals minimization over $U_{d-t}$, which is the definition of $H_{t+1}$. This proves Eq.~\eqref{eq:metric-invariant} at the next stage.

After $d$ stages, $H_d=0$, hence $O_d=0$ and innerness gives $d_d^\dagger d_d=1$. The removed ranks telescope:
\[
 \sum_{t=0}^{d-1}e_t=s_0,\qquad q_t=q_0+k\sum_{j<t}e_j\le q_0+ks_0.
\]
A single memory of this size accommodates every row factor. Applying those factors in reverse reduction order gives the $d$-query circuit in Eq.~\eqref{eq:row-query-matrices}.

All bases and metrics are computed from the original rational state data. Appendix~\ref{app:row-compact} proves polynomial bit bounds for their entries, the positive eigenvalues of $K_t$, and the guarded evaluations of $O_t,d_t,Z_t$. Theorem~\ref{thm:gates} then supplies the finite gate list. Thus neither the coefficient spans nor the output factors require a word table.
\end{proof}

We apply this theorem to the complement already obtained from the certificate. The numerical matrix $T_j^g$ of order $h$ has its block row $(j,1),\ldots,(j,r)$ equal to $J$ and every other row zero. With $e_0$ the first coordinate column of length $h$, the coefficients of the list $g$ are $T_w^ge_0$. This follows directly from Eq.~\eqref{eq:g-list} and the residual identities. The matrices $T_j^g$ are jointly nilpotent. Writing $P$ for the target coordinate row in Eq.~\eqref{eq:EJP}, the numerical output matrix for the complementary coefficients is
\begin{equation}\label{eq:complement-output}
 O_\Theta=\begin{pmatrix}P/\beta\\S^{1/2}\end{pmatrix},
 \qquad O_\Theta^\dagger O_\Theta=P^\dagger P/\beta^2+S.
\end{equation}
Thus in Eq.~\eqref{eq:row-state-coeff} we may use $N_j=T_j^g$, $b_j=T_j^ge_0$, $O=O_\Theta$, and $d_0=O_\Theta e_0$. Their joint output Gram is rational, and their entries have the square-root evaluators already proved. The reachable nonconstant states lie in the last $kr$ coordinates, so rational reduction gives $s_0\le kr$. The number of initial complementary rows is at most $q_0=1+h$.

For the complement obtained from a target certificate, the bounds $q_0\le1+h$ and $s_0\le kr$ give the processing multiplicity
\begin{equation}\label{eq:row-memory}
 R_*=1+h+k^2r,\qquad h=1+kr,\qquad r\le s+1.
\end{equation}

\begin{corollary}[Row synthesis from the coefficients]\label{cor:row-main}
For a rational finite-state scalar target of degree $D\ge1$ on the row domain and rational $0<\tau,\epsilon<1$, the algorithm computes $B<\beta\le(1+\tau)B$ and a $D$-query circuit for $p(A)/\beta$. Its processing multiplicity is given by Eq.~\eqref{eq:row-memory}, and classical and finite-gate synthesis to error $\epsilon$ are polynomial in the input length, oracle width, $\log(1/\tau)$, and $\log(1/\epsilon)$. The selected block is exact with ideal known gates.
\end{corollary}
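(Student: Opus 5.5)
The proof assembles the row pipeline from earlier results: norm estimation, strict certificate, complementary list, degree-reducing compilation, and finite-gate synthesis. The case $D\ge1$ is already enforced by Algorithm~\ref{alg:synthesis}'s preprocessing. I would take the row domain as a pencil with $a=1$, $b=k$, and $F_j=e_j^{\mathsf T}$. The $F_j$ are independent, so all prior results apply.

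\textbf{Step 1 (normalization).} Run Theorem~\ref{thm:norm-search} with $\eta=\tau/8$ to obtain $B\le\widehat B\le(1+\tau/8)B$, and set $\beta=(1+\tau/2)\widehat B$. As in the general-input proof, this gives $B<\beta\le(1+\tau)B$. It also gives $\|p/\beta\|_{\SA,\Delta}\le\nu:=1/(1+\tau/2)<1$ with $\log(1/(1-\nu))=O(\log(1/\tau))$. No degree reweighting is needed, which is why the cost is polynomial in $\log(1/\tau)$ rather than in $\tau^{-1/2}$.

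\textbf{Step 2 (certificate).} Compute residual coordinates for $p$ by Lemma~\ref{lem:residual}, so $r\le s+1$. Then apply Proposition~\ref{prop:strict} with $\gamma=\beta$ to get rational $S,T\succ0$ satisfying Eq.~\eqref{eq:normalized-row-certificate} exactly. The bit cost is polynomial in the input length and $\log(1/\tau)$.

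\textbf{Step 3 (complementary list).} Form the complementary list $\theta=(p/\beta,\,S^{1/2}g)$. Its innerness Eq.~\eqref{eq:inner-correlations} was derived from the certificate via the word matrices $S_j^{(2D)}$. The key point there is that the row-domain defect of each $h_{\alpha,\ell}$ vanishes on outputs of length below $2D$, since $\sum_j(S_j^{(L)})^\dagger S_j^{(L)}$ acts as the identity on short words. I would re-check this step carefully, because it is the only place the row structure is genuinely used. Its degree is $D$: the first entry has degree $D$, and $S^{1/2}g$ has degree at most $D$.

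\textbf{Step 4 (state representation).} Present the list in the form Eq.~\eqref{eq:row-state-coeff} with $N_j=T_j^g$, $b_j=T_j^ge_0$, $O=O_\Theta$, and $d_0=O_\Theta e_0$. The joint output Gram $[O\ d_0]^\dagger[O\ d_0]$ is rational, since it is built from $P^\dagger P/\beta^2+S$. After rational reachability and observability reduction, $s_0\le kr$ and $q_0\le1+h$.

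\textbf{Step 5 (compilation).} Invoke Theorem~\ref{thm:row-compiler} with $d=D$. It yields a $D$-query circuit whose selected block is exactly the first entry $p(A)/\beta$ under ideal known gates, with multiplicity $R_*=q_0+ks_0\le1+h+k^2r$, matching Eq.~\eqref{eq:row-memory}. The circuit is realized through Theorem~\ref{thm:unitary} with $\kappa=1$.

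\textbf{Step 6 (gates and resources).} Theorem~\ref{thm:gates} compiles the known layers to total error $\epsilon$ at cost polynomial in $\log(1/\epsilon)$ and the oracle width. All intermediate matrices have polynomial dimension. The square-root entries of $S^{1/2}$ are handled by the certified evaluators of Lemma~\ref{lem:extract} and Appendix~\ref{app:arithmetic}.

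\textbf{Main obstacle.} I expect the bit-length control in Step 5 to be the hardest part. The frames $Z_t=O_tR_tK_t^{-1/2}$ are nested through $O_{t+1}$, so irrational square roots compound over $D$ stages. My first approach is to do all support and kernel decisions on the rational shorted Grams $H_t$ of Eq.~\eqref{eq:short-Gram}, which depend only on the original rational data. I would then evaluate $O_t$, $Z_t$ numerically with guard bits, using lower eigenvalue bounds on $K_t$ derived from minors of $H_t$; this is what Appendix~\ref{app:row-compact} is stated to supply. The margin of $S$, which is polynomial in $\log(1/\tau)$ by Proposition~\ref{prop:strict}, propagates into those eigenvalue bounds and fixes the required precision.
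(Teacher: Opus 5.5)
Your proposal is correct and follows the paper's proof essentially step for step: norm estimation to relative error $\tau/8$, $\beta=(1+\tau/2)\widehat B$, Proposition~\ref{prop:strict} with $\gamma=\beta$ and $\nu=(1+\tau/2)^{-1}$, the complementary list with output matrix $O_\Theta$, and Theorem~\ref{thm:row-compiler} giving $D$ queries with $R_*=q_0+ks_0\le1+h+k^2r$. One small correction to your aside in Step~3: the row-domain defect vanishes on short words because $(S_j^{(L)})^\dagger S_\ell^{(L)}=\delta_{j\ell}P_{<L}$, that is, $\Delta(S^{(L)})^\dagger\Delta(S^{(L)})=I_k\otimes P_{<L}$; the sum $\sum_j(S_j^{(L)})^\dagger S_j^{(L)}=kP_{<L}$ is not the identity, and it is the column-domain Gram rather than the row-domain one.
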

\begin{proof}
We compute $B\le\widehat B\le(1+\tau/8)B$ using
Theorem~\ref{thm:norm-search}, and set
$\beta=(1+\tau/2)\widehat B$. Then
$B<\beta\le(1+\tau)B$, and $B/\beta\le\nu$ for the rational
$\nu=(1+\tau/2)^{-1}<1$. We apply Proposition~\ref{prop:strict} to the original target $p$ with $\gamma=\beta$ and $\nu=(1+\tau/2)^{-1}$. This returns a certificate of $1-p^*p/\beta^2$. The preceding construction supplies a complementary list of degree at most $D$ in the state representation required by Theorem~\ref{thm:row-compiler}, whose joint output Gram is rational. Its memory bound is $q_0+ks_0\le1+h+k^2r$. The first polynomial has degree $D$, so the list also has degree exactly $D$ and the circuit uses that many queries. The strict-search dependence is polynomial in $\log(1/\tau)$, and the factorization uses $D$ queries at every such margin. This proves the row assertion of Theorem~\ref{cor:scalar-synthesis}. The exact degree lower bound is established in Corollary~\ref{cor:degree-lower} for the same completion-independent forward model.
\end{proof}

The normalization benefit is visible in the two-query example
\[
 p(A)=\tfrac35A_1A_2+\tfrac45A_2A_1,\qquad
 \|p\|_{\SA,\mathrm{row}}=1.
\]
Example~\ref{ex:row} gives its exact circuit at normalization one. A termwise LCU using the same row oracle keeps the two products on separate labels and prepares and selects amplitudes $\sqrt{3/7},\sqrt{4/7}$, yielding $5p/7$ with two queries. The two circuits therefore have success probabilities $\|p(A)\psi\|^2$ and $25\|p(A)\psi\|^2/49$ on the same unit input. The row factorization improves this particular termwise construction by combining the products under their joint norm bound. Access costs can change the comparison: if separate unit-normalized encodings are first loaded with equal weights, the resulting row is $[A_1\ A_2]/\sqrt2$, and the same degree-two circuit returns $p(A)/2$. The supplied row, its uniform norm domain, and the cost of obtaining that row must therefore be fixed before comparing normalizations.

The same algorithms give normalization one whenever a strict bound below one is supplied. After computing a normalization smaller than one, we multiply the final known row by that normalization; the row remains contractive and its output becomes the target itself.

\begin{corollary}[Synthesis at normalization one]\label{cor:normalization-one}
Let a rational finite-state scalar polynomial of degree $D\ge1$ satisfy $\|p\|_{\SA,\Delta}\le1-\sigma$, with rational $0<\sigma<1$, and let $0<\epsilon<1$ be rational. A circuit with exact ideal selected block $p(A)$ uses $O(D/\sqrt\sigma)$ queries, or $D$ for a row input. Classical and finite-gate synthesis to error $\epsilon$ are polynomial in the input length, oracle width, $\sigma^{-1/2}$, and $\log(1/\epsilon)$; the row case depends polynomially on $\log(1/\sigma)$.
\end{corollary}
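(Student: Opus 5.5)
The plan is to run the synthesis algorithms already established at a normalization $\beta_0<1$ close to $B$, and then absorb the factor $\beta_0$ into the final known contraction. First I would pick the normalization. Since $B\le1-\sigma$ is promised, I would not call the norm estimator. Instead I would set $\beta_0=(1-\sigma)(1+\sigma/2)$, or any rational $\beta_0$ with $1-\sigma<\beta_0\le(1-\sigma)(1+\sigma/2)<1$. This gives $B<\beta_0<1$, with a relative excess $\tau=\sigma/2$ over the promised bound.

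For a general input I would follow Algorithm~\ref{alg:synthesis} with $\beta=\beta_0$ in place of $(1+\tau/2)\widehat B$. I would take the least $N>D$ with $(N+1)^2\tau\ge1024D^2$, which gives $Q=O(D/\sqrt\sigma)$ and $\rho_N\le\tau/16$. Lemma~\ref{lem:weights} then bounds the compensated target:
\[
\|\mathcal W_v^{-1}p/\beta_0\|_{\SA,\Delta}\le\frac{1-\sigma}{\beta_0(1-\tau/16)}=\frac{1}{(1+\sigma/2)(1-\sigma/32)}.
\]
This is at most $1-\sigma/4$ by the same polynomial inequality used in Eq.~\eqref{eq:strict-q}. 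Proposition~\ref{prop:strict} and Lemma~\ref{lem:extract} then give a realization. Theorems~\ref{thm:window-circuit} and~\ref{thm:unitary} produce a circuit with selected block $p(A)/\beta_0$.

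To remove the factor, I would replace the final row contraction by $C_Q'=\beta_0C_Q$. This only makes sense if $\beta_0C_Q$ is still a contraction, which holds because $\beta_0<1$. Linearity of Eq.~\eqref{eq:contraction-product} in $C_Q$ then turns the ideal selected block into $\beta_0\cdot p(A)/\beta_0=p(A)$. The Julia completion of $C_Q'$ and the certified entry evaluators carry over, since scaling by a rational number keeps them polynomial-time. Theorem~\ref{thm:gates} compiles the result to error $\epsilon$.

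For a row input I would instead apply Proposition~\ref{prop:strict} with $\gamma=\beta_0$ and $\nu=(1-\sigma)/\beta_0=(1+\sigma/2)^{-1}$. I would then build the complementary list and use Theorem~\ref{thm:row-compiler}, as in Corollary~\ref{cor:row-main}. This gives $D$ queries and a final collection matrix $R_Z$ whose selected first row outputs $p/\beta_0$. I would scale that row by $\beta_0$, and it remains contractive.

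The main obstacle is the bookkeeping in the complexity claim. I need to check that replacing $\tau$ by $\sigma/2$ gives classical cost polynomial in $\sigma^{-1/2}$ in the general case, through $Q$ and the degree-clock size, and polynomial in $\log(1/\sigma)$ in the row case, through the strict margin $1-\nu=\Theta(\sigma)$. I also need to confirm that the rescaled endpoint does not spoil the precision analysis of Appendix~\ref{app:arithmetic}. Since the rescaled endpoint is a contraction of norm at most $\beta_0<1$, I expect this to reduce to multiplying its evaluators by a rational constant.
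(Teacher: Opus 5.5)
Your proof is correct and uses the paper's mechanism: run the existing synthesis at a normalization $\beta_0<1$, then multiply the final known row by $\beta_0$. This preserves contractivity and turns the ideal block $p(A)/\beta_0$ into $p(A)$.

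The difference is how you obtain $\beta_0$. The paper invokes the full target algorithm with $\tau=\sigma/2$, including the norm estimate of Theorem~\ref{thm:norm-search}. It then has $\beta_0\le(1+\sigma/2)B\le(1+\sigma/2)(1-\sigma)<1$, and all costs follow by substituting $\tau=\sigma/2$ into Theorem~\ref{cor:scalar-synthesis} and Corollary~\ref{cor:row-main}, used as black boxes.

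You instead read $\beta_0=(1-\sigma)(1+\sigma/2)$ directly off the promise and skip the bisection. This saves the polynomially many capped strict searches that norm estimation performs. The price is that you must reopen the algorithm and re-verify the strict margin for the compensated target yourself. You do this correctly: your bound $1/((1+\sigma/2)(1-\sigma/32))$ is at most $1/((1+\sigma/2)(1-\sigma/16))$ by monotonicity, and that is at most $1-\sigma/4$ by the inequality behind Eq.~\eqref{eq:strict-q} with $\tau=\sigma$.

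One correction: drop the parenthetical ``or any rational $\beta_0$ with $1-\sigma<\beta_0\le(1-\sigma)(1+\sigma/2)$.'' Your displayed equality holds only at the endpoint value.
\begin{itemize}
\item In the general case, the inverse window can inflate the norm by up to $1/(1-\rho_N)$. For $\beta_0$ near $1-\sigma$ your estimate only gives about $1/(1-\sigma/32)>1$, so the strict promise required by Proposition~\ref{prop:strict} is no longer guaranteed.
\item In the row case, the margin $1-\nu=1-(1-\sigma)/\beta_0$ would no longer be $\Theta(\sigma)$. The claimed $\log(1/\sigma)$ dependence would then not follow.
\end{itemize}
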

\begin{proof}
We apply the corresponding target algorithm with $\tau=\sigma/2$. Its rational normalization satisfies
\[
 0<\beta_0\le(1+\sigma/2)\|p\|_{\SA,\Delta}
           \le(1+\sigma/2)(1-\sigma)<1.
\]
Replacing its final numerical row $C_Q$ by $\beta_0C_Q$ preserves contractivity and changes the ideal block from $p/\beta_0$ to $p$. This multiplication is known rational arithmetic and adds no query. We compile the modified known matrices with total error at most $\epsilon$ using Theorem~\ref{thm:gates}; the dimensions and precision bounds are unchanged. Substitution of $\tau=\sigma/2$ gives the stated costs.
\end{proof}

When an inner completion at normalization one is supplied in the representation of Theorem~\ref{thm:row-compiler}, that theorem also handles the boundary $B=1$. Its rational joint Gram and certified entry algorithms provide the data needed for exact support decisions and finite-gate synthesis. The target-only algorithm instead uses the strict margin to discover its completion.

The row construction uses one query for each degree removed, and this count is optimal for exact synthesis in our access model. The usual polynomial method relates query count to polynomial degree~\cite{beals2001}; we apply it to the designated blocks by averaging over allowed oracle completions. Completion independence makes that average equal to the required target.

\begin{corollary}[Exact degree lower bound]\label{cor:degree-lower}
A nonzero scalar polynomial of true degree $D$ requires at least $D$ forward queries for exact completion-independent implementation at any finite positive normalization in the model of Section~\ref{sec:unitary}. The row algorithm of Corollary~\ref{cor:row-main} attains this bound.
\end{corollary}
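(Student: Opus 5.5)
We adapt the polynomial method to the designated block, averaging over oracle completions as the discussion before the statement suggests. Suppose a $Q$-query circuit of the form in Eq.~\eqref{eq:target-query-sequence}, with known gates $W_0,\ldots,W_Q$, has selected block exactly $p(A)/\beta$ for every admissible tuple, every signal size, and every compatible completion $B_A,C_A,D_A$ in Eq.~\eqref{eq:oracle-full}. We show that $Q\ge D$.

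\emph{Step 1: a structured completion.} We choose, for each admissible tuple, a completion whose entries are polynomials of degree at most one in the entries of the $A_j$ and their conjugates. It is more convenient to use the freedom in a different way: pick any fixed completion family that is valid on the open domain, then restrict attention to scaled tuples $zA$ with a complex parameter $z$. We want the full $U_{zA}$ to depend on $z$ in a controlled way, so we use an explicit Julia-type completion of the active block $L(A)$ on the $d_o\ge i+a+b$ labels, as in Appendix~\ref{app:unitary-details}, with defect blocks $(I-\Delta\Delta^\dagger)^{1/2}$ and $(I-\Delta^\dagger\Delta)^{1/2}$. These defect blocks are not polynomial in $z$, so we cannot expand directly. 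Instead, we use completion independence: the selected block of the whole circuit is the same for all completions, so we may average over completions. The averaging group consists of unitaries $V$ acting on the non-designated labels: replacing $U_A$ by $(I\oplus V_{\rm out})U_A(I\oplus V_{\rm in})$ on the complementary rows and columns keeps the designated block. Averaging over Haar-random, independently chosen $V$ at each query is not directly allowed, since the same unitary must be used at every call. We therefore average over one common $V$ pair, which preserves the conclusion because each fixed choice already gives $p(A)/\beta$.

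\emph{Step 2: degree counting.} This is the main obstacle. A cleaner route avoids analytic structure entirely. We use the identity branch and the scaling $A\mapsto e^{i\theta}A$, together with a completion that commutes with this phase. For a row or general pencil, conjugating the completion by the diagonal phase $\operatorname{diag}(I_{in},I_{an},e^{i\theta}I)$ on outputs and $\operatorname{diag}(I_{in},e^{-i\theta}I_{bn},I)$ on inputs maps a completion of $L(A)$ to a completion of $L(e^{i\theta}A)$. Here each designated or complementary coordinate carries a charge $0$ or $\pm1$, and the unitary $U_{e^{i\theta}A}$ equals $\Sigma_\theta U_A\Sigma'_\theta$ for known diagonal phase unitaries. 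Expanding the circuit, every matrix entry of the selected block of $\mathcal U_{e^{i\theta}A}$ becomes a trigonometric polynomial in $\theta$ whose frequencies are bounded by $Q$ in absolute value, one unit per query. Completion independence gives $p(e^{i\theta}A)/\beta=\sum_d e^{id\theta}p_d(A)/\beta$. Choosing a tuple with $p_D(A)\ne0$ then forces $D\le Q$. The tuple exists because the free polynomial $p_D$ is nonzero, so it is nonvanishing at some small scaled word matrices such as $\rho S_j^{(D)}$ from the proof of Theorem~\ref{thm:certificate}. The point to verify carefully is that the phase-conjugated completion is still compatible, meaning its $W_t$ are untouched and only $U$ changes. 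The markers and known gates do not depend on $A$, so this holds.

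\emph{Step 3: attainment.} The row algorithm of Corollary~\ref{cor:row-main} uses exactly $D$ queries, so the bound is tight.
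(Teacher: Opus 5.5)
Your proof is correct, and Step 2, which does all the work, takes a different route from the paper's.

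\textbf{The paper's route.} It keeps the tuple fixed. It multiplies $U_A$ by $P_++z(I-P_+)$ and $P_-+z(I-P_-)$, so only the non-designated rows and columns acquire phases, with corner powers $1,z,z,z^2$. It then averages the selected block over the $(2Q+1)$st roots of unity, which discards every path that leaves the designated corner. This reduces any exact completion-independent circuit to a designated contraction product as in Eq.~\eqref{eq:contraction-product}. That product is a free polynomial of degree at most $Q$, and the directed-path tuples of Appendix~\ref{app:fixed-query} identify its coefficients with those of $p/\beta$.

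\textbf{Your route.} You use the phases to realize the circle action on the tuple itself, inside the family of admissible completions. The selected block is then a trigonometric polynomial in $\theta$ with frequencies in $[-Q,Q]$. The target, by contrast, contains its top homogeneous part $e^{-iD\theta}p_D(A)/\beta$. Fourier uniqueness, together with one admissible tuple where $p_D\neq0$, gives $D\le Q$. Your word-matrix tuple $\rho S^{(D)}$, zero-padded to a power-of-two size, supplies that tuple.

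\textbf{What each buys.} Your argument is shorter. It needs only nonvanishing of $p_D$ at a single point, rather than identification of all free coefficients. The paper's argument yields a structural reduction to a designated contraction product. The paper reuses that reduction in the necessity direction of Theorem~\ref{thm:fixed-query}, and it also applies to approximately accurate circuits, because averaging does not increase worst-case error.

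\textbf{A sign correction.} With $\Sigma_\theta=\operatorname{diag}(I_{in},I_{an},e^{i\theta}I)$ and $\Sigma'_\theta=\operatorname{diag}(I_{in},e^{-i\theta}I_{bn},I)$, the product $\Sigma_\theta U_A\Sigma'_\theta$ has active block $e^{-i\theta}\Delta(A)=\Delta(e^{-i\theta}A)$, not $\Delta(e^{i\theta}A)$. This product is a left and right multiplication, not a conjugation. The target therefore reads $p(e^{-i\theta}A)/\beta$, which is harmless. Do not repair the sign by flipping only the input phase: then the block $C_A$ acquires $e^{2i\theta}$, and you obtain only $D\le 2Q$. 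State explicitly the property that makes the argument work: every block of the phased unitary carries a charge in $\{-1,0,1\}$. Your literal choice achieves this, since the identity block, $B_A$ and $C_A$ are uncharged, the active block has charge $-1$, and $D_A$ has charge $+1$.

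\textbf{Step 1.} It can be deleted. The average over a common pair $V$ is vacuous, because each fixed completion already yields $p(A)/\beta$.
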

\begin{proof}
We fix an admissible tuple and write $P_-$ and $P_+$ for the projections onto its designated oracle input and output labels. For any $|z|=1$, multiplying the original oracle on the left by $P_++z(I-P_+)$ and on the right by $P_-+z(I-P_-)$ gives another unitary with the same designated block. The four corners acquire powers $1,z,z,z^2$, respectively. Since the circuit uses only forward queries, its selected output after $Q$ calls is a polynomial in $z$ of degree at most $2Q$. Averaging over the $(2Q+1)$st roots of unity therefore retains only paths that use a designated corner at every call. Independence from the completion makes this average equal to the original target.

The averaged circuit is a product of $Q$ blocks linear in $A_1,\ldots,A_k$, interleaved with known numerical contractions, so its degree is at most $Q$. Equality on the whole input domain implies equality of free coefficients: a sufficiently small matrix tuple supported on a directed path isolates any prescribed word coefficient in one matrix entry, as detailed in Appendix~\ref{app:fixed-query}. Hence a nonzero degree-$D$ target requires $Q\ge D$, while a zero-query circuit is constant. Corollary~\ref{cor:row-main} supplies the matching construction.
\end{proof}

Appendix~\ref{app:fixed-query} complements these algorithms with a necessary and sufficient Gram condition at each prescribed query count and normalization, including singular boundary certificates. Its variables include every query history and can be exponential in the query count. The efficient synthesis above uses the smaller residual representation and a strict margin. We now extend that synthesis to a joint matrix of polynomial outputs, which will provide the data labels for quantum operations.

\section{Joint polynomial outputs and quantum channel transformations}\label{sec:channels}
The scalar circuit has a single selected output. We now retain a data register so that one query sequence can produce several polynomials coherently. This extension serves two tasks: an input label may choose which polynomial acts on the signal, or an output label may become the environment of a quantum operation. Both require us to preserve the joint amplitudes between labels. For scalar-coefficient polynomial blocks $P_{\mu\nu}(A)$, the desired successful block acts as
\begin{equation}\label{eq:block-state-action}
 \sum_{\nu=0}^{u-1}\ket\nu_{\mathsf L}\ket{\psi_\nu}_{\mathsf S}
 \ \longmapsto\ \frac1\beta\sum_{\mu=0}^{v-1}\ket\mu_{\mathsf L}
                    \sum_{\nu=0}^{u-1}P_{\mu\nu}(A)\ket{\psi_\nu}_{\mathsf S}.
\end{equation}
Here the $\psi_\nu$ can be arbitrary signal columns, including their correlations with an untouched reference. A diagonal choice of the polynomial blocks preserves the input label; the case $u=1$ produces a column of outputs. Tracing that output label after success gives Kraus operators proportional to the component polynomials. We will construct the joint transformation first, because its norm condition determines the common normalization and all of the known gates.

\subsection{Joint polynomial targets}\label{sec:block-register}\label{sec:block-certificate}
The norm of each entry alone cannot certify Eq.~\eqref{eq:block-state-action}. For example, both entries of $[I_n\ I_n]$ have norm one, but the row sends $(\ket0+\ket1)\ket\psi/\sqrt2$ to $\sqrt2\ket\psi$. More generally, the input-label blocks of the squared target are
\[
 [\mathsf P(A)^\dagger\mathsf P(A)]_{\nu\nu'}
       =\sum_\mu P_{\mu\nu}(A)^\dagger P_{\mu\nu'}(A).
\]
Because these off-diagonal terms govern interference between input labels, we seek a certificate for the whole matrix and retain the corresponding amplitudes until the final selection.

We specify the target by rectangular coefficient matrices,
\begin{equation}\label{eq:block-target}
 \mathsf P(A)=\left[\sum_w(P_w)_{\mu\nu}A_w\right]_{
                  \substack{0\le\mu<v\\0\le\nu<u}},
 \qquad B_{\mathsf P}=\sup_{n,A:\|\Delta(A)\|\le1}\|\mathsf P(A)\|.
\end{equation}
Its finite-state input has rational real and imaginary parts and takes the form
\begin{equation}\label{eq:block-representation}
 P_w=LT_wR_0,\qquad L:\ v\times s,\quad
 T_j:\ s\times s,\quad R_0:\ s\times u.
\end{equation}
The target has $un$ input and $vn$ output coordinates. We use a separate data register $\mathsf L$ of $q_L=\lceil\log_2\max\{u,v\}\rceil$ qubits, with supported input values $\nu<u$ and retained output values $\mu<v$. Its input is supplied by the application and may be entangled with the signal. The oracle register and all added processing registers start at zero. We denote by $\mathcal B_{u,v}(\mathcal U_A)$ the matrix block retaining the specified data endpoints and zero on the other auxiliary outputs. The recurrence determines the degree $D=\max\{|w|:P_w\ne0\}$, polynomiality, and the zero case by the joint tests in Appendix~\ref{app:joint-coordinates}.

\begin{theorem}[Multivariable polynomial transformation]\label{thm:matrix-synthesis}
Let $\mathsf P$ have a rational finite-state representation of degree $D\ge1$, and let $0<\tau,\epsilon<1$ be rational. Under the joint-oracle assumptions of Section~\ref{sec:unitary}, a classical algorithm computes a rational $\beta$ and a quantum circuit such that
\[
 B_{\mathsf P}<\beta\le(1+\tau)B_{\mathsf P},\qquad
 \|\mathcal B_{u,v}(\mathcal U_A)-\mathsf P(A)/\beta\|\le\epsilon.
\]
It uses $O(D/\sqrt\tau)$ unconditional forward queries, or $D$ for a row input. Classical computation and known-gate count are polynomial in the rational input length, $q_o$, $\tau^{-1/2}$, and $\log(1/\epsilon)$; the row case depends polynomially on $\log(1/\tau)$. With exact known gates, the selected block equals $\mathsf P(A)/\beta$.
\end{theorem}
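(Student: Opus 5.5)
The plan is to repeat the scalar pipeline of Theorem~\ref{cor:scalar-synthesis} with the scalar endpoints replaced by data labels. The first step is the matrix-valued certificate. I would redo the residual construction of Lemma~\ref{lem:residual} for the representation in Eq.~\eqref{eq:block-representation}. Residuals are now taken column by column: the span of the $u$ constant columns $e_\nu$ together with all nonempty residuals $\partial_j$ of each column $\mathsf P_{\cdot\nu}$ is a finite space of polynomial columns of length $v$. Equivalently, I can use scalar residuals of the rows $e_\mu^{\mathsf T}LT_w$ acting on the state columns of $R_0$. Either way this gives a basis $g$ with $h=u+kr$ elements, together with matrices $E$ (now $u\times h$, selecting the constants) and $P$ ($v\times h$).

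The defect identity then reads $\beta^2E^\dagger E-P^\dagger P=S+\Phi(T)$, and the hereditary polynomial $q_{Q_0}$ is now matrix valued. The sufficiency half of Theorem~\ref{thm:certificate} carries over verbatim. For necessity I would rerun the separation argument with the same GNS construction. The only change is that $\omega=WE^{\mathsf T}$ becomes a block of $u$ columns. The separator then yields a vector $\sum_\nu\omega_\nu\otimes e_\nu$ on which $\mathsf P(Y)^\dagger\mathsf P(Y)$ exceeds $\beta^2$. Closedness uses the same coercivity functional, now taken as a trace over the $u$ constant coordinates. Extraction (Lemma~\ref{lem:extract}) is identical: $X,Y$ gain $u$ and $v$ scalar rows, and $V$ becomes a contraction with $\Bco$ of width $u$, $\Cco$ of height $v$, and $\Dco$ of size $v\times u$. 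Proposition~\ref{prop:strict} and Theorem~\ref{thm:norm-search} apply unchanged, since their proofs use only the affine structure and the strict margin. One point needs care: the initial bracket for $B_{\mathsf P}$ must be derived from the entry norms, using the fact that $\max_{\mu\nu}\|P_{\mu\nu}\|\le B_{\mathsf P}\le\sqrt{uv}\max\|P_{\mu\nu}\|$.

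For the general input I would reuse the weighted sequence of Theorem~\ref{thm:window-circuit}. The pending coordinate becomes $u$ coordinates, split by the same scalar $\alpha_t,\zeta_t$ acting as multiples of $I_u$. The collected coordinate becomes $v$ coordinates, with $\xi_t,\gamma_t$ acting as multiples of $I_v$. The isometry and coisometry factorization $M_t=T_tV_{\rm ext}S_t$ therefore survives, and the degree multipliers $w_d$ remain scalars. Lemma~\ref{lem:weights} must be upgraded to matrix-valued $\mathsf P$. Its proof uses only the phase orbit $A\mapsto e^{i\theta}A$, the maximum principle applied after compressing by unit vectors, and Cauchy estimates, so it holds verbatim with operator norms of $\mathsf P(A)$. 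The degree-clock compensation of Eq.~\eqref{eq:degree-clock} then yields a target $q$ with $\|q\|\le1-\tau/4$. The packing matrices $J_\pm$ place the pending and collected blocks on the data register $\mathsf L$ instead of on flat idle indices. I expect this to force $\mathsf L$ to carry the pending input at the start and the output at the end. Theorem~\ref{thm:unitary} then gives the circuit, with its endpoints $C_0$ and $C_Q$ now $u$-column and $v$-row contractions.

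For the row input I would generalize Theorem~\ref{thm:row-compiler} to an inner list with $u$ inputs, taking $\theta_w$ as $q\times u$ matrices and $\Gamma_\emptyset=I_u$. The peeling step in Lemma~\ref{lem:peel} is linear in the columns and goes through unchanged, and the final constant is an isometry $d_d$ of size $q\times u$ that serves as the first circuit column.

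The step I expect to be the main obstacle is the necessity direction of the matrix certificate. There the separator has to be realized on $u$ coordinate input vectors simultaneously, and the reproduction identity $g_t(Y)\omega=We_t$ must be checked for each constant column. I would first try to reduce the matrix case to the scalar one. The idea is to encode the output index as extra letters and the input index as an initial-state choice, then verify that the residual space and $\Phi$ are unaffected. Failing that, I would carry the $u$ constants through the induction directly.
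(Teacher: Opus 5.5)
Your proposal is correct and follows the paper's proof essentially step for step. The shared steps are residual coordinates with $h=u+kr$; the same GNS separator, with the $u$ constants carried through the reproduction induction and a stacked column $\Omega$; extraction with $u$ injection columns and $v$ collection rows; the matrix-element version of Lemma~\ref{lem:weights}, with stage blocks $\alpha_tI_u,\zeta_tI_u,\xi_tI_v,\gamma_tI_v$; and a $u$-column inner list with $\Gamma_\emptyset=I_u$ on the row domain.

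Two small corrections are needed. First, the residuals must be left residuals of the \emph{rows} of $\mathsf P$ (polynomial rows with $u$ entries, so that $Eg=I_u$ and $g^*Q_0g$ represents the $u\times u$ defect $I_u-\mathsf P^*\mathsf P$). Your column-by-column description is therefore not equivalent to the row one and should be dropped. Second, if the pending and collected components sit on $\mathsf L$, you need another register value to keep their $u+v$ coordinates disjoint; the paper arranges this by placing them on $\kappa R\ge u+v$ idle oracle--memory coordinates, with $\mathsf L=0$ between the two endpoint Julia gates.
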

The guarantee is uniform over signal sizes, admissible tuples, oracle completions, and reference systems, and reduces to scalar synthesis when $u=v=1$. Lemma~\ref{lem:block-constant} supplies the zero-query constructions for constant targets, including an always-failed outcome for the zero matrix.

To preserve these input correlations, we perform the residual calculation on polynomial rows with $u$ entries. Rational closure of the $u$ constant rows and the nonempty left residuals of the $v$ target rows gives a basis $f_1,\ldots,f_r$ with $r\le s+u$. Appending one initial letter to these residuals gives the successor matrix $g$, whose $h=u+kr$ independent rows consist of the constants and the rows $x_jf_\ell$. Its numerical coordinate maps satisfy
\[
 Eg=I_u,\qquad Jg=f,\qquad Pg=\mathsf P,\qquad L_jg=x_jf.
\]
We form $E_\alpha,M_\mu$ and $\Phi$ by the same formulas as Eqs.~\eqref{eq:EM}--\eqref{eq:Phi}. Each of the first two matrices now has $br$ rows and $h$ columns. Appendix~\ref{app:joint-coordinates} gives the coefficient arrays explicitly. For Hermitian $Q_0$ of order $h$, the hereditary matrix $q_{Q_0}=g^*Q_0g$ evaluates to $g(A)^\dagger(Q_0\otimes I_n)g(A)$ on all $un$ input coordinates together.

\begin{theorem}[Exact joint rectangular certificate]\label{lem:block-certificate}
The matrix $q_{Q_0}$ is positive semidefinite at every allowed tuple if and only if $Q_0=S+\Phi(T)$ for positive semidefinite matrices of orders $h=u+kr$ and $br$. In particular, for $\beta\ge0$, the condition $B_{\mathsf P}\le\beta$ is equivalent to
\begin{equation}\label{eq:block-SDP}
 \beta^2E^\dagger E-P^\dagger P=S+\Phi(T),\qquad S,T\succeq0.
\end{equation}
\end{theorem}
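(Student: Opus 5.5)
The plan is to follow the proof of Theorem~\ref{thm:certificate} line by line. The only change is that every scalar evaluation at a single signal vector becomes an evaluation on a $u$-component input column. This is exactly how the hereditary matrix $g(A)^\dagger(Q_0\otimes I_n)g(A)$ acts on all $un$ input coordinates together.

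\emph{Sufficiency.} Factor $T=K^\dagger K$ and $S=G^\dagger G$. The polynomials $h_{\alpha,\ell}$ and $b_{\mu,\ell}$ now become polynomial rows with $u$ entries, formed from $KE_\alpha g$ and $KM_\mu g$. The coordinate identities $Jg=f$ and $L_jg=x_jf$ show that $\Phi(T)$ represents the matrix analogue of Eq.~\eqref{eq:localizing-sum}. Now apply this to an input column $\Psi=\sum_\nu\ket\nu\psi_\nu$. For each fixed $\ell$, applying the contraction $\Delta(A)$ to the $b$ vectors $h_{\alpha,\ell}(A)\Psi$ and summing over $\ell$ gives nonnegativity of the $\Phi(T)$ part. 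The $S$ part is a sum of squares $\|G g(A)\Psi\|^2$. The norm statement then follows from $Eg=I_u$ and $Pg=\mathsf P$, which give $q_{\beta^2E^\dagger E-P^\dagger P}=\beta^2I_u-\mathsf P^*\mathsf P$.

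\emph{Necessity.} I would proceed in three steps.

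First, injectivity of $Q_0\mapsto q_{Q_0}$. The $h$ rows of $g$ are independent as polynomial rows, and each entry $(\nu,\nu')$ of $x_v^*x_w$ has a unique starred/unstarred split. Hence the coefficient matrix $C_g$, with columns indexed by pairs (word, input label), has full row rank, and the argument from Theorem~\ref{thm:certificate} applies verbatim.

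Second, closedness of the image cone. Use the same shift tuple $\rho S^{(D)}$. Take $\ell_\rho$ to be the evaluation functional summed over the $u$ diagonal input labels at the empty-word coordinate, that is, the trace over $\mathsf L$. This functional is still positive and coercive in $\tr S+\tr T$, because the weighted coefficient Grams of the independent row lists $f$ and $g$ remain positive definite.

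Third, separation. This gives $H\succeq0$ of order $h$ with $\Phi^*(H)\succeq0$ and $\tr(HQ_0)<0$. Factor $H^{\mathsf T}=W^\dagger W$ and define $R_H,Z_{j,H},Y_j$ as in Eq.~\eqref{eq:GNS-matrices}. The one change is that $\omega=WE^{\mathsf T}$ is now an $n_H\times u$ matrix, one column per input label, rather than a single vector. The kernel inclusion $\ker R_H\subseteq\ker Z_{j,H}$ and the contractivity of $\Delta(Y)$ use only $\Phi^*(H)\succeq0$ and the independence of the $F_j$, so they transfer unchanged.

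The main obstacle is the reproduction identity $g_t(Y)\omega=We_t$ for row-valued $g_t$. I would run the induction on rows $\lambda$ in the span of $f$. The residual identity takes the form $(f_\lambda)_\emptyset=\lambda a_0$, where $a_0$ is now an $r\times u$ array of constant parts. The displayed coordinate identity becomes $J^{\mathsf T}\lambda^{\mathsf T}=E^{\mathsf T}(\lambda a_0)^{\mathsf T}+\sum_jL_j^{\mathsf T}(A^{\rm res}_j)^{\mathsf T}\lambda^{\mathsf T}$. I need to check carefully that transposes and conjugations line up, so that $\zeta(f_\lambda)=\omega\,(f_\lambda)_\emptyset^{\mathsf T}+\sum_jY_j\zeta(\partial_jf_\lambda)$. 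Here $\zeta(f_\lambda)=R_H\lambda^{\mathsf T}$ is a single vector, while $f_\lambda(Y)\omega$ is a $1\times u$ row applied to the $u$ columns of $\omega$, and these must agree. Once that bookkeeping holds, degree induction gives $g_t(Y)\omega=We_t$ for all $t$, including the $u$ constant rows, which give $\omega$ itself.

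To conclude, take the input column $\Omega=\sum_\nu\ket\nu\otimes\omega e_\nu$. Evaluating gives $\Omega^\dagger q_{Q_0}(Y)\Omega=\tr(HQ_0)<0$, which contradicts positivity of $q_{Q_0}$ on all admissible tuples.
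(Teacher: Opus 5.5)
Your proposal is correct and follows the paper's proof step for step: the same sufficiency argument, injectivity via coefficient positions indexed by word and input label, closedness via the coercive functional summed over the $u$ constant inputs, and the separator construction with $\omega_\nu=WE^{\mathsf T}e_\nu$, the row-valued residual induction, and the stacked column $\Omega$. The transpose bookkeeping you flag does work out. It follows from $J=C_fE+\sum_jA_j^{\rm res}L_j$, which gives the paper's identity $\zeta(f_\lambda)=\sum_\nu(\lambda C_f)_\nu\omega_\nu+\sum_jY_j\zeta(f_{\lambda A_j^{\rm res}})$.
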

\begin{proof}
A feasible pair expresses $q_{Q_0}$ as a polynomial square and the query defect acting on residual rows, which proves positivity on all $un$ input coordinates together. For necessity, we use the separation proof of Theorem~\ref{thm:certificate} with coefficient positions indexed by a word and an input label. Summing the coercive functional over the $u$ constant inputs closes the cone; a separator then produces an admissible tuple and a column $\Omega$ whose quadratic form is $\operatorname{Tr}(HQ_0)<0$. Appendix~\ref{app:proof-lem:block-certificate} proves this joint evaluation identity, including the ordinary transpose in the complex Gram factorization and the kernel argument for singular separators. Thus the same certificate controls all input-label cross terms.
\end{proof}

\begin{proposition}[Finite-bit joint certificate computation]\label{prop:block-classical}
The rational affine system of Theorem~\ref{lem:block-certificate} is computable in polynomial bit time from Eq.~\eqref{eq:block-representation}. Given rational $\gamma>0$ and $0\le\nu<1$ with $B_{\mathsf P}/\gamma\le\nu$, the capped rational search returns positive definite rational certificate matrices for $I_u-\mathsf P^*\mathsf P/\gamma^2$ in time polynomial in the input, the bit lengths of $\gamma,\nu$, and $\log(1/(1-\nu))$. For nonzero $\mathsf P$ and rational $0<\eta<1$, it also computes $B_{\mathsf P}\le\widehat B\le(1+\eta)B_{\mathsf P}$ in polynomial bit time with dependence polynomial in $\log(1/\eta)$.
\end{proposition}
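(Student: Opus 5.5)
The plan is to mirror the scalar pipeline (Lemma~\ref{lem:residual}, Proposition~\ref{prop:strict}, Theorem~\ref{thm:norm-search}), replacing scalar polynomials by polynomial rows with $u$ entries. The work splits into three steps: computing the affine system, running the strict search, and bracketing the norm.

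\emph{Computing the affine system.} First I adjoin $u$ constant states to the representation~\eqref{eq:block-representation}. The transitions become $\operatorname{diag}(T_j,0_u)$, the final block becomes $\begin{pmatrix}R_0\\ I_u\end{pmatrix}$, and the rows $[e_\mu^{\mathsf T}L\ 0]$ together with $[0\ e_\nu^{\mathsf T}]$ represent the target rows and the constant rows. Reachability reduction proceeds as in Lemma~\ref{lem:residual}, now on the column space of the $(s+u)\times u$ final block. Row closure, started from the rows $[e_\mu^{\mathsf T}L\ 0]\operatorname{diag}(T_j,0)$, yields $r\le s+u$ independent residual rows. At most $s+u$ dimension increases occur, and every retained vector is a product of at most $s+u$ rational matrices. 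Fraction-free elimination~\cite{bareiss1968} then bounds the bit lengths of $a_0$ (now $r\times u$), $A^{\rm res}_j$ and $P$. The matrices $E,J,L_j,E_\alpha,M_\mu$ and the real-linear map $\Phi$ are explicit rational arrays assembled from these, exactly as in~\eqref{eq:EJP}--\eqref{eq:Phi}.

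\emph{Strict search.} Eliminating $S$ leaves the matrix inequalities $T\succeq0$ and $E^\dagger E-P^\dagger P/\gamma^2-\Phi(T)\succeq0$ in the real coordinates of $T$. Running the rounded ellipsoid method needs two things: an outer radius and an inner ball.

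For the outer radius, I sum the coercive functional of~\eqref{eq:coercive} over the $u$ constant input labels. That is, I evaluate at the word-shift tuple $\rho S^{(D)}$ on the empty-word coordinate of each label $\nu<u$. This gives $\ell_\rho(q_{S+\Phi(T)})\ge\chi(\tr S+\tr T)$. The left side equals $\ell_\rho(q_{Q_0})\le u$, which yields a trace cap.

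For the inner ball, I first find a strict certificate of the constant $I_u$. I obtain it by subtracting a small rational multiple of the bounded hereditary matrix for $I_h+\Phi(I_{br})$, using Theorem~\ref{lem:block-certificate} applied to the remaining positive polynomial. Then I add $(1-\nu^2)$ times that certificate to a certificate of $\nu^2I_u-\mathsf P^*\mathsf P/\gamma^2$. The margin bounds of Appendix~\ref{app:classical-margins} transfer verbatim, since they depend only on the coordinate maps and on $\chi$. Exact rational verification of the returned pair, and the interpretation of a reached cap as refuting the promise, are unchanged.

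\emph{Norm bracket.} To get initial bounds I proceed as follows. A lower bound comes from one nonzero entry: if $(P_w)_{\mu\nu}\ne0$, then $B_{\mathsf P}\ge\|P_{\mu\nu}\|_{\SA,\Delta}$, and the scalar lower bound from Appendix~\ref{app:classical-norm-search} applies to that entry. An upper bound comes from $\|\mathsf P\|\le\sum_{\mu,\nu}\|P_{\mu\nu}\|$ together with the scalar upper bounds. The bisection of Theorem~\ref{thm:norm-search} then applies unchanged, with $\gamma=(1+\eta/8)b_0$ and $\nu=(1+\eta/8)^{-1}$.

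The main obstacle is the coercivity constant $\chi$ in the matrix-valued setting. I need the weighted coefficient Gram of the independent rows $g$ to be positive definite under the label-summed functional, with a bit-bounded rational lower bound. I would first try to show that the functional's Gram on each word-and-label coefficient position is diagonal with entries $\rho^{2|w|}$ up to the $1/2$-contraction factor. Independence of $g$ as a full-row-rank coefficient matrix $C_g$ would then give $\chi\ge\rho^{2D}\lambda_{\min}(C_gC_g^\dagger)$ times a constant. A Hadamard-type minor bound would make $\log(1/\chi)$ polynomial.
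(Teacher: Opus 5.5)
Your proposal is correct and follows essentially the same route as the paper: residual rows with $r\le s+u$ from the adjoined constant states, the label-summed coercive functional at $\rho S^{(D)}$ giving the trace cap $u/\chi$, the strict certificate of $I_u$ via $I_h+\Phi(I_{br})$ scaled by $1-\nu^2$, and the unchanged capped ellipsoid search and bisection. The differences are minor. The paper brackets the norm directly by $t/(u+t)\le B_{\mathsf P}$ with $t=\operatorname{Tr}(PW_gP^\dagger)$ rather than entrywise, and it obtains $\chi=\tfrac34\rho^2\lambda$ from the weighted Gram $W_f=C_fC_f^\dagger+\rho^2\sum_jA_j^{\rm res}W_f(A_j^{\rm res})^\dagger$, which terminates by nilpotence. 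If you use your unweighted $C_gC_g^\dagger$ instead, you must compute or bound it through this same recurrence (or an a priori denominator bound) rather than by enumerating its exponentially many word columns.
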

\begin{proof}
The residual constant is now an $r\times u$ matrix, and summing the coercive functional over all constant inputs gives $\operatorname{Tr}S+\operatorname{Tr}T\le\beta^2u/\chi$. The general joint certificate also supplies a strict certificate for $I_u$. These two changes give the inner and outer radii for the same capped search and norm bisection. Appendix~\ref{app:proof-prop:block-classical} computes the rational constants and bounds their bit lengths.
\end{proof}

The numerical extraction now retains $u$ input columns and $v$ output rows. For a strict certificate of a normalized target $\mathsf Q$, we set $m=br$, use $K=T^{1/2}$, and form the stacks $X=[KM_1;\ldots;KM_a;E]$ and $Y=[KE_1;\ldots;KE_b;P_{\mathsf Q}]$. Their sizes are $(am+u)\times h$ and $(bm+v)\times h$, and their Gram difference is $S\succ0$. With $T_0=X^\dagger X$, the matrix we compute is
\begin{equation}\label{eq:block-V}
 V=YT_0^{-1}X^\dagger=
 \begin{pmatrix}\Aco&\Bco\\\Cco&\Dco\end{pmatrix},\qquad
 \begin{array}{ll}
 \Aco:\ bm\times am,&\Bco:\ bm\times u,\\
 \Cco:\ v\times am,&\Dco:\ v\times u.
 \end{array}
\end{equation}
The same Gram argument proves $VX=Y$ and $\|V\|<1$. Thus $\Bco$ injects all input labels, $\Cco$ collects all output labels, and the coefficient recurrence preserves their cross terms. In the finite query construction, the pending and collected components consequently have $u$ and $v$ coordinates. Appendix~\ref{sec:block-queries} gives their complete stage matrices and packing.

The resulting query product has an initial matrix of size $d_-R\times u$, internal matrices of size $d_-R\times d_+R$, and a final matrix of size $v\times d_+R$. We embed their endpoints on
\[
 \ket0_{\mathsf O}\ket0_{\mathsf M}\ket\nu_{\mathsf L}\ket0_{\mathsf d}
 \quad\text{and}\quad
 \ket0_{\mathsf O}\ket0_{\mathsf M}\ket\mu_{\mathsf L}\ket0_{\mathsf d},
\]
respectively. Between those endpoints, the continuing component has $\mathsf L=0$ and carries the data dependence in the processing amplitudes. The full Julia completion retains its complementary amplitudes under the failure labels, so this prescription is valid even when the initial numerical matrix is only contractive. Appendix~\ref{app:joint-endpoints} specifies the remaining columns and all binary padding. The final coherent range test $\mathsf L<v$ retains the entire output column and rejects unused labels.

\begin{proof}[Proof of Theorem~\ref{thm:matrix-synthesis}]
We extract a contraction from a strict joint certificate using the Gram factorization of Section~\ref{sec:extract}, now with $u$ injection columns and $v$ collection rows. The equality $VX=Y$ generates every coefficient matrix simultaneously, as verified in Appendix~\ref{sec:block-queries}. For a general oracle we apply this extraction to $\mathsf Q=\mathcal W_v^{-1}\mathsf P/\beta$, whose degree weights are then canceled by the query sequence. Taking matrix elements between arbitrary unit input and output columns extends the scalar weight estimate to the joint operator norm with the same constant, so the normalization choice and $Q=O(D/\sqrt\tau)$ bound of Section~\ref{sec:general} apply.

On the row domain, we instead certify $\mathsf P/\beta$ directly. Its complementary polynomial satisfies the full $u\times u$ coefficient-correlation identities, which the compact degree reduction preserves until it reaches a constant isometry after $D$ steps. Appendix~\ref{sec:block-row} derives this factorization and its Gram bounds. For either product, we use the endpoint lists just specified and the same fresh-failure-label invariant as Theorem~\ref{thm:unitary}. This proves the joint block identity for every input data column and every oracle completion, as illustrated in Figure~\ref{fig:block-data-circuit}. A total known-gate operator error of $\epsilon$ therefore bounds the selected-block error by $\epsilon$, also with a reference; Proposition~\ref{prop:block-resources} gives the register and gate counts.
\end{proof}

\begin{figure}[t]
\centering
\begin{quantikz}[column sep=0.45cm,row sep={0.5cm,between origins}]
 \lstick{$\mathsf S$}&\qw&\gate[2]{U_A}&\qw&\gate{\cdots}&\gate[2]{U_A}&\qw&\qw\\
 \lstick{$\mathsf O:\ket0$}&\gate[3]{W_0}&\qw&\gate[3]{W_1}&\qw&\qw&\gate[3]{W_Q}&\rstick{$0$}\qw\\
 \lstick{$\mathsf L$}&\qw&\qw&\qw&\qw&\qw&\qw&\rstick{$\mu<v$}\qw\\
 \lstick{$\text{work}:\ket0$}&\qw&\qw&\qw&\qw&\qw&\qw&\rstick{$0$}\qw
\end{quantikz}
\caption{One query sequence with coherent data endpoints. The known layers $W_t$ act on the oracle, data, and processing registers. The bundled bottom wire contains $\mathsf M,\mathsf d,\mathsf E$ and the phase and clean work bits used in gate synthesis; their widths are given in Proposition~\ref{prop:block-resources}. The initial layer encodes an arbitrary supported input label $\nu<u$, and the final layer decodes $\mu<v$. The output range is retained coherently, with every other auxiliary output selected at zero. Only the supplied queries act on the signal.}
\label{fig:block-data-circuit}
\end{figure}

\paragraph{Retaining the polynomial label.}\label{sec:indexed-family}
An indexed family illustrates why it is useful to retain a quantum input label. To apply a chosen polynomial coherently while preserving that choice, we take a diagonal target with members $p_0,\ldots,p_{N_{\rm f}-1}$, represented by rational recurrences of sizes $s_\ell$ on the same domain. Allowing zero members and assuming at least one nonconstant member, we have
\[
 \mathsf P(A)=\operatorname{diag}(p_0(A),\ldots,p_{N_{\rm f}-1}(A)),\qquad
 D=\max_\ell\deg p_\ell,\qquad
 B_{\max}=\max_\ell\|p_\ell\|_{\SA,\Delta},\qquad s=\sum_\ell s_\ell,
\]
where zero members contribute no degree. The selected action is
\begin{equation}\label{eq:intro-indexed-action}
 \sum_{\ell=0}^{N_{\rm f}-1}\ket\ell\ket{\psi_\ell}
 \ \longmapsto\ \frac1\beta\sum_{\ell=0}^{N_{\rm f}-1}
                   \ket\ell p_\ell(A)\ket{\psi_\ell}.
\end{equation}
The retained label preserves the choice's coherence with the signal and any reference.

\begin{proposition}[Coherent indexed polynomial synthesis]\label{prop:indexed-family}
For rational $0<\tau,\epsilon<1$, one circuit implements Eq.~\eqref{eq:intro-indexed-action} to selected-block error $\epsilon$, with $B_{\max}<\beta\le(1+\tau)B_{\max}$. It uses $O(D/\sqrt\tau)$ queries, or $D$ on a row input, and a $\lceil\log_2N_{\rm f}\rceil$-qubit data label. Its classical and known-gate bounds are those of Proposition~\ref{prop:block-resources} with $u=v=N_{\rm f}$ and total state count $s$.
\end{proposition}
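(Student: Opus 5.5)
We plan to derive Proposition~\ref{prop:indexed-family} as a direct instance of Theorem~\ref{thm:matrix-synthesis}, with $u=v=N_{\rm f}$ and the diagonal target $\mathsf P=\operatorname{diag}(p_0,\ldots,p_{N_{\rm f}-1})$. Two things must be checked. First, the supplied family admits a rational finite-state representation of the form in Eq.~\eqref{eq:block-representation} with $s=\sum_\ell s_\ell$ states. Second, the joint norm $B_{\mathsf P}$ equals $B_{\max}$. Once both hold, the normalization interval, the $O(D/\sqrt\tau)$ and row $D$-query counts, exactness with ideal gates, and the classical and known-gate bounds of Proposition~\ref{prop:block-resources} transfer verbatim.

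For the representation, we take block-diagonal transitions $T_j=\operatorname{diag}(T_j^{(0)},\ldots,T_j^{(N_{\rm f}-1)})$. The $v\times s$ matrix $L$ has row $\ell$ equal to $\ell^{(\ell)}$ placed in block $\ell$, and the $s\times u$ matrix $R_0$ has column $\ell$ equal to $r_0^{(\ell)}$ placed in block $\ell$. Then $(LT_wR_0)_{\mu\nu}=\delta_{\mu\nu}\,\ell^{(\mu)}T_w^{(\mu)}r_0^{(\mu)}=\delta_{\mu\nu}c^{(\mu)}_w$. This is a rational representation of polynomial bit length. A zero member is represented by a zero block, or by omitting its states; it contributes nothing to $s$, and the joint tests of Appendix~\ref{app:joint-coordinates} handle its zero degree. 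The degree of $\mathsf P$ is the maximum member degree $D\ge1$, because at least one member is nonconstant.

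For the norm, we evaluate $\mathsf P(A)=\operatorname{diag}(p_\ell(A))$ at each admissible tuple, where its operator norm is $\max_\ell\|p_\ell(A)\|$. Taking suprema over $n$ and over the domain and exchanging the two maxima gives $B_{\mathsf P}=\max_\ell\|p_\ell\|_{\SA,\Delta}=B_{\max}$. The interval $B_{\mathsf P}<\beta\le(1+\tau)B_{\mathsf P}$ is therefore the stated one. The selected block $\mathcal B_{u,v}(\mathcal U_A)$ acting on $\sum_\ell\ket\ell\ket{\psi_\ell}$ is, up to $\epsilon$, $\beta^{-1}\sum_\ell\ket\ell p_\ell(A)\ket{\psi_\ell}$, which is Eq.~\eqref{eq:intro-indexed-action}. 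This holds uniformly with a reference system, since Theorem~\ref{thm:matrix-synthesis} guarantees exactly this uniformity. The data label uses $q_L=\lceil\log_2N_{\rm f}\rceil$ qubits.

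The main point needing care is the bookkeeping rather than the mathematics. The resource statement must be expressed through the total state count $s$, and we must make sure the residual dimension bound $r\le s+u$ from the joint construction does not introduce a hidden dependence beyond Proposition~\ref{prop:block-resources}. We would also remark that the theorem's query count depends on $\max_\ell\deg p_\ell$ rather than on a sum over members, because one common certificate and one query sequence serve all labels. Lower-degree members are padded automatically through the identity branch and the degree weights $w_d$.
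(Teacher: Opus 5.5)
Your proposal is correct and follows the paper's proof essentially verbatim. You combine the member recurrences block-diagonally to get $(P_w)_{\ell m}=\delta_{\ell m}c_{\ell,w}$ with total state count $s$. You identify the joint norm with $B_{\max}$ because a finite maximum over the diagonal blocks commutes with the supremum over the common domain. Theorem~\ref{thm:matrix-synthesis} then delivers the normalization, query counts, data-label width, and resource bounds.
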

\begin{proof}
For representations $(\ell_\ell,T_{\ell,j},r_\ell)$, we put $T_{\ell,j}$ in diagonal state block $\ell$ of each transition. Output row $\ell$ contains $\ell_\ell$ in that block, and input column $\ell$ contains $r_\ell$ there, with all other entries zero. Thus $(P_w)_{\ell m}=\delta_{\ell m}c_{\ell,w}$. Its degree is the largest member degree, and its joint norm is $B_{\max}$ because a finite maximum commutes with the supremum over the common domain. Theorem~\ref{thm:matrix-synthesis} gives the result.
\end{proof}
The largest degree determines the query count, whereas the number of members contributes to the classical input and known-gate costs. The next application uses output labels as an environment. Its target is a column rather than a diagonal matrix, so the component squares add in its normalization.

\subsection{Operations on a coherent Kraus implementation}\label{sec:kraus}\label{sec:Kraus-gauge}
A coherent environment lets us combine Kraus histories before selecting an outcome, and hence design a successful operation with a desired action. One example is a state-preserving branch obtained by adding two histories whose sum is proportional to identity. For two Kraus operators, the relevant sum and difference are
\[
 F_+(K)=K_1K_2+K_2K_1,\qquad F_-(K)=K_1K_2-K_2K_1
\]
on separate output labels. Example~\ref{ex:Kraus-histories} exhibits a channel for which the sum gives a state-preserving successful branch. We construct these and more general polynomial operations by controlling their joint normalization.

Let a channel on the signal have a specified Kraus representation~\cite{watrous2018}
\begin{equation}\label{eq:Kraus-channel}
 \mathcal E(\rho)=\sum_{a=1}^kK_a\rho K_a^\dagger,
 \qquad \sum_aK_a^\dagger K_a=I_n.
\end{equation}
We are supplied a coherent implementation with a fixed environment basis. Its active column is the corresponding Stinespring isometry~\cite{stinespring1955}
\begin{equation}\label{eq:Kraus-column}
 \Delta(K)=\begin{pmatrix}K_1\\\vdots\\K_k\end{pmatrix},\qquad
 U_K(\ket i_{\mathsf O}\ket\psi_{\mathsf S})
       =\sum_{a=1}^k\ket{i+a-1}_{\mathsf O}K_a\ket\psi_{\mathsf S}.
\end{equation}
Our supplied resource is the complete unitary $U_K$ of Section~\ref{sec:unitary}, with $a=k$, $b=1$, $2^{q_o}\ge i+k+1$, and an identity sector that provides a phase reference for combining different degrees. The returned environment remains coherent in its specified basis, and the column structure uses the general query construction of Section~\ref{sec:general}. Appendix~\ref{app:Kraus-oracle} gives the full matrix, and Appendix~\ref{app:loaders} constructs the idle sector from a coherent selector or a controllable implementation circuit; the latter requires the additional control access discussed by Ara\'ujo et al.~\cite{araujo2014}.

The desired operation is specified by scalar-coefficient polynomials arranged in one column:
\begin{equation}\label{eq:Kraus-target-column}
 F_b(K)=\sum_w c_{b,w}K_w,\qquad
 \mathsf F(K)=\begin{pmatrix}F_1(K)\\\vdots\\F_v(K)\end{pmatrix}.
\end{equation}
We combine separate representations $(\ell_b,T_{b,j},r_b)$ using block-diagonal transitions, stacked final columns $r_b$, and output row $b$ supported by $\ell_b$ in its state block. This gives Eq.~\eqref{eq:block-representation} with $s=\sum_b s_b$ and $u=1$, so the initially zero data register becomes the output environment.

For this input domain, the joint certificate enforces
\begin{equation}\label{eq:Kraus-joint-bound}
 \sum_bF_b(A)^\dagger F_b(A)\preceq\beta^2I_n
 \qquad\text{whenever }\sum_aA_a^\dagger A_a\preceq I_n,
\end{equation}
at every matrix size. Thus $B_{\mathsf F}$ is the supremum of $\|\mathsf F(A)\|$ on the whole column ball, whose isometric boundary contains the physical Kraus column. The inequality for the sum over all output labels is the trace-nonincreasing condition for the successful operation; it is therefore the quantity controlled by the joint certificate.

With $T=\mathsf F(K)/\beta$, the selected output before discarding its label is
\[
 T\rho T^\dagger
 =\frac1{\beta^2}\sum_{b,c=1}^v\ket{b-1}\!\bra{c-1}\otimes
                              F_b(K)\rho F_c(K)^\dagger.
\]
The range test on the output label preserves these coherences. Tracing the label after success gives the desired completely positive operation,
\begin{equation}\label{eq:Kraus-success}
 \mathcal M_\beta(\rho)=\frac1{\beta^2}
                      \sum_{b=1}^v F_b(K)\rho F_b(K)^\dagger,
 \qquad p_\beta(\rho)=\operatorname{Tr}(T^\dagger T\rho).
\end{equation}
The joint bound makes $\mathcal M_\beta$ trace-nonincreasing, while the input-dependent quadratic form $p_\beta(\rho)$ determines its success probability and may vanish. We measure the error of the operation before conditioning by the diamond norm,
\[
 \|\Phi\|_\diamond=\sup_{m\ge1}\sup_{X\ne0}
 \frac{\|(\Phi\otimes\operatorname{id}_m)(X)\|_1}{\|X\|_1}.
\]
This is the completely bounded trace norm; the reference dimension can be restricted to the input dimension~\cite{watrous2018}.

\begin{corollary}[Polynomial Kraus synthesis]\label{cor:Kraus-operation}
Given the nonzero rational finite-state column in Eq.~\eqref{eq:Kraus-target-column}, complete coherent $U_K$ access, and rational $0<\tau,\eta<1$, the algorithm computes $B_{\mathsf F}<\beta\le(1+\tau)B_{\mathsf F}$ and implements Eq.~\eqref{eq:Kraus-success}. The operation is exact with ideal known gates; selected-column error $\eta$ gives diamond-norm error at most $2\eta$. For degree $D\ge1$, the column-input construction uses $O(D/\sqrt\tau)$ queries and the resources of Proposition~\ref{prop:block-resources}. Constants are handled by Lemma~\ref{lem:block-constant}.
\end{corollary}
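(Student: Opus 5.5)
The corollary follows from Theorem~\ref{thm:matrix-synthesis}, specialized to the column domain, plus a short channel-level error estimate.

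\emph{Setup.} Put the Kraus oracle in the joint-oracle form of Section~\ref{sec:unitary}: $a=k$, $b=1$, $F_j=e_j$ (the $j$th coordinate column), and the idle sector of Eq.~\eqref{eq:Kraus-column}. The $F_j$ are linearly independent and rational, and $2^{q_o}\ge i+k+1=i+a+b$, so every standing assumption holds. The admissible domain $\|\Delta(A)\|\le1$ is the column ball $\sum_aA_a^\dagger A_a\preceq I$, which contains the physical tuple $K$. The block-diagonal combination of the $v$ representations gives Eq.~\eqref{eq:block-representation} with $u=1$.

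\emph{Synthesis.} For $D\ge1$, Theorem~\ref{thm:matrix-synthesis} supplies a rational $\beta$ with $B_{\mathsf F}<\beta\le(1+\tau)B_{\mathsf F}$. It uses $O(D/\sqrt\tau)$ queries, since the column is not a row input, and its resources are those of Proposition~\ref{prop:block-resources}. Its circuit has selected block $\mathcal B_{1,v}(\mathcal U_A)$ within $\eta$ of $T=\mathsf F(K)/\beta$, and equal to $T$ with ideal known gates. That selected column maps $\ket0_{\mathsf L}\ket\psi$ to $\sum_b\ket{b-1}_{\mathsf L}F_b(K)\ket\psi/\beta$. The operation is the map $\rho\mapsto\operatorname{Tr}_{\mathsf L}(\widehat T\rho\widehat T^\dagger)$ conditioned on the retained-range outcome, where $\widehat T$ is the implemented selected column.

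For ideal $\widehat T=T$, expanding the partial trace gives Eq.~\eqref{eq:Kraus-success}. Complete positivity is immediate from the Kraus form. Trace nonincrease follows from $T^\dagger T=\sum_bF_b(K)^\dagger F_b(K)/\beta^2\preceq I$, which is Eq.~\eqref{eq:Kraus-joint-bound} evaluated at the admissible $K$ since $B_{\mathsf F}<\beta$. Alternatively, $\widehat T$ is a compression of a unitary, so it is a contraction. The success probability is $\operatorname{Tr}(\widehat T^\dagger\widehat T\rho)$.

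\emph{Diamond error (the only new estimate).} Both the implemented map and $\mathcal M_\beta$ are partial traces of the maps $X\mapsto\widehat TX\widehat T^\dagger$ and $X\mapsto TXT^\dagger$. The partial trace is trace-norm contractive, also after tensoring with $\operatorname{id}_m$, so it suffices to bound the difference of these two conjugations. Write
\[
\widehat TX\widehat T^\dagger-TXT^\dagger=(\widehat T-T)X\widehat T^\dagger+TX(\widehat T-T)^\dagger .
\]
Apply Hölder's inequality $\|AXB\|_1\le\|A\|\,\|X\|_1\|B\|$ to each term, with $\|\widehat T\|,\|T\|\le1$ and $\|\widehat T-T\|\le\eta$. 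Tensoring with $I_m$ leaves every operator norm unchanged, so the bound $2\eta$ holds uniformly in $m$, giving diamond-norm error at most $2\eta$.

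Constants, where $D=0$, are delegated to Lemma~\ref{lem:block-constant} as stated. No step is a real obstacle. The only point to check is that the column layout meets the independence and dimension assumptions, which was verified in the setup.
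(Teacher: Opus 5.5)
Your proposal is correct and follows the paper's proof. You invoke Theorem~\ref{thm:matrix-synthesis} on the column layout with $u=1$, trace the retained data label to obtain Eq.~\eqref{eq:Kraus-success}, and bound the diamond error through the same decomposition $(\widehat T-T)X\widehat T^\dagger+TX(\widehat T-T)^\dagger$, using H\"older's inequality and trace-norm contractivity of the partial trace. Your explicit check that $F_j=e_j$ meets the independence and dimension hypotheses, and your derivation of trace nonincrease from Eq.~\eqref{eq:Kraus-joint-bound}, are details the paper leaves implicit. One wording point: the $2\eta$ bound concerns the unnormalized successful branch, not the conditioned output, so avoid calling the map ``conditioned''; the paper bounds the normalized version separately by $4\eta/p_{\min}$.
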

\begin{proof}
Theorem~\ref{thm:matrix-synthesis} supplies a contractive selected column $T$ under the general-input query bound, since the Kraus oracle is a column. We retain the valid data-label range and zero on the oracle, processing, dilation, failure, phase, and clean work outputs; tracing the data label then gives Eq.~\eqref{eq:Kraus-success}.

Let $\widetilde T$ be the implemented selected column, with $\|\widetilde T-T\|\le\eta$. Both columns are contractions, and their pre-trace difference on any operator $X$ is
\[
 (\widetilde T-T)X\widetilde T^\dagger
       +TX(\widetilde T-T)^\dagger.
\]
After tensoring a reference, each term has trace norm at most $\eta\|X\|_1$. Partial trace is a channel, which proves the $2\eta$ diamond bound. An oracle error $\eta_O$ in the entire designated block $\operatorname{diag}(I_{in},\Delta(K))$, including its identity sector, and total known-gate error $\epsilon_g$ give $\eta\le Q\eta_O+\epsilon_g$. If the ideal success probability is at least $p_{\min}>2\eta$, normalizing the successful outputs gives trace-norm error at most $4\eta/p_{\min}$.
\end{proof}
An identically zero coefficient column uses an always-failed zero-query outcome. To obtain a trace-preserving channel with a retained outcome flag, we replace failure by a fixed signal state; Proposition~\ref{prop:Kraus-erasure} gives the additional registers and error bound for this postprocessing, which uses no further oracle call.

\begin{example}[Symmetric and antisymmetric Kraus histories]\label{ex:Kraus-histories}
For the degree-two targets $F_+,F_-$ introduced above, we can evaluate this normalization explicitly.
We assign the two combinations of ordered histories to orthogonal labels $\ket+$ and $\ket-$. Writing $X=A_1A_2$ and $Y=A_2A_1$ for a column contraction, we find
\[
 \begin{aligned}
 F_+(A)^\dagger F_+(A)+F_-(A)^\dagger F_-(A)
 &=2(X^\dagger X+Y^\dagger Y)\\
 &\preceq2\sum_jA_j^\dagger
                    \left(\sum_iA_i^\dagger A_i\right)A_j
 \preceq2I_n.
 \end{aligned}
\]
The cross terms cancel in the joint norm, although each labeled output retains interference between the histories. The bound is attained by the qubit Kraus operators $K_1=\ket0\!\bra1$ and $K_2=\ket1\!\bra0$, for which $F_+=I$ and $F_-=Z$, so $B_{\mathsf F}=\sqrt2$ and the coherent selected output is
\[
 \ket\psi\longmapsto
 \frac1\beta\bigl(\ket+\ket\psi+\ket- Z\ket\psi\bigr).
\]
Discarding the label gives $(\rho+Z\rho Z)/\beta^2$ with probability $2/\beta^2$. Selecting the $+$ label instead has probability $1/\beta^2$ and conditional output $\rho$, also for inputs entangled with a reference. The theorem chooses a rational $\beta>\sqrt2$ within the stated relative margin; the boundary value $\sqrt2$ gives probability-one success for the joint output in this example.

These targets also have a direct two-call construction. We retain both returned environments and order their four histories by the products $K_1K_2,K_2K_1,K_1K_1,K_2K_2$. We apply the selected environment matrix
\[
 C=\frac1\beta\begin{pmatrix}1&1&0&0\\1&-1&0&0\end{pmatrix},
 \qquad CC^\dagger=\frac2{\beta^2}I_2\preceq I_2.
\]
Its Julia completion gives the two outputs using clean-input Stinespring access and fresh environments. This fixed-length construction illustrates the operation directly; the joint-oracle compiler also accepts finite-state targets that combine different history lengths.
\end{example}

\begin{remark}[When the successful operation is a channel]
If an exact realization has $\beta=1$ and $\sum_bF_b(K)^\dagger F_b(K)=I_n$, its selected column is an isometry and success has probability one. Tracing its output label then produces the target channel. For an isometric target column at $\beta>1$, success is $\beta^{-2}$; tracing every auxiliary output also includes the failed Kraus operators. Appendix~\ref{app:joint-resources} gives the coherent isometry error bound for the rectangular circuit.
\end{remark}

The operation is specified relative to the coherent implementation. Kraus lists for the same channel, padded to a common length, obey $\widetilde K_b=\sum_aV_{ba}K_a$ for an environment unitary $V$~\cite[Corollary~2.23]{watrous2018}. Tracing the environment removes this change of basis, while evaluating fixed polynomials in its amplitudes generally retains it. In the history example, the change $(K_1,K_2)\mapsto((K_1+K_2)/\sqrt2,(K_1-K_2)/\sqrt2)$ gives $F_+=0$ and $F_-=-Z$ for the displayed qubit channel, so it changes the successful operation. This dependence on implementation is the additional information studied in coherent channel control~\cite{abbott2020} and coherent-trajectory models~\cite{chiribellakristjansson2019}.

The coherent identity sector also fixes a phase reference when different degrees are combined. For example, the one-Kraus lists $K=I_n$ and $K=-I_n$ represent the same identity channel, but $(I_n+K)/2$ is respectively $I_n$ and zero. A classical instrument specifies only the maps $K_a\rho K_a^\dagger$, which are unchanged by independent Kraus phases. The algorithm therefore receives the full coherent unitary and its environment basis as input. Example~\ref{ex:Kraus-filter} in Appendix~\ref{app:Kraus-idle-example} gives a two-output filter using the interference with the identity sector.

The direct two-call construction extends to $d$ clean-input Stinespring calls by retaining fresh environments and applying a supplied contraction $C$ to their histories. Appendix~\ref{app:Kraus-histories} derives the condition $C^\dagger C\preceq I$ and the cross-history operation, with $d q_{\rm env}$ environment qubits and $vk^d$ explicit matrix entries. The polynomial compiler uses a compact recurrence and the coherent identity sector to combine different lengths, with normalization determined by the associated column ball. These different input constraints account for the different normalization problems.

\subsection{Channel-only access and causal controllers}\label{sec:channel-only}
For channel-only access, the object to be specified is a controller interleaved with the channel calls. One source of such a specification is optimization over causal Choi matrices~\cite{chiribellaebler2016}. As a supplementary construction, we give the classical step that converts supplied rational controller data into the processing gates and retained memory. Such networks are described by quantum strategies~\cite{gutoskiwatrous2007} and combs~\cite{chiribella2009}, extending the single-slot supermap setting~\cite{chiribella2008}. If $\mathcal E_1,\ldots,\mathcal E_Q$ act on the signal and $\mathcal V_j(X)=U_jXU_j^\dagger$ are known processing channels on signal and memory, the required mapping is
\begin{equation}\label{eq:intro-comb-map}
 \begin{aligned}
 \mathfrak C[\mathcal E_1,\ldots,\mathcal E_Q](\rho)
  =\operatorname{Tr}_{\mathsf M_{\rm c}}\bigl[
  \bigl(\mathcal V_{Q+1}\circ(\mathcal E_Q\otimes\operatorname{id})
       \circ\mathcal V_Q\circ\cdots
  {}\circ(\mathcal E_1\otimes\operatorname{id})\circ\mathcal V_1\bigr)
       (\rho\otimes\ket0\!\bra0_{\mathsf M_{\rm c}})\bigr].
 \end{aligned}
\end{equation}
The identities act on $\mathsf M_{\rm c}$, which starts at zero and is discarded at the end. This formulation specifies a function of the channel slots, because their environments are traced by the channel calls themselves. The known processing gates may act on the signal as well as the retained memory.

To compute these gates, we receive exact rational Choi matrices for the controller prefixes. Positivity specifies complete positivity through the Choi correspondence~\cite{choi1975}; a partial-trace identity at each prefix specifies causal normalization. The realization theorem for combs~\cite{chiribella2009} and its prefix-rank construction~\cite{bisio2010} supply the mathematical basis. Appendix~\ref{app:comb} adapts them to numerical Gram factors, full square unitaries, and finite-bit gate synthesis. Proposition~\ref{prop:comb-compilation} gives a polynomial bound in the explicit Choi input length and $\log(1/\epsilon)$, including an optional successful outcome with a supplied deterministic completion. A dense terminal Choi matrix has $n^{4(Q+1)}$ entries, so this bound is measured in that explicit description. The polynomial algorithm supplies the compact-coefficient construction of this paper; compilation from the supplied Choi matrices describes the additional data and operations needed when the target is instead a function of channel slots alone.

\section{Discussion}\label{sec:discussion}
We have constructed multivariable polynomial transformations by connecting a compact coefficient description to a finite Schur--Agler certificate and then to a quantum circuit. Its completeness in the target's residual space connects the classical input size to the norm calculation and gate parameters. The general sequence and row factorization then show how the input layout affects the query count.

A natural next question is which other layouts permit synthesis at the degree. The row factorization removes the normalization-margin dependence from the query count, while our general construction costs $O(D/\sqrt\tau)$ queries. Understanding when this dependence can be reduced will require a suitable invariant for the coefficients or a different way to combine their products. Exact boundary discovery is another issue: the finite-history criterion gives algebraic-input decidability at a prescribed query count, while compact certificates at the norm boundary would extend the efficient synthesis algorithm. We would also like to optimize the coefficient representation~\cite{berstelreutenauer2011}, positive certificate and processing memory together, since minimizing any one of them need not minimize the others.

For matrix-function applications, synthesis should be paired with approximations that are compact and controlled in the same joint norm. Free holomorphic approximation~\cite{aglermccarthy2015} and noncommutative function theory~\cite{kaliuzhnyivinnikov2014} provide the setting; Appendix~\ref{sec:applications} supplies one computable example with a tail and success bound. In single-variable QSP, constrained minimax approximation explicitly couples approximation quality to the global implementation bound~\cite{dong2026minimax}. An analogous design process on a joint matrix domain could help identify useful multivariable filters, with loading and input-state costs included in their analysis.

The joint-output construction makes polynomial coefficients available as design variables for heralded filters and instruments. We can ask for a desired family of Kraus polynomials, impose a common trace bound, and use the certificate to calculate the corresponding gates. For quantum-information applications, the next steps are to find families with useful success probabilities and to understand their dependence on the chosen coherent implementation. Representation-independent families would connect this approach more directly to channel-level tasks, while robustness must distinguish fixed environment coordinates from the aligned dilations of Stinespring continuity~\cite{kretschmann2008}. For channel-only controllers, causal-network optimization~\cite{chiribellaebler2016} can supply the Choi data for the separate compiler in Appendix~\ref{app:comb}. Compact descriptions preserving the intermediate causal identities, memory tradeoffs~\cite{bisio2012memory}, and stability under adaptive tests~\cite{gutoski2012} are the immediate questions for that route; other higher-order maps require their own realization conditions~\cite{bisioperinotti2019}.

\section*{Note added}
\phantomsection
\addcontentsline{toc}{section}{Note added}
In concurrent work, Low~\cite{low2026} develops multivariate quantum signal processing for noncommuting matrices through Schur--Agler realizations and analytic transfer-function design. Our emphasis is on complete synthesis from compact rational coefficient descriptions under joint block access, with near-optimal uniform normalization and polynomial bit complexity. We also treat quantum channel transformations, including polynomial operations on coherent Kraus implementations and gate synthesis from causal Choi data. Jing, Wang, Yu, and Zhang~\cite{jing2026} give multivariate polynomial transformations with optimal per-variable query counts and extend their construction to commuting unitaries. Jiang and An~\cite{jiangan2026} construct multivariable quantum eigenvalue transformations for commuting matrices using contour-integral and Fourier-transform methods.

\Needspace{12\baselineskip}
\section*{AI disclosure}
\phantomsection
\addcontentsline{toc}{section}{AI disclosure}
The central ideas and main components of this work were conceived by the authors. In developing the results and preparing the manuscript, the authors engaged in multi-round discussions with OpenAI's models, beginning with GPT-5.3 Pro and continuing through GPT-6 Astra. These interactions covered methodological details, alternative formulations and derivations, proof development and checking, searches for gaps and counterexamples, literature searches, diagnostic code and numerical checks, and the organization, exposition, and revision of the manuscript. The authors directed the research and made the final decisions about the arguments and presentation. They take full responsibility for the results, proofs, citations, and final text.

\section*{Acknowledgments}
\phantomsection
\addcontentsline{toc}{section}{Acknowledgments}
The authors thank Zhong-Xia Shang and Dong An for helpful discussion. This work is supported by the National Natural Science Foundation of China Grant No.~12361161602 and NSAF Grant No.~U2330201, the Quantum Science and Technology--National Science and Technology Major Project No.~2023ZD0300200, Beijing Natural Science Foundation Z250004, and Beijing Science and Technology Planning Project Grant No.~Z25110100810000. Y.~Wu acknowledges support from NSFC Grants No.~62501060, No.~62461160263, and No.~62371050.

\titleformat{\section}{\normalfont\Large\bfseries}{\thesection}{1em}{}
\phantomsection
\addcontentsline{toc}{section}{References}
\bibliographystyle{alpha}
\bibliography{ref}

\medskip

\appendix
\hypersetup{bookmarksdepth=1}
\titleformat{\section}{\normalfont\Large\bfseries}{Appendix~\thesection.}{0.6em}{}
\addtocontents{toc}{\protect\setcounter{tocdepth}{1}}
\addtocontents{toc}{\protect\vspace{6pt}\protect\noindent\protect\textbf{Appendices}\protect\par}

\section{Finite-bit matrix computation}\label{app:arithmetic}
The classical synthesis algorithm must make exact decisions about coefficient identities and matrix supports, and it must approximate the numerical gate entries with certified error. We first give the quantitative semidefinite search and norm estimation, then the rational coefficient and support calculations on which the constructions depend, and finally the matrix-function evaluators. Rank and feasibility are decided from exact rational data throughout. Gate compilation uses the certified intervals and overlapping tolerance tests of Appendix~\ref{app:gates}.

\subsection{Strict certificates and finite norm computation}\label{app:classical-margins}\label{app:classical-norm-search}
The feasibility routine needs an outer bound on its positive matrices and an inner ball that is guaranteed by the strict norm promise. A scaled coefficient Gram provides the outer bound; a uniform estimate on the residual polynomials gives a positive definite certificate of the constant polynomial and hence the inner ball. Both estimates can be computed in residual coordinates, without constructing the exponentially large word-evaluation tuple used to prove completeness.

A rational left inverse of the $ab\times k$ matrix whose columns are the entries of $F_j$ gives constants $c_j^{\Delta}\ge1$ such that $\|A_j\|\le c_j^{\Delta}$ whenever $\|\Delta(A)\|\le1$. Indeed, the same entry formula expresses $A_j$ as a scalar linear combination of the $n\times n$ blocks of $\Delta(A)$, each of norm at most one. The sum of absolute real and imaginary parts of those scalar coefficients is a rational admissible bound. We also choose a positive rational $\rho\le1$ satisfying
\[
 \rho^2\left\|\sum_jF_j^\dagger F_j\right\|\le\tfrac14;
\]
a rational entrywise upper bound on the matrix norm suffices for this choice.

The weighted coefficient Gram of the residual list is computed by the finite recurrence
\begin{equation}\label{eq:Stein}
 W_f=a_0a_0^\dagger+
          \rho^2\sum_j A_j^{\rm res}W_f(A_j^{\rm res})^\dagger.
\end{equation}
Iteration from zero terminates after at most $D$ updates, because all length-$D$ residual transition products vanish. If $f_{\ell,w}$ denotes the scalar coefficient of $x_w$ in $f_\ell$, then its entries are
$[W_f]_{\ell t}=\sum_w\rho^{2|w|}f_{\ell,w}\overline{f_{t,w}}$.
The independence of the residuals makes $W_f$ positive definite. The Gram of the list $g$ is the explicit block matrix
\begin{equation}\label{eq:Wg}
 W_g=\begin{pmatrix}
 1&0&\cdots&0\\
 0&\rho^2W_f&&0\\
 \vdots&&\ddots&\\
 0&0&&\rho^2W_f
 \end{pmatrix},
\end{equation}
with $k$ copies of the lower diagonal block, so its order is $h=1+kr$. There are no off-diagonal blocks because different initial letters produce different words.

We compute a rational spectral lower bound
\begin{equation}\label{eq:chi}
 \lambda=\left(1+\sum_{u,v}
       (|\operatorname{Re}(W_f^{-1})_{uv}|+
        |\operatorname{Im}(W_f^{-1})_{uv}|)\right)^{-1},
 \qquad \chi=\tfrac34\rho^2\lambda.
\end{equation}
Here $0<\lambda\le\min\{1,\lambda_{\min}(W_f)\}$. At the finite tuple used in Eq.~\eqref{eq:coercive}, $I-\Delta^\dagger\Delta\succeq3I/4$. Each polynomial $h_{\alpha,\ell}$ has degree at most $D-1$, so its coefficient norm on the empty-word input is evaluated exactly by the truncated tuple. The first square term is bounded below by the smallest eigenvalue of $W_g$, and the query-defect term by $3\lambda/4$. Thus the chosen $\chi$ verifies Eq.~\eqref{eq:coercive}. In particular, every certificate of $\beta^2-p^*p$ obeys
\begin{equation}\label{eq:trace-cap}
 \tr S+\tr T\le\beta^2/\chi,
\end{equation}
because its value on the empty-word input is at most $\beta^2$.

A uniform bound on the residual polynomials can also be computed from their recurrence. Replacing numerical matrix norms by rational majorants, we choose
\begin{equation}\label{eq:Uf}
 U_f\ge\|a_0\|\sum_{d=0}^{D-1}
               \left(\sum_jc_j^{\Delta}\|A_j^{\rm res}\|\right)^d.
\end{equation}
For every unit signal vector, the square root of $\sum_\ell\|f_\ell(A)\ket\psi\|^2$ is at most $U_f$, by the Euclidean triangle inequality applied to the coefficient expansion. Consequently $\sum_\ell f_\ell(A)^\dagger f_\ell(A)\preceq U_f^2I_n$. This yields a rational bound
\begin{equation}\label{eq:sigma}
 M_0=1+\left(b+\sum_j(c_j^{\Delta})^2\right)U_f^2,
 \qquad \sigma_0=(2M_0)^{-1}
\end{equation}
on the positive expression represented by $I_h+\Phi(I_{br})$. Its square term is bounded by $I_n+(\sum_j(c_j^{\Delta})^2)U_f^2I_n$, and its query-defect term lies between zero and $bU_f^2I_n$. Therefore the constant polynomial minus $\sigma_0$ times this expression is at least $I_n/2$ at every admissible tuple. Applying the generalized certificate of Theorem~\ref{thm:certificate} to that difference gives the following quantitative consequence. We prove its constants explicitly, since the classical stopping bound depends on their encoding lengths.

\begin{lemma}[Explicit strict margins]\label{lem:margin}
The constant polynomial $1$ has a certificate in the same coordinates with $S\succeq\sigma_0I_h$ and $T\succeq\sigma_0I_{br}$. For real $\gamma>0$ and $0\le\nu<1$, if
$\|p/\gamma\|_{\SA,\Delta}\le\nu$, then $1-p^*p/\gamma^2$
has a certificate with
\begin{equation}\label{eq:mu}
 S\succeq\mu_*I_h,\qquad T\succeq\mu_*I_{br},
 \qquad \mu_*=(1-\nu^2)\sigma_0,
 \qquad \tr S+\tr T\le1/\chi.
\end{equation}
When $\gamma>0$ and $0\le\nu<1$ are rational input parameters,
all margin constants above can be computed rationally. Their logarithmic
magnitudes and inverse magnitudes are polynomially bounded in the rational
input lengths and $\log(1/(1-\nu))$.
\end{lemma}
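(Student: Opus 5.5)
The plan follows the text preceding the lemma in three steps.

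\emph{Strict certificate for the constant.} Let $\Psi$ be the hereditary expression represented by $I_h+\Phi(I_{br})$. By the estimates before Eq.~\eqref{eq:sigma}, $0\preceq\Psi(A)\preceq M_0I_n$ at every admissible tuple. The square part is at most $(1+\sum_j(c_j^\Delta)^2U_f^2)I_n$, since $\sum_\ell\|f_\ell(A)\psi\|^2\le U_f^2$ and $\|A_j\|\le c_j^\Delta$. The defect part lies in $[0,bU_f^2I_n]$. Hence $1-2\sigma_0\Psi\succeq 0$, and indeed it is at least $I_n/2$.

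The hereditary polynomial with coefficient matrix $E^\dagger E-2\sigma_0(I_h+\Phi(I_{br}))$ is therefore positive on the domain. Theorem~\ref{thm:certificate}, applied to this $Q_0$, gives $S',T'\succeq0$ with $E^\dagger E-2\sigma_0(I_h+\Phi(I_{br}))=S'+\Phi(T')$. Rearranging and using linearity of $\Phi$,
\[
E^\dagger E=(S'+2\sigma_0I_h)+\Phi(T'+2\sigma_0I_{br}).
\]
This is a certificate for $1$ with $S,T\succeq2\sigma_0 I\succeq\sigma_0I$. I would use the factor $2$ only as slack, or simply use $\sigma_0$ directly: $1-\sigma_0\Psi\succeq I_n/2\succeq0$ already suffices.

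\emph{Strict certificate for the target.} Since $\|p/\gamma\|_{\SA,\Delta}\le\nu$, the polynomial $\nu^2-p^*p/\gamma^2$ is positive on the domain. Theorem~\ref{thm:certificate} with $Q_0=\nu^2E^\dagger E-P^\dagger P/\gamma^2$ gives a PSD pair $(S_1,T_1)$. Adding $(1-\nu^2)$ times the constant certificate gives a certificate of $1-p^*p/\gamma^2$ with both matrices $\succeq(1-\nu^2)\sigma_0 I=\mu_*I$.

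The trace cap is Eq.~\eqref{eq:trace-cap} with $\beta=1$. Applying the functional $\ell_\rho$ from Eq.~\eqref{eq:coercive} to the identity gives $\chi(\tr S+\tr T)\le\ell_\rho(1-p^*p/\gamma^2)\le1$. The last inequality holds because $\ell_\rho$ is evaluation against a unit vector at an admissible tuple. I need to confirm that the chosen $\chi$ from Eq.~\eqref{eq:chi} verifies Eq.~\eqref{eq:coercive}; the paragraph before the lemma sketches this.

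\emph{Rationality and bit bounds.} This is the main bookkeeping obstacle. Every constant is an explicit rational expression in the residual data:
\begin{itemize}
\item $c_j^\Delta$ comes from a rational left inverse.
\item $\rho$ comes from an entrywise majorant.
\item $W_f$ comes from at most $D$ Stein iterations with rational data.
\item $\lambda$ comes from the entrywise $\ell^1$ norm of $W_f^{-1}$.
\item $U_f$ uses rational majorants of the operator norms, for example Frobenius bounds rounded up, and a finite geometric sum with exponent below $D\le s$.
\item $M_0$ and $\sigma_0$ follow directly.
\end{itemize}

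Lemma~\ref{lem:residual} gives polynomial bit lengths for the residual arrays. Bareiss-type minor bounds control $W_f^{-1}$. The products in $U_f$ have exponent at most $D-1$, so $\log U_f$ is polynomial. Hence $\log(1/\sigma_0)$, $\log(1/\chi)$ and the related quantities are polynomial.

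Finally, $\log(1/\mu_*)=\log(1/\sigma_0)+\log(1/(1-\nu^2))$. Since $1-\nu^2\ge1-\nu$, this is polynomial in the input length and $\log(1/(1-\nu))$, and the encoding of the rational $\nu$ contributes only its bit length.
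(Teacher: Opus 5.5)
Your proposal is correct and follows the paper's proof essentially step for step: certify the positive hereditary polynomial $E^\dagger E-\sigma_0(I_h+\Phi(I_{br}))$ via Theorem~\ref{thm:certificate}, move $\sigma_0 I$ back into both Gram matrices, add $(1-\nu^2)$ times this constant certificate to a certificate of $\nu^2-p^*p/\gamma^2$, bound the trace with the coercive functional, and control bit lengths through the finite Stein recurrence and minor bounds. One small slip does not affect the argument: with the factor $2$, $1-2\sigma_0\Psi$ is only $\succeq0$, while the bound $\succeq I_n/2$ holds for $1-\sigma_0\Psi$; since Theorem~\ref{thm:certificate} needs only nonnegativity, either version works.
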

\begin{proof}
The positive difference has coefficient matrix
\[
 Q_c=E^\dagger E-\sigma_0\bigl(I_h+\Phi(I_{br})\bigr).
\]
The uniform estimate above proves $q_{Q_c}(A)\succeq I_n/2$ at every admissible tuple. The general hereditary assertion of Theorem~\ref{thm:certificate} therefore gives $Q_c=S_c+\Phi(T_c)$ with $S_c,T_c\succeq0$. Rearranging yields the constant certificate
\[
 E^\dagger E=(S_c+\sigma_0I_h)
                 +\Phi(T_c+\sigma_0I_{br}),
\]
whose two matrices have the required positive margins. The hereditary polynomial with coefficient matrix $\nu^2E^\dagger E-P^\dagger P/\gamma^2$ is positive on the domain by the norm promise, so the same theorem supplies its certificate. Adding $1-\nu^2$ times the displayed constant certificate gives the exact coefficient matrix $E^\dagger E-P^\dagger P/\gamma^2$ with the lower bounds in Eq.~\eqref{eq:mu}, and Eq.~\eqref{eq:trace-cap} bounds its trace. The constants have controlled bit length because Eq.~\eqref{eq:Stein} is a recurrence of length at most $D$ on rational matrices of order $r$: common denominators bound its entries, while inverses and coordinate solutions are ratios of rational minors. The finitely many sums and products in Eqs.~\eqref{eq:Uf}--\eqref{eq:sigma} also have polynomial logarithmic size, so all spectral bounds are computable from the rational residual data and the supplied strict margin with the asserted encoding lengths.
\end{proof}

We now eliminate $S$ from Eq.~\eqref{eq:certificate} for $p/\gamma$ and search over the independent real entries of $T$. There are $d_0=(br)^2$ real coordinates, counting a complex off-diagonal entry by its real and imaginary parts. The search region is
\begin{equation}\label{eq:search-region}
 \begin{aligned}
 T&\succeq\mu_*I_{br}/4,\\
 E^\dagger E-P^\dagger P/\gamma^2-\Phi(T)&\succeq\mu_*I_h/4,\\
 |\operatorname{Re}T_{uv}|,\ |\operatorname{Im}T_{uv}|&\le U_0,
 \qquad U_0=1+2/\chi.
 \end{aligned}
\end{equation}
The rational number $L_\Phi=1+\sum_\alpha\|E_\alpha\|_F^2+
\sum_\mu\|M_\mu\|_F^2$ is a norm majorant for the affine map. A perturbation of Euclidean coordinate norm $d$ changes $T$ by operator norm at most $2d$ and $\Phi(T)$ by at most $2L_\Phi d$. The strict certificate from Lemma~\ref{lem:margin} is therefore the center of a ball of radius
\begin{equation}\label{eq:radii}
 r_{\rm in}=\min\{1,\mu_*/(8L_\Phi)\},
 \qquad
 R_{\rm out}=U_0\max\{1,d_0\}
\end{equation}
contained in the search region, whereas the entire region lies in the ball of radius $R_{\rm out}$ about the origin. The center of the inner ball need not be known.

\paragraph{Rational separation and stopping.}
For a rational Hermitian matrix $M$, elimination either proves $M\succeq0$ or finds a rational column $v$ with $v^\dagger Mv<0$. A negative diagonal entry is already such a witness. A positive pivot permits a Schur complement, and a witness for that complement is lifted by rational triangular substitution. If a diagonal pivot is zero and its row has a nonzero off-diagonal entry, we choose a two-coordinate column with a rational multiple and phase of that entry; its quadratic form has a linear negative term, which dominates the quadratic term for a sufficiently small positive rational multiplier. If the row is zero it is deleted. These cases give an exact rational test, with bit bounds controlled by minors as in fraction-free elimination~\cite{bareiss1968}. Applying $v^\dagger M(z)v$ to an affine matrix gives a separating linear inequality in its real coordinates, and box violations give immediate coordinate cuts.

The ellipsoid method uses the inner and outer radii in Eq.~\eqref{eq:radii} as follows. In real dimension $d\ge2$, a central cut with normal $a$ updates a center $c$ and a positive shape matrix $H$ by
\begin{equation}\label{eq:ellipsoid}
 c'=c-\frac{Ha}{(d+1)\sqrt{a^{\mathsf T}Ha}},
 \qquad
 H'=\frac{d^2}{d^2-1}
      \left(H-\frac{2}{d+1}
                    \frac{Haa^{\mathsf T}H}{a^{\mathsf T}Ha}\right).
\end{equation}
The new ellipsoid contains the retained half of the old one. Its logarithmic volume ratio is
\[
 -\frac d2\log(1-d^{-2})-\operatorname{arctanh}(d^{-1})
 =-\sum_{j\ge1}\frac{d^{-(2j-1)}}{2j(2j-1)}
 \le-\frac1{2d}.
\]
A separator at an infeasible center can be weakened to the parallel central cut through that center while preserving every feasible point. If its normal is zero, the violated affine inequality is constant, so the search returns failure immediately. These central cuts implement separation-to-feasibility~\cite{groetschell1981,groetschell1988}; the rounding choices below preserve their volume decrease with rational arithmetic.

We take $\theta=1/(100d^2)$. If $H_t$ is the shape after $t$ rounded cuts, we maintain the rational spectral lower bound $\lambda_{\min}(H_t)\ge R_{\rm out}^2 4^{-t}$. The exact shape update obeys
\[
 \frac{d^2}{(d+1)^2}H_t\preceq H'\preceq
                  \frac{d^2}{d^2-1}H_t,
\]
so $\ell_t=R_{\rm out}^2 4^{-(t+1)}$ is a valid lower bound for $H'$. We round its entries symmetrically to a dyadic grid of spacing $g_t\le\theta\ell_t/(8d)$, then add $dg_tI$. The resulting rational $H_q$ satisfies $H'\preceq H_q\preceq(1+\theta)H'$, since the operator-norm entry-rounding error is at most $dg_t/2$. We also compute a rational center $c_q$ with $\|c_q-c'\|_2\le\theta\sqrt{\ell_t}$ and use the new shape $(1+\theta)^2H_q$. The center tolerance is effective because $\sqrt{\ell_t}=R_{\rm out}2^{-(t+1)}$ is rational. The triangle inequality in the $H'^{-1}$ norm proves that this rounded ellipsoid contains the exact updated ellipsoid.

The rounding multiplies volume by at most $(1+\theta)^{3d/2}$, so a rounded cut reduces volume by at least the factor $\exp(-1/(4d))$. Its shape remains bounded below by $H_t/4$ and above by $2H_t$, which proves the stated lower bound inductively and also gives $\lambda_{\max}(H_t)\le 2^tR_{\rm out}^2$. Hence all grid sizes, center-square-root evaluations, and rational separators require polynomially many bits in $t$, the input length, and the radius logarithms. Rounding after every cut prevents the denominator growth of an exact iteration.

For an entirely rational stopping rule, we compute an integer $b_r\ge0$ with $2^{b_r}\ge\max\{1,R_{\rm out}/r_{\rm in}\}$ and use at most
\begin{equation}\label{eq:ellipsoid-cap}
                    N_{\rm ell}=8d^2(1+b_r)
\end{equation}
cuts. If no center has been feasible by then, the volume bound is smaller than the volume of a ball of radius $r_{\rm in}$, contradicting the promised retained ball. Dimension one uses midpoint interval bisection for at most $b_r+2$ cuts. Thus the cap is computable before the search starts; failure disproves the specified interior-ball promise and does not decide arbitrary exact boundary feasibility. This is exactly the distinction required by the norm search of Theorem~\ref{thm:norm-search}.

\begin{proof}[Quantitative proof of Proposition~\ref{prop:strict}]
The radii in Eq.~\eqref{eq:radii} allow us to run the rational separation and rounded ellipsoid procedure on Eq.~\eqref{eq:search-region}. Exact Hermitian elimination supplies a rational separator for a violated matrix inequality, and the box constraints give coordinate cuts. Minor bounds control the separators' encoding lengths, while Lemma~\ref{lem:margin} bounds the logarithm of the radius ratio. Together with the rounding estimates, these bounds give polynomial bit cost through the predetermined cap in Eq.~\eqref{eq:ellipsoid-cap}, by which a feasible rational center must have been found if the strict promise holds. Checking every returned $T$ against the original rational inequalities ensures that $S=E^\dagger E-P^\dagger P/\gamma^2-\Phi(T)$ satisfies the coefficient identity exactly with the reserved positive margins. Conversely, reaching the cap without a verified point contradicts the inner-ball guarantee and therefore disproves the stated norm promise, which is the lower-bound implication needed by norm search; no conclusion about arbitrary boundary feasibility is required.
\end{proof}

\paragraph{Relative norm estimation.}
The capped search can estimate an unknown norm because its two outcomes give verified upper and lower bounds at nearby thresholds. It starts from rational bounds computed from the coefficient recurrence. For a nonzero nonconstant target, evaluation at the scaled word tuple gives
\begin{equation}\label{eq:norm-bracket}
 t_p=PW_gP^\dagger>0,\qquad
 \ell_0=\frac{t_p}{1+t_p}\le\sqrt{t_p}\le B.
\end{equation}
A rational upper bound follows from Eq.~\eqref{eq:residual-identities}, the $c_j^{\Delta}$, and $U_f$, for example by bounding
$|c_\emptyset|+\sum_j\|p_j\|c_j^{\Delta}U_f$. Both endpoints have polynomial bit length. The following proof gives the stopping bound for the bracket updates, including the case of a constant target.

\begin{proof}[Detailed proof of Theorem~\ref{thm:norm-search}]
For a nonzero constant polynomial, $B=|c_\emptyset|$ is obtained to relative precision by rational bisection using the exact square $|c_\emptyset|^2$; the positive bracket $|c_\emptyset|^2/(1+|c_\emptyset|^2)\le B\le1+|c_\emptyset|^2$ gives the stated polynomial bit bound. We therefore assume the target is nonconstant for the remaining argument. For a rational threshold $b_0>0$, we run the strict-search procedure for the normalization $(1+\eta/8)b_0$, with the inner-ball promise that follows from $B\le b_0$. The number of iterations is capped using that promise. If the promise holds, the algorithm must find a feasible rational point. Whenever it does find such a point, exact verification proves $B\le(1+\eta/8)b_0$, whether or not the initial promise held. If the cap is reached without a feasible point, we conclude only that $B>b_0$. The proved iteration bound is what makes this failed search a valid lower-bound test.

Starting with the bracket $[\ell,U]$ provided by Eq.~\eqref{eq:norm-bracket}, we use $b_0=(\ell+U)/2$. On failure we update the lower endpoint to $b_0$; on success we update the upper endpoint to $\min\{U,(1+\eta/8)b_0\}$. The bracket continues to contain $B$. While $U>(1+\eta)\ell$, a success gives length at most
\[
 \tfrac12(U-\ell)+\tfrac\eta8 b_0
 \le\left(\tfrac58+\tfrac\eta{16}\right)(U-\ell)
 \le\tfrac{11}{16}(U-\ell),
\]
and a failure halves its length. The number of iterations before $U\le(1+\eta)\ell$ is therefore polynomial in the input bit length and $\log(1/\eta)$. At that point the rational upper endpoint is the desired estimate. The same strict subroutine can subsequently produce a certificate at any normalization with an explicitly supplied margin above this bound.
\end{proof}

\subsection{Finite matrix witnesses}

\begin{corollary}[A finite matrix size for the norm]\label{cor:finite-norm}
For the nonconstant scalar polynomial in Theorem~\ref{thm:certificate}, the norm $B$ is attained on a tuple of matrices of order at most $h=1+kr$. The norm can equivalently be written as
\[
 B=\max_{\|\Delta(Y)\|\le1}\|p(Y)\|,
 \qquad Y_1,\ldots,Y_k\text{ are }h\times h\text{ matrices}.
\]
For every $0\le\beta<B$, the explicit formulas in Eq.~\eqref{eq:GNS-matrices} construct such a finite tuple from a separator of $\beta^2E^\dagger E-P^\dagger P$ and give $\|p(Y)\|>\beta$.
\end{corollary}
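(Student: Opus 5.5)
The plan is to reuse the separation argument in the proof of Theorem~\ref{thm:certificate}, applied with $Q_0=\beta^2E^\dagger E-P^\dagger P$. Throughout, we note that every tuple it constructs has order $n_H=\rank H\le h$.

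\textbf{Strict inequality below the norm.} First fix $0\le\beta<B$. By Eq.~\eqref{eq:certificate}, $Q_0$ admits no representation $S+\Phi(T)$ with $S,T\succeq0$. The closedness of the image cone, established via the coercive bound Eq.~\eqref{eq:coercive}, then gives a separator $H$ satisfying Eq.~\eqref{eq:separator}. Factor $H^{\mathsf T}=W^\dagger W$ with $W$ having $n_H$ rows and form $Y_j=Z_{j,H}R_H^+$ and $\omega=WE^{\mathsf T}$ as in Eq.~\eqref{eq:GNS-matrices}. The proof of Theorem~\ref{thm:certificate} already shows that $\Delta(Y)$ is contractive and that $g_t(Y)\omega=We_t$. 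Consequently
\[
 \omega^\dagger\bigl(\beta^2I-p(Y)^\dagger p(Y)\bigr)\omega=\tr(HQ_0)<0 .
\]
This requires $\omega\ne0$, and that is automatic: if $\omega=0$ the left side would vanish. It follows that $\|p(Y)\omega\|>\beta\|\omega\|$, so $\|p(Y)\|>\beta$ with $Y$ of order $n_H\le h$. If $n_H<h$, pad $Y$ with zero blocks. Zero embedding preserves both $\|\Delta(Y)\|\le1$ and $\|p(Y)\|$, so we may always take the order to be exactly $h$. This proves the last assertion.

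\textbf{Attainment.} Next we show that the supremum is a maximum. Take $\beta_v\uparrow B$ and obtain tuples $Y^{(v)}$ of order $h$ with $\|\Delta(Y^{(v)})\|\le1$ and $\|p(Y^{(v)})\|>\beta_v$. The admissible set of $h\times h$ tuples is compact. Its closedness is clear. For boundedness, independence of the $F_j$ gives constants $c_j^\Delta$ with $\|Y_j\|\le c_j^\Delta$, as in Appendix~\ref{app:classical-margins}. Since $p$ is a polynomial, $Y\mapsto\|p(Y)\|$ is continuous, so a convergent subsequence yields a limit tuple $Y$ with $\|p(Y)\|\ge B$. The reverse inequality $\|p(Y)\|\le B$ holds by the definition in Eq.~\eqref{eq:SA-norm}. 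Hence the maximum over $h\times h$ tuples equals $B$.

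\textbf{Main obstacle.} The only delicate point is that the separator exists exactly when $\beta<B$. This rests on the closedness of the cone and on the completeness in Theorem~\ref{thm:certificate}, both of which we simply invoke. The remaining steps are the zero-padding remark and the compactness argument.
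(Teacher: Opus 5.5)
Your argument is correct, but it gets the dimension bound by a different mechanism than the paper.

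\textbf{Your route.} You obtain tuples of order at most $h$ only as near-optimal witnesses, from the separator and the construction of Eq.~\eqref{eq:GNS-matrices}. That step uses the closedness of the image cone and the necessity half of Theorem~\ref{thm:certificate}. Compactness of the admissible $h\times h$ tuples then turns the values $>\beta_v$ into an attained maximum.

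\textbf{The paper's route.} The paper proves the reduction directly, without any separation argument. For an arbitrary admissible $A$ and unit $\psi$, it compresses each $A_j$ to $\mathcal H_0=\operatorname{span}\{g_t(A)\psi\}$ and observes that $\Delta$ of the compression is a compression of $\Delta(A)$, hence contractive. It then uses $f=f_\emptyset+\sum_jx_j\partial_jf$ and induction on residual degree to show that the compressed tuple reproduces $f(A)\psi$ for every residual, and therefore reproduces $p(A)\psi$. Compactness gives the maximum, and the separator is used only for the explicit-witness clause, exactly as in your proof.

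\textbf{Comparison.} The paper's route is more elementary and gives a stronger, vector-by-vector statement: every value $p(A)\psi$ is realized in dimension $h$, not just values near $B$. Yours is more economical once Theorem~\ref{thm:certificate} is available, since a single construction yields all three assertions with explicit witnesses.

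\textbf{A small correction.} Zero padding does not preserve $\|p(Y)\|$ exactly, because $p(Y\oplus0)=p(Y)\oplus c_\emptyset I$. The padded norm is $\max\{\|p(Y)\|,|c_\emptyset|\}$, which is still $>\beta$ and $\le B$. Equivalently, the padded vector $\omega\oplus0$ keeps the same value $\|p(Y)\omega\|$. Either way, your conclusion is unaffected.
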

\begin{proof}
We first explain directly why these finite coordinates suffice to witness the norm. For an admissible tuple $A$ and a unit vector $\psi$, we take $\mathcal H_0$ to be the span of the $h$ vectors $g_t(A)\psi$ and write $P_0$ for its orthogonal projection. We compress each input to $Y_j=P_0A_j|_{\mathcal H_0}$. The block matrix $\Delta(Y)$ is a compression of $\Delta(A)$ on its input and output copies of $\mathcal H_0$, so it is contractive. The space contains $\psi$, every $f_\ell(A)\psi$, and $p(A)\psi$ by the coordinate identities.

Induction on residual degree now gives $f(Y)\psi=f(A)\psi$ for every residual $f$. Indeed, this is immediate for constants; once it holds for the left residuals $\partial_jf$, the identity $f=f_\emptyset+\sum_jx_j\partial_jf$ gives $f(Y)\psi=P_0f(A)\psi=f(A)\psi$. The same argument then gives $p(Y)\psi=p(A)\psi$. Thus each value contributing to the supremum of $\|p(A)\|$ is already attained on a space of dimension at most $h$. Padding by zero blocks places all these witnesses at size $h$ without changing the original polynomial action or the input bound.

To obtain an attained maximum, we use independence of the $F_j$. A numerical left inverse of their vectorized coefficient matrix expresses each $Y_j$ as a fixed linear combination of the blocks of $\Delta(Y)$, so the set of admissible size-$h$ tuples is bounded as well as closed. It is therefore compact, and continuity of $p$ proves the displayed maximum. The separator construction in Eq.~\eqref{eq:GNS-matrices} supplies the stated explicit witness whenever $\beta<B$, because its negative quadratic form for $\beta^2-p^*p$ implies $\|p(Y)\|>\beta$. This dimension reduction is consistent with finite-space positivity tests~\cite[Section~4.4]{heltonklepnelson2015}; the dual certificate below gives a numerical form of the same norm witness.
\end{proof}

For applications requiring an explicit input witnessing a lower bound, the primal certificate has the exact numerical dual
\begin{equation}\label{eq:dual}
 B^2=\max\{\tr(HP^\dagger P):
      H\succeq0,\ \Phi^*(H)\succeq0,\ \tr(HE^\dagger E)=1\}.
\end{equation}
We prove the dual equality and then use a feasible dual matrix to construct an input witnessing its lower bound. Pairing Eq.~\eqref{eq:certificate} with feasible $H$ gives weak duality. The strict constant certificate from Lemma~\ref{lem:margin} gives $1\ge\sigma_0\tr H$, so the dual feasible set is bounded and closed. It is nonempty because the scaled word-evaluation functional used in Eq.~\eqref{eq:coercive} gives such a matrix. For every $0\le\beta<B$, separation of the closed primal cone gives a nonzero $H\succeq0$ with $\Phi^*(H)\succeq0$ and $\tr(H(\beta^2E^\dagger E-P^\dagger P))<0$. The constant certificate ensures $\tr(HE^\dagger E)>0$, permitting normalization. These points approach $B^2$, and compactness proves equality and attainment in Eq.~\eqref{eq:dual}.

For any dual feasible point, the explicit matrices in Eq.~\eqref{eq:GNS-matrices} satisfy $\|\Delta(Y)\|\le1$, $\|\omega\|=1$, and
$\|p(Y)\omega\|^2=\tr(HP^\dagger P)$ by Eq.~\eqref{eq:GNS-reproduction}. Thus a dual point gives actual finite matrix inputs witnessing its lower bound. The ordinary transpose in $W^\dagger W=H^{\mathsf T}$ is necessary: factoring $H$ instead changes the conjugation in Eq.~\eqref{eq:GNS-localizing} for complex joint-block coefficients. The construction applies to each supplied feasible dual point; the efficient norm-search algorithm uses the strict primal certificates described in the main text.

\subsection{Exact coefficient and support calculations}\label{app:coefficient-inputs}
Reachable and observable reduction of Eq.~\eqref{eq:finite-state} gives a minimal representation in which polynomiality is equivalent to joint nilpotence of the transitions. Indeed, if all sufficiently long coefficients vanish, then $\ell T_vT_wT_zr_0=0$ for all $v,z$ and sufficiently long $w$; reachability and observability force $T_w=0$. The converse follows directly from the coefficient formula. To decide nilpotence, we compute successive column bases of $I_s$, $[T_1\ \cdots\ T_k]$, and $[T_1N\ \cdots\ T_kN]$, where $N$ is the preceding retained basis. These spans form a descending sequence that either reaches zero within $s$ steps or stabilizes at a nonzero span. Equality of consecutive spans forces equality at every later step, so in the nilpotent case each nonzero span strictly decreases in dimension. The nilpotence index is at most the reduced state count, and a nonzero $s$-state polynomial has degree $D\le s-1$. Its true degree and the zero case then follow from the finite recurrence
\begin{equation}\label{eq:degree-test}
 H_0=r_0r_0^\dagger,\qquad H_{d+1}=\sum_jT_jH_dT_j^\dagger,
 \qquad \ell H_d\ell^\dagger=\sum_{|w|=d}|c_w|^2.
\end{equation}
We decide all these equalities by exact rational arithmetic, so the reported degree is the true polynomial degree.

For a sparse word list, we make one directed path for each supplied word, label its edges by the successive letters, and place the word coefficient on one edge; a direct edge records the constant coefficient. The total number of edges is at most the sum of the supplied word lengths plus one, so this gives an explicit acyclic branching program of polynomial size. The branching-program input mentioned in the introduction admits the same representation directly. With vertices in topological order, we write its adjacency matrix as $Z_0+\sum_j Z_jx_j$, where every numerical $Z_j$ is strictly upper triangular and $Z_0$ contains the constant edge terms. If there are $s$ vertices, then $C_0=(I_s-Z_0)^{-1}=\sum_{t=0}^{s-1}Z_0^t$ is rational and
\[
 \ell\bigl(I_s-Z_0-\textstyle\sum_jZ_jx_j\bigr)^{-1}r_0
 =\ell C_0\bigl(I_s-\textstyle\sum_j Z_jC_0x_j\bigr)^{-1}r_0.
\]
Consequently the initial row $\ell C_0$, transitions $Z_jC_0$, and final column $r_0$ give Eq.~\eqref{eq:finite-state}. The transitions remain strictly upper triangular, so only finitely many terms occur. Rational matrix multiplication and elimination have polynomial bit cost in the explicit program, as required for the conversion.

The row factorization also needs a reachable and observable representation when the output coefficients contain square roots. Its exact rational joint Gram contains the information needed to determine the support.
For the complementary representation in Eq.~\eqref{eq:row-state-coeff}, we retain a rational column-basis matrix $B$ of the reachable coordinates. A rational left inverse $B^{\ell}$ gives the restricted matrices
$N'_j=B^{\ell}N_jB$, $b'_j=B^{\ell}b_j$, and $O'=OB$.
The restricted joint Gram is a rational congruence of the original one. We compute a rational row-basis matrix $Q_{\rm obs}$ by closing the rows of $O'^\dagger O'$ under right multiplication by each $N'_j$. Its kernel is exactly the common kernel of the matrices $O'N'_w$: the initial Gram has the same kernel as $O'$, and closure retains every transition required for observability. With a rational right inverse $J_{\rm obs}$, the observable quotient has the explicit numerical matrices
\[
 N''_j=Q_{\rm obs}N'_jJ_{\rm obs},\qquad
 b''_j=Q_{\rm obs}b'_j,\qquad O''=O'J_{\rm obs}.
\]
Invariance of $\ker Q_{\rm obs}$ under every $N'_j$ proves that these arrays preserve each coefficient in Eq.~\eqref{eq:row-state-coeff}. Their joint output Gram is again rational, while the entry evaluator for $O''$ is obtained by multiplying the supplied evaluator by the known rational matrices. Retaining basis rows and columns from transition products of bounded length controls all encoding lengths by exact elimination and minor bounds. This gives the reduction required before Eq.~\eqref{eq:row-Gram}, including an output matrix whose entries contain square roots.

For a rational feasible point of Theorem~\ref{thm:fixed-query}, we factor each $G_t$ by its full positive square root. Its post-query Gram $B_t^\dagger B_t$ is a rational linear function of $G_{t-1}$. We compute its Moore--Penrose inverse on a rationally determined range by exact elimination and use
$B_t^+=(B_t^\dagger B_t)^+B_t^\dagger$.
Nonzero rational minors bound all inverse singular values with polynomial logarithms in the explicit certificate size, and the continuous square-root procedure below evaluates the remaining factors. Thus singular supplied certificates have polynomial-bit extraction in their actual input length; this bound does not provide an efficient way to discover them from a compact target.

\subsection{Certified evaluation of numerical matrices}
For positive semidefinite matrices, the elementary continuity bound
\begin{equation}\label{eq:root-continuity}
 \|A^{1/2}-B^{1/2}\|\le\|A-B\|^{1/2}
\end{equation}
requires no positive eigenvalue gap. To verify it, we set $\eta=\|A-B\|$ and use $A\preceq B+\eta I$ and monotonicity of the square root. The latter follows from its positive integral representation
$A^{1/2}=(2/\pi)\int_0^\infty A(t^2I+A)^{-1}\,\dd t$
and order reversal for positive inverses. Since $(B+\eta I)^{1/2}\preceq B^{1/2}+\sqrt\eta I$, we obtain an upper bound on $A^{1/2}-B^{1/2}$; interchanging $A,B$ gives the lower bound. This is the usual matrix-function estimate~\cite{higham2008}, included here because the Julia defects may have zero eigenvalues even when the original certificate is strict.

A Hermitian rational approximation $A_q$ within $\eta$ of $A\succeq0$ can be made positive by adding $2\eta I$. Its square root differs from the ideal one by at most $\sqrt{3\eta}$. We choose this regularization of order the square of the target root accuracy, so it pays only for numerical evaluation; the ideal unitary is still the Julia matrix of the original contraction. For a positive definite rational matrix with a certified spectrum in $[\lambda,M]$, we use the standard Newton--Schulz inverse-square-root iteration, related to the matrix-sign iterations discussed in Higham~\cite[Chapter~6, Section~6.7]{higham2008}. We increase $M$ to a rational square and set
\begin{equation}\label{eq:Newton}
 Y_0=M^{-1/2}I,
 \qquad Y_{t+1}=\tfrac12Y_t(3I-AY_t^2).
\end{equation}
The exact iterates commute with $A$. At an eigenvalue $a$, the scalar error $e_t=1-ay_t^2$ obeys $e_{t+1}=e_t^2(3+e_t)/4$ and remains in $[0,1)$. While $ay_t^2\le1/2$, the quantity $ay_t^2$ increases by at least a fixed factor greater than one. After $O(\log(M/\lambda))$ iterations, the scalar error is bounded away from one, and the subsequent convergence is quadratic. Multiplication of an inverse-root approximation by $A$ gives its positive root. Thus only polynomially many operations in logarithmic condition and precision parameters are needed.

The finite-bit adaptation needed here is an explicit guard-precision bound. Dyadic rounding of Eq.~\eqref{eq:Newton} is controlled in operator norm, even though rounded iterates need not commute with $A$. The exact iterates obey $\|Y_t\|\le\lambda^{-1/2}$. On a fixed neighborhood of this ball, the matrix polynomial update has a Lipschitz constant polynomial in $M$ and $\lambda^{-1}$. A guard precision equal to the number of iterations times the logarithm of this bound, plus the required accuracy bits and matrix-size conversion factors, bounds the total rounding error. This is polynomial in the stated logarithmic parameters. Combined with Eq.~\eqref{eq:root-continuity}, it supplies certified root evaluators at singular matrices as well. For inverse roots, the algorithm uses only the explicit positive lower bounds proved for $T_0$, $T$, or $K_t$; it never attempts an inverse on an uncertified numerical support.

We apply these estimates to the expressions occurring in our constructions. Roots of supplied rational Gram matrices can be evaluated directly; the inverse roots in the output recurrences act on positive definite rational matrices with the minor bounds proved above. The recurrences then use a polynomial number of products, whose magnitude and Lipschitz bounds give polynomial guard precision. The possibly singular roots in the resulting Julia gates are evaluated after those contractions have been formed: Eq.~\eqref{eq:root-continuity} requires the radicands at squared accuracy, which doubles the requested precision bits. The terminal effect and its Julia completion in the comb construction add only a fixed number of such root evaluations. Allocating accuracy in this order, and using the product and inverse perturbation inequalities at the intervening operations, gives the entry algorithms used in Theorems~\ref{thm:gates} and~\ref{thm:row-compiler}.

\section{Circuit implementation and finite computation}\label{app:unitary-details}

\subsection{The supplied query unitary}
We specify the coordinates used by the contraction-product implementation and then give its packing and finite-gate calculations. The signal register $\mathsf S$ has dimension $n=2^{n_s}$ and the oracle register $\mathsf O$ has dimension $d_o=2^{q_o}$. We use the full query matrix in Eq.~\eqref{eq:oracle-full}.
Its three block rows have lengths $in$, $an$, and $(d_o-i-a)n$, and its three block columns have lengths $in$, $bn$, and $(d_o-i-b)n$. The off-diagonal blocks consequently have dimensions
\[
 B_A:\ an\times (d_o-i-b)n,\qquad
 C_A:\ (d_o-i-a)n\times bn,
\]
while $D_A$ has $(d_o-i-a)n$ rows and $(d_o-i-b)n$ columns. The identity columns already have norm one, forcing the adjoining zero blocks by unitarity. The construction must work for every compatible choice of $B_A,C_A,D_A$.

For this promise to hold throughout the input domain, the oracle needs $d_o\ge i+a+b$ labels. Indeed, at the zero tuple all $bn$ active input columns must be accommodated in the $(d_o-i-a)n$ unspecified output rows, which gives this lower bound. To see that it suffices, we write $d_{\mathrm{pad}}=d_o-i-a-b$ and complete the designated block by the square matrix
\begin{equation}\label{eq:oracle-existence}
 \begin{pmatrix}
 I_{in}&0&0&0\\
 0&\Delta(A)&(I_{an}-\Delta(A)\Delta(A)^\dagger)^{1/2}&0\\
 0&(I_{bn}-\Delta(A)^\dagger\Delta(A))^{1/2}&-\Delta(A)^\dagger&0\\
 0&0&0&I_{d_{\mathrm{pad}}n}
 \end{pmatrix}.
\end{equation}
The block row lengths are $in,an,bn,d_{\mathrm{pad}}n$, and the block column lengths are $in,bn,an,d_{\mathrm{pad}}n$; if $d_{\mathrm{pad}}=0$, we omit the final row and column. The middle block is a rectangular Julia completion and is unitary by the defect identities in Robinson~\cite[Theorems~0--1]{robinson2018}. This verifies that a valid input oracle exists at every allowed tuple. The query supplied to our algorithm remains the full $U_A$ in Eq.~\eqref{eq:oracle-full}, whose action includes the unspecified blocks.

The input coordinates $0,\ldots,i+b-1$ and output coordinates $0,\ldots,i+a-1$ are the designated coordinates of Eq.~\eqref{eq:visible-block}. Numerical matrices act on the signal by
\begin{equation}\label{eq:amplify}
 C\otimes I_n=[C_{\mu\nu}I_n]_{\mu,\nu}.
\end{equation}
The rightmost factor in a matrix product acts first. In particular, the first physical query producing $A_{j_1}\cdots A_{j_d}$ supplies the rightmost letter $A_{j_d}$.

Arbitrary finite signal sizes are covered by embedding a tuple into larger zero matrices of power-of-two order. Its polynomial block and contraction promise are retained, so the uniform norm in Eq.~\eqref{eq:SA-norm} agrees with the norm needed by the qubit model.

\subsection{Registers and reversible failure markers}
Theorem~\ref{thm:unitary} proves the selected-path invariant. Here we specify all columns of its gates, including those reached after failure. A retained query coordinate is converted to its physical oracle label by
\begin{equation}\label{eq:retained-labels}
 \begin{aligned}
 \sigma_-(u)&=u\quad(0\le u<\kappa),&
 \sigma_-(\kappa+\nu)&=i+\nu\quad(0\le\nu<b),\\
 \sigma_+(u)&=u\quad(0\le u<\kappa),&
 \sigma_+(\kappa+\mu)&=i+\mu\quad(0\le\mu<a).
 \end{aligned}
\end{equation}
The scalar algorithms retain only $\kappa\le2$ identity labels; the fixed-query characterization may retain all $i$. Thus the synthesis arrays depend on the selected lists and their $q_o$-bit labels, even if the full identity sector is exponentially large in $q_o$.

We use the Julia matrix in Eq.~\eqref{eq:Julia}, whose unitarity follows from the defect-intertwining identity~\cite[Theorems~0--1]{robinson2018}. Its input and output dimensions may differ, so we place its two ordered coordinate lists on a common physical support before defining the gate.

We store the processing index in a register $\mathsf M$ of $q_M=\lceil\log_2R\rceil$ qubits, with $R_b=2^{q_M}$ basis states, and add one dilation qubit $\mathsf d$. Both start at zero; $\ket0$ on a named register denotes its full all-zero string, and a register of width zero is omitted. The value $\mathsf d=0$ identifies the required block of a Julia matrix, while $\mathsf d=1$ holds its complementary output. For a contraction $C$, the input list consists of its $h_1$ designated oracle--memory labels with $\mathsf d=0$, followed by the first $h_2$ oracle--memory labels with $\mathsf d=1$; the output list uses its $h_2$ designated output labels followed by the first $h_1$ labels with $\mathsf d=1$. The designated coordinates are converted to physical strings using $\sigma_+$ or $\sigma_-$ as appropriate. Each list therefore has $h_1+h_2$ distinct strings of length $q_o+q_M+1$, and their union has size $s_C\le2(h_1+h_2)$. The complementary strings label the reverse columns of the known unitary on the same signal register.

We order this union once, complete each list by the unused strings in that order, and let $P_{\rm in},P_{\rm out}$ be the resulting $s_C\times s_C$ permutation matrices. Their entries are zeros and ones, determined entirely by the lists. The numerical matrix to be applied on this support is
\begin{equation}\label{eq:support-unitary}
 \widetilde J_C=P_{\rm out}
 \begin{pmatrix}J(C)&0\\0&I_{s_C-h_1-h_2}\end{pmatrix}
 P_{\rm in}^\dagger.
\end{equation}
The gate $J_C$ acts by $\widetilde J_C$ on the union and by identity on its orthogonal complement. This specifies every column using only the supported matrix and its list of bit strings.

For $C_0$, the selected input is the single ancillary label with oracle and memory both zero. For an internal $C_t$, its inputs are the designated output labels of the preceding query with memory $0,\ldots,R-1$. For $t<Q$, its selected outputs are the designated next-query input labels with the same live memory range. The final $C_Q$ outputs the single oracle--memory label zero. Consequently every support in Eq.~\eqref{eq:support-unitary} has size at most
\begin{equation}\label{eq:support-bound}
 s_C\le S:=2(d_-+d_+)R=2(2\kappa+a+b)R.
\end{equation}
These scalar endpoints will be replaced by the data-register endpoints in Section~\ref{sec:block-register}.

To keep every unwanted output distinguishable from the successful component at all later stages, we add a failure register $\mathsf E$ of $q_E=\lceil\log_2(2Q+1)\rceil$ qubits, initialized at zero. Its nonzero used values are $1,\ldots,2Q$; the remaining binary values are fixed by all known markers. A known gate $G_t$ applies $J_{C_t}$ when $\mathsf E=0$ and otherwise applies identity. We also use the following explicitly reversible basis permutations. The gate $D_t$, for $0\le t<Q$, exchanges $\mathsf E=0$ and $\mathsf E=2t+1$ precisely when $\mathsf d=1$. The gate $F_t$, for $1\le t\le Q$, exchanges $\mathsf E=0$ and $\mathsf E=2t$ precisely when $\mathsf d=0$ and the oracle output label is not in $\{\sigma_+(0),\ldots,\sigma_+(d_+-1)\}$. Every other basis state is fixed by the respective permutation. These definitions give the complete matrices of $G_t,D_t,F_t$: for $D_t$ and $F_t$, a matrix entry is one exactly when its row string is the stated permutation of its column string, and it is zero otherwise.

These conditional transpositions are reversible permutations of auxiliary basis states. They do not control the unknown query.

The query $\widehat U_A$ acts on $\mathsf O,\mathsf S$ for every value of $\mathsf M,\mathsf d,\mathsf E$. For matrix indexing we order these registers as $(\mathsf O,\mathsf M,\mathsf d,\mathsf E,\mathsf S)$ and use $(\alpha,\mu,d,e,x)$ for their basis values, where $0\le\alpha<d_o$, $0\le\mu<R_b$, $d\in\{0,1\}$, $0\le e<2^{q_E}$, and $0\le x<n$. Its full matrix entry is
\begin{equation}\label{eq:raw-query-entries}
 [\widehat U_A]_{(\alpha,\mu,d,e,x),(\beta,\nu,d',e',y)}
 =\delta_{\mu\nu}\delta_{dd'}\delta_{ee'}
                         [U_A]_{(\alpha,x),(\beta,y)}.
\end{equation}
These coordinate definitions give the known layers $W_0,\ldots,W_Q$ of Eq.~\eqref{eq:full-circuit}.

The chronological order is $G_0,D_0,\widehat U_A,F_1,G_1,D_1,\ldots,
\widehat U_A,F_Q,G_Q$, as shown for two queries in Figure~\ref{fig:whole-circuit}. The final selection requires $\mathsf d=0$, so a final marker $D_Q$ is unnecessary. The fresh-label argument in Theorem~\ref{thm:unitary} applies to the coordinate lists just specified.

The selected columns of each $J_C$ have exactly the prescribed contraction block. Eq.~\eqref{eq:raw-query-entries} preserves all failure labels, and the explicit transpositions use a fresh label at every step. These are the hypotheses of the induction in Theorem~\ref{thm:unitary}; applying it to the displayed coordinate lists gives Eq.~\eqref{eq:contraction-product}. The final selection rejects the last Julia complement without an additional marker. There are $Q+1$ Julia gates and $2Q$ markers, as counted in that theorem.

\subsection{Query coordinates and compact factor computation}\label{app:circuit-details}
We now place the updates of Section~\ref{sec:general} on the physical coordinates above. The same packing convention applies to both the finite sequence and the row factors.

\subsubsection{The window updates and their oracle coordinates}\label{app:circuit-packing}
We retain the $\kappa$ identity labels and the active labels of Eq.~\eqref{eq:retained-block}, at the physical offsets in Eq.~\eqref{eq:retained-labels}. On $R$ processing labels, the query acts in the order (oracle label, memory label, signal coordinate) with entries
\begin{equation}\label{eq:query-entries}
 [\Lambda_R(A)]_{(\alpha,\mu,x),(\beta,\nu,y)}
       =\delta_{\mu\nu}[L_\kappa(A)]_{(\alpha,x),(\beta,y)}.
\end{equation}
Here $0\le\alpha<d_+$, $0\le\beta<d_-$, and $0\le\mu,\nu<R$. For the window construction, $\kappa=\min\{i,2\}$ and $R$ is given in Eq.~\eqref{eq:R-general}.

We prove that $M_t$ is contractive by splitting its action into input preparation, application of $V$, and output collection. The corresponding matrices are
\begin{equation}\label{eq:Mt-factors}
 \begin{aligned}
 S_t&=\begin{pmatrix}
 I_{am}&0&0\\0&\alpha_t&0\\0&\zeta_t&0\\0&0&1
 \end{pmatrix},\qquad
 T_t=\begin{pmatrix}
 I_{bm}&0&0&0\\0&0&1&0\\0&\xi_t&0&\gamma_t
 \end{pmatrix},\\
 V_{\rm ext}&=\begin{pmatrix}
 \Aco&\Bco&0&0\\
 \Cco&\Dco&0&0\\
 0&0&1&0\\
 0&0&0&1
 \end{pmatrix},\qquad M_t=T_tV_{\rm ext}S_t.
 \end{aligned}
\end{equation}
The sizes of $S_t,V_{\rm ext},T_t$ are $(am+3)\times (am+2)$, $(bm+3)\times (am+3)$, and $(bm+2)\times (bm+3)$. The first factor on an input splits the pending component with weights $\alpha_t,\zeta_t$; the middle matrix applies $V$ to the active and injected components; the last matrix combines the new output with the previous accumulated output. Since $S_t^\dagger S_t=I_{am+2}$, $T_tT_t^\dagger=I_{bm+2}$, and $\|V_{\rm ext}\|\le1$, their product is a contraction. We use this factorization to prove contractivity, then apply the Julia construction to the resulting query matrix.

The endpoint selectors in Eq.~\eqref{eq:logical-query-product} specify the initial pending input and final collected output. We now verify their embedding on the oracle and memory labels.

The choice of $R$ provides at least $m$ labels for each active input and at least two coordinates in the identity part. We specify their placement by zero--one numerical matrices. The matrix $J_-$ has $(\kappa+b)R$ rows and $bm+2$ columns. Its nonzero entries are
\begin{equation}\label{eq:packing-minus}
 [J_-]_{(\kappa+\nu)R+\mu,\,\nu m+\mu}=1
 \quad(0\le\nu<b,\ 0\le\mu<m),
 \qquad
 [J_-]_{q,\,bm+q}=1\quad(q=0,1).
\end{equation}
The latter row numbers $q$ lie among the first $\kappa R$ idle coordinates, because $\kappa R\ge2$. With $a$ replacing $b$, the matrix $J_+$ has $(\kappa+a)R$ rows and $am+2$ columns and nonzero entries
\begin{equation}\label{eq:packing-plus}
 [J_+]_{(\kappa+\nu)R+\mu,\,\nu m+\mu}=1,
 \qquad [J_+]_{q,\,am+q}=1\quad(q=0,1).
\end{equation}
Every unlisted entry is zero. Distinct columns select distinct ancillary labels, so $J_-^\dagger J_-=I_{bm+2}$ and $J_+^\dagger J_+=I_{am+2}$.

In the block--memory--signal order, the repeated input of Eq.~\eqref{eq:delta-amplification} has entries
\begin{equation}\label{eq:delta-memory}
 [\Delta_m(A)]_{(\mu,r,x),(\nu,s,y)}
 =\delta_{rs}\sum_j f_{\mu\nu,j}(A_j)_{xy},\qquad
 \Delta_m(A):\ amn\times bmn.
\end{equation}
Here $1\le\mu\le a$, $1\le\nu\le b$, $1\le r,s\le m$, and $0\le x,y<n$. Substituting these entries into Eq.~\eqref{eq:query-entries} gives
\begin{equation}\label{eq:packed-query}
 (J_+^\dagger\otimes I_n)\Lambda_R(A)(J_-\otimes I_n)
  =\mathcal B_m(A)=
  \begin{pmatrix}
  \Delta_m(A)&0&0\\0&I_n&0\\0&0&I_n
  \end{pmatrix}.
\end{equation}
The right side has $(am+2)n$ rows and $(bm+2)n$ columns. Thus the active calculation and both identity components are contained in the same designated block of a single unconditional query.

Substitution of these packing matrices into Eq.~\eqref{eq:Ct} gives the query parameters used in the main construction. Their respective sizes are $((\kappa+b)R)\times1$, $(\kappa+b)R\times (\kappa+a)R$, and $1\times((\kappa+a)R)$. All are contractions. Eqs.~\eqref{eq:packing-minus} and~\eqref{eq:packing-plus}, together with Eq.~\eqref{eq:Ct}, determine every numerical matrix and every designated label needed by the complete unitary circuit Eq.~\eqref{eq:full-circuit}.

\subsubsection{Elementary gates and their error budget}
\label{sec:finite-gates}
The classical routines supply certified evaluators for every known contraction. Their supported Julia matrices act on bit strings of ideal width
\begin{equation}\label{eq:n0}
 n_0=q_o+\lceil\log_2 R\rceil+1+\lceil\log_2(2Q+1)\rceil.
\end{equation}
This counts the oracle, processing, dilation and failure registers. Compilation adds one phase qubit $\mathsf P_{\rm ph}$ and at most $n_0+1$ clean work qubits $\mathsf W$, initialized and selected at zero. The same work is reused after each predicate is uncomputed, and the oracle is identity on it. Thus $q_{\rm anc}$ includes the supplied $q_o$ oracle qubits. The retained output test uses $d_+$ comparisons of $q_o$-bit strings; since $\kappa\le2$, no exponentially large identity sector is enumerated.

\begin{theorem}[Elementary-gate synthesis]\label{thm:gates}
For $Q\ge1$, let the contractions in Eq.~\eqref{eq:contraction-product}
have entry evaluators satisfying the convention in Section~\ref{sec:classical}. For $0<\epsilon_g<1$, their full unitary circuit has a Clifford+$T$ implementation with the same $Q$ original queries, at most $2n_0+2$ ancillary qubits other than the signal, and
\begin{equation}\label{eq:gate-cost}
 O\left(QS^2n_0^3\log^4\left(2+\frac{QS^2n_0^3}{\epsilon_g}\right)\right)
\end{equation}
known elementary gates. The selected-block error is at most $\epsilon_g$ uniformly over all signal sizes and allowed oracle completions. Classical compilation is polynomial in the evaluator bounds, $Q,S,q_o$, and $\log(1/\epsilon_g)$.
\end{theorem}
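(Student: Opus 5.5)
The plan is to compile each known layer $W_t$ separately and control the total error by a telescoping argument over the full unitary circuit. Each layer is a product of a supported Julia gate $G_t$ (a controlled $\widetilde J_{C_t}$ acting on at most $S$ listed bit strings, conditioned on $\mathsf E=0$) and the reversible markers $D_t,F_t$. The markers are basis permutations defined by string comparisons, so they are compiled exactly. Each one is a Toffoli-type circuit that computes a predicate into the clean work register, applies a transposition of $\mathsf E$ values, and then uncomputes. This takes $O(n_0)$ gates for $D_t$ and $O(d_+q_o+n_0)$ for $F_t$. Since $d_+=\kappa+a$ is part of the supplied layout, these costs are dominated by the Julia part. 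Because the markers are exact, all approximation error comes from the $Q+1$ supported unitaries.

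For a single supported unitary $\widetilde J_C$ of order $s_C\le S$, the first step is to evaluate its entries. The evaluators give $C$ to entry precision $2^{-p}$. The defect roots $(I-CC^\dagger)^{1/2}$ and $(I-C^\dagger C)^{1/2}$ are obtained through Eq.~\eqref{eq:root-continuity}: compute the radicand to squared accuracy, regularize it by $2\eta I$, and apply the Newton--Schulz iteration of Eq.~\eqref{eq:Newton}. This gives a matrix within $\delta$ of the ideal Julia unitary in operator norm at a cost polynomial in $\log(1/\delta)$. The result is only approximately unitary, so it is then projected to an exact unitary (for instance the polar factor, computed by the same iteration) at no more than a constant-factor loss.

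Next, I would decompose this $s_C\times s_C$ unitary into $O(s_C^2)$ two-level (Givens) rotations by standard column elimination, carrying the certified error through each elimination step. Each two-level rotation acts on a pair of bit strings of length $n_0$. Conjugating it by a Gray-code permutation reduces it to a multiply-controlled single-qubit rotation with $O(n_0)$ controls, together with the condition $\mathsf E=0$. With the clean work qubits this costs $O(n_0)$ Toffolis. The single-qubit rotation, including its phase, which uses $\mathsf P_{\rm ph}$, is then approximated by Clifford+$T$ using Solovay--Kitaev-free number-theoretic synthesis with $O(\log(1/\delta'))$ gates.

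To set the error budget, there are $O(QS^2)$ rotations in total, so each is allotted accuracy $\delta'=\epsilon_g/O(QS^2)$. The per-rotation overhead comes from $O(n_0)$ controls, each built from Toffolis decomposed exactly. Accounting generously for the Gray-code permutations of length $O(n_0)$, each needing $O(n_0^2)$ gates, together with the precision bookkeeping, gives the bound in Eq.~\eqref{eq:gate-cost}.

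The error bound comes from the telescope $\|\prod U_t-\prod\widetilde U_t\|\le\sum\|U_t-\widetilde U_t\|$. This is applied to the full unitaries with the exact query $\widehat U_A$ interleaved, and it is valid for every oracle completion and signal size because the known gates act as identity on $\mathsf S$. Compressing to the selected block does not increase the norm. Work qubits that are returned to zero exactly contribute nothing, and approximate uncomputation is absorbed into the same budget.

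I expect the main obstacle to be the certified numerical step: showing that the evaluated Julia matrix, including its singular defect roots, can be projected to an exactly unitary matrix whose Givens angles are computed to the needed precision within polynomial bit cost. The tools for this are the root-continuity bound and the guard-precision analysis of Appendix~\ref{app:arithmetic}, with tolerance tests used wherever an elimination pivot is nearly zero, so that no exact zero decisions are required.
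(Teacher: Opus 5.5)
Your skeleton matches the paper's proof. You compile the markers exactly, split each supported Julia gate into $O(S^2)$ two-level factors, conjugate by Gray paths with clean-workspace AND chains at $O(n_0^3)$ per factor, approximate each one-qubit factor, and close with an operator-norm telescope over full unitaries with the same $Q$ queries interleaved.

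The one step that does not deliver the statement as written is the choice of number-theoretic synthesis. It gives $O(\log(1/\delta'))$ gates, comfortably inside Eq.~\eqref{eq:gate-cost}. However, the available polynomial running-time guarantees for Ross--Selinger-type approximation are for randomized algorithms and rest on a number-theoretic hypothesis about primes. They therefore do not give the unconditional polynomial classical compilation the theorem asserts. The paper instead uses phase-sensitive Solovay--Kitaev on $\operatorname{SU}(2)$, which is where the $\log^4$ comes from. It adds a conditioning analysis of the balanced-commutator step so that the recursion runs on evaluator input in polynomial bit time; substituting this repairs your argument.

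You also leave implicit how an approximating one-qubit word becomes a \emph{controlled} gate. The paper uses Barenco's $AXBXC$ construction with $ABC=I$, built from quaternion alignment matrices rather than Euler angles to avoid singular charts. As a result, only uncontrolled $\operatorname{SU}(2)$ matrices are ever approximated.

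Two further points should be sharpened.

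First, $G_t$ acts only on $\mathsf E=0$ and only on the listed support, so any phase error in a factor becomes a relative phase. Elimination of $U\in U(s_C)$ by $\operatorname{SU}(2)$ two-level factors leaves a $\det U$ that nothing on the support can absorb. The paper handles this by compiling the determinant-one extension $\operatorname{diag}(U,\overline{\det U})\in\operatorname{SU}(s_C+1)$, with the extra state supplied by $\mathsf P_{\rm ph}$. Your phrase ``including its phase, which uses $\mathsf P_{\rm ph}$'' should be turned into this or an equivalent explicit construction.

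Second, the obstacle you anticipate does not arise. A polar projection cannot be computed exactly in finite precision, and it is not needed. The paper eliminates the \emph{ideal} Julia matrix, which is exactly unitary by construction. It uses certified interval tests to select a pivot of magnitude at least $1/(2\sqrt{s+1})$ and moves it into place by a signed two-level permutation. It then evaluates the resulting ideal factors only approximately. The pivot bound keeps every denominator polynomially bounded below, so guard precision stays polynomial and no exact zero test is required.

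Finally, record the width explicitly. The $n_0$ ideal qubits, one phase qubit, and at most $n_0+1$ reused clean work qubits give the stated bound of $2n_0+2$.
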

\begin{proof}
Appendix~\ref{app:gates} compiles a unitary on $S$ listed states into $O(S^2)$ two-level rotations, preserving its phase through a determinant-one extension. Known control patterns cost $O(n_0^3)$ gates per rotation and linear clean space. Phase-sensitive Solovay--Kitaev synthesis costs $O(\log^4(1/\zeta))$ for each one-qubit tolerance $\zeta$~\cite[Theorem~1]{dawson2006}; the appendix verifies certified pivots and polynomial bit bounds for our evaluator input. Applying this to the $Q+1$ Julia gates and compiling the markers exactly gives Eq.~\eqref{eq:gate-cost} when the tolerances sum to $\epsilon_g$. The phase and work qubits fit within $2n_0+2$ qubits. Each replacement compares full unitaries with the same fixed oracle calls, so the operator-norm telescope bounds the circuit error, including work-register leakage, by $\epsilon_g$. Taking the selected block preserves that bound.
\end{proof}

\subsubsection{A degree-reducing factor as one row query}\label{app:row-packing}
The polynomial equations in Eq.~\eqref{eq:peel-recover} determine a physical row-query factor. We derive its numerical input and output matrices with their distinct label counts, so that the factor fits the square-unitary compiler of Section~\ref{sec:unitary-implementation}. Let $R_*$ be a common memory size at least the current $q$, and let
\[
 B_Y=\begin{pmatrix}I_q\\0_{(R_*-q)\times q}\end{pmatrix}.
\]
The packing matrix $E_Z$ has $(k+1)R_*$ rows and $q+ke$ columns. Its block rows are the designated oracle input labels, each of length $R_*$. Its only nonzero blocks are $B_Y(I-\Pi)$ in block row zero and the first $q$ columns, and $B_YZ$ in block row $j$ and the $j$th $e$-column block, for $1\le j\le k$. The output matrix has $q$ rows and $2R_*$ columns and is
\begin{equation}\label{eq:row-packing}
 E_Z=
 \begin{pmatrix}
 B_Y(I-\Pi)&0&0&\cdots&0\\
 0&B_YZ&0&\cdots&0\\
 0&0&B_YZ&\cdots&0\\
 \vdots&&&\ddots&\vdots\\
 0&0&0&\cdots&B_YZ
 \end{pmatrix},
 \qquad
 R_Z=[\,(I_q-\Pi)B_Y^\dagger\ \ \Pi B_Y^\dagger\,].
\end{equation}
The block rows are exactly the retained identity label followed by the $k$ active labels; each row has $R_*$ coordinates. Both matrices therefore use a single identity label. Matrix multiplication gives
\[
 E_Z^\dagger E_Z=\begin{pmatrix}I_q-\Pi&0\\0&I_{ke}\end{pmatrix},
 \qquad R_ZR_Z^\dagger=I_q.
\]
Thus both are contractions. Let $\widehat F_0$ be the $2\times(k+1)$ matrix whose only nonzero entry is one at row $0$, column $0$, and let $\widehat F_j$ have its only nonzero entry one at row $1$, column $j$, using zero-based indices. Multiplication of the known matrices gives
\begin{equation}\label{eq:row-factor-coeff}
 R_Z\widehat F_0^{(R_*)}E_Z=B_0,
 \qquad R_Z\widehat F_j^{(R_*)}E_Z=B_j\quad(1\le j\le k).
\end{equation}
The superscript denotes replacement of each numerical entry by that
entry times $I_{R_*}$. Therefore one designated row query between $E_Z$ and $R_Z$ implements the coefficient step Eq.~\eqref{eq:peel-recover}. Its full unitary realization is already supplied by Theorem~\ref{thm:unitary}.

After $d$ reductions, let $E_t,R_t^{\rm out}$ denote the packing and output matrices at reduction stage $t$, and let $d_d$ be the final constant column. With $e_0^{\mathsf T}$ selecting the first polynomial of the original list, the actual query matrices in chronological order are
\begin{equation}\label{eq:row-query-matrices}
 C_0=E_{d-1}d_d,\qquad
 C_t=E_{d-1-t}R_{d-t}^{\rm out}\quad(1\le t<d),\qquad
 C_d=e_0^{\mathsf T}R_0^{\rm out}.
\end{equation}
They have sizes $((k+1)R_*)\times1$, $(k+1)R_*\times 2R_*$, and $1\times(2R_*)$. Eq.~\eqref{eq:peel-recover} and Eq.~\eqref{eq:row-factor-coeff} show coefficientwise that their designated product equals $p(A)/\beta$. In particular, the last factor found by the classical reduction is the first applied in the quantum circuit. The full unitary construction preserves this chronological ordering coherently.

\subsubsection{Computing the row factorization from the target}\label{app:row-compact}
We give the exact state calculations and precision bounds used in Theorem~\ref{thm:row-compiler}. The input convention is Eq.~\eqref{eq:row-state-coeff}; its rational joint Gram allows the reachable and observable reduction of Appendix~\ref{app:arithmetic}. After reduction, nilpotence bounds every recurrence by the degree $d$. In particular, Eq.~\eqref{eq:row-Gram} computes $H$ without enumerating words. Direct expansion of the correlations gives the rational innerness test
\begin{equation}\label{eq:row-inner-test}
 d_0^\dagger d_0+\sum_jb_j^\dagger Hb_j=1,
 \qquad O^\dagger d_0+\sum_jN_j^\dagger Hb_j=0.
\end{equation}
For clarity, the correlation at a nonempty word $vj$ is
$(d_0^\dagger O+\sum_\ell b_\ell^\dagger HN_\ell)N_vb_j$.
Reachability shows that all these correlations vanish exactly when the second equality in Eq.~\eqref{eq:row-inner-test} holds; the first is the empty correlation. The input condition is therefore decided by exact rational matrix equalities in the reduced coordinates.

To identify the successive highest-degree parts, we use the state coordinates that remain reachable after a given number of transitions. Their spans shrink with that number because the transitions are jointly nilpotent. We initialize $U_1=I_{s_0}$ and compute each rational column-basis matrix $U_{\ell+1}$ by retaining independent columns from $[N_1U_\ell\ \cdots\ N_kU_\ell]$. Their column spans are nested, and $U_{d+1}$ has no columns. At stage $t$, we choose a rational matrix $R_t$ so that $[U_{d-t+1}\ R_t]$ is a column basis of $U_{d-t}$, and write $e_t$ for the number of added columns. The directions in $U_{d-t+1}$ have already been removed. We obtain the remaining response energy by minimizing the original $H$-quadratic form over those directions, which gives $H_t$ in Eq.~\eqref{eq:short-Gram}; the subtracted term is zero when the eliminated basis is empty. This is the finite-dimensional shorted matrix associated with Anderson and Trapp's construction~\cite{andersontrapp1975}. We need its variational identity, with $U=U_{d-t+1}$,
\[
 x^\dagger H_tx=\min_z (x-Uz)^\dagger H(x-Uz),\qquad
 z_{\min}=(U^\dagger HU)^{-1}U^\dagger Hx.
\]
It follows by completing the square; $U^\dagger HU$ is positive definite whenever $U$ is nonempty. In particular, $H_t\succeq0$, its kernel is the column span of $U$, and minimizing successively over nested spans agrees with minimizing directly over the larger span. These are the properties used in the degree-reduction recurrence. This formula computes each metric directly from the original rational data.

Starting with the output pair $O_0=O,d_0$, we apply the updates in Eq.~\eqref{eq:row-updates}, keeping $N_j,b_j$ fixed.
If $e_t=0$, the corresponding inverse-root and variable blocks are absent. In the nonempty case $K_t$ is positive definite by Theorem~\ref{thm:row-compiler}. If $O_t$ has $q_t$ rows, then $Z_t$ has $q_t$ rows and $e_t$ columns, and $O_{t+1},d_{t+1}$ have $q_{t+1}=q_t+ke_t$ rows. The state dimension $s_0$ is unchanged.

The invariant and range argument in Theorem~\ref{thm:row-compiler} prove that these are the coefficient factors of Lemma~\ref{lem:peel}, with $\sum_te_t=s_0$ and common processing multiplicity $q_0+ks_0$. It remains to justify their numerical evaluation.

\begin{proof}[Bit bounds for Theorem~\ref{thm:row-compiler}]
To bound the precision required by these formulas, we use their rational state coordinates. The basis columns can be retained as products of the original rational transitions of length at most $d$, and all $H_t$ are computed directly from the original rational $H$ in Eq.~\eqref{eq:short-Gram}. Their entries, and those of the positive definite $K_t$, are rational with polynomial bit length by common-denominator and minor bounds. Their positive eigenvalues are therefore bounded below by $2^{-\poly(L)}$, with a computable polynomial in the rational input length $L$. The inverse roots in Eq.~\eqref{eq:row-updates} have certified polynomial-time evaluators. The rank choices are determined by the rational basis calculations before we evaluate $O_t$ and $Z_t$.

To control the precision needed in this recurrence, we express the joint output Grams in terms of rational data. The observability equation and the preserved coefficient correlations give
\begin{equation}\label{eq:joint-row-Grams}
 \begin{aligned}
 O_t^\dagger O_t&=H_t-\sum_jN_j^\dagger H_tN_j,\\
 O_t^\dagger d_t&=-\sum_jN_j^\dagger H_tb_j,\\
 d_t^\dagger d_t&=1-\sum_jb_j^\dagger H_tb_j
 \end{aligned}
\end{equation}
These identities give rational formulas for all joint output Grams and bound the sizes of the matrices evaluated during the recurrence. Moreover, $O_t^\dagger O_t\preceq H_t\preceq H$ and $\|d_t\|\le1$. A polynomial number of guarded matrix operations consequently evaluates Eq.~\eqref{eq:row-updates} to the required precision. The finite-gate compiler then implements Eq.~\eqref{eq:row-query-matrices} with its actual numerical support sizes and uniformly bounded selected error.
\end{proof}

\subsubsection{An explicit row circuit}
\begin{example}[An explicit two-query scalar transformation]\label{ex:row}
For $i=1$, $a=1$, $b=k=2$, the original oracle can have $d_o=4$ ancillary labels. In known input and output orders, its prescribed entries are
\begin{equation}\label{eq:row-oracle-example}
 U_A=\begin{pmatrix}
 I_n&0&0&0\\
 0&A_1&A_2&B\\
 0&C_{11}&C_{12}&D_1\\
 0&C_{21}&C_{22}&D_2
 \end{pmatrix},
 \qquad \|[A_1\ A_2]\|\le1,
\end{equation}
where each displayed entry is an $n\times n$ matrix and all unspecified blocks are constrained only by unitarity. We implement
\[
 p(A_1,A_2)=\tfrac35 A_1A_2+\tfrac45 A_2A_1.
\]
With two memory labels, the designated query input has six numerical coordinates, ordered $(0,0),(0,1),(1,0),(1,1),(2,0),(2,1)$, and its output has four, ordered $(0,0),(0,1),(1,0),(1,1)$. The query matrices are explicitly
\begin{equation}\label{eq:row-example-matrices}
 C_0=\begin{pmatrix}0\\0\\0\\4/5\\3/5\\0\end{pmatrix},\qquad
 C_1=\begin{pmatrix}
 0&0&0&0\\0&0&0&0\\0&0&1&0\\0&0&0&0\\0&0&0&1\\0&0&0&0
 \end{pmatrix},\qquad
 C_2=\begin{pmatrix}0&0&1&0\end{pmatrix}.
\end{equation}
Their nonzero singular values are one, so their Julia defect roots are just projections. Direct multiplication of the designated blocks gives
\[
 (C_2\otimes I_n)\Lambda_2(A)(C_1\otimes I_n)\Lambda_2(A)(C_0\otimes I_n)
                  =\tfrac35 A_1A_2+\tfrac45 A_2A_1.
\]
On one branch the first query contributes $A_2$ and the second contributes $A_1$; the other branch gives the reverse order. The full circuit uses the same two unconditional instances of Eq.~\eqref{eq:row-oracle-example}, with the Julia gates and failure permutations specified in Section~\ref{sec:unitary-implementation}. The preparation uses the known rotation
$\begin{pmatrix}3/5&-4/5\\4/5&3/5\end{pmatrix}$ and basis permutations, while $C_1$ is the displayed routing matrix. The selected target has norm at most one by the contraction product, and norm exactly one by evaluation on the word matrices truncated at degree two, which send the empty word to orthogonal words $12$ and $21$ with coefficients $3/5,4/5$. Two exact queries are therefore optimal in this access model. If the row is obtained from equally weighted selection of two separate block encodings, it is divided by $\sqrt2$, and the same circuit returns $p/2$; the loading normalization is therefore part of the resulting block.
\end{example}

\subsection{Finite-gate synthesis of the known unitaries}\label{app:gates}
Theorem~\ref{thm:gates} requires a phase-sensitive compiler for a known unitary supported on at most $S$ listed basis states, whose entries have certified evaluators. Standard two-level synthesis and Solovay--Kitaev supply the gate decompositions~\cite{barenco1995,shende2006,dawson2006}; the points specific to our input are the support size, the phase of a controlled supported matrix, and the bit cost of numerical choices. We verify these points here. Our elementary set is $H,T,T^\dagger$, and CNOT, with exact NOT and Toffoli decompositions~\cite{selinger2013}.

A phase error in a gate applied only at failure label zero is a relative phase in the full circuit. We therefore add one phase qubit and one unused basis state with phase value one. For a supported unitary $U$ of order $s\le S$, we compile
\begin{equation}\label{eq:det-comp}
 \widehat U=\begin{pmatrix}U&0\\0&\overline{\det U}\end{pmatrix}
 \in\operatorname{SU}(s+1).
\end{equation}
It acts as $U$ on the phase-zero input and fixes every unlisted state. Two-level elimination reduces it to $O(S^2)$ matrices in $\operatorname{SU}(2)$. The elementary elimination is
\begin{equation}\label{eq:Givens}
 \frac1{\sqrt{|a|^2+|b|^2}}
 \begin{pmatrix}\overline a&\overline b\\-b&a\end{pmatrix}.
\end{equation}
After earlier columns have become coordinate columns, the remaining part of the next column has norm one. Certified intervals identify a pivot of magnitude at least $1/(2\sqrt{s+1})$; a signed two-level permutation moves it to the pivot position, and subsequent eliminations only increase its accumulated magnitude. Thus every denominator has a polynomial lower bound. The first $s$ diagonal entries become one, and determinant one fixes the last. This gives entry evaluators for all factors without deciding whether an approximately known entry is exactly zero.

For two listed bit strings, a Gray path conjugates a controlled one-qubit operation into the corresponding two-level gate~\cite{barenco1995}. Each path transposition and the failure-zero condition use known Boolean predicates: an AND chain computes the required bit pattern in clean workspace, the gate is applied, and the computation is reversed. Visibility markers additionally compare the oracle label with the retained list of $d_+$ strings. These operations use polynomial work in the label width, with the conservative bound $O(n_0^3)$ per two-level factor and linear clean space used in Theorem~\ref{thm:gates}. They never control the unknown oracle.

Barenco et al.'s Lemmas~4.3 and~5.1~\cite{barenco1995} give a controlled $W\in\operatorname{SU}(2)$ using two CNOTs and three one-qubit matrices satisfying
\begin{equation}\label{eq:ABC}
 ABC=I,\qquad AXBXC=W.
\end{equation}
We use a stable coordinate construction of these matrices instead of choosing Euler angles near a singular chart. With the Pauli matrices $\boldsymbol\sigma=(X,Y,Z)$, define
\begin{equation}\label{eq:quaternion-matrix}
 \mathcal Q(q_0,q_x,q_y,q_z)=
 \begin{pmatrix}q_0-iq_z&-q_y-iq_x\\q_y-iq_x&q_0+iq_z\end{pmatrix}.
\end{equation}
For a real unit vector $(q_0,q_x,q_y,q_z)$ this is an $\operatorname{SU}(2)$ matrix. Writing $W=cI-i\mathbf v\cdot\boldsymbol\sigma$, we first test whether it lies within the assigned tolerance of $I$ or $-I$. In that case we use the corresponding central matrix and charge the error; otherwise certified overlapping tests give a lower bound on $\|\mathbf v\|$ in terms of the tolerance. For real unit vectors $x,y$ with $1+x\cdot y>0$, we use the alignment matrix
\[
 H(x,y)=\mathcal Q\left(\frac{1+x\cdot y}{\sqrt{2(1+x\cdot y)}},
                        \frac{x\times y}{\sqrt{2(1+x\cdot y)}}\right).
\]
Pauli multiplication gives $H(x,y)(x\cdot\boldsymbol\sigma)H(x,y)^\dagger=y\cdot\boldsymbol\sigma$. Choosing a coordinate axis $e_j$ with $|e_j\cdot\mathbf v/\|\mathbf v\||\le3/4$ therefore gives
\[
 A=H(e_j,\mathbf v/\|\mathbf v\|)H(e_z,e_j),\quad
 B=\diag\left(\sqrt{\frac{1+c}{2}}+i\sqrt{\frac{1-c}{2}},
              \sqrt{\frac{1+c}{2}}-i\sqrt{\frac{1-c}{2}}\right),
 \quad C=B^\dagger A^\dagger.
\]
Both alignment denominators are bounded away from zero, and $1\pm c\ge\|\mathbf v\|^2/2$ bounds the root conditions. Since $XBX=B^\dagger$ and $(B^\dagger)^2=cI-i\|\mathbf v\|Z$, these factors obey Eq.~\eqref{eq:ABC}. At $W=-I$ we use $A=I,B=iZ,C=-iZ$. All choices use overlapping tolerance intervals, so no exact comparison of evaluator-defined entries is required.

For the one-qubit approximations we use the inverse-closed alphabet
\[
 \mathcal G=\{V_0,V_1,V_0^\dagger,V_1^\dagger\},\qquad
 V_0=HTHT^\dagger=R_x(\pi/4)R_z(-\pi/4),\quad V_1=TV_0T^\dagger.
\]
The subgroup generated by $\mathcal G$ is dense in $\operatorname{SU}(2)$. Indeed, $\operatorname{tr}V_0=1+1/\sqrt2$ is not an algebraic integer, since its monic minimal polynomial has a noninteger coefficient. Hence $V_0$ has infinite order, and the closure of its powers is a rotation circle. Conjugation by $T$ gives a second circle with a nonparallel axis; their infinitesimal generators and the corresponding Lie bracket span $\mathrm{su}(2)$, which proves density. To construct a fixed coarse net of radius $r$, we first compute a finite rational $\delta$-net of the unit sphere in four real coordinates using stereographic charts, with $0<\delta<r/4$. We enumerate gate words until every grid point has a certified approximation within $r-2\delta$. The triangle inequality then gives a strict covering margin for the whole sphere, and density guarantees that the enumeration terminates. Since $r$ and the grid are fixed independently of the target, this cost is fixed; overlapping acceptance radii make subsequent net choices certifiable.

Dawson and Nielsen's Theorem~1 and Sections~3--4~\cite{dawson2006} then give an operator-norm approximation of accuracy $\zeta$ with $O(\log^4(1/\zeta))$ gates. To justify finite-bit evaluation as well, we record the conditioning of the balanced-commutator step. For a residual $W_{\rm res}=\mathcal Q(c,\mathbf v)$ near the identity, we construct the balanced commutator from
\[
 \begin{gathered}
 h=\|W_{\rm res}-I\|=\sqrt{2(1-c)}\le1,\qquad
 s=((1-c)/2)^{1/4},\qquad t=\sqrt{1-s^2},\\
 V_{\rm b}=\mathcal Q(t,s,0,0),\qquad W_{\rm b}=\mathcal Q(t,0,s,0).
 \end{gathered}
\]
Their exact commutator is
\begin{equation}\label{eq:SK-commutator}
 V_{\rm b}W_{\rm b}V_{\rm b}^\dagger W_{\rm b}^\dagger
 =\mathcal Q(1-2s^4,\,2ts^3,-2ts^3,2t^2s^2).
\end{equation}
Its scalar coordinate is $c$ and its unit axis is $u=(s,-s,t)^{\mathsf T}/\sqrt{1+s^2}$. For $n=\mathbf v/\|\mathbf v\|$, an axis $e_j$ satisfying both $|e_j\cdot u|,|e_j\cdot n|\le4/5$ can be certified with a fixed margin, since each unit vector has at most one coordinate exceeding $3/4$ in magnitude. Conjugation by $A_{\rm c}=H(e_j,n)H(u,e_j)$ aligns the commutator axis with $n$. The resulting factors $V_{\rm c},W_{\rm c}$ satisfy
\[
 V_{\rm c}W_{\rm c}V_{\rm c}^\dagger W_{\rm c}^\dagger=W_{\rm res},
 \qquad \|V_{\rm c}-I\|^2=\|W_{\rm c}-I\|^2=2(1-t)\le2s^2=h.
\]
The alignment denominators are at least $\sqrt{2/5}$ and $t\ge1/\sqrt2$.

At a recursion level with tolerance $\eta$, an overlapping test either certifies that the residual is at most $\eta/10$, which we charge directly, or certifies $h\ge\eta/20$ before using the formula. In the latter case
\[
 1-c\ge\eta^2/800,\qquad \|\mathbf v\|\ge h/\sqrt2.
\]
These lower bounds make the inverse factors in the root and normalization calculations polynomially bounded in $\eta^{-1}$. The standard recursion, with $\eta_{r+1}=C\eta_r^{3/2}$, has depth $O(\log\log(1/\zeta))$ and uses polynomially many calls in $\log(1/\zeta)$; moreover, each local tolerance satisfies $\log(1/\eta)=O(\log(1/\zeta))$. The root evaluators of Appendix~\ref{app:arithmetic} therefore evaluate the recursion in polynomial bit time after allocating guard precision across these conditioned operations. The arithmetic approximations certify the choices and evaluate the exact unitary factors, whose gate words form the circuit. Reserving a fixed fraction of $\zeta$ for these evaluations preserves the final error guarantee.

We apply this procedure to the $O((Q+1)S^2)$ supported factors and their controlled decompositions, allocating tolerances with sum at most $\epsilon_g$. The known predicates are compiled exactly, and the phase and clean work registers fit the bound of Theorem~\ref{thm:gates}. Each elementary factor approximates a full ideal unitary, including its action on nonzero workspace sectors. Replacing the factors one at a time therefore bounds the full-circuit error by the sum of their tolerances, even in the presence of earlier leakage; selecting the ancillary block cannot increase it. This gives Eq.~\eqref{eq:gate-cost} while keeping all original queries unchanged.

\section{Joint access from separate block encodings}\label{app:loaders}
The joint query used by our synthesis theorem can be supplied directly or constructed from coherently selectable block encodings. We give the latter construction because it explains how the input model relates to familiar selection and preparation methods~\cite{childswiebe2012,gilyen2019}, and because its scaling changes the norm that the synthesis algorithm must use. Throughout this appendix we assume either a supplied coherent selector or controlled implementations of the individual inputs; these are additional access capabilities for unknown unitaries~\cite{araujo2014}.

For each $j\in\{1,\ldots,k\}$, we assume that a supplied unitary $U_j$ acts on an encoding register and the $n$-dimensional signal register, and that, in its stated computational basis,
\[
 [U_j]_{(0,x),(0,y)}=(A_j)_{xy}/\alpha_j,
 \qquad 0\le x,y<n,
\]
where $\alpha_j>0$ is known. By padding with identity qubits and applying known coordinate permutations, we give all encodings a common width $q_e\ge1$ and write $d_e=2^{q_e}$. This preserves their designated corners at no query cost, while the extra encoding dimension ensures that the joint oracle meets the dimension requirement in Eq.~\eqref{eq:oracle-full}.

The selector register has width $q_J=\lceil\log_2(k+1)\rceil$ and dimension $d_J=2^{q_J}$. Its label zero supplies a genuine identity operation; labels $1,\ldots,k$ select the inputs, and the remaining labels also act as identity. The complete selected unitary is the $d_Jd_en\times d_Jd_en$ matrix with entries
\begin{equation}\label{eq:selector}
 [U_{\rm sel}]_{(j,e,x),(\ell,f,y)}
 =\delta_{j\ell}
 \begin{cases}
  [U_j]_{(e,x),(f,y)},&1\le j\le k,\\
  \delta_{ef}\delta_{xy},&j=0\text{ or }j>k.
 \end{cases}
\end{equation}
The displayed action is unitary on every selector label. When it is supplied directly we charge its stated query cost; when controlled implementations of the $U_j$ are supplied instead, we realize it by applying each $U_j$ under the known condition $J=j$, so one selector call costs one controlled call to each input and the corresponding label comparisons.

We designate the single ancillary label $(J,e)=(0,0)$ as the idle coordinate, and $(j,0)$ for $1\le j\le k$ as the active input and output coordinates. Their selected matrix is
\[
 \begin{pmatrix}
 I_n&0&0&\cdots&0\\
 0&A_1/\alpha_1&0&\cdots&0\\
 0&0&A_2/\alpha_2&\ddots&\vdots\\
 \vdots&\vdots&\ddots&\ddots&0\\
 0&0&\cdots&0&A_k/\alpha_k
 \end{pmatrix}.
\]
The original oracle width is $q_o=q_J+q_e$, and its dimension obeys $d_o=d_Jd_e\ge2(k+1)\ge1+2k$. We place the designated lists in the order of Eq.~\eqref{eq:oracle-full} by exchanging their known bit strings, at a cost polynomial in the number and width of those strings. This gives $i=1$ and $a=b=k$; the additional identity labels belong to the unspecified part of the oracle and are not included in the numerical arrays used for synthesis.

For a known unit column $g=(g_1,\ldots,g_k)^{\mathsf T}$ with a certified entry evaluator, two-level elimination computes an index rotation $P_g$ satisfying $P_g\sum_jg_j\ket j=\ket1$. It fixes zero and unused labels. The pivot and precision bounds are those of Appendix~\ref{app:gates}, and its synthesis error is charged as a known-gate error. The square joint unitary
\[
 U_{\rm row}=(P_g\otimes I_{d_en})U_{\rm sel}
\]
has idle input and output coordinate $(0,0)$, active input coordinates $(j,0)$ for $1\le j\le k$, and active output coordinate $(1,0)$. Its selected matrix is
\begin{equation}\label{eq:selected-row}
 \begin{pmatrix}
 I_n&0&\cdots&0\\
 0&\overline g_1A_1/\alpha_1&\cdots&\overline g_kA_k/\alpha_k
 \end{pmatrix}.
\end{equation}
The oracle dimension satisfies $d_o\ge k+2$, so this is an instance of Eq.~\eqref{eq:oracle-full} with $i=1$, $a=1$, and $b=k$. Unitarity proves contractivity of the loaded row. With equal weights it is the row $[A_1/\alpha_1\ \cdots\ A_k/\alpha_k]/\sqrt k$; omitting this factor would incorrectly infer joint row contractivity from the separate input bounds.

The row algorithm acts on the variables $Z_j=\overline g_j A_j/\alpha_j$ that actually occur in Eq.~\eqref{eq:selected-row}. If all weights needed by the target are nonzero, its scalar polynomial is specified in those variables by
\begin{equation}\label{eq:row-coefficient-rescale}
 \widetilde c_{j_1\cdots j_d}
 =c_{j_1\cdots j_d}\prod_{t=1}^d
                              \frac{\alpha_{j_t}}{\overline g_{j_t}}.
\end{equation}
Consequently $\widetilde p(Z)=p(A)$, and we implement the change in Eq.~\eqref{eq:row-coefficient-rescale} by multiplying transition $T_j$ by $\alpha_j/\overline g_j$, preserving the number of states. This letter-dependent rescaling is essential when terms have different degrees. For the rational-input theorem, the multipliers must have rational real and imaginary parts; when the $\alpha_j$ are rational, one suitable choice for $k>1$ is
\[
 g_j=\frac{2t_j}{1+\|t\|_2^2}\ (j<k),\qquad
 g_k=\frac{1-\|t\|_2^2}{1+\|t\|_2^2},\qquad t_j=\frac1{2k},
\]
whose entries are positive and rational, while for $k=1$ we take $g_1=1$. Equal weights may introduce algebraic multipliers and therefore require a separate check that the resulting coefficients remain rational; for example, with $\alpha_1=\alpha_2=1$, the degree-two rescaling for two equal weights is the rational factor two. In either case the compiler estimates the norm in the loaded row coordinates, including the chosen weights, which determines the normalization of the resulting circuit.

We next obtain Eq.~\eqref{eq:pencil} with its known $a\times b$ matrices $F_j$. We choose positive known rational numbers $\beta_j\ge\|F_j\|$ and set $d_F=2^{q_F}$, where $q_F=\lceil\log_2(a+b)\rceil$. On a numerical register with $d_F$ labels, we use the explicit square unitary
\[
 W_j=
 \begin{pmatrix}
 F_j/\beta_j&(I_a-F_jF_j^\dagger/\beta_j^2)^{1/2}&0\\
 (I_b-F_j^\dagger F_j/\beta_j^2)^{1/2}&-F_j^\dagger/\beta_j&0\\
 0&0&I_{d_F-a-b}
 \end{pmatrix}.
\]
Its row groups have lengths $a,b,d_F-a-b$ and its column groups have lengths $b,a,d_F-a-b$, omitting an empty final group. The designated input labels are $0,\ldots,b-1$ and output labels $0,\ldots,a-1$, with corner $F_j/\beta_j$. This is a known numerical Julia matrix.

The joint registers are now $(J,F,e,\mathrm{signal})$. We combine the known conditional $W_j$ with Eq.~\eqref{eq:selector} to obtain the full square unitary $\mathcal S$ of order $d_Jd_Fd_en$, whose entries are
\[
 [\mathcal S]_{(j,\mu,e,x),(\ell,\nu,f,y)}
 =\delta_{j\ell}
 \begin{cases}
 (W_j)_{\mu\nu}[U_j]_{(e,x),(f,y)},&1\le j\le k,\\
 \delta_{\mu\nu}\delta_{ef}\delta_{xy},&j=0\text{ or }j>k.
 \end{cases}
\]
Since the two factors in each active selector block are unitary, $\mathcal S$ is unitary on the stated registers.

Assuming the $\alpha_j$ are positive rationals for the rational-input application, we define
\[
 \Lambda=\sum_j\beta_j\alpha_j,\qquad
 a_j=\sqrt{\beta_j\alpha_j/\Lambda}.
\]
A known selector preparation $P$ fixes label zero and maps $\ket1$ to $\sum_ja_j\ket j$, while acting as identity outside the active selector labels. We use the full unitary
\[
 U_\Delta=(P^\dagger\otimes I_{d_Fd_en})
                  \mathcal S(P\otimes I_{d_Fd_en}).
\]
The idle coordinate is $(0,0,0)$; the active input coordinates are $(1,\nu,0)$ for $0\le\nu<b$, and the active output coordinates are $(1,\mu,0)$ for $0\le\mu<a$. Directly multiplying the stated matrix entries gives
\[
 [U_\Delta]_{(1,\mu,0,x),(1,\nu,0,y)}
 =\sum_j a_j^2\frac{f_{\mu\nu,j}}{\beta_j}
                      \frac{(A_j)_{xy}}{\alpha_j}
 =\frac1\Lambda\sum_jf_{\mu\nu,j}(A_j)_{xy}.
\]
The idle corner is $I_n$ and its off-diagonal blocks vanish because both preparations fix selector label zero. With the known coordinate permutations, this is precisely Eq.~\eqref{eq:oracle-full} for the loaded block $\Delta(A)/\Lambda$, with $i=1$. Its width is $q_o=q_J+q_F+q_e$ and $d_o=d_Jd_Fd_e\ge1+a+b$. This is the explicit linear-combination construction underlying the block-encoding rule of Gily\'en et al.~\cite[Lemma~52]{gilyen2019}, with its rectangular coordinate lists and identity resource retained.

The compiler now receives $Z_j=A_j/\Lambda$, so to preserve the original target we give it coefficients
\begin{equation}\label{eq:coefficient-rescale}
 \widetilde c_w=\Lambda^{|w|}c_w.
\end{equation}
We obtain these coefficients by scaling every transition by $\Lambda$, again preserving the number of states, and compute the uniform norm in the resulting coordinates. The preparations and numerical $W_j$ matrices are known gates, whose synthesis error enters Corollary~\ref{cor:oracle-error}.

Each forward call to $U_{\rm row}$ or $U_\Delta$ uses one forward selector call and the stated known gates; the inverse preparation is known, so it does not require an inverse input query. If we construct the selector from individually controlled inputs, a $Q$-query synthesis makes $Q$ calls to each such input, including on amplitudes already marked as failures. These costs explain the distinction between receiving a joint block directly and first loading it from separate encodings. The upper synthesis guarantee applies to the loaded unitary, whereas a converse over all compatible completions retains that larger quantifier: a particular loader, or additional promises on its constituent inputs, can restrict the available completions or matrix tuples.

\section{From causal Choi matrices to the gates of a comb}\label{app:comb}
A channel-level target specifies how a controller connects unknown channel calls while retaining its own memory. We compute the controller's gates from rational causal Choi data, using the sequential characterization of Chiribella, D'Ariano, and Perinotti~\cite[Theorems~5--6]{chiribella2009} and the prefix-rank memory construction of Bisio et al.~\cite[Theorem~2]{bisio2010}. Our task is to make the numerical factors, full unitary gates, and precision costs explicit.

\subsection{Causal data and sequential isometries}\label{sec:comb-application}
We construct the processing gates in Eq.~\eqref{eq:intro-comb-map} from their causal Choi data. There are $Q$ channel slots and $J=Q+1$ known processing steps.

We describe the known network before connecting the channel slots, with input $X_j$ and output $Y_j$ each of dimension $n=2^{n_s}$. In the unnormalized Choi convention~\cite{jamiolkowski1972,choi1975,watrous2018}, a map's entry at $(y,x),(y',x')$ is $\langle y|\mathcal C(\ket x\!\bra{x'})|y'\rangle$; for a prefix we use the chronological order $(Y_1,X_1,\ldots,Y_j,X_j)$. Causality is then expressed by the following partial-trace identities.

\begin{definition}[Causal controller data]\label{def:causal-controller-data}
Causal controller data consist of explicit rational Hermitian matrices $C^{(j)}$ of order $n^{2j}$, for $0\le j\le J$, satisfying
\begin{equation}\label{eq:controller-data}
 C^{(0)}=[1],\qquad C^{(j)}\succeq0,\qquad
 \operatorname{Tr}_{Y_j}C^{(j)}=C^{(j-1)}\otimes I_{X_j}
       \quad(1\le j\le J).
\end{equation}
An optional successful outcome is specified by a rational matrix $S$ with $0\preceq S\preceq C^{(J)}$.
\end{definition}

Tracing the latest output leaves the preceding process tensored with identity on the latest input, so an earlier output cannot depend on an input still to be supplied. We use these intermediate identities to compute the controller, and measure its cost in the explicit rational encoding of $C^{(j)}$ and the optional $S$. Writing $r_j=\rank C^{(j)}$ and $r_* =\max_jr_j$, we obtain the following finite-bit realization. Controller space includes signal, memory, phase, and clean work; environments internal to channel primitives are counted separately when their Stinespring implementations are supplied.

\Needspace{10\baselineskip}
\begin{proposition}[Compilation from causal Choi data]\label{prop:comb-compilation}
For causal controller data as in Definition~\ref{def:causal-controller-data} and rational $0<\epsilon<1$, a classical algorithm computes descriptions of $J$ known unitaries realizing $C^{(J)}$, and of the optional terminal instrument realizing $S$. After any $Q$ channel calls are connected, finite-gate synthesis has diamond-norm error at most $\epsilon$. It uses $O(1+n_s+\log r_*)$ controller qubits and $\widetilde O(Jn^2r_*^2)$ known gates. Classical cost is polynomial in the explicit rational input length and $\log(1/\epsilon)$.
\end{proposition}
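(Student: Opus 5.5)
The plan follows the sequential realization of Chiribella--D'Ariano--Perinotti and Bisio et al., made numerical. The controller is built inductively from Gram factors of the prefix Choi matrices. For each $j$, I would compute a rational rank factorization $C^{(j)}=F_j^\dagger F_j$ with $F_j$ having $r_j$ rows. Exact elimination gives a rational $LDL^\dagger$ decomposition, and the roots of the diagonal entries are evaluated with the certified square-root procedure of Appendix~\ref{app:arithmetic}. Reading $F_j$ as an operator, I would view it as a purification of $C^{(j)}$ with memory of dimension $r_j$. The causal identity $\operatorname{Tr}_{Y_j}C^{(j)}=C^{(j-1)}\otimes I_{X_j}$ then says that the partial Grams of the factor $F_j$ over $Y_j$ and of $F_{j-1}\otimes I_{X_j}$ agree. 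Douglas factorization, as in Lemma~\ref{lem:certificate-to-realization}, gives a partial isometry $V_j$ from memory$_{j-1}\otimes X_j$ to $Y_j\otimes$ memory$_j$ with $(V_j)$ reproducing $F_j$ from $F_{j-1}$. Explicitly, $V_j$ is $F_j$ applied after the pseudoinverse of $F_{j-1}\otimes I$ on its range, and it is an isometry on that range. I would extend it by a Julia completion (Eq.~\eqref{eq:Julia}) to a full unitary $U_j$ on signal and a memory register of $\lceil\log_2 \max(r_*n,\cdot)\rceil$ qubits. Padding by an isometric completion on the orthogonal complement keeps connected networks exact, because the complement is never reached from the initial zero state.

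Next, I would verify that connecting arbitrary channels $\mathcal E_1,\dots,\mathcal E_Q$ to $\mathcal V_1,\dots,\mathcal V_J$ reproduces the link product with $C^{(J)}$. This follows by induction on $j$. At each step, the marginal of the partial network on $(Y_1X_1\dots Y_jX_j)$ plus memory is the purification $F_j$, by the defining relation of $V_j$. Tracing the final memory yields the target map in Eq.~\eqref{eq:intro-comb-map}. For the optional outcome $S$, Douglas factorization applied to $0\preceq S\preceq C^{(J)}$ gives a contraction $E$ on the final memory with $F_J^\dagger E^\dagger EF_J=S$. Its Julia dilation, with a flag qubit, implements the successful branch, and the complementary branch is the supplied deterministic completion.

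For resources, each $U_j$ acts on a support of dimension $O(n^2r_*)$ at most. Appendix~\ref{app:gates} then compiles it with $\widetilde O(n^2r_*^2)$-type two-level costs after accounting for the tensor structure, giving $\widetilde O(Jn^2r_*^2)$ gates in total and $O(1+n_s+\log r_*)$ controller qubits. For the error, I would allocate $\epsilon/(J+1)$ per unitary. The channel slots are fixed CPTP maps, so the diamond-norm telescope over the sequential composition adds these errors, as in Corollary~\ref{cor:Kraus-operation}. Tracing memory and discarding do not increase diamond norm.

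The main obstacle is precision control of the pseudoinverse of $F_{j-1}\otimes I$ when the ranks $r_j$ are determined only by exact rational data. I would first decide every rank exactly from the rational $C^{(j)}$. I would then lower-bound the nonzero eigenvalues by rational minor bounds, which give $2^{-\mathrm{poly}}$ gaps in the explicit input length. With those bounds, inverse square roots are evaluated only on certified supports, and the possibly singular Julia defects are handled by the gap-free continuity bound of Eq.~\eqref{eq:root-continuity}. This keeps the classical cost polynomial in the input length and $\log(1/\epsilon)$.
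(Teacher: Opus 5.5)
Your route is the same as the paper's: prefix Gram factors, the causal partial-Gram identity, Douglas isometries, a unitary completion, a pseudoinverse-sandwiched terminal effect, and exact ranks with minor bounds. As written, however, it realizes the entrywise complex conjugate of the data.

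You take $C^{(j)}=F_j^\dagger F_j$ and let $V_j$ reproduce $F_j$ from $F_{j-1}\otimes I$. That is, $F_j[\alpha,(z,y,x)]=\sum_\beta V_j[(y,\alpha),(x,\beta)]\,F_{j-1}[\beta,z]$ for prefixes $z$ of length $j-1$. Induction then shows that the composed isometry of the opened network sends $\ket{x_1\cdots x_j}$ to $\ket{y_1\cdots y_j}\ket{\alpha}$ with amplitude $F_j[\alpha,\zeta]$, where $\zeta=(y_1,x_1,\ldots,y_j,x_j)$. In the paper's Choi convention the network's Choi matrix therefore has entry $\sum_\alpha F_j[\alpha,\zeta]\overline{F_j[\alpha,\zeta']}$. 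This is $\overline{C^{(j)}}$, not $C^{(j)}$.

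A concrete failure: take $Q=0$ and the Choi matrix $uu^\dagger$ of $\rho\mapsto U\rho U^\dagger$, with $u_{(y,x)}=U_{yx}$. Up to a phase, your factor is $F_1=u^\dagger$ and your $V_1$ is $\overline U$. For $U=\operatorname{diag}(1,i)$ this is a different channel.

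The terminal step inherits the same slip. When the physical amplitudes form $\mathsf R$, a memory effect $M$ contributes $\mathsf R M^{\mathsf T}\mathsf R^\dagger$. So implementing your $E$ with $F_J^\dagger E^\dagger E F_J=S$ yields $\overline S$.

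The paper's index order in \eqref{eq:comb-TB} is designed to avoid exactly this. There the physical amplitudes are the columns of $\mathsf R_j$, which is your $F_j^\dagger$. The Gram identity \eqref{eq:comb-Gram-equality} carries the transposed entry $\delta_{xx'}C^{(j-1)}_{z',z}$, and the effect in \eqref{eq:comb-effect} is $E_S=H_S^{\mathsf T}$. The quickest repair of your version is to run it on $\overline{C^{(j)}}$ and $\overline S$. These are again causal controller data with $0\preceq\overline S\preceq\overline{C^{(J)}}$, so the two conjugations cancel.

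Your resource count also needs fixing. $X_j$ and $Y_j$ are the same $n$-dimensional signal register at different times, so each $U_j$ acts on signal and memory. Its support is $O(nr_*)$: $nR_b<2nr_*$ for the paper's direct completion, and at most $2nr_*$ for your Julia completion. Appendix~\ref{app:gates} then gives $O(n^2r_*^2)$ two-level factors per gate directly. With the support $O(n^2r_*)$ you state, the same compiler would give $O(n^4r_*^2)$. No tensor-structure argument lowers this for a generic $U_j$.

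Two smaller points. If your per-gate budget $\epsilon/(J+1)$ is meant in operator norm, it must be halved, because an operator-norm error $\eta$ in a unitary can cost $2\eta$ in diamond norm. And the complementary outcome of the terminal instrument realizes $C^{(J)}-S$; the supplied deterministic completion is $C^{(J)}$ itself, from which you built the controller.
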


The gate bound suppresses factors polynomial in register widths and logarithmic in $1/\epsilon$. Ideal gates realize the Choi data exactly, and the optional binary instrument has complementary outcome $C^{(J)}-S$; finite-gate error is uniform over the connected channels. Since the terminal input matrix has $n^{4J}$ entries, the bound concerns an explicit channel-level description. The compact coefficient input of polynomial Kraus synthesis specifies a different task, with known gates restricted to auxiliary labels.

To construct the processing isometries, we write a prefix index as $z_j=(y_1,x_1,\ldots,y_j,x_j)$, with $0\le x_t,y_t<n$ and $d_j=n^{2j}$. The causal identities become
\begin{equation}\label{eq:comb-causality}
 \sum_{y=0}^{n-1}C^{(j)}_{(z,y,x),(z',y,x')}
     =\delta_{xx'}C^{(j-1)}_{z,z'}
       \quad(1\le j\le J).
\end{equation}
These rational linear equations constrain each time cut, and hence contain the information needed for a sequential realization. We first factor each prefix as
\begin{equation}\label{eq:comb-factors}
 C^{(j)}=\mathsf R_j\mathsf R_j^\dagger,\qquad
 \mathsf R_j:\ d_j\times r_j,\quad r_j=\rank C^{(j)},
 \qquad \mathsf R_0=[1].
\end{equation}
The independent columns of $\mathsf R_j$ will serve as the memory coordinates after step $j$. To relate consecutive factors, whose column bases may be chosen independently, we form two matrices with columns indexed by $(z,x)$:
\begin{equation}\label{eq:comb-TB}
 \begin{aligned}
 T_j[(x',\beta),(z,x)]&=\delta_{x'x}(\mathsf R_{j-1})_{z,\beta},
   & T_j&:\ nr_{j-1}\times nd_{j-1},\\
 B_j[(y,\alpha),(z,x)]&=(\mathsf R_j)_{(z,y,x),\alpha},
   & B_j&:\ nr_j\times nd_{j-1}.
 \end{aligned}
\end{equation}
Causality equates their column Grams,
\begin{equation}\label{eq:comb-Gram-equality}
 B_j^\dagger B_j=T_j^\dagger T_j.
\end{equation}
Their common entry at $(z,x),(z',x')$ is $\delta_{xx'}C^{(j-1)}_{z',z}$. This index order follows from the unconjugated factors in Eq.~\eqref{eq:comb-TB} and will also fix the transpose in the terminal instrument. Since $\mathsf R_{j-1}$ has independent columns, $T_j$ has full row rank, so we can define the processing map by
\begin{equation}\label{eq:comb-W}
 W_j=B_jT_j^\dagger(T_jT_j^\dagger)^{-1},
       \qquad W_j:\ nr_j\times nr_{j-1}.
\end{equation}
The Gram equality gives $\ker T_j=\ker B_j$ and hence $W_jT_j=B_j$; moreover,
\[
 W_j^\dagger W_j
 =(T_j^+)^\dagger B_j^\dagger B_jT_j^+
 =(T_jT_j^+)^\dagger(T_jT_j^+)=I_{nr_{j-1}}.
\]
Thus $W_j$ is an isometry on all $nr_{j-1}$ input coordinates, which also implies $nr_j\ge nr_{j-1}$. Its defining equality gives the recurrence
\begin{equation}\label{eq:comb-factor-recurrence}
 (\mathsf R_j)_{(z,y,x),\alpha}
   =\sum_{\beta=0}^{r_{j-1}-1}
      W_j[(y,\alpha),(x,\beta)](\mathsf R_{j-1})_{z,\beta}.
\end{equation}
Starting from the one-dimensional memory at $j=0$, this recurrence reproduces $\mathsf R_j$ at every prefix, and tracing its final memory gives the prescribed $C^{(j)}$. Conversely, tracing the current output and memory in a sequential isometry yields Eq.~\eqref{eq:comb-causality}. Dilating known processing channels and retaining their environments extends this necessity argument to general controllers.

\subsection{Channel slots and terminal instruments}
We now connect the channel calls to the computed isometries. Grouping the entries of $W_j$ into signal matrices gives
\begin{equation}\label{eq:comb-signal-blocks}
 [W_j^{\alpha\beta}]_{yx}
              =W_j[(y,\alpha),(x,\beta)].
\end{equation}
If slot $t$ has Kraus matrices $K^{(t)}_{a_t}$, these known signal matrices produce the output Kraus matrices
\begin{equation}\label{eq:comb-Kraus}
 T_{\boldsymbol a,\alpha}=
 \sum_{\beta_1,\ldots,\beta_Q}
 W_J^{\alpha\beta_Q}K^{(Q)}_{a_Q}
 W_Q^{\beta_Q\beta_{Q-1}}\cdots
 K^{(1)}_{a_1}W_1^{\beta_1 0}.
\end{equation}
For $Q=1$ this is $\sum_{\beta_1}W_2^{\alpha\beta_1}K^{(1)}_{a_1}W_1^{\beta_1 0}$, and for $Q=0$ it is $W_1^{\alpha0}$. Summing over $\boldsymbol a$ and $\alpha$ gives a normalized channel, because we can successively use $\sum_aK_a^{(t)\dagger}K_a^{(t)}=I_n$ and $W_j^\dagger W_j=I$. A change of Kraus representation, after zero-padding the lists to equal length, only mixes a summed index by a unitary and therefore leaves the operation invariant. The controller consequently depends on the channel slots themselves.

The controller uses the $n_s$-qubit signal $\mathsf S$ and a memory $\mathsf M_{\rm c}$ with $R_b=2^{\lceil\log_2 r_*\rceil}$ labels. At step $j$, we embed the columns of $W_j$ on input strings $(x,\beta)$ with $\beta<r_{j-1}$ and output strings $(y,\alpha)$ with $\alpha<r_j$. To extend this isometry to all $nR_b$ basis states, as required for a complete gate~\cite{iten2016}, we use two-level unitary eliminations to send its columns to the first $nr_{j-1}$ coordinate columns. If $G_j$ is this elimination unitary and $P_j$ sends the supported input strings to those coordinates, the required gate is
\begin{equation}\label{eq:comb-full-unitary}
 U_j=G_j^\dagger P_j.
\end{equation}
Its supported columns implement $W_j$ deterministically, and the completion specifies every remaining column. During elimination, a norm-one column has a pivot of magnitude at least $1/\sqrt{nR_b}$, so certified pivot selection and the phase-sensitive compilation of Appendix~\ref{app:gates} apply.

We execute $U_1,\mathcal E_1,U_2,\ldots,\mathcal E_Q,U_J$, retaining the initially zero memory until the final trace (Figure~\ref{fig:comb}). Each channel acts on $\mathsf S$ and fixes $\mathsf M_{\rm c}$; all calls are forward and unconditional. If supplied Stinespring unitaries implement these primitives, we give slot $t$ a fresh zero-input register $\mathsf E_t^{\rm slot}$ of $q_{{\rm env},t}$ qubits. Retaining them until the final trace adds $\sum_tq_{{\rm env},t}$ qubits and ensures that each slot applies its declared channel.

\begin{figure}[t]
\centering
\begin{quantikz}[column sep=0.32cm]
 \lstick{$\mathsf S$}&\gate[2]{U_1}&\gate{\mathcal E_1}&\gate[2]{U_2}&\gate{\cdots}&\gate{\mathcal E_Q}&\gate[2]{U_J}&\qw\\
 \lstick{$\mathsf M_{\rm c}:\ket0$}&\qw&\qw&\qw&\qw&\qw&\qw&\gate{\operatorname{Tr}_M}
\end{quantikz}
\caption{A channel-only comb. All $U_j$ are known full unitaries computed from the causal Choi matrices. The unknown channels act only on the signal, whereas the controller memory is retained throughout. A probabilistic comb additionally applies the terminal memory instrument before this trace. The diagram counts channel calls; a fully unitary realization of their internal action requires separately supplied Stinespring implementations and fresh environment inputs.}
\label{fig:comb}
\end{figure}

To realize a specified successful outcome, we use the supplied domination $0\preceq S\preceq C^{(J)}$, which is the probabilistic normalization condition of the network formalism~\cite[Theorems~9--10]{chiribella2009}. After constructing the deterministic controller, we compute the terminal memory effect from its factor:
\begin{equation}\label{eq:comb-effect}
 H_S=\mathsf R_J^+S(\mathsf R_J^+)^\dagger,\qquad E_S=H_S^{\mathsf T}.
\end{equation}
Domination places the columns of $S$ in the range of $\mathsf R_J$, and sandwiching by $\mathsf R_J^+$ gives $0\preceq H_S\preceq I_{r_J}$ and $\mathsf R_JH_S\mathsf R_J^\dagger=S$. Since transpose preserves positivity, $E_S$ is also an effect, and a final memory instrument with successful matrix $E_S^{1/2}$ has Choi matrix
\begin{equation}\label{eq:comb-effect-check}
 \mathsf R_J E_S^{\mathsf T}\mathsf R_J^\dagger=S.
\end{equation}
Summing the post-instrument memory label puts $(E_S)_{\beta\alpha}$ in front of column $\alpha$ of $\mathsf R_J$ and conjugate column $\beta$, which accounts for the ordinary transpose even when the Choi entries are complex. We implement the binary instrument by the known Julia unitary of $E_S^{1/2}$ on $\mathsf M_{\rm c}$ and an initially zero outcome qubit $\mathsf b_{\rm c}$, extending the effect by zero outside the $r_J$ memory labels. Retaining $\mathsf b_{\rm c}$ and discarding memory yields $S$ on outcome zero and $C^{(J)}-S$ on the other outcome.

A dominating deterministic causal completion is necessary as well: it is obtained by summing the outcomes of any physical instrument. Together with the construction above, this proves the existence criterion. The polynomial-bit compiler takes such a completion as exact rational input; finding one from $S$ alone is a separate semidefinite feasibility problem whose exact boundary behavior is not covered by that bound.

\subsection{Finite-bit compilation and resources}
To compute these gates with certified precision, we first determine the exact supports of the rational Choi matrices. Fraction-free elimination and positive-semidefiniteness tests give each rank and a factorization
\[
 C^{(j)}=Z_jD_jZ_j^\dagger,
\]
where $Z_j$ has rational independent columns and $D_j$ is positive rational diagonal. We retain the positive pivots in an exact pivoted $LDL^\dagger$ factorization, deleting zero residual rows, and take $\mathsf R_j=Z_jD_j^{1/2}$. The minor bounds of Appendix~\ref{app:arithmetic} bound every positive pivot below by $2^{-\poly(\ell)}$ in the explicit rational input length $\ell$.

In the ordering of Eq.~\eqref{eq:comb-TB}, the matrix $T_jT_j^\dagger$ repeats
$\mathsf R_{j-1}^{\mathsf T}\overline{\mathsf R_{j-1}}$ on its $n$ input labels. This Gram equals
$D_{j-1}^{1/2}(Z_{j-1}^\dagger Z_{j-1})^{\mathsf T}D_{j-1}^{1/2}$.
We therefore evaluate its inverse using a rational positive definite inverse and positive diagonal inverse roots. Eqs.~\eqref{eq:comb-W} and~\eqref{eq:comb-effect} have polynomial-bit entry evaluators computed directly from the original prefix matrices, and the root procedure of Appendix~\ref{app:arithmetic} handles a singular terminal effect.

There are $J$ known square unitaries of order $nR_b<2nr_*$, together with at most one known memory instrument. Two-level decomposition and the same controlled-gate and Solovay--Kitaev implementation used for the polynomial circuits require
\[
 O\!\left(J(nR_b+1)^2\poly\!\left(n_s+\log(r_*+1),
             \log\frac{J(nR_b+1)}{\epsilon}\right)\right)
\]
elementary gates and $O(n_s+\log(r_*+1))$ signal, memory, phase, and clean work qubits, plus one terminal outcome qubit when required. Supplied Stinespring environments contribute the separate count above. Matrix evaluation and gate compilation are polynomial in the explicit Choi input and precision lengths, as asserted in Proposition~\ref{prop:comb-compilation}.

Finally, we allocate total operator-norm error $\epsilon/2$ to the known unitaries. Replacing them one at a time changes the resulting channel or flagged instrument by at most $\epsilon$ in diamond norm: a unitary error $\eta$ contributes at most $2\eta$, and every intervening channel and final trace has diamond norm one. This proves the claimed bound uniformly over the connected channels and arbitrary external references.

\section{Joint residual certificates and degree reduction}\label{app:joint-details}
The joint synthesis theorem requires a certificate that preserves interference between input labels and a circuit that carries those labels through a common query sequence. We prove the extensions of the scalar construction needed for these two tasks, then specify the complete register implementation and its use for coherent Kraus operations. All computed matrices come from the supplied rational coefficients and query data.

\subsection{Joint residual certificates and their computation}\label{app:joint-coordinates}\label{app:proof-lem:block-certificate}\label{app:proof-prop:block-classical}
We first reduce the coefficient recurrence using the reachability space of all columns of $R_0$ and the observability space of all rows of $L$. Polynomiality is then equivalent to joint nilpotence of the reduced transitions, which the nested transition ranges decide as in the scalar case. To determine the degree and test for a zero target, we compute
\[
 H_0=R_0R_0^\dagger,\qquad H_{d+1}=\sum_jT_jH_dT_j^\dagger,
 \qquad \operatorname{Tr}(LH_dL^\dagger)
             =\sum_{|w|=d}\|P_w\|_F^2.
\]
The trace vanishes precisely when every coefficient matrix of degree $d$ is zero, so this rational recurrence tests all input columns together. We retain the same joint structure in the certificate by applying Lemma~\ref{lem:residual} to the constant rows $e_\nu^{\mathsf T}$ and the nonempty left residuals of the $v$ target rows. Their basis $f_1,\ldots,f_r$ has $u$ entries per row, $r\le s+u$, and coefficient arrays
\begin{equation}\label{eq:block-residual}
 f_\emptyset=C_f,\quad f_{jw}=A_j^{\rm res}f_w,\qquad
 P_\emptyset=D_p,\quad P_{jw}=P_jf_w.
\end{equation}
Here $f_w,C_f$ have size $r\times u$ and $P_j$ has size $v\times r$. The $u$ constant rows and the successors $x_jf_\ell$ are independent, giving $h=u+kr$ rows in $g$ and the coordinate matrices
\begin{equation}\label{eq:block-coordinates}
 E=[I_u\ 0\ \cdots\ 0],\quad
 J=[C_f\ A_1^{\rm res}\ \cdots\ A_k^{\rm res}],\quad
 P=[D_p\ P_1\ \cdots\ P_k].
\end{equation}
They satisfy $Eg_\emptyset=I_u$, $Eg_w=0$ for nonempty $w$, $Jg_w=f_w$, and $Pg_w=P_w$, while $L_j$ selects the $j$th $r$-row successor block. Substituting these maps into Eqs.~\eqref{eq:EM}--\eqref{eq:Phi} gives $E_\alpha,M_\mu$ of size $br\times h$ and the joint map $\Phi$. Evaluation of $g$ means
\[
 g(A)=\left[\sum_w(g_w)_{t\nu}A_w\right]_{\substack{0\le t<h\\0\le\nu<u}},
 \qquad g(A)\text{ has size }hn\times un.
\]
Thus the hereditary form $q_{Q_0}=g^*Q_0g$ evaluates to a $un\times un$ matrix and includes all input-label cross terms. We prove completeness for this general form, since the strict search needs it for both the target defect and the constant identity.

\begin{proof}[Proof of Theorem~\ref{lem:block-certificate}]
A factor of $S$ gives a polynomial square, and a factor of $T$ gives the defect of $\Delta(A)$ on the residual rows. Their sum is positive on every input column of length $un$, proving sufficiency with all cross terms retained.

For necessity, we index coefficient positions by a word and an input label. Independence of the rows, together with the unique starred--unstarred division, makes $Q_0\mapsto g^*Q_0g$ injective as in Theorem~\ref{thm:certificate}. Evaluation at the scaled word tuple, summed over the $u$ constant inputs, gives positive definite coefficient Grams and trace coercivity. The bounded-preimage argument of that theorem therefore closes the cone, so an infeasible positive polynomial would have a separator $H\succeq0$ with $\Phi^*(H)\succeq0$ and $\operatorname{Tr}(HQ_0)<0$.

We factor $H^{\mathsf T}=W^\dagger W$ with $n_H=\rank H$ independent rows and compute
\begin{equation}\label{eq:block-witness}
 R_H=WJ^{\mathsf T},\quad Z_{j,H}=WL_j^{\mathsf T},\quad
 Y_j=Z_{j,H}R_H^+,\quad \omega_\nu=WE^{\mathsf T}e_\nu.
\end{equation}
The matrices $R_H,Z_{j,H}$ have size $n_H\times r$. Because their localizing matrix has the form in Eq.~\eqref{eq:GNS-localizing}, its quadratic-form argument gives $\ker R_H\subseteq\ker Z_{j,H}$ and hence $Y_jR_H=Z_{j,H}$ and $\|\Delta(Y)\|\le1$, also for singular $H$. The joint extension enters in evaluating a polynomial row: with $f_\lambda=\lambda f$ and $\zeta(f_\lambda)=R_H\lambda^{\mathsf T}$, the residual identity becomes
\[
 \zeta(f_\lambda)
  =\sum_{\nu=0}^{u-1}(\lambda C_f)_\nu\omega_\nu
       +\sum_jY_j\zeta(f_{\lambda A_j^{\rm res}}).
\]
Every residual on the right has smaller degree. Induction, starting with constant rows, proves
\[
 R_H\lambda^{\mathsf T}
   =\sum_{\nu=0}^{u-1}(f_\lambda)_\nu(Y)\omega_\nu,
 \qquad
 \sum_\nu(g_t)_\nu(Y)\omega_\nu=We_t.
\]
The second equality also uses $Y_jR_H=Z_{j,H}$ for the successor rows. Let $\Omega$ be the column obtained by stacking $\omega_0,\ldots,\omega_{u-1}$. Then
\[
 \Omega^\dagger q_{Q_0}(Y)\Omega
   =\sum_{t,s}(Q_0)_{ts}(We_t)^\dagger(We_s)
   =\sum_{t,s}(Q_0)_{ts}H_{st}
   =\operatorname{Tr}(HQ_0)<0.
\]
The ordinary transpose in $H^{\mathsf T}=W^\dagger W$ gives exactly this complex index pairing, so the negative value contradicts positivity at the admissible tuple $Y$ and proves necessity.
\end{proof}

\begin{proof}[Proof of Proposition~\ref{prop:block-classical}]
We adapt the scalar conditioning bounds to the matrix constant $C_f$ and the $u$-column identity. In particular, the coercive functional now has value $u$ at $I_u$, which changes the outer search radius.

At the scaled creation tuple, the positive functional is
\[
 \ell_\rho(q)=\sum_{\nu=0}^{u-1}
 (e_\nu\otimes e_\emptyset)^\dagger q(\rho S^{(D)})
                         (e_\nu\otimes e_\emptyset),
 \qquad \ell_\rho(I_u)=u.
\]
The weighted coefficient Grams are computed by
\[
 W_f=C_fC_f^\dagger+\rho^2\sum_jA_j^{\rm res}W_f(A_j^{\rm res})^\dagger,
 \qquad W_g=\operatorname{diag}(I_u,\rho^2W_f,\ldots,\rho^2W_f).
\]
The recurrence terminates at the degree, and independence of the rows gives $W_f\succ0$. With the same $\rho$ as in Appendix~\ref{app:classical-margins}, we set
\begin{equation}\label{eq:block-conditioning}
 \lambda=\left(1+\sum_{\ell,t}
 (|\operatorname{Re}(W_f^{-1})_{\ell t}|+
  |\operatorname{Im}(W_f^{-1})_{\ell t}|)\right)^{-1},
 \qquad \chi=\tfrac34\rho^2\lambda.
\end{equation}
The query defect is at least $3I/4$ on that tuple, while $W_g\succeq\rho^2\lambda I$. Applying these two lower bounds to the certificate gives
\begin{equation}\label{eq:block-trace-cap}
 \ell_\rho(g^*(S+\Phi(T))g)\ge\chi(\operatorname{Tr}S+\operatorname{Tr}T),
 \qquad \operatorname{Tr}S+\operatorname{Tr}T\le\beta^2u/\chi.
\end{equation}

Let $c_f\ge\|C_f\|$ and $a_j\ge\|A_j^{\rm res}\|$ be rational entry-norm bounds. Using the coordinate bounds $c_j^\Delta$ from Appendix~\ref{app:classical-margins}, we define
\begin{equation}\label{eq:block-residual-bound}
 U_f=c_f\sum_{d=0}^{D-1}\left(\sum_jc_j^\Delta a_j\right)^d,
 \quad M_0=1+\left(b+\sum_j(c_j^\Delta)^2\right)U_f^2,
 \quad \sigma_0=(2M_0)^{-1}.
\end{equation}
The finite residual expansion gives $\|f(A)\|\le U_f$ and bounds $g^*(I_h+\Phi(I_{br}))g$ between zero and $M_0I_u$. Consequently $I_u-\sigma_0g^*(I_h+\Phi(I_{br}))g$ is positive; after obtaining its certificate, we add $\sigma_0I$ to both Grams to certify $I_u$ with spectral margin $\sigma_0$. The argument of Lemma~\ref{lem:margin} then gives margin $\mu=(1-\nu^2)\sigma_0$ for a target with $B_{\mathsf P}/\gamma\le\nu<1$.

The search uses $T\succeq\mu I/4$, $E^\dagger E-P^\dagger P/\gamma^2-\Phi(T)\succeq\mu I/4$, and an entry box of radius $U_0=1+2u/\chi$. For $d_0=(br)^2$ and $L_\Phi=1+\sum_\alpha\|E_\alpha\|_F^2+\sum_\mu\|M_\mu\|_F^2$, its radii are
\begin{equation}\label{eq:block-search-radii}
 r_{\rm in}=\min\{1,\mu/(8L_\Phi)\},\qquad
 R_{\rm out}=U_0\max\{1,d_0\}.
\end{equation}
The scalar perturbation argument applies to variable matrices of orders $br$ and $h$ with these radii, including the factor $u$ in $U_0$. Rational recurrence and minor bounds control their encoding lengths, so the capped search of Proposition~\ref{prop:strict} has the claimed polynomial bit complexity.

For a nonzero target, $t=\operatorname{Tr}(PW_gP^\dagger)>0$ satisfies $t\le uB_{\mathsf P}^2$, giving the positive rational lower bound $t/(u+t)\le B_{\mathsf P}$. Rational majorants $d_p\ge\|D_p\|$ and $p_j\ge\|P_j\|$ give the upper bound $d_p+\sum_jp_jc_j^\Delta U_f$. These bounds initialize the relative bisection of Theorem~\ref{thm:norm-search}, completing the norm-estimation claim.
\end{proof}

\subsection{Joint query sequences and compact row factorization}\label{app:joint-weights}\label{app:joint-row}\label{sec:block-queries}\label{sec:block-row}
The general construction first compensates the degree weights, so we need the scalar estimate in the joint operator norm. For any unit input and output columns, Lemma~\ref{lem:weights} applies to the corresponding scalar matrix element. Taking their supremum gives the second-derivative bound $e^2D^2B_{\mathsf P}/2$ with no dependence on $u,v$, and integration against the same positive kernel yields
\[
 \|\mathsf P-\mathcal W_v\mathsf P\|_{\SA,\Delta}\le\rho_N B_{\mathsf P},
 \qquad
 \|\mathcal W_v^{-1}\mathsf P\|_{\SA,\Delta}\le B_{\mathsf P}/(1-\rho_N),
 \qquad \rho_N=64D^2/(N+1)^2,
\]
when $D<N$ and $\rho_N<1$. Since the inverse divides each degree-$d$ coefficient matrix by $w_d$, the scalar normalization choice makes the compensated rectangular target strictly contractive.

For its certificate, or any strict certificate of a normalized target $\mathsf Q$, we take $K=T^{1/2}$ and $m=br$ and construct
\begin{equation}\label{eq:block-XY}
 X=\begin{pmatrix}KM_1\\\vdots\\KM_a\\E\end{pmatrix},\qquad
 Y=\begin{pmatrix}KE_1\\\vdots\\KE_b\\P_{\mathsf Q}\end{pmatrix},\qquad
 T_0=X^\dagger X.
\end{equation}
Their sizes are $(am+u)\times h$, $(bm+v)\times h$, and $h\times h$. The rational identity $X^\dagger X-Y^\dagger Y=S\succ0$ therefore gives the joint realization in Eq.~\eqref{eq:block-V}.

The bound $T_0\succeq S\succeq\mu I_h$ lets us apply Lemma~\ref{lem:extract} to obtain $VX=Y$ and $\|V\|<1$. Full column rank is consistent with $am+u\ge h$, since $k\le ab$. We evaluate $T_0^{-1}$ by rational arithmetic and $K$ by the certified root procedure of Appendix~\ref{app:arithmetic}.

To check the action on every input column at once, let $g_w$ be the $h\times u$ coefficient matrix and set $H_w=[KE_1g_w;\ldots;KE_bg_w]$, of size $bm\times u$. The constant and nonconstant coefficients of $VX=Y$ give
\begin{equation}\label{eq:block-Markov}
 H_\emptyset=\Bco,\quad\mathsf Q_\emptyset=\Dco,\qquad
 H_{jw}=\Aco F_j^{(m)}H_w,\quad
 \mathsf Q_{jw}=\Cco F_j^{(m)}H_w.
\end{equation}
These follow from $Xg_\emptyset=[0;I_u]$ and $Xg_{jw}=[F_j^{(m)}H_w;0]$, and their iteration proves Eq.~\eqref{eq:coefficient-factorization} with rectangular endpoints, including the vanishing coefficients above degree $D$. The energy identity then has defect $I_{un}-\mathsf Q(A)^\dagger\mathsf Q(A)$ and controls coherent superpositions of the input labels. We insert these endpoint blocks into the numerical stage
\begin{equation}\label{eq:block-stage}
 M_t=\begin{pmatrix}
 \Aco&\alpha_t\Bco&0_{bm\times v}\\
 0_{u\times am}&\zeta_t I_u&0_{u\times v}\\
 \xi_t\Cco&\xi_t\alpha_t\Dco&\gamma_t I_v
 \end{pmatrix}.
\end{equation}
Its block rows have lengths $bm,u,v$ and its block columns $am,u,v$. To verify contractivity on every column, we use the factorization in Eq.~\eqref{eq:Mt-factors}, replacing the pending scalar coordinate by $\C^u$ and the collected coordinate by $\C^v$. The input split uses $\alpha_tI_u,\zeta_tI_u$, the middle map is $V$ together with the two identity blocks, and the output combination uses $\xi_tI_v,\gamma_tI_v$. Since $\alpha_t^2+\zeta_t^2=\xi_t^2+\gamma_t^2=1$, these are respectively an isometry, a contraction, and a coisometry, whose product is Eq.~\eqref{eq:block-stage}.

We retain $\kappa=\min\{i,u+v\}$ identity labels and take
\begin{equation}\label{eq:block-memory}
 R=\max\left\{m,\left\lceil\frac{u+v}{\kappa}\right\rceil\right\},
 \qquad d_-=\kappa+b,\quad d_+=\kappa+a.
\end{equation}
Pending input label $\nu$ occupies flat identity index $\nu$, and output label $\mu$ occupies flat identity index $u+\mu$. Flat index $s$ means retained oracle label $\lfloor s/R\rfloor$ and memory label $s\bmod R$. The active coordinate $(\alpha,\lambda)$ occupies retained input label $\kappa+\alpha$ and memory $\lambda$, with $0\le\alpha<b$ and $0\le\lambda<m$. These assignments define the zero--one packing matrix $J_-$, of size $d_-R\times(bm+u+v)$. Using $a$ active output labels instead of $b$ active input labels defines $J_+$, of size $d_+R\times(am+u+v)$. In both cases Eq.~\eqref{eq:retained-labels} converts retained indices to their actual physical bit strings.

The packed query has three diagonal blocks: the queried active block on $m$ memory labels, $I_{un}$, and $I_{vn}$. The input insertion and output selection are the numerical matrices
\[
 I_{\rm pen}=\begin{pmatrix}0_{am\times u}\\I_u\\0_{v\times u}\end{pmatrix},
 \qquad P_{\rm out}=[0_{v\times bm}\ 0_{v\times u}\ I_v].
\]
Hence every chronological factor is explicitly
\begin{equation}\label{eq:block-query-cores}
 C_0=J_-M_0I_{\rm pen},\qquad
 C_t=J_-M_tJ_+^\dagger\ (0<t<Q),\qquad
 C_Q=P_{\rm out}M_QJ_+^\dagger.
\end{equation}
Theorem~\ref{thm:window-circuit} now gives the weighted output on all $u$ columns. We compute its compensated target using the $(D+1)s$-state recurrence of Eq.~\eqref{eq:degree-clock}: $\widetilde L=[L\ 0\ \cdots\ 0]$, the superdiagonal transition blocks are $T_j$, and block $d$ of $\widetilde R_0$ is $R_0/(\beta w_d)$. Its coefficient matrices are $P_w/(\beta w_{|w|})$, and the joint weight estimate above gives the same strict margin as in the scalar construction. The resulting certificate and query factors therefore cancel the weights on the entire output matrix.

On the row domain, the certificate of $\mathsf P/\beta$ instead gives a polynomial completion whose degree we can reduce once per query. From $E^\dagger E-P^\dagger P/\beta^2=S+\Phi(T)$, we form
\[
 \Theta(x)=\begin{pmatrix}\mathsf P(x)/\beta\\S^{1/2}g(x)\end{pmatrix}.
\]
The scalar word-matrix argument applies to this $(v+h)\times u$ completion and all pairs of its input coordinates: at cutoff $2D$, the query defect annihilates the residual outputs from words of length at most $D$. We therefore obtain the joint convolution identities
\begin{equation}\label{eq:block-inner}
 \sum_w\Theta_w^\dagger\Theta_w=I_u,\qquad
 \sum_w\Theta_w^\dagger\Theta_{wz}=0_{u\times u}\quad(z\ne\emptyset).
\end{equation}

To compute the complementary polynomial in finite-state form, let $G_\emptyset=[I_u;0_{kr\times u}]$ and let $T_j^g$ have $J$ in its $j$th $r$-row block, with zeros elsewhere. These transitions give $g_w=T_w^gG_\emptyset$, so we obtain its output matrix and joint Gram as
\[
 O_\Theta=\begin{pmatrix}P/\beta\\S^{1/2}\end{pmatrix},\qquad
 O_\Theta^\dagger O_\Theta=P^\dagger P/\beta^2+S.
\]
Taking $N_j=T_j^g$, $b_j=T_j^gG_\emptyset$, $O=O_\Theta$, and $d_0=O_\Theta G_\emptyset$ gives the required rational transitions and joint output Gram. The nonconstant reachable dimension is at most $s_0\le kr$, and the initial output length is $q_0\le v+h$.

Degree reduction must preserve these full $u\times u$ correlations. If $Z$ is an orthonormal frame for the highest-degree coefficient range and $\Pi=ZZ^\dagger$, Eq.~\eqref{eq:block-inner} gives $\Pi\Theta_\emptyset=0$, so the update
\[
 \Psi_w=\begin{pmatrix}(I-\Pi)\Theta_w\\Z^\dagger\Theta_{1w}\\\vdots\\Z^\dagger\Theta_{kw}\end{pmatrix}
\]
lowers the degree, and its correlations satisfy
\[
 \sum_w\Psi_w^\dagger\Psi_{wz}
 =\sum_w\Theta_w^\dagger\Theta_{wz}-\Theta_\emptyset^\dagger \Pi\Theta_z
 =\sum_w\Theta_w^\dagger\Theta_{wz}.
\]
Thus every reduction preserves the joint identities. To compute it from the recurrence, we use $b_j$ of size $s_0\times u$, $O_t$ of size $q_t\times s_0$, and $d_t$ of size $q_t\times u$, reducing to the state space spanned by all columns $N_wb_j$. On this jointly reachable space, the initial correlation identities are exactly
\[
 d_0^\dagger d_0+\sum_jb_j^\dagger Hb_j=I_u,
 \qquad O^\dagger d_0+\sum_jN_j^\dagger Hb_j=0_{s_0\times u}.
\]
Indeed, the nonempty correlation at $vj$ is
$(d_0^\dagger O+\sum_\ell b_\ell^\dagger HN_\ell)N_vb_j$; joint reachability makes its vanishing equivalent to the second condition.

We use the rational bases $U_\ell,R_t$ and shorted matrices $H_t$ of Appendix~\ref{app:row-compact}, with the matrix-valued $b_j,d_t$ in Eq.~\eqref{eq:row-updates}. The metric proof of Theorem~\ref{thm:row-compiler} depends only on $O_t$ and the transitions, and hence gives
\[
 H_t=\sum_wN_w^\dagger O_t^\dagger O_tN_w,
 \quad H_tR_t=O_t^\dagger O_tR_t,
 \quad K_t=(O_tR_t)^\dagger(O_tR_t)\succ0
\]
for each nonempty eliminated basis. Joint reachability also identifies the highest-degree range: after writing a state column in terms of $N_vb_j$ and multiplying by $O_tN_w$ with $|w|=d-t-1$, the terms with nonempty $v$ vanish in the previously eliminated range. The remaining columns are exactly the highest-degree coefficients, so $O_tR_tK_t^{-1/2}$ is the required frame. The preserved correlation identities give $d_d^\dagger d_d=I_u$ at the final constant stage and the rational stage Grams
\[
 O_t^\dagger O_t=H_t-\sum_jN_j^\dagger H_tN_j,\quad
 O_t^\dagger d_t=-\sum_jN_j^\dagger H_tb_j,\quad
 d_t^\dagger d_t=I_u-\sum_jb_j^\dagger H_tb_j.
\]
These Grams bound the numerical output matrices, and the eliminated dimensions telescope to $s_0$, giving $q_t\le q_0+ks_0$. For a nonempty basis, the rational positive matrix $K_t$ has the scalar proof's minor lower bound, so its certified root evaluation and guard-bit argument apply; an empty basis simply contributes no inverse-root block.

With $R_*\ge q_0+ks_0$, the packing and output matrices from Section~\ref{sec:row} give
\begin{equation}\label{eq:joint-row-cores}
 \begin{aligned}
 C_0&=E_{d-1}d_d &&\bigl((k+1)R_*\times u\bigr),\\
 C_t&=E_{d-1-t}R_{d-t}^{\rm out}\quad(1\le t<d)
        &&\bigl((k+1)R_*\times2R_*\bigr),\\
 C_d&=[I_v\ 0]R_0^{\rm out} &&\bigl(v\times2R_*\bigr).
 \end{aligned}
\end{equation}
We apply the last factor found first, using the same full row query at each step. The endpoint lists below place $C_0$ on an arbitrary data input and $C_d$ on the $v$ selected outputs.

\subsection{Full unitary implementation and resources}\label{app:joint-endpoints}\label{app:joint-resources}
Both factorizations act on a $q_L=\lceil\log_2\max\{u,v\}\rceil$-qubit data register and have rectangular endpoints,
\begin{equation}\label{eq:block-core-sizes}
 C_0:\ d_-R\times u,\qquad
 C_t:\ d_-R\times d_+R\ (0<t<Q),\qquad
 C_Q:\ v\times d_+R.
\end{equation}
We implement the first endpoint by a Julia gate that transfers the data dependence to the continuing processing amplitudes with $\mathsf L=0$; the last gate decodes the output into $\mathsf L$. The complementary amplitudes remain in the full unitary under failure labels, so this transfer preserves arbitrary input coherences. With $D_L=2^{q_L}$, $R_b=2^{\lceil\log_2R\rceil}$, and register order $(\mathsf O,\mathsf M,\mathsf L,\mathsf d)$, the required lexicographic lists are
\[
 \mathcal I_\pm=
 ((\sigma_\pm(\alpha),\lambda,0,0))_{0\le\alpha<d_\pm,\,0\le\lambda<R},
 \qquad \mathcal D_h=((0,0,\ell,0))_{0\le\ell<h}.
\]
The input/output pairs are $(\mathcal D_u,\mathcal I_-)$ initially, $(\mathcal I_+,\mathcal I_-)$ internally, and $(\mathcal I_+,\mathcal D_v)$ finally. For a matrix with $h_2$ rows and $h_1$ columns, we append the first $h_2$ strings with $\mathsf d=1$ to its input list and the first $h_1$ to its output list, in lexicographic order on $(\mathsf O,\mathsf M,\mathsf L)$. The bound $h_1,h_2\le d_oR_bD_L$ ensures that these directly computable lists fit, each with $h_1+h_2$ distinct strings. Completing them to their common union $\mathcal T_C$ in a fixed order and applying Eq.~\eqref{eq:support-unitary} specifies the whole Julia gate $J_C$, including its reverse columns and binary padding.

As in Theorem~\ref{thm:unitary}, $G_t$ applies $J_C$ on $\mathsf E=0$ and identity otherwise, while $D_t,F_t$ mark defects with fresh failure labels. The raw query fixes $\mathsf L,\mathsf M,\mathsf d,\mathsf E$ even on failed columns, so the theorem's induction applies to all $u$ data inputs simultaneously. We finally retain $\mathsf L<v$ coherently and select zero on the oracle, processing, dilation, failure, phase, and clean work outputs. These complete endpoints give the rectangular target for every compatible oracle completion.

\begin{lemma}[Constant rectangular targets]\label{lem:block-constant}
For a known rational $v\times u$ matrix $P_\emptyset$ and rational $0<\tau,\epsilon<1$, a classical algorithm computes a normalization $B<\beta\le(1+\tau)B$ when $B=\|P_\emptyset\|>0$, and a known unitary whose selected data block approximates $P_\emptyset/\beta$ within $\epsilon$. It uses no oracle query. Its classical and elementary-gate costs are polynomial in the explicit rational matrix, $\log(1/\tau)$, and $\log(1/\epsilon)$; the signal is unchanged. The zero matrix also uses no query.
\end{lemma}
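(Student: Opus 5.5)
Since the target contains no matrix letters, the plan is to work entirely in the finite numerical setting. The only nontrivial quantity is the operator norm $B=\|P_\emptyset\|$ of a known rational matrix. We can certify it exactly through the rational positive semidefinite matrix $P_\emptyset^\dagger P_\emptyset$.

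The first step is the normalization. For a rational threshold $b_0>0$, the test $b_0^2 I_u-P_\emptyset^\dagger P_\emptyset\succeq0$ can be decided exactly by the rational Hermitian elimination of Appendix~\ref{app:classical-margins}. Initial rational bounds on $B$ are as follows.
\begin{itemize}
\item An upper bound is $U=1+\|P_\emptyset\|_F^2$.
\item A lower bound follows from $B^2\ge\|P_\emptyset\|_F^2/\min\{u,v\}$. Set $t=\|P_\emptyset\|_F^2/\min\{u,v\}>0$; then $t/(1+t)\le B$, and this value is rational.
\end{itemize}
Bisection on $[\ell,U]$ with these exact tests keeps $\ell<B\le U$. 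Replace a successful upper endpoint by a slightly smaller rational only if the strict test still holds; otherwise, to get a strict inequality, take $\beta=(1+\tau/2)U'$ where $U'\ge B$ is the current certified upper endpoint. Iterate until $U'\le(1+\tau/4)\ell$. This gives $B<\beta\le(1+\tau)B$ after $O(\log(1/\tau)+\text{input length})$ iterations.

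The second step is the circuit. Put $C=P_\emptyset/\beta$. Then $\|C\|\le1/(1+\tau/2)<1$, with the certified margin $C^\dagger C\preceq(1+\tau/2)^{-2}I$. Apply the Julia completion \eqref{eq:Julia} to $C$. Its defect roots $(I-CC^\dagger)^{1/2}$ and $(I-C^\dagger C)^{1/2}$ are positive definite, with lower eigenvalue bounds of order $\tau$. They are evaluated by the certified root procedure of Appendix~\ref{app:arithmetic}, at cost polynomial in the bit length, $\log(1/\tau)$ and the precision. Embed $J(C)$ on the data register $\mathsf L$ with one dilation qubit $\mathsf d$, using the input list $\mathcal D_u$ and output list $\mathcal D_v$ padded as in Appendix~\ref{app:joint-endpoints}. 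Selecting $\mathsf L<v$ and $\mathsf d=0$ returns exactly $C\otimes I_n$, and the signal is untouched. Compile with Theorem~\ref{thm:gates}'s two-level and Solovay--Kitaev procedure to operator error $\epsilon$. The support size is $O(u+v)$, so the gate count is polynomial in $\log(1/\epsilon)$. No oracle appears, so $Q=0$.

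For the zero matrix, use a gate that sends the input into the rejected $\mathsf d=1$ sector, for example a Pauli $X$ on $\mathsf d$. This gives selected block $0$ at any positive normalization, with no query.

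The main obstacle is making the strict inequality $B<\beta$ and the relative bound simultaneous using only exact rational decisions, since $B$ is generally irrational. Taking $\beta$ a factor $(1+\tau/2)$ above a certified upper bound resolves this. That margin also gives the Julia defects the eigenvalue lower bounds needed for polynomial-bit evaluation.
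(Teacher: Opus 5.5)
Your proposal is correct and follows the paper's proof essentially step for step: an exact-PSD-test bisection on $b^2I_u-P_\emptyset^\dagger P_\emptyset$ from a Frobenius-norm bracket, $\beta=(1+\tau/2)\widehat B$ with $\widehat B$ the final certified upper endpoint, the Julia matrix of $P_\emptyset/\beta$ on $(\mathsf L,\mathsf d)$ with the padded data lists, the phase-sensitive known-gate compiler, and a Pauli $X$ on $\mathsf d$ for the zero matrix. The differences are cosmetic: your $\min\{u,v\}$ lower bound and stopping constant differ from the paper's, your remark about nudging the upper endpoint downward is unnecessary, and you note that the defect roots are uniformly positive definite, which is harmless but not needed because the paper's root evaluator also handles singular radicands.
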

\begin{proof}
For $P_\emptyset\ne0$, we put $G=P_\emptyset^\dagger P_\emptyset$ and $t=\operatorname{Tr}G$, giving the rational bracket $t/(u+t)\le\|P_\emptyset\|\le1+t$. Bisection with exact tests of $b^2I_u-G\succeq0$ yields an upper estimate $\widehat B$ of relative error $\tau/8$, so $\beta=(1+\tau/2)\widehat B$ lies in the required interval. We then apply the Julia matrix of $C=P_\emptyset/\beta$ on $(\mathsf L,\mathsf d)$, with input list $((\nu,0))_{\nu<u}$ followed by $((\mu,1))_{\mu<v}$ and output list $((\mu,0))_{\mu<v}$ followed by $((\nu,1))_{\nu<u}$. Completion on their common union by Eq.~\eqref{eq:support-unitary} gives a full unitary on $q_L+1$ qubits, to which the known-gate compiler applies directly. For the zero matrix, a Pauli $X$ on $\mathsf d$ gives the zero selected block at any positive normalization.

\end{proof}

For the general construction we use $m\le b((D+1)s+u)$ and the multiplicity in Eq.~\eqref{eq:block-memory}. For a row input we use $R=R_*=v+h+k^2r$, with $d_-=k+1$ and $d_+=2$. In either case, we define the support and width bounds
\begin{equation}\label{eq:block-gate-size}
 s_*=2((d_-+d_+)R+u+v),\qquad
 n_*=q_o+\lceil\log_2R\rceil+q_L+\lceil\log_2(2Q+1)\rceil+2.
\end{equation}
These widths include the supplied oracle and arbitrary data register as well as the additional processing qubits. The oracle, memory, dilation, failure, phase, and clean work registers are initialized at zero; the supported data input is supplied by the application.

\begin{proposition}[Resources for rectangular coefficients]\label{prop:block-resources}
For a rational finite-state rectangular target of degree $D\ge1$ and rational $0<\tau,\epsilon<1$, the construction implements $\mathsf P/\beta$ with $B_{\mathsf P}<\beta\le(1+\tau)B_{\mathsf P}$ and error at most $\epsilon$. It uses $Q=O(D/\sqrt\tau)$ queries, $O(n_*)$ non-signal qubits, and
\[
 O\!\left((Q+1)s_*^2\poly\!\left(n_*,
                  \log\frac{(Q+1)s_*}{\epsilon}\right)\right)
\]
known gates. Classical cost is polynomial in these parameters and the rational input length. For a row input, $Q=D$ and the margin dependence is polynomial in $\log(1/\tau)$.
\end{proposition}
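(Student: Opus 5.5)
The proof assembles results already in place and then counts resources; no new analytic argument should be needed.

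\emph{Correctness and normalization.} Use Proposition~\ref{prop:block-classical} to compute $B_{\mathsf P}\le\widehat B\le(1+\tau/8)B_{\mathsf P}$, and set $\beta=(1+\tau/2)\widehat B$. For a general input, take the least $N>D$ with $(N+1)^2\tau\ge1024D^2$. The joint weight estimate in Appendix~\ref{sec:block-queries} then gives $\rho_N\le\tau/16$, so the compensated target $\mathsf Q=\mathcal W_v^{-1}\mathsf P/\beta$ satisfies $\|\mathsf Q\|_{\SA,\Delta}\le1-\tau/4$. This is the same chain of inequalities as Eq.~\eqref{eq:strict-q}. Represent $\mathsf Q$ by the $(D+1)s$-state degree clock, and run the strict search of Proposition~\ref{prop:block-classical} with $\gamma=1$, $\nu=1-\tau/4$. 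Extract $V$ via Eq.~\eqref{eq:block-V}. Eq.~\eqref{eq:block-Markov} and the rectangular version of Theorem~\ref{thm:window-circuit} then show that the stage product of Eq.~\eqref{eq:block-query-cores} equals $\mathsf P(A)/\beta$, with $Q=N-1=O(D/\sqrt\tau)$. For a row input, certify $\mathsf P/\beta$ directly and apply the compact degree reduction, which yields the cores of Eq.~\eqref{eq:joint-row-cores} with $Q=D$.

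\emph{Unitary embedding.} Place the cores of Eq.~\eqref{eq:block-core-sizes} on the endpoint lists $\mathcal D_u,\mathcal I_\pm,\mathcal D_v$ via Julia gates and fresh failure markers. The induction of Theorem~\ref{thm:unitary} applies verbatim, because the query fixes $\mathsf L,\mathsf M,\mathsf d,\mathsf E$. This gives the exact selected block for every data input and every completion.

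\emph{Resources.} Supports are bounded by $s_*$, since each list has at most $(d_-+d_+)R+u+v$ strings, doubled for $\mathsf d$. Here $R$ comes from Eq.~\eqref{eq:block-memory} with $m\le b((D+1)s+u)$, or $R_*=v+h+k^2r$ in the row case. Register width is $n_*$, plus the phase qubit and at most $n_*+1$ clean work qubits, so the non-signal count is $O(n_*)$. Applying Theorem~\ref{thm:gates}, with $S$ replaced by $s_*$, to the $Q+1$ Julia gates and $2Q$ markers gives the stated gate bound. The operator-norm telescope makes the total selected error at most $\epsilon$. Classical cost is polynomial because every step is polynomial-bit: norm estimation, strict search with $\log(1/(1-\nu))=O(\log(1/\tau))$, extraction, and root evaluation. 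In the general case the $Q+1$ stage matrices add cost polynomial in $Q$, i.e.\ in $\tau^{-1/2}$. In the row case only $\log(1/\tau)$ enters.

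\emph{Main obstacle.} The delicate point is certifying the strict margin uniformly in $u,v$ for the rectangular compensated target. I would handle it as in Appendix~\ref{sec:block-queries}: take scalar matrix elements between unit data columns, apply Lemma~\ref{lem:weights} to each, and take the supremum. This loses no factor of $u$ or $v$, so the scalar choice of $N$ applies unchanged. The factor $u$ enters only through the outer search radius, where it costs polynomially many bits.
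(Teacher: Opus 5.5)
Your proposal is correct and follows the same route as the paper's proof. You obtain the normalization from the joint certificate and relative norm search, and exact implementation from the degree-compensated stage sequence, or from the row degree reduction with $Q=D$. You then feed Julia-gate supports bounded by $s_*$, on strings of width $n_*$, into Theorem~\ref{thm:gates}, and allocate the total error $\epsilon$ over the $Q+1$ known matrices.

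You spell out the choice of $N$, the strict margin $1-\tau/4$ from the joint weight estimate, and the $\log(1/\tau)$ cost of the strict search more explicitly than the paper's terse proof does. These are exactly the ingredients that proof invokes.
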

\begin{proof}
The joint certificate and norm search give the normalization, and the compensated stage sequence implements the target exactly with ideal known gates. On the row domain, the factorization terminates at degree $D$, which is also necessary: the scalar coefficient-uniqueness argument applied to a nonzero degree-$D$ entry gives the same lower bound.

The complete Julia gates above have support at most $s_*$ on strings of width at most $n_*$, so Theorem~\ref{thm:gates} and Appendix~\ref{app:gates} give the stated gate count, including $O(n_*)$ clean work bits. In the general construction, $\kappa\le u+v$ and $R\le\max\{m,u+v\}$ bound the support polynomially in the target and joint-block dimensions, even though the retained identity lists can depend on $i$. Manipulating their physical labels incurs the stated $q_o$-bit width. Allocating total full-operator error $\epsilon$ to the $Q+1$ known matrices bounds the selected-block error by $\epsilon$ on every data input and every oracle completion. Together with Lemma~\ref{lem:block-constant}, this proves the resource assertions of Theorem~\ref{thm:matrix-synthesis}.
\end{proof}

For normalization one, Corollary~\ref{cor:normalization-one} applies to the final $v$-row contraction because multiplication by $\beta_0<1$ preserves contractivity. The coherent error bound also extends to rectangular targets. Let $J_U,J_Y$ embed $(\nu,x)$ with $\nu<u$ and $(\mu,x)$ with $\mu<v$ in the full register, setting all other bits to zero; their sizes are $d_{\rm full}\times un$ and $d_{\rm full}\times vn$. If $V_{\rm tar}^\dagger V_{\rm tar}=I_{un}$ and $\|J_Y^\dagger\mathcal UJ_U-V_{\rm tar}\|\le\delta$, Lemma~\ref{lem:coherent-output} gives $\|\mathcal UJ_U-J_YV_{\rm tar}\|\le\sqrt{2\delta}$. Thus the data output remains coherent while the other ancillary registers approach zero.

\subsection{Coherent Kraus access and output operations}\label{app:Kraus-oracle}\label{app:Kraus-histories}\label{app:Kraus-idle-example}\label{app:Kraus-erasure}
For Kraus synthesis we are given the complete $U_K$ with an identity sector, as in Eq.~\eqref{eq:oracle-full}, with $a=k$, $b=1$, and $d_o\ge i+k+1$. The active input label $i$ and output labels $i,\ldots,i+k-1$ give the matrix
\begin{equation}\label{eq:Kraus-full-unitary}
 U_K=\begin{pmatrix}
 I_{in}&0&0\\
 0&K_1&B_{1,K}\\
 \vdots&\vdots&\vdots\\
 0&K_k&B_{k,K}\\
 0&0&D_K
 \end{pmatrix}.
\end{equation}
Its block column lengths are $in,n,(d_o-i-1)n$ and its block row lengths are $in$, followed by $k$ rows of length $n$ and a final row of length $(d_o-i-k)n$. Thus $B_{a,K}$ has size $n\times(d_o-i-1)n$ and $D_K$ has size $(d_o-i-k)n\times(d_o-i-1)n$. The Kraus identity forces the last zero in the active column, but failed components can still reach the unspecified columns; the full-unitary argument above therefore covers every compatible choice of these blocks.

For a fixed history length, clean-input Stinespring access alone also gives a direct construction. We make $d$ calls with fresh zero-input environments $\mathsf H_1,\ldots,\mathsf H_d$ of $q_{\rm env}$ qubits each, retaining their labels. The resulting isometry has row block $K_{\boldsymbol a}=K_{a_d}\cdots K_{a_1}$ on chronological history $\boldsymbol a=(a_1,\ldots,a_d)$. A known $v\times k^d$ matrix $C$ is a selected block on every input of this environment precisely when
\begin{equation}\label{eq:history-effect}
 G=C^\dagger C\preceq I_{k^d}.
\end{equation}
Necessity follows from contractivity of a unitary block, and the rectangular Julia matrix of $C$ proves sufficiency on the recorded labels. Tracing its output label gives the successful operation
\begin{equation}\label{eq:history-operation}
 \sum_b\left(\sum_{\boldsymbol a}C_{b,\boldsymbol a}K_{\boldsymbol a}\right)
 \rho\left(\sum_{\boldsymbol a}C_{b,\boldsymbol a}K_{\boldsymbol a}\right)^\dagger
 =\sum_{\boldsymbol a,\boldsymbol c}
     G_{\boldsymbol c,\boldsymbol a}
           K_{\boldsymbol a}\rho K_{\boldsymbol c}^\dagger.
\end{equation}
The coefficient is $G_{\boldsymbol c,\boldsymbol a}$ because the two amplitudes are paired as $\sum_bC_{b,\boldsymbol a}\overline{C_{b,\boldsymbol c}}$. An arbitrary supplied effect $0\preceq G\preceq I$ is realized by $C=G^{1/2}$, giving a measurement on this fixed purification, as in terminal-memory realizations of instruments~\cite[Theorems~9--10]{chiribella2009}.

Under an environment-basis change $V_{\rm hist}$, the same physical effect has coordinates $V_{\rm hist}GV_{\rm hist}^\dagger$; fixing the numerical effect instead changes the experiment. The invariant choice $G=I$ gives ordinary channel composition. This construction takes $vk^d$ explicit entries for dense $C$ and $d q_{\rm env}$ environment qubits, in addition to the signal and known-gate workspace. By comparison, the joint polynomial algorithm computes from a compact recurrence and normalizes over the full column ball, so its normalization problem differs from that of an explicit history filter.

To combine different lengths, the supplied identity sector provides a coherent reference amplitude. The following degree-one example makes its role explicit.

\begin{example}[A coherent relaxation filter]\label{ex:Kraus-filter}
For a two-Kraus implementation, we label the two output rows by $0,1$ and consider
\[
 F_0(K)=\frac{I+K_1}{2},\qquad F_1(K)=\frac{K_2}{2}.
\]
We realize this column with one query by preparing $(\ket0_{\mathsf O}+\ket i_{\mathsf O})/\sqrt2$ and applying $U_K$, which gives rows $I/\sqrt2,K_1/\sqrt2,K_2/\sqrt2$ on labels $0,i,i+1$. We then apply the selected known matrix
\[
 C=\begin{pmatrix}1/\sqrt2&1/\sqrt2&0\\0&0&1/\sqrt2\end{pmatrix},\qquad
 CC^\dagger=\operatorname{diag}(1,1/2),
\]
whose contractivity permits the rectangular Julia completion of Section~\ref{sec:unitary-implementation}. Decoding its two rows into one output qubit gives $(F_0,F_1)^{\mathsf T}$; replacing $C$ by $C/\beta$ gives normalization $\beta\ge1$. To verify the joint norm, we expand the squared outputs at any column contraction $(A_1,A_2)$:
\[
 F_0(A)^\dagger F_0(A)+F_1(A)^\dagger F_1(A)
 =\frac14\left(I+A_1+A_1^\dagger+A_1^\dagger A_1+A_2^\dagger A_2\right)
 \preceq I.
\]
Equality at $(A_1,A_2)=(I,0)$ proves $B_{\mathsf F}=1$, so this direct one-query circuit attains the boundary normalization. It is a factorization of this particular target; the general column-input theorem gives $O(D/\sqrt\tau)$ queries. At a physical isometric Kraus column, its success probability is
\[
 p_\beta(\rho)=\frac{1+\operatorname{Re}\operatorname{tr}(K_1\rho)}{2\beta^2}.
\]
This dependence on $\operatorname{Re}\operatorname{tr}(K_1\rho)$ is the interference with the idle amplitude. For example, we take the qubit channel with Kraus matrices
\[
 K_1=\begin{pmatrix}1&0\\0&1/2\end{pmatrix},\qquad
 K_2=\begin{pmatrix}0&\sqrt3/2\\0&0\end{pmatrix}.
\]
On input $\ket1\!\bra1$, the filter succeeds with probability $3/(4\beta^2)$ and gives the conditional output
\[
 \frac34\ket1\!\bra1+\frac14\ket0\!\bra0.
\]
The original channel gives the opposite weights, so this filter trades a heralded failure outcome for a smaller relaxation probability on the chosen input.
\end{example}

Finally, an application may require a channel with an explicit failure outcome. We retain an outcome qubit with $\ket{\mathrm s}=\ket1$, $\ket{\mathrm f}=\ket0$ and replace failure by a fixed signal state. For $T=\mathsf F(K)/\beta$, the desired extension is
\begin{equation}\label{eq:Kraus-erasure}
 \mathcal N(\rho)=
 \ket{\mathrm s}\!\bra{\mathrm s}\otimes\mathcal M_\beta(\rho)
 +\ket{\mathrm f}\!\bra{\mathrm f}\otimes
       \ket0\!\bra0\,\tr\bigl[(I_n-T^\dagger T)\rho\bigr].
\end{equation}
Let $q_{\rm pred}$ count the ancillary bits inspected by the final success predicate, including the output-label range test. We charge the reversible predicate workspace separately from the fresh signal register used on failure.

\begin{proposition}[Flagged erasure after a polynomial operation]\label{prop:Kraus-erasure}
The successful polynomial operation admits the extension in Eq.~\eqref{eq:Kraus-erasure} with no additional oracle call, one outcome qubit, $n_s$ fresh signal qubits, and $O(q_{\rm pred})$ clean work qubits. Its postprocessing uses $O(q_{\rm pred}+n_s)$ elementary gates. Selected-column error $\eta$ and postprocessing operator error $\eta_{\rm pp}$ give diamond-norm error at most $4\eta+2\eta_{\rm pp}$.
\end{proposition}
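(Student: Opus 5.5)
The construction appends a reversible postprocessing stage after the polynomial circuit of Corollary~\ref{cor:Kraus-operation}. First, a known predicate computes into the outcome qubit whether the final auxiliary string passes the success test: oracle, processing, dilation, failure, phase and clean-work outputs all zero, and data label $\mathsf L<v$. This is an AND chain over $q_{\rm pred}$ bits together with one $q_L$-bit comparison. It is computed into clean workspace, copied into the outcome qubit by a CNOT, and then uncomputed, so it costs $O(q_{\rm pred})$ elementary gates and $O(q_{\rm pred})$ work qubits. Next, conditioned on the failure value $\ket{\mathrm f}=\ket0$ of the outcome qubit, controlled SWAPs exchange the signal register with a fresh $n_s$-qubit register initialized at $\ket0$. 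This uses $O(n_s)$ gates. Finally, everything except the outcome qubit and the delivered signal is traced out, including the data label, the swapped-out old signal and all auxiliaries. No oracle is called, since every gate here is known.

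For exactness, write the ideal isometry as $\mathcal U J_0$ acting on $\rho$. Let $\Pi_{\rm s}$ denote the projector onto passing auxiliary strings. On the success branch, the output $\Pi_{\rm s}\mathcal U J_0\rho J_0^\dagger\mathcal U^\dagger\Pi_{\rm s}$ is exactly $T\rho T^\dagger$, with the data label retained coherently. Tracing the label then yields $\mathcal M_\beta(\rho)$. On the failure branch, the signal is replaced by $\ket0\!\bra0$, and the discarded part carries weight $\tr[(I-\Pi_{\rm s})\mathcal U J_0\rho J_0^\dagger\mathcal U^\dagger]=\tr\rho-\tr(T^\dagger T\rho)$, because $\mathcal U J_0$ is an isometry. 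The block-diagonal structure in the outcome flag holds because the flag is a classical copy and both branches are traced apart from the flag. Together these establish Eq.~\eqref{eq:Kraus-erasure}.

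For the error bound, I plan to split it into two pieces. Suppose the implemented selected column $\widetilde T$ satisfies $\|\widetilde T-T\|\le\eta$. The success branch then changes by at most $2\eta$ in diamond norm, exactly as in the proof of Corollary~\ref{cor:Kraus-operation}. The failure branch equals $\ket{\mathrm f}\!\bra{\mathrm f}\otimes\ket0\!\bra0$ tensored with the trace of $(I-\widetilde T^\dagger\widetilde T)\rho$. For this branch, I will use
\[
 \|\widetilde T^\dagger\widetilde T-T^\dagger T\|\le\|(\widetilde T-T)^\dagger\widetilde T\|+\|T^\dagger(\widetilde T-T)\|\le2\eta,
\]
which holds after tensoring with a reference as well, giving at most $2\eta$; the two pieces together give $4\eta$. \textbf{The main subtlety} is that the implemented circuit is a full unitary, whose failure weight is the complement of its own success projector. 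The step I expect to require the most care is that this complement is automatically $I-\widetilde T^\dagger\widetilde T$ only when the predicate tests exactly the selected coordinates. I would verify this by checking that the predicate's passing strings coincide with the embedding $J_Y$ used to define the selected block. The postprocessing unitary itself has operator error $\eta_{\rm pp}$ after compilation, and the standard unitary replacement estimate adds $2\eta_{\rm pp}$. Summing the two contributions gives $4\eta+2\eta_{\rm pp}$.
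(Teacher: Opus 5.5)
Your proposal is correct and follows the paper's proof essentially step for step. It uses the same construction: compute the predicate, copy it to the outcome qubit, uncompute the workspace, then swap in a fresh signal register on failure. It also identifies the failure weight as $I_n-T^\dagger T$; the paper expresses this through rejected Kraus blocks $B_\ell$ with $\sum_\ell B_\ell^\dagger B_\ell=I_n-T^\dagger T$, which is your isometry-complement argument extended to references by linearity. Finally, it uses the same $2\eta+2\eta+2\eta_{\rm pp}$ error accounting.
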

\begin{proof}
We initialize the outcome qubit $\mathsf b_{\rm out}$, a fresh $n_s$-qubit signal register $\mathsf S_0$, and predicate workspace $\mathsf W_{\rm out}$ at zero. After computing the final success predicate into $\mathsf b_{\rm out}$ and uncomputing its workspace, we conditionally swap $\mathsf S$ and $\mathsf S_0$ on failure. Tracing the processing outputs, polynomial output label, and $\mathsf S_0$ leaves the failed signal at zero and retains the outcome. These additional swaps act on the signal, so the proposition counts them and the fresh signal storage separately from the polynomial compiler's auxiliary gates.

To verify the channel also on an entangled input, we group the rejected rows of the full ideal isometry into $n\times n$ Kraus matrices $B_\ell$, where $[B_\ell]_{y,x}=\langle\ell,y|\mathcal U|0,x\rangle$ and $\ell$ runs over the rejected ancillary labels. Complete column normalization gives
\[
 \sum_\ell B_\ell^\dagger B_\ell=I_n-T^\dagger T.
\]
The success and failure ancillary ranges are orthogonal, so tracing the processing registers removes their cross terms. For an input $\rho_{SR}$ on $\mathsf S$ and $\mathsf R_{\rm ref}$, with $I_R$ denoting the identity on the reference, the failure outcome therefore equals
\[
 |\mathrm f\rangle\langle\mathrm f|\otimes|0\rangle\langle0|_{\mathsf S}
 \otimes\operatorname{Tr}_{\mathsf S}\!\left[
 ((I_n-T^\dagger T)\otimes I_R)\rho_{SR}\right].
\]
This evaluates the completely positive failure functional while retaining its correlations with the reference. The bound $\|\widetilde T^\dagger\widetilde T-T^\dagger T\|\le2\eta$ limits its diamond-norm error to $2\eta$; the successful map contributes another $2\eta$. Approximating the postprocessing unitary to operator error $\eta_{\rm pp}$ adds at most $2\eta_{\rm pp}$, giving the claimed total.

For the resource bound, the zero tests and comparison with $v$ use a linear-size reversible conjunction and range-comparison circuit with linear clean workspace. We copy its result to the outcome bit and reverse the workspace computation, then use $n_s$ known Fredkin gates for the controlled swaps, each with a constant-size elementary decomposition. Assigning their approximation budgets gives the stated postprocessing costs.
\end{proof}

\section{Examples of polynomial and function transformations}\label{app:worked-scalar}
\subsection{From a polynomial to a matrix-function algorithm}\label{sec:applications}
To use the synthesis theorem for a matrix function $f$, we need a rational finite-state polynomial $p_D$ whose approximation is uniform on the declared query domain:
\[
 \sup_{n,\|\Delta(A)\|\le1}\|f(A)-p_D(A)\|\le\eta_D.
\]
We apply Theorem~\ref{cor:scalar-synthesis} to this polynomial and use the triangle inequality to obtain a block encoding of $f/\beta_D$ with error at most $\eta_D/\beta_D+\epsilon_g$, where $\beta_D$ is the computed normalization. The application must therefore supply both an approximation bound and a compact description of the approximating coefficients. A uniformly contractive numerical recurrence supplies one such description by truncation, and the example below also provides the lower singular-value bound needed to control success probability.

We take a query for the diagonal block
$\Delta(A)=\begin{pmatrix}A_1&0\\0&A_2\end{pmatrix}$, so its promise is $\|A_1\|,\|A_2\|\le1$, with arbitrary order of multiplication. We denote by $\mathcal L$ the words in letters $1,2$ with no consecutive occurrence of $2$. At $\lambda=1/8$, we define the scalar-coefficient series and truncation
\begin{equation}\label{eq:language-series}
 f(A)=\sum_{w\in\mathcal L}\lambda^{|w|}A_w,
 \qquad
 p_D(A)=\sum_{\substack{w\in\mathcal L\\|w|\le D}}
                                     \lambda^{|w|}A_w.
\end{equation}
The first terms are $I_n+\lambda(A_1+A_2)+\lambda^2(A_1^2+A_1A_2+A_2A_1)$. The restriction thus distinguishes words with the same letter counts and different orders.

The coefficients are specified by the two-state numerical matrices
\begin{equation}\label{eq:language-automaton}
 M_1=\begin{pmatrix}1&0\\1&0\end{pmatrix},\qquad
 M_2=\begin{pmatrix}0&1\\0&0\end{pmatrix},\qquad
 \ell=[1\ 0],\qquad r_0=\begin{pmatrix}1\\1\end{pmatrix}.
\end{equation}
The scalar $\ell M_wr_0$ is one exactly when $w\in\mathcal L$. The superdiagonal block construction of Eq.~\eqref{eq:degree-clock}, with blocks $\lambda M_j$ and terminal columns all equal to $r_0$, truncates the degree at $D$ and uses $2(D+1)$ states. The number of accepted words is exponential in $D$, but the algorithm works on these numerical matrices and their residual coordinates.

\begin{proposition}[A function transformation with constant success probability]\label{prop:language}
For $0<\epsilon<1$, the series Eq.~\eqref{eq:language-series} has a forward-query block encoding with computed normalization at most two and selected error at most $\epsilon/6$. It uses $O(\log(1/\epsilon))$ joint queries, with classical and known-gate costs polynomial in $\log(1/\epsilon)$ and the oracle width. For every normalized input, measurement of the final ancillary zero string succeeds with probability at least $1/16$, and the conditional output differs in vector norm from $f(A)\ket\psi/\|f(A)\ket\psi\|$ by at most $\epsilon$.
\end{proposition}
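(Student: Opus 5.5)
The plan is to combine a uniform tail estimate for the truncation with Theorem~\ref{cor:scalar-synthesis} applied to $p_D$, and then to convert a lower singular-value bound on $f(A)$ into the stated success and conditional-output guarantees.

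\emph{Tail and norm bounds.} On the diagonal domain every word satisfies $\|A_w\|\le1$. The number of words in $\mathcal L$ of length $d$ is a Fibonacci number $F_{d+2}\le2^d$. Hence
\[
 \|f(A)-p_D(A)\|\le\sum_{d>D}(2\lambda)^d=\tfrac43\,4^{-D-1}.
\]
The same count gives
\[
 \|f\|_{\SA,\Delta}\le\sum_d 4^{-d}=\tfrac43,\qquad
 \|f(A)-I\|\le\tfrac13 .
\]
Therefore $\sigma_{\min}(f(A))\ge 2/3$ and $B_D:=\|p_D\|_{\SA,\Delta}\le 4/3$. Also $B_D\ge1$ by evaluation at $A=0$. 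I would choose $D=O(\log(1/\epsilon))$ so that the tail $\eta_D\le c\,\epsilon$ for a small absolute constant $c$.

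\emph{Synthesis.} The polynomial $p_D$ is specified by the $2(D+1)$-state recurrence built from Eq.~\eqref{eq:language-automaton} and the clock of Eq.~\eqref{eq:degree-clock}. Its entries are rational (powers of $1/8$), so Theorem~\ref{cor:scalar-synthesis} applies. I would take a constant margin, say $\tau=1/4$, which gives $\beta\le\tfrac54\cdot\tfrac43<2$ with $O(D)=O(\log(1/\epsilon))$ queries. The classical cost is polynomial in $D$, $q_o$, and $\log(1/\epsilon_g)$. Choosing the gate error $\epsilon_g=c\epsilon$, the selected block $M$ satisfies
\[
 \|M-f(A)/\beta\|\le\eta_D/\beta+\epsilon_g\le\epsilon/6
\]
after fixing the constants. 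Since $\beta\ge B_D\ge1$, the division by $\beta$ only helps.

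\emph{Success and conditional state.} For a unit $\psi$, we have $\|f(A)\psi\|/\beta\ge(2/3)/2=1/3$. This gives
\[
 \|M\psi\|\ge\tfrac13-\tfrac{\epsilon}{6}\ge\tfrac16,
\]
so the success probability is at least $1/36$. That falls short of the stated $1/16$. To reach $1/16$, I would either tighten $\beta$ or tighten the error. Concretely, since $\beta\le(1+\tau)\cdot\tfrac43$, taking $\tau$ small makes $\beta\approx4/3$ and $\|f(A)\psi\|/\beta\gtrsim1/2$. Reaching $1/16$ then needs $\|M\psi\|\ge1/4$, which holds once the selected error is at most about $1/6$ and $\tau$ is a suitable constant such as $\tau=1/8$; I would fix these constants carefully. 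For the conditional output, I would use the elementary bound
\[
 \left\|\frac{x}{\|x\|}-\frac{y}{\|y\|}\right\|\le\frac{2\|x-y\|}{\|y\|},
\]
with $x=M\psi$ and $y=f(A)\psi/\beta$. Here $\|y\|\ge1/2$ and $\|x-y\|\le\epsilon/6$, giving a conditional error of at most $2\epsilon/3\le\epsilon$.

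\emph{Main obstacle.} The main difficulty is the constant bookkeeping. The lower singular-value bound, the value of $\beta$ (capped at two), and the selected error $\epsilon/6$ must be consistent with the $1/16$ success probability. In particular, the norm estimate of $B_D$ may need to be sharper than $4/3$, or $\tau$ must be chosen explicitly. I would settle this first by computing these quantities directly.
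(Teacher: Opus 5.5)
Your proposal is correct and follows the paper's skeleton: the $2^d$ word count and tail bound, $\|p_D-I\|,\|f-I\|\le1/3$, $1\le B_D\le4/3$, Theorem~\ref{cor:scalar-synthesis} at a constant margin, and the vector-normalization inequality. It differs only in the success estimate that you flagged as the obstacle.

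\textbf{Your route closes, with one correction.} Because you compare $M\psi$ with $f(A)\psi/\beta$, you pay the full selected error $\epsilon/6$. You therefore need $1/(2(1+\tau))-1/6\ge1/4$, i.e.\ $\tau\le1/5$. With your $\tau=1/8$:
\begin{itemize}
\item $\beta\le(9/8)(4/3)=3/2\le2$;
\item $\|f(A)\psi\|/\beta\ge4/9$, not $1/2$ as you wrote;
\item $\|M\psi\|>4/9-1/6=5/18>1/4$, so success is at least $(5/18)^2>1/16$;
\item the conditional error is at most $2(\epsilon/6)/(4/9)=3\epsilon/4\le\epsilon$.
\end{itemize}

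\textbf{The paper's cleaner bookkeeping.} The paper does not shrink $\tau$. It compares $M\psi$ with the ideal output $p_D(A)\psi/\beta$, which is only the gate error $\epsilon_g=\epsilon/12$ away. It then uses $\sigma_{\min}(p_D(A))\ge2/3$, which you also have from $\|p_D-I\|\le1/3$. This lets it keep $\tau=1/2$ and $\beta\le2$ while still obtaining $\|M\psi\|\ge1/3-\epsilon/12\ge1/4$. It uses the full $\epsilon/6$ only in the conditional bound against $f(A)\psi/\beta$, where $\|f(A)\psi\|/\beta\ge1/3$ gives exactly $\epsilon$.

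Both choices give $O(\log(1/\epsilon))$ queries. Yours costs only a constant factor in $\tau^{-1/2}$.
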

\begin{proof}
There are at most $2^d$ words of length $d$, and each promised matrix product has norm at most one. The series therefore converges uniformly and satisfies
\[
 \|f-p_D\|\le\frac13\,4^{-D},\qquad
 \|p_D-I_n\|,\ \|f-I_n\|\le\frac13.
\]
Thus $p_D$ and $f$ have smallest singular value at least $2/3$, while $1\le B_D\le4/3$, the lower norm bound following at the zero tuple. We take $D\ge\lceil\log_4(4/\epsilon)\rceil$, $\tau=1/2$, and known-gate error $\epsilon_g=\epsilon/12$. The computed normalization obeys $1<\beta_D\le(3/2)B_D\le2$, and the truncation error is at most $\epsilon/12$. The total selected error relative to $f/\beta_D$ is at most $\epsilon/6$. The degree is logarithmic and the representation has $O(D)$ states, so the query and compilation bounds follow from Theorem~\ref{cor:scalar-synthesis}.

For a unit input $\ket\psi$, we write $\widetilde b$ for the actual selected output column. Its distance from $p_D(A)\ket\psi/\beta_D$ is at most $\epsilon/12$, and hence $\|\widetilde b\|\ge1/3-\epsilon/12\ge1/4$. This gives the success probability. The desired unnormalized column $b=f(A)\ket\psi/\beta_D$ has norm at least $1/3$. The elementary inequality
\[
 \left\|\frac{\widetilde b}{\|\widetilde b\|}
                  -\frac b{\|b\|}\right\|
             \le\frac{2\|\widetilde b-b\|}{\|b\|}
\]
then bounds the conditional error by $\epsilon$. The construction uses the forward-query circuit followed by the final ancillary measurement, and the singular-value estimate supplies its success lower bound.
\end{proof}

For this positive-coefficient example, we can also compare a compact LCU construction. Let $a_d$ count the allowed words of length $d$, and set $L_D=\sum_{d=0}^Da_d\lambda^d$. The triangle inequality gives $\|p_D(A)\|\le L_D$, and the admissible tuple $A_1=A_2=I_n$ attains equality, so $B_D=L_D$. The coefficient state can be prepared using two-state continuation counts: first choose length $d$ with probability $a_d\lambda^d/L_D$, and then choose each letter with probability proportional to its number of allowed continuations. The counts require only a length-$D$ integer recurrence. Reversible preparation retains the resulting word, a sequence of $D$ diagonal queries applies its letters in reverse order with idle padding, and inverse preparation yields $p_D/L_D$ by the LCU construction~\cite{childswiebe2012,gilyen2019}. At each query, known word-controlled permutations route the selected letter through its diagonal block, and a fresh failure label records outputs outside that block before the successful oracle label is returned to zero. The same forward query acts on every failure component, as in Theorem~\ref{thm:unitary}. These permutations and markers are reversible bit operations, so their gate cost is polynomial in $D$, $q_o$, and the requested precision bits. Thus the example demonstrates an end-to-end function algorithm, including its success estimate; both approaches exploit its short recurrence. In addition to the oracle register, the explicit word-register implementation stores $O(D)$ label qubits, whereas Eq.~\eqref{eq:main-memory} gives $O(q_o+\log D)$ total non-signal qubits for our construction in this example. This space comparison concerns the two specified implementations.

\subsection{A certificate, transfer matrix, and query circuit}
We now work through a separate target for which the entire certificate and its resulting matrices can be displayed explicitly. Its constant and quadratic terms show how the same computation accounts for coefficients, degree compensation, and the physical query count.

\begin{example}[A complete numerical synthesis example]\label{ex:residual-running}\label{ex:general-certificate}
We use $\Delta(A)=\operatorname{diag}(A_1,A_2)$ and $p_0(x)=1/4+x_1x_2/12$. A three-state representation has $T_1=e_0e_1^{\mathsf T}$, $T_2=e_1e_2^{\mathsf T}$, $\ell=e_0^{\mathsf T}$, and $r_0=(1/4,0,1/12)^{\mathsf T}$, where $e_0,e_1,e_2$ are the coordinate columns of length three. Its residual basis is $(1,x_2)$, so
\[
 g=(1,x_1,x_1x_2,x_2,x_2^2)^{\mathsf T},\quad
 E=(1,0,0,0,0),\quad
 J=\begin{pmatrix}1&0&0&0&0\\0&0&0&1&0\end{pmatrix}.
\]
The successor $x_2^2$ is retained although its target coefficient is zero. We take $i=1$ and $q_o=3$, giving an $8n\times8n$ oracle with identity, active, and hidden groups of sizes $n,2n,5n$. Two queries with $v=(1,1,1)^{\mathsf T}/\sqrt3$ give $w_0=1,w_1=2/3,w_2=1/3$. Consequently, we realize $q=(1+x_1x_2)/4$, whose target row is $P_q=(1,0,1,0,0)/4$, and the finite window returns $p_0$ at normalization one.

The positive certificate is the pair in Eq.~\eqref{eq:example-certificate}, whose four transfer blocks were computed in Section~\ref{sec:extract}.

The blocks in Eq.~\eqref{eq:example-transfer-blocks} give the full numerical matrix, with eight active coordinates followed by the scalar input and output,
\begin{equation}\label{eq:example-V}
 V=\tfrac18E_{1,9}+\tfrac1{\sqrt2}(E_{2,7}+E_{9,2})
                  +\tfrac12E_{7,9}+\tfrac14(E_{8,7}+E_{9,9}),
\end{equation}
where $E_{r,s}$ is the $9\times9$ matrix with one at $(r,s)$ and zero elsewhere, using one-based indices.
Its four blocks are those already checked in Eq.~\eqref{eq:example-transfer-blocks}; we now use them to determine the physical query matrices.

The uniform weights and Eq.~\eqref{eq:example-V}, substituted in the partial-norm ratios and Eq.~\eqref{eq:Mt}, specify three $10\times10$ update matrices with $R=4$. The packing matrices have twelve rows and ten columns, with their entries given by Eqs.~\eqref{eq:packing-minus}--\eqref{eq:packing-plus}; hence $C_0,C_1,C_2$ have sizes $12\times1$, $12\times12$, and $1\times12$. Their selected product is $p_0(A)$ by Theorem~\ref{thm:window-circuit}. The complete circuit has exactly two original queries and the chronological known gates $G_0,D_0,F_1,G_1,D_1,F_2,G_2$ interleaved as in Figure~\ref{fig:whole-circuit}. The original oracle, two memory qubits, dilation qubit, and three failure qubits give $n_0=9$; Theorem~\ref{thm:gates} uses at most twenty ancillary qubits including its phase and clean work.

The chosen certificate implements the target at normalization one, while its uniform norm is $1/3$; applying the general norm search to the same coefficient input yields the near-optimal normalization of Theorem~\ref{cor:scalar-synthesis}. This example displays the certificate-to-circuit calculation at a fixed normalization, which keeps each matrix small enough to inspect directly.
\end{example}

\section{Exact certificates at a prescribed query count}\label{app:fixed-query}
The near-optimal normalization algorithm allows its query count to depend on the available margin. We now fix the count in advance and characterize exact synthesis by the Gram matrices of all query histories, retaining their cross terms so that the coefficient constraints preserve interference. In this characterization we retain every supplied identity label and take $\kappa=i$, $d_-=i+b$, and $d_+=i+a$; the resulting arrays may be exponentially larger than those used by the polynomial-time algorithm, which retains at most two identity labels.

The numerical coefficient matrices of the designated block are
\begin{equation}\label{eq:Fhat}
 \widehat F_0=\begin{pmatrix}I_i&0_{i\times b}\\0_{a\times i}&0_{a\times b}\end{pmatrix},
 \qquad
 \widehat F_j=\begin{pmatrix}0_{i\times i}&0_{i\times b}\\0_{a\times i}&F_j\end{pmatrix}
       \quad(1\le j\le k).
\end{equation}
Their common size is $d_+\times d_-$. We use $\widehat F_j^{(R)}$ for the numerical $d_+R\times d_-R$ matrix with block $(\alpha,\beta)$ equal to $(\widehat F_j)_{\alpha\beta}I_R$, following Eq.~\eqref{eq:F-memory}. With $A_0=I_n$, they describe the scalar coefficient of each possible letter in one designated query. We index a length-$t$ query history by a word over $\{0,1,\ldots,k\}$, written in matrix-product order; the newest letter is prefixed. The number of histories is $L_t=(k+1)^t$. Deleting zero letters from a history retains its ordered active word.

For a Hermitian matrix $G$ of order $d_-L_t$, with indices $(\alpha,w)$ where $0\le\alpha<d_-$ and $|w|=t$, we define numerical linear maps
\begin{equation}\label{eq:history-maps}
 [\mathcal P_t(G)]_{w,v}=\sum_{\alpha=0}^{d_--1}G_{(\alpha,w),(\alpha,v)},
 \qquad
 [\mathcal T_t(G)]_{jw,\ell v}
 =\sum_{\alpha,\beta}(\widehat F_j^\dagger\widehat F_\ell)_{\alpha\beta}
                         G_{(\alpha,w),(\beta,v)}.
\end{equation}
Their output orders are $L_t$ and $L_{t+1}$, respectively. In particular, $\mathcal P_0(G)$ is a scalar. This Gram description retains cross terms between distinct histories; replacing it by separate bounds on each history would lose the interference in the target coefficients. This follows the semidefinite description of query algorithms~\cite{barnum2003,barnum2007} and its connections to completely bounded forms~\cite{arunachalam2019,gribling2019}; the condition below records the specific coefficient and designated-block constraints of Eq.~\eqref{eq:Fhat}.

\begin{definition}[A certificate for a fixed query count]\label{def:fixed-query-certificate}
For $Q\ge1$, $\beta>0$, and a scalar target of degree at most $Q$, a query certificate consists of positive semidefinite matrices $G_t$ of order $d_-L_t$ for $0\le t<Q$ and a numerical row $K=[K_w]_{|w|=Q}$ of length $L_Q$ satisfying
\begin{align}
 \mathcal P_0(G_0)&\le1,\label{eq:history-initial}\\
 \mathcal P_t(G_t)&\preceq\mathcal T_{t-1}(G_{t-1})
                           \quad(1\le t<Q),\label{eq:history-internal}\\
 \begin{pmatrix}\mathcal T_{Q-1}(G_{Q-1})&K^\dagger\\K&1\end{pmatrix}
                              &\succeq0,\label{eq:history-final}\\
 \sum_{\substack{|w|=Q\\\mathrm{erase}_0(w)=v}}K_w&=c_v/\beta
                    \quad\text{for every active word }|v|\le Q.
                    \label{eq:history-coeff}
\end{align}
All unspecified target coefficients, including those above the true degree, are zero in the last equation.
\end{definition}

\begin{theorem}[Exact synthesis at a prescribed query count]\label{thm:fixed-query}
Let $Q\ge1$, $\beta>0$, and $p$ have degree at most $Q$. In the model of Section~\ref{sec:unitary}, a completion-independent $Q$-query implementation of $p(A)/\beta$ exists if and only if a certificate in Definition~\ref{def:fixed-query-certificate} exists. Gram factorization and Moore--Penrose inversion of a supplied certificate determine all numerical query matrices and their full unitary implementation.
\end{theorem}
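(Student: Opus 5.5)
We prove the two directions separately. Necessity is handled by reading off Gram matrices of the intermediate history columns from an arbitrary $Q$-query circuit. Sufficiency factors the certificate and extracts contractions between queries, in the spirit of Lemma~\ref{lem:certificate-to-realization}. Throughout we use the averaging argument of Corollary~\ref{cor:degree-lower}: completion independence lets us replace each query by its designated block $L(A)=\sum_{j=0}^k\widehat F_j\otimes A_j$ with $A_0=I_n$. After this replacement the selected output is
\[
C_Q\,L(A)\,C_{Q-1}\cdots L(A)\,C_0 ,
\]
where the $C_t$ are the known compressed contractions between queries, including the memory compression.

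\textbf{Necessity.} Expand the selected product as a sum over histories. For each history $w$ of length $t$, the state entering query $t+1$ has a numerical coefficient column $x_{\alpha,w}$ in memory, indexed by the designated input label $\alpha$.
\begin{itemize}
\item Set $G_t$ equal to the Gram matrix of these columns, with the transposition convention chosen to match $\mathcal P_t$ and $\mathcal T_t$. Then $G_t\succeq 0$ automatically.
\item The history-$jw$ column produced by query $t+1$ has Gram $\mathcal T_t(G_t)$, by the definition of $\widehat F_j$.
\item Applying the next contraction $C_{t+1}$ can only shrink Gram matrices. This gives Eq.~\eqref{eq:history-internal}, and Eq.~\eqref{eq:history-initial} follows from $\|C_0\|\le 1$.
\item The final row $K_w$ is $C_Q$ applied to the column of history $w$. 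Contractivity of $C_Q$ is exactly the Schur-complement condition Eq.~\eqref{eq:history-final}.
\item Equality of free coefficients, isolated by the directed-path tuples mentioned in Corollary~\ref{cor:degree-lower}, gives Eq.~\eqref{eq:history-coeff}. Here a history contributes to the active word $\mathrm{erase}_0(w)$.
\end{itemize}

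\textbf{Sufficiency.} Factor $G_t^{\mathsf T}=W_t^\dagger W_t$, with memory dimension $\rank G_t$. Let $B_t$ be the coefficient matrix of the post-query columns, so that its Gram is $\mathcal T_{t-1}(G_{t-1})$.
\begin{itemize}
\item Eq.~\eqref{eq:history-internal} says the map $B_t c\mapsto W_t c$ is well defined and contractive. The Douglas factor $C_t=W_tB_t^+$ therefore satisfies $C_tB_t=W_t$, which is the singular-case argument already used in Lemma~\ref{lem:certificate-to-realization}. The Moore--Penrose inverse handles rank deficiency, because the kernel of $B_t$ lies inside the kernel of $W_t$.
\item $C_0$ is the column $W_0$ itself, and it has norm at most one by Eq.~\eqref{eq:history-initial}.
\item The final row is $C_Q=KB_Q^+$. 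It has $\|C_Q\|\le 1$ by Eq.~\eqref{eq:history-final}, since $K^\dagger K\preceq B_Q^\dagger B_Q$.
\end{itemize}
Telescoping shows that the product of these contractions with the designated blocks has history coefficients $K_w$. Summing over histories with the same erased word then gives $p(A)/\beta$ by Eq.~\eqref{eq:history-coeff}. Theorem~\ref{thm:unitary} supplies the full unitary, retaining all $i$ identity labels and memory $R=\max_t\rank G_t$.

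\textbf{Main obstacle.} The delicate step is the index bookkeeping in necessity. We must check that the coefficient columns really factor through the designated block in the stated order, with the newest letter prefixed. We must also check that the conjugation and transpose conventions make the post-query Gram exactly $\mathcal T_t(G_t)$, including the cross terms between distinct histories $jw$ and $\ell v$. I would first verify the case $Q=1$ explicitly, then induct on $t$.
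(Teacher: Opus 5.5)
Your route is the paper's: the $U_A(z)$ phase-averaging reduction, history-column Grams for necessity with directed-path tuples for coefficient uniqueness, Douglas factors $C_0=A_0$, $C_t=A_tB_t^+$, $C_Q=KB_Q^+$ for sufficiency, and Theorem~\ref{thm:unitary} for the full circuit. One step fails as written, however: the factorization $G_t^{\mathsf T}=W_t^\dagger W_t$.

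With the paper's maps, only the untransposed convention makes $\mathcal P_t$ and $\mathcal T_t$ exactly the Gram maps, namely $(G_t)_{(\alpha,w),(\beta,v)}=x_{\alpha,w}^\dagger x_{\beta,v}$. To check this, note that block $\mu$ of $\widehat F_j^{(R_t)}A_{t,w}$ is $\sum_\alpha(\widehat F_j)_{\mu\alpha}x_{\alpha,w}$. The identity $\sum_\mu\overline{(\widehat F_j)_{\mu\alpha}}(\widehat F_\ell)_{\mu\beta}=(\widehat F_j^\dagger\widehat F_\ell)_{\alpha\beta}$ then gives $\mathcal T_t(G_t)$ exactly, cross terms included. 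So no transpose enters necessity, and sufficiency must factor $G_t$ itself, as the paper does with $G_t=X_t^\dagger X_t$.

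A factor of $G_t^{\mathsf T}$ has column Gram $\overline{G_t}$. It therefore factors a certificate for the conjugated data $(\overline{F_j},\overline K)$, not for $(F_j,K)$. Consequently \eqref{eq:history-internal}--\eqref{eq:history-final} no longer give the kernel inclusions behind $C_tB_t=W_t$ and $C_QB_Q=K$. The transpose in \eqref{eq:GNS-matrices} came from the separator's dual pairing $\ell(g_t^*g_u)=H_{ut}$; here $G_t$ is defined directly as a Gram matrix.

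The failure already occurs with real block coefficients and a complex target. Take the row input with $i=a=1$, $b=k=2$, $F_1=[1\ 0]$, $F_2=[0\ 1]$, $Q=\beta=1$, and $p=(x_1+ix_2)/\sqrt2$. Set $G_0=0\oplus\frac12\begin{pmatrix}1&i\\-i&1\end{pmatrix}$ and $K=(0,1,i)/\sqrt2$. For this $G_0$, identifying length-one histories with labels gives $\mathcal T_0(G_0)=G_0$. This pair is a certificate: $\tr G_0=1$, $K^\dagger K=G_0$, and the coefficients match.

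\begin{itemize}
\item Factoring $G_0$ gives $A_0=(0,1,i)^{\mathsf T}/\sqrt2$ and $C_1=(0\ 1)$, which realizes $p$.
\item Factoring $G_0^{\mathsf T}$ gives $W_0=(0,1,-i)/\sqrt2$. Then $B_1$ has the single nonzero row $(0,1,-i)/\sqrt2$, which is orthogonal to $K$. Hence $KB_1^+B_1=0$, and your circuit outputs zero.
\end{itemize}

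Use the untransposed factorization, with a factor having at least one row (for example the principal square root) so that a zero $G_t$ still gives a memory of positive size. With that change, your argument coincides with the paper's proof.
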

\begin{proof}[Proof of Theorem~\ref{thm:fixed-query}]
We first reduce a general admissible circuit to its designated query product. We take $P_-$ and $P_+$ to be the diagonal zero--one matrices of order $d_o$ selecting the first $d_-$ input labels and the first $d_+$ output labels, respectively. For $|z|=1$, the matrix
\[
 U_A(z)=\bigl((P_++z(I-P_+))\otimes I_n\bigr)\,U_A\,
                            \bigl((P_-+z(I-P_-))\otimes I_n\bigr)
\]
is another allowed completion of the same designated block. The four rectangular corners of $U_A(z)$ have powers $z^0,z^1,z^1,z^2$ according to whether their input and output labels are designated. Replacing every query by this same $U_A(z)$ makes the selected circuit block a polynomial in $z$ of degree at most $2Q$, with no negative powers. Its average over the $(2Q+1)$st roots of unity is therefore its constant coefficient. Only the designated corner can occur at each query in that coefficient, so the average has the form Eq.~\eqref{eq:contraction-product}, with numerical contractions obtained by taking the corresponding submatrices of the known unitary gates. Completion independence identifies this average with the required target. The same argument works for a uniformly accurate circuit, even if its approximate output retains completion dependence, because averaging does not increase its worst-case operator error.

Expanding a designated pre-query contraction product in ordered words gives, for each history $w$, a numerical coefficient column $A_{t,w}$ of length $d_-R_t$. We partition each column into $d_-$ blocks of length $R_t$ and denote those blocks by $X_{t,\alpha w}$. All these are numerical columns of scalar coefficients. Their full Gram is
\[
 (G_t)_{(\alpha,w),(\beta,v)}
                      =X_{t,\alpha w}^\dagger X_{t,\beta v}.
\]
The Gram of the whole columns $A_{t,w}$ is $\mathcal P_t(G_t)$. After the next query, its column at history $jw$ is $\widehat F_j^{(R_t)}A_{t,w}$, whose Gram is $\mathcal T_t(G_t)$. Contractivity of each following numerical matrix gives Eq.~\eqref{eq:history-internal}; the initial and final matrices give Eq.~\eqref{eq:history-initial} and Eq.~\eqref{eq:history-final}. Summing histories with the same active word gives Eq.~\eqref{eq:history-coeff}. Equality on all admissible tuples is equality of free coefficients, as justified below, so no coefficient equation is omitted.

Conversely, a factorization $G_t=X_t^\dagger X_t$ gives a numerical matrix $X_t$ with columns indexed by $(\alpha,w)$. We may use the full principal square root, whose row count remains positive even when $G_t=0$, or choose a factor with fewer rows when its rank is positive. We form $A_t$ by stacking its column blocks $X_{t,\alpha w}$ into one column $A_{t,w}$ for each history $w$. Then $A_t^\dagger A_t=\mathcal P_t(G_t)$. For $1\le t\le Q$, the numerical matrix $B_t$ has columns
\[
 B_{t,jw}=\widehat F_j^{(R_{t-1})}A_{t-1,w},
 \qquad B_t^\dagger B_t=\mathcal T_{t-1}(G_{t-1}).
\]
For $t<Q$, the Gram inequality implies $\|A_tx\|\le\|B_tx\|$ for every numerical coordinate column $x$, and hence $\ker B_t\subseteq\ker A_t$. Consequently $C_t=A_tB_t^+$ is contractive and satisfies $C_tB_t=A_t$; it acts as zero on the orthogonal complement of $\operatorname{ran}B_t$,
equivalently $C_tx=0$ whenever $B_t^\dagger x=0$. The initial column is $C_0=A_0$. The Schur complement of the final block gives $K^\dagger K\preceq B_Q^\dagger B_Q$, so the same factorization gives $C_Q=KB_Q^+$. These are the concrete matrix formulas underlying Douglas factorization~\cite{douglas1966}.

Their coefficient product is exactly $K_w$ at every history. Padding the finite memory multiplicities to a common $R$ by zero matrix entries gives the sizes used in Section~\ref{sec:unitary-implementation}. Theorem~\ref{thm:unitary} implements this product using the original full oracle, and Eq.~\eqref{eq:history-coeff} makes its selected block $p(A)/\beta$. This proves sufficiency, with success selected only at the final stage.

To prove that equality on every allowed tuple implies equality of coefficients, we consider a word $v=j_1\cdots j_d$. We take $(d+1)\times (d+1)$ matrices whose only possible nonzero entries lie on the edges of the path $0,1,\ldots,d$, assigning letters $j_d,\ldots,j_1$ to its consecutive edges. The matrix entry from column zero to row $d$ of a polynomial at this tuple isolates exactly its coefficient at $v$: shorter products do not reach $d$, longer products vanish, and a different length-$d$ word misses an edge. A sufficiently small common rational scaling makes Eq.~\eqref{eq:pencil} contractive. The zero tuple recovers the constant coefficient. Thus equality at all allowed finite tuples is coefficientwise equality, completing necessity.
\end{proof}

The fixed-query criterion has $(k+1)^Q$ terminal history indices, so its explicit description can be exponentially larger than the finite-state target. For algebraic input coefficients and normalization, its feasibility is decidable by real-algebraic sampling, since all inequalities can be written using principal minors and all coefficient equalities are finite~\cite{basupollackroy2006}. An algebraic feasible point then determines exact matrix expressions for the known gates. For a supplied rational feasible point, the extraction also has polynomial bit complexity in that point's explicit size: the post-query Grams are rational, and $B_t^+=(B_t^\dagger B_t)^+B_t^\dagger$ reduces the inverse to the rational Gram's exactly computable range. These extraction bounds include singular supplied certificates and are measured in their explicit size.

Every strict normalization also admits some finite query realization. The zero target has a selected zero block with no queries at every $\beta>0$. For a nonzero target and a prescribed $\beta>B$, we choose a rational $0<\tau<1$ with $(1+\tau)B<\beta$. The general-input construction, with the separate known-gate construction for constants, supplies an ideal circuit at a normalization $\beta_0\le(1+\tau)B<\beta$; multiplying its final numerical row by the known contraction $\beta_0/\beta$ then gives exactly $p/\beta$ with the same queries. At a prescribed query count, exact feasibility is determined by the Gram criterion above, including at the boundary $\beta=B$; the norm bound alone does not determine that query count.

\clearpage
% \titleformat{\section}{\normalfont\Large\bfseries}{\thesection}{1em}{}
% \phantomsection
% \addcontentsline{toc}{section}{References}
% \bibliographystyle{alpha}
% \bibliography{ref}
\end{document}